\documentclass[10pt]{article}
\usepackage[utf8]{inputenc}
\usepackage[american,british]{babel}
\usepackage{amsmath,amsthm,amsfonts,amssymb,amscd,dsfont}
\usepackage{thmtools}
\usepackage{mathrsfs}
\usepackage{mathtools}
\usepackage{stmaryrd}
\usepackage{physics}
\usepackage{subcaption}
\usepackage{graphicx}
\usepackage{hyperref}
\hypersetup{
    linkcolor=blue,
    citecolor=red,
    hypertexnames=false
    }    
\usepackage[capitalize]{cleveref}  
\Crefname{equation}{}{}

\usepackage{enumitem}
\usepackage[table]{xcolor}
    
\usepackage{fullpage}
\usepackage{lastpage}  

\usepackage{xfrac}

\usepackage{todonotes}

\usepackage{tikz}
\usetikzlibrary{decorations.markings}
\usetikzlibrary{positioning}

\usepackage{todonotes}

\numberwithin{equation}{section}

\newtheoremstyle{break}%
  {}{}%
  {\itshape}{}%
  {\bfseries}{}
  {\newline}{}

\theoremstyle{break}

\newtheorem{theorem}{Theorem}[section]

\newtheorem{corollary}[theorem]{Corollary}
\newtheorem{lemma}[theorem]{Lemma}
\newtheorem{proposition}[theorem]{Proposition}

\newtheorem{maintheorem}{Theorem}

\crefname{maintheorem}{theorem}{theorems}

\theoremstyle{plain}
\newtheorem{remark}[theorem]{Remark}

\newcommand{\V}[1]{\mathbf{#1}}
\newcommand{\Cal}[1]{\mathcal{#1}}
\newcommand{\Bold}[1]{\boldsymbol #1}

\newcommand{\N}{\mathbb{N}}

\newcommand{\Z}{\mathbb{Z}}
\newcommand{\R}{\mathbb{R}}
\newcommand{\C}{\mathbb{C}}
\newcommand{\E}{\mathbb{E}}

\renewcommand{\S}{\mathbb{S}}

\renewcommand{\P}{\mathbb{P}}

\newcommand{\Kcal}{\mathcal{K}}
\newcommand{\Acal}{\mathcal{A}}
\newcommand{\Dcal}{\mathcal{D}}

\newcommand{\bigO}{\mathcal{O}}

\newcommand{\Scal}{\mathcal{S}}
\newcommand{\Ccal}{\mathcal{C}}

\newcommand{\Mcal}{\mathcal{M}}

\newcommand{\Zcal}{\mathcal{Z}}
\newcommand{\Rcal}{\mathcal{R}}

\newcommand{\xx}{\V{x}}

\newcommand{\zz}{\V{z}}

\newcommand{\Si}{\boldsymbol\sigma}

\newcommand{\eps}{\epsilon}

\newcommand{\vp}{\varphi}
\newcommand{\tilV}{\tilde{V}}

\newcommand{\df}{\operatorname{d}\!}
\newcommand{\dx}{\df x}
\newcommand{\dy}{\df y}
\newcommand{\dz}{\df z}
\newcommand{\du}{\df u}

\newcommand{\ds}{\df s}
\newcommand{\dr}{\df r}

\newcommand{\partd}[1]{\dfrac{\partial}{\partial #1}}

\newcommand{\der}[1]{\frac{d}{d #1}}
\newcommand{\dder}[2]{\frac{d^#2}{d #1^#2}}

\renewcommand{\d}{\, \mathrm{d}}

\DeclareMathOperator{\diam}{\mathrm{diam}}

\DeclareMathOperator{\supp}{\mathrm{supp}}

\DeclareMathOperator*{\esssup}{ess\,sup}

\newcommand{\cbr}[1]{\left\lbrace #1 \right\rbrace}
\newcommand{\rbr}[1]{\left( #1 \right)}
\newcommand{\abr}[1]{\left[ #1 \right]}

\newcommand\cbo[1]{\mathopen{\left\{\rule{0pt}{#1}\right.}}
\newcommand\cbc[1]{\mathclose{\left\}\rule{0pt}{#1}\right.}}

\renewcommand{\abs}[1]{\left | #1 \right |} 

\renewcommand{\norm}[1]{\left\lVert#1\right\rVert}

\newcommand{\1}{\mathbf{1}}
\newcommand{\half} {\frac{1} {2} }

\newcommand{\sG}{\mathrm{sG}}

\begin{document}
\title{The massless sine-Gordon model with logarithmic correlations in arbitrary dimension}
\author{Sami Vihko$^1$}

\date{$^1$University of Helsinki, Department of Mathematics and Statistics, P.O. Box 68, FIN-00014 University of Helsinki, Finland \\[2ex]
    \today
}

\maketitle

\tableofcontents

\newpage

\begin{abstract}
In this article, we study the sine-Gordon model with logarithmic correlations in arbitrary dimension $d \ge 1$. The model is defined as a Euclidean field theory whose interaction term in the classical action is
\[
S_{\mathrm{int}}(\varphi) := 2z \int_{\Lambda} \cos(\sqrt{\beta}\,\varphi)\,\dx,
\]
where $\varphi$ is a logarithmically correlated Gaussian field on a compact domain $\Lambda \subset \mathbb{R}^d$ that coincides with the Gaussian free field when $d=2$, and where $z \in \mathbb{R}$ and $\beta>0$. We treat both the massive and massless cases, with the main emphasis on the massless regime. We construct the field without imposing boundary conditions, together with its charge and gradient correlation functions, and we show that the partition function is renormalizable. The results are non-perturbative, holding for all $z \in \mathbb{R}$. If $d=1$, our results are valid for the full subcritical range $\beta\in (0,4\pi)$ and if $d\geq 2$, the results are valid for $\beta\in (0,(d+1)2\pi)$. Our analysis is carried out directly in the continuum: we perform renormalization via a scale decomposition and control the partition function by controlling the renormalized potential that arises in this procedure, using the iterated Mayer expansion of Brydges and Kennedy~\cite{BrKe87a}.
\end{abstract}

\begin{section}{Introduction}
\label{sec:introduction}
In two dimensions, the canonical sine-Gordon model is formally defined by the measure
\begin{equation}
\label{eq:formal definition of sine-Gordon measure}
\Zcal^{-1}\exp(2z\int_{\R^2}\cos(\sqrt{\beta}\vp(x))\dx)\nu(\d\vp),
\end{equation}
where $\Zcal$ is a normalizing constant called the partition function, the reference measure $\nu$ is the law of the two-dimensional Gaussian free field (GFF), either massive or massless, and $\beta\in (0,8\pi)$ and $z\in\R$ are the fundamental parameters of the theory. If we study the model on a bounded domain, we may also consider its restriction to the boundary. In this sense, the one-dimensional model can be thought of as a \emph{boundary model} of the two-dimensional case. For simplicity, we consider only the case in which the boundary has a non-empty intersection with the real line, and we confine our constructions to this intersection.

In this paper, we consider only interactions restricted to compact domains $\Lambda\subset \R^d$, rather than to the whole space. We renormalize the model on $\Lambda$ in the UV-regime and do not attempt to take the limit $\Lambda\to\R^d$. However, our massive reference field is defined on the whole space, so our approach does not specify any boundary conditions for the model.

Since the GFF is logarithmically correlated in two dimensions, the sine-Gordon model is naturally logarithmic in the settings above. However, if $d\geq 3$, the canonical model is no longer logarithmically correlated, and to obtain a logarithmic model we will replace the two-dimensional GFF with another logarithmically correlated reference field. This logarithmic model no longer corresponds to any canonical model of quantum field theory (QFT), but it can be thought of as a representation of a $d$-dimensional log-gas; see \cite[Section 1.2]{LaRhVa23a}. Thus, the model is also interesting from the point of view of statistical mechanics, and we view it as a statistical rather than a quantum field theory. The main physical examples of log-gases are the two-dimensional Coulomb and Yukawa gases. The equivalence is at the level of partition functions if we take the grand canonical partition function on the Coulomb or Yukawa side. Furthermore, the parameter $\beta$ has the physical meaning of an inverse temperature, when viewed through this equivalence. Thus, the correspondence maps the high-temperature regime of the gases to the small-$\beta$ regime of the sine-Gordon model, and vice versa. Other examples also arise mainly in dimensions $d=1$ or $d=2$; see \cite{Fo10a,Se17a} and references therein. We note that the correspondence of the Coulomb/Yukawa gas to the canonical sine-Gordon model (with the $d$-dimensional GFF as the free field) holds in arbitrary dimension.

In dimensions $d=1$ or $2$, the log-gases exhibit a countable sequence of collapse points $\beta=\beta_N:=(1-1/N)4d\pi$, where $N$ is even. Physically, these correspond to particles clumping into larger and larger multipoles, and the final transition point $\beta_\infty=4d\pi$ is analogous to the Kosterlitz–Thouless (KT) \cite{KoTh73a,FrSp81a,DiHu91a,Fa12a} transition point in $d=2$, where the gas can be thought of as collapsing onto itself. This multipole picture was first proposed in \cite{GaNi85a}
and recently established rigorously for the two‑dimensional Coulomb gas in the canonical ensemble \cite{BoSe25a}. On the sine-Gordon side, the dipole point $\beta_2=2d\pi$ is where the model stops being a finite field theory. In the regime $\beta\in (d,2d)2\pi$, the model is super-renormalizable, but as $\beta$ crosses the thresholds $\beta_N$, more and more counterterms are needed to renormalize the partition function. For $\beta\in [\beta_N,\beta_{N+2})$, $N$ counterterms are needed. At the final transition point $\beta=4d\pi$, the model becomes strictly renormalizable, and for $\beta>4d\pi$ the model is non-renormalizable or trivial in the RG-sense. In two dimensions, the point $\beta=8\pi$ has been analyzed in \cite{NiPe89a}. The non-renormalizability for $\beta>8\pi$, or the RG-triviality, has not been rigorously established mathematically, as far as the author is aware. However, in the physics literature, Coleman, for example, has established that for $\beta>8\pi$ the theory has no ground state and is therefore physically nonsensical \cite{Co75a}. With a UV-cutoff, the effective theory has been shown to be asymptotically free in the IR-regime \cite{DiHu91a,DiHu00a}.

 Mathematically, it is an interesting question whether such phenomenology persists for log-gases if $d\geq 3$. This is suggested in the introduction of \cite{LaRhVa23a}. However, our analysis shows that, at least for our canonical choice of reference field on the sine-Gordon side, something already happens at the point $\beta=(d+1)2\pi< \beta_4=3d\pi$ if $d\geq 3$; see also \Cref{rem: renormalizability in higher dimensions}.

The two-dimensional sine-Gordon model is also interesting from other points of view. It is an example of a non-conformal integrable field theory, where integrability means that its correlation functions can be explicitly computed. Physicists have many predictions for these correlation functions; see for example \cite{Za79a,BeLe94a,LuZa97a,FaFrLu99a} and the references therein. Besides the correlation functions, much about the model has been conjectured by physicists; see for example the book \cite[Part 4]{Mu10a} and references therein. However, a lot less has been mathematically proven. The classical sine-Gordon equation is also integrable; see for example \cite{FaTa87a}. The model also exhibits a phenomenon called bosonization. Bosonization means that there is a duality or a canonical mapping between bosonic and fermionic field theories. For massless sine-Gordon, this is known as the Coleman correspondence \cite{Co75a}, and the corresponding fermionic model is known as the massive Thirring model, see \cite{Di98a,BeFaMa09a,BaWe24a} and references therein. This correspondence is at the level of correlation functions. The sine-Gordon model also has connections, or parallel features, to imaginary multiplicative chaos, the Ising model, and imaginary Liouville CFT; see for example \cite{JuSaWe20a,BaMaWe25a,PaViWe25a,GuCoKu25a} and references therein.

Our analysis follows the construction of the model in two dimensions with $\beta\in (0,6\pi)$ given in \cite{BaWe24a}. Thus, it is natural to consider a fixed canonical reference field $\vp$, which recovers the model treated in \cite{BaWe24a} when $d=2$. The canonical log-correlated reference field we introduce is the field whose covariance operator is given by the (rescaled) Green's function of the fractional massive Laplacian
\begin{equation}
K^m(x,y):=[\lambda_d(-\Delta+m^2)^{\frac{d}{2}}]^{-1}(x,y)=\int_0^\infty C_s^m(x,y)\ds,
\end{equation}
where $C_\cdot^m \colon [0,\infty)\times\R^d\times\R^d\to \R$ denotes the massive kernel
\begin{equation}
C_s^m(x,y):=e^{-m^2s}C_s(x,y):=\frac{1}{4\pi s}e^{-m^2s-\frac{|x-y|^2}{4s}}.
\end{equation}
The rescaling constant is $\lambda_d = (2^{\,2-d}\,\pi^{\,1-\frac{d}{2}})/\Gamma\!\left(\frac{d}{2}\right)$, but this is unimportant for our analysis; we adopt this rescaling merely to obtain, in arbitrary dimension, kernels of the same form as those in \cite{BaWe24a} for $d=2$. For $d=2$, the kernels are simply the heat kernels. We also introduce the notation $C_s$ for the massless kernel, as it will be needed later. Note that these kernels are themselves covariance functions. One may also write $K^m(x,y)=\frac{1}{2\pi}K_0(m|x-y|)$, where $K_0$ denotes the modified Bessel function of order zero. The well-known asymptotic formula $K_0(x)=-\gamma-\log(\half x)+\bigO(x)$ shows that the covariance indeed has a logarithmic singularity $K^m(x,y)\sim-\frac{1}{2\pi}\log(m|x-y|)$ both at the diagonal of the spatial variables and as $m\to 0$. 

\begin{remark}
In the canonical model, for $d>2$ the reference field would be the $d$-dimensional massive $GFF$ with covariance given by the Green's function of the ordinary massive Laplacian
\begin{equation}
\begin{split}
(-\Delta+m^2)^{-1}(x,y):=\int_0^\infty H_s^m(x,y)\ds=\frac{1}{(2\pi)^{\frac{d}{2}}}\rbr{\frac{m}{|x-y|}}^{\frac{d}{2}-1}K_{\frac{d}{2}-1}(m|x-y|)
\end{split}
\end{equation}
where $H_t^m(x,y):=(4\pi t)^{-d/2}e^{-m^2t-|x-y|^2/(4t)}$
is the massive heat kernel and $K_{\nu}$ again denotes the modified Bessel function of order $\nu$.

This covariance has a power-law singularity $(-\Delta+m^2)^{-1}(x,y)\sim C_d|x-y|^{-d+2}$ rather than a logarithmic one. From the Coulomb point of view, $d=3$ would be the physically natural dimension, but unfortunately our methods do not work for $d\geq 3$ in this case.
\end{remark}

\begin{remark}
From now on, we will abuse terminology and refer to the reference field as the free field of the theory or model, or simply as the free field.
\end{remark}

We adopt the following regularization. Consider a Gaussian field $\vp_{m,\eps}$ with covariance function $K_\eps^m\colon \R^d\times\R^d\to\R$ given by
\begin{equation}
\label{eq:def of Keps}
K_\eps^m(x,y):=\E[\vp_{m,\eps}(x)\vp_{m,\eps}(y)]\equiv\E_{m,\eps}[\vp(x)\vp(y)]:=\int_{\eps^2}^\infty C_s^m(x,y)\ds.
\end{equation}
We have introduced two equivalent notations: either the dependence on the parameters $m,\eps$ is attached to the field itself, or it is encoded in the expectation symbol. In the latter case, the subscripts specify the probability measure, and $\vp$ is then merely a placeholder for the field with this law. A similar convention will be used for the associated measures and the truncated expectation symbol $\E^T$ to be introduced in the subsection below. When expectations involve more than one type of field, we will explicitly indicate with respect to which field the expectation is taken. A Kolmogorov–Chentsov-type argument (see Proposition B.2 in the appendix of \cite{LaRhVa15a}) shows that this field is smooth; that is, the law $\nu_{m,\eps}$ of $\vp_{m,\eps}$ is a Gaussian measure supported on $C^\infty(\R^d)$. Moreover, this measure is symmetric, translation invariant, and rotation invariant.

The regularized sine-Gordon measure is then defined by
\begin{equation}
\nu_{\sG(\beta,z,\Lambda|m,\eps)}(\d\vp):=\frac{1}{\Zcal_0(\beta,z,\Lambda|m,\eps)}\exp\!\Bigl(2z\int_\Lambda\eps^{-\frac{\beta}{4\pi}}\cos(\sqrt{\beta}\vp(x))\dx\Bigr)\nu_{m,\eps}(\d\vp),
\end{equation}
and the corresponding expectation will be denoted by $\E_{\sG(\beta,z,\Lambda|m,\eps)}$. In the expression for the bare (or non-renormalized) regularized partition function,
\begin{equation}
\Zcal_0(\beta,z,\Lambda|m,\eps):=\E_{m,\eps}\abr{e^{2z\int_\Lambda\eps^{-\frac{\beta}{4\pi}}\cos(\sqrt{\beta}\vp(x))\dx}},
\end{equation}
we have omitted the $\sG$ subscript for brevity. Since a renormalized partition function $\Zcal_R$ will also appear later, we shall simply refer to $\Zcal_0$ as the partition function.

The charge and gradient correlation functions (and their mixtures) that we study are correlations of the objects $\eps^{-\frac{\beta}{4\pi}}e^{i\sigma\sqrt{\beta}\vp(x)}$ and $D\vp(x)$, where $\sigma\in\{-1,1\}$ and $D$ denotes a first-order constant-coefficient linear partial differential operator (or simply the derivative when $d=1$). More precisely, we consider smeared versions of the truncated correlations of these quantities; see the next subsection for the precise definitions.

\begin{remark}
\label{rem: eps to power of beta over four pi}
Note that the factor $\eps^{-\frac{\beta}{4\pi}}$ is just Wick ordering and an additional multiplicative mass renormalization. That is,
\begin{equation}
\eps^{-\frac{\beta}{4\pi}}e^{i\sqrt{\beta}\sigma\vp_{m,\eps}(x)}:=:e^{i\sqrt{\beta}\sigma\vp_{m,\eps}(x)}:\Delta(m),
\end{equation}
and the cosine is a sum of such terms. The factor $\Delta(m)$ can be derived explicitly, but we do not need it. Alternatively, one can think of $\eps^{-\frac{\beta}{4\pi}}$ simply as the way $1/\E_{m,\eps}[e^{i\sqrt{\beta}\vp(x)}]$ diverges as $\eps\to 0$.
\end{remark}

\begin{subsection}{Main results}
\label{sec:main results}
To make sense of the $\eps\to 0$ and $m,\eps\to 0$ limits of the objects above beyond the region $\beta\in (0,2d\pi)$, we need to specify the renormalization scheme we will use.

First, we introduce some notation and terminology. For a map $F$ (of the field $\vp$ below) and a suitable test function $f$, we denote
\begin{equation}
\label{eq:def integration against a test function notation}
F(\vp)(f):=\int F(\vp(x))f(x)\dx,
\end{equation}
where the integration is always over the whole space $\R^d$. In the case of functions $f\colon \R^d\times\{-1,1\}\to \C$,
\begin{equation}
F(\vp)(f):=\sum_{\sigma\in\{-1,1\}}\int F_\sigma(\vp(x))f(x,\sigma)\dx.
\end{equation}
Note that if $F$ is the identity function, we simply have $\vp(f)=\int\vp(x)f(x)\dx$. In addition, on the left-hand side of \Cref{eq:def integration against a test function notation} we may also let $F$ be a first-order constant-coefficient differential operator. In this case, we interpret it in the distributional sense unless otherwise specified. Thus, for a first-order (constant-coefficient) differential operator $D$ we define
\begin{equation}
(D\vp)(g):=-\int \vp(x)(D g)(x)\dx
\end{equation}
for test functions $g\in C_c^\infty(\R^d)$. 

By smeared correlation functions, or moments of objects $F_i(\vp(\cdot))$, we mean the expectations
\begin{equation}
\E\abr{\prod_{i=1}^nF_i(\vp)(f)}:=\E\abr{\prod_{i=1}^n\int F_i(\vp(x_i))f(x_i)\dx_i}
\end{equation}
or analogously if $\sigma$ dependence is present. Here the expectation is taken with respect to the law of the field $\vp$.

The term truncated correlation functions refers to cumulants; see Appendix \ref{sec:Def and prop of cumulants} for definitions and crucial properties. We will use the following notation for them. Let $\{X_i\}_{i\in I}$ be a collection of generic random variables and let $\kappa(\cdot)$ denote their joint cumulants. For $J:=\{i_1,i_2,\dots,i_j\}\subset I$, we denote
\begin{equation}
\E^T[X_j\mid j\in J]:=\kappa(X_{i_1};X_{i_2};\dots;X_{i_j}).
\end{equation}
Note that this should not be confused with the notation for conditional expectations, which do not appear in this article. For two different collections $\{X_i\}_{i\in I}$ and $\{Y_j\}_{j\in J}$ that need to be distinguished, we use the formal product notation
\begin{equation}
\E^T\abr{\prod_{i\in \tilde{I}}X_i\prod_{j\in \tilde{J}}Y_j}:=\kappa(X_{i_1};\dots;X_{i_n};Y_{j_1};\dots;Y_{j_k})
\end{equation}
for arbitrary finite sets $\tilde{I}=\{i_1,i_2,\dots,i_n\}\subset I$ and $\tilde{J}=\{j_1,j_2,\dots,j_k\}\subset J$. An analogous notation may also be used if one of the products is replaced by a power of a single random variable.

We are interested in the $m,\eps\to 0$ limits of the correlation functions
\begin{equation}
\Ccal_{\sG(\beta,z,\Lambda|m,\eps)}^{n,k}(\V{f},\V{g}):=\E_{\sG(\beta,z,\Lambda|m,\eps)}^T\abr{\prod_{j=1}^n\eps^{-\frac{\beta}{4\pi}}e^{i\sqrt{\beta}\sigma_j\vp}(f_j)\prod_{l=1}^k[D_l\vp](g_l)},
\end{equation}
where $f_j\in L_c^\infty(\R^d\times \{-1,1\})$ (compactly supported, essentially bounded functions), $j=1,2,\dots,n$, and $g_l\in C_c^\infty(\R^d)$, $l=1,2,\dots,k$. We choose the differential operators $D_i$ to be partial derivatives with respect to arbitrary components of $x\in\R^d$, or simply the derivative when $d=1$.

We call the limits of the above correlation functions the smeared and truncated mixed charge and gradient correlation functions of our sine-Gordon model. Possible further restrictions or conditions on the test functions will be specified in the theorems below. We also note that the linear constant‑coefficient partial differential operators mentioned above can be constructed from the partial derivatives. The natural operators for $d=2$ are the holomorphic or Wirtinger derivatives $\half(\partial_x\pm i\partial_y)$. Analogous correlation functions of the free field will play a crucial role in the analysis, and these are discussed in Appendix \ref{sec:Correlation functions of the reference field.}.

Next, we define the renormalized partition function and the setup in which most of our analysis will take place. First, let $\beta\in (0,4d\pi)$ and let $N\equiv N(\beta)$ be such that $\beta\in[\beta_N,\beta_{N+2})$, where, as mentioned in the introduction, $\beta_N:=(1-1/N)4d\pi$. For $k\in \N$ with $k\leq N/2$ and $\eta\in L_c^\infty(\R^d\times \{-1,1\})$ put
\begin{equation}
\label{eq:definition of the multiplicative counter terms}
Z_k(\eta):=\exp(-\frac{1}{(2k)!}\sum_{\Si\in\{-1,1\}^{2k}}\1_{\{\sum_{i=1}^{2k}\sigma_i=0\}}\int_{\R^{2kd}}\rbr{\prod_{i=1}^{2k}\eta(x_i,\sigma_i)}\E_{m,\eps}^T\abr{\eps^{-\frac{\beta}{4\pi}}e^{i\sqrt{\beta}\sigma_j\vp(x_j)}\,\big|\, j\in[2k]}\d\xx),
\end{equation}
where $[2k]:=\{1,2,\dots,2k\}$ is the discrete interval from $1$ to $2k$, $\Si:=(\sigma_1,\sigma_2,\dots,\sigma_{2k})$, and $\d\xx:=\dx_1\dx_2\dots\dx_{2k}$, with $\dx_j$ denoting the $d$‑dimensional Lebesgue measure.

Then put
\begin{equation}
\label{eq:definition of the renormalized partition function of the sG model}
\Zcal_{R}(\beta,z,\Lambda|m,\eps):=\Zcal_0(\beta,z,\Lambda|m,\eps)\prod_{k=1}^{N/2}Z_k(-z\1_\Lambda).
\end{equation}
The notation above is chosen so that we may carry out the general analysis with arbitrary $\eta\in L_c^\infty(\R^d\times\{-1,1\})$. The generalized partition function introduced at the beginning of \Cref{sec: the renormalized potential} will then depend on $\eta$ instead of $z$ and $\Lambda$.
\begin{remark}
For $d=1$, we will always first fix $\beta\in (0,4\pi)$ and then let $N$ be as above. For $d\geq 2$, we will always have $N=0$ or $N=2$, since we will be working with $\beta\in (0,(d+1)2\pi)$. 
\end{remark}

Now we are ready to state the main results.

\begin{maintheorem}[Renormalizability of the partition function]
\label{thm:renormalizability of the partition function}
Fix $\beta\in (0,4\pi)$ if $d=1$ and $\beta\in (0,(d+1)2\pi)$ if $d\geq 2$, and let $N\equiv N(\beta)$ be such that $\beta\in [\beta_N,\beta_{N+2})$. Also assume that $z\in \R$, $m\in (0,\infty)$, and that $\Lambda\subset\R^d$ is compact. Then the limits
\begin{equation}
\begin{split}
\Zcal_R(\beta,z,\Lambda|m)&:=\lim_{\eps\to 0}\Zcal_R(\beta,z,\Lambda|m,\eps)
\\
\Zcal_R(\beta,z,\Lambda)&:=\lim_{m\to 0}\lim_{\eps\to 0}\Zcal_R(\beta,z,\Lambda|m,\eps)
\end{split}
\end{equation}
exist and are finite.

Furthermore, the maps $z\mapsto \Zcal_R(\beta,z,\Lambda)$ and $z\mapsto \Zcal_R(\beta,z,\Lambda|m)$ can be extended to entire functions on the complex plane $\C$, and they are positive for $z\in\R$. The map $z\mapsto \Zcal_R(\beta,z,\Lambda)$ is also an even function.
\end{maintheorem}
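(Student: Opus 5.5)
The plan is to follow the scale-decomposition strategy of \cite{BaWe24a} directly in the continuum. For $t\in[\eps^2,\infty)$ I would write $\vp_{m,\eps}=\Phi_t+\zeta_t$ with independent Gaussian fields, where $\zeta_t$ has covariance $\int_{\eps^2}^{t}C^m_s\ds$ and $\Phi_t$ has covariance $\int_t^\infty C^m_s\ds$. I then define the renormalized potential
\[
v_t(\Phi):=-\log\E_{\zeta_t}\abr{e^{-v_{\eps^2}(\Phi+\zeta_t)}},\qquad v_{\eps^2}(\vp):=-2z\int_\Lambda\eps^{-\frac{\beta}{4\pi}}\cos(\sqrt\beta\vp)\dx ,
\]
so that $\Zcal_0(\beta,z,\Lambda|m,\eps)=\E[e^{-v_\infty(0)}]$, and $\log\Zcal_0=-v_\infty(0)$ in the massive case, where $\Phi_\infty=0$.

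By the cumulant expansion, $v_t$ is the generating series of truncated expectations of the charges $\eps^{-\beta/4\pi}e^{i\sqrt\beta\sigma\vp}$ with respect to $\zeta_t$, evaluated at shifted fields $\Phi$. I would control it via the iterated Mayer expansion of \cite{BrKe87a}. This yields a representation of $v_t$ as a sum over $n$ and charge configurations $\Si$, each term being an integral over trees of products of $(e^{\text{interaction}}-1)$-type factors along a time-ordered (Brydges--Kennedy) flow. Two properties of this representation matter. First, it converges absolutely for all complex $z$ with $|z|$ small relative to $t$. Second, for general $z$ one only needs it on scales $t\ge t_0(z)$, while the initial short scales are handled by the trivial bound $|e^{-v}|\le e^{2|z||\Lambda|\eps^{-\beta/4\pi}}$ combined with rescaling.

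Concretely, the scales give the effective coupling $z\, t^{\frac{1}{2}(d-\frac{\beta}{4\pi})}$ at scale $t$, up to the $d$-dependent normalisation. This tends to zero as $t\to 0$ for $\beta<4d\pi$, so the expansion converges uniformly in $\eps$ on all scales below some $t_0(z)>0$. The neutral part of the $n$-th order term is then $\E^T$ of $n$ charges with $\sum\sigma_i=0$. Its divergence as $\eps\to0$ comes from the integral of the charge truncated correlation near the diagonal, which scales like $\eps^{\,d(n-1)-n\beta/(4\pi)}$ up to logarithms. This diverges exactly when $\beta\ge\beta_n$, and these divergent neutral contributions are precisely the terms subtracted by $\log Z_k(-z\1_\Lambda)$ for $k\le N/2$.

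The key steps, in order, are as follows.
\begin{enumerate}
\item Establish uniform-in-$\eps$ bounds on the non-neutral and subtracted neutral parts of $v_{t_0}$, together with their $\Phi$-derivatives, for $\Phi$ smooth.
\item Show that $v_{t_0}(\Phi)+\sum_k\log Z_k$ converges as $\eps\to0$, uniformly on compact sets of $z\in\C$ (smaller than the radius dictated by $t_0$).
\item Handle large scales $t\ge t_0$, where $\Phi_{t_0}$ is smooth. Here the remaining expectation is of $e^{-v^R_{t_0}(\Phi_{t_0})}$ with a renormalized potential bounded by $C(z,\Lambda)$ plus a term that is harmless because $\Lambda$ is compact. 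By dominated convergence this gives the limit $\Zcal_R(\beta,z,\Lambda|m)$.
\item Obtain analyticity: each $\Zcal_R(\cdot|m,\eps)$ is entire in $z$, and the bounds are locally uniform in $z$, so Vitali/Montel gives an entire limit. For general $z$, increase $t_0$ as $|z|$ grows, which is possible since the coupling tends to zero as $t_0\to0$. Positivity for real $z$ holds because $\Zcal_0>0$ and $Z_k(-z\1_\Lambda)>0$, the exponent being real. To keep the limit strictly positive rather than merely nonnegative, bound $\log \Zcal_R$ from below via Jensen, using $\E[v^R]$ finite.
\end{enumerate}

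For $m\to0$, the covariance differs by $\frac{1}{2\pi}\log m$ plus a convergent remainder, so $\vp_m=\vp+ \text{(constant Gaussian mode of variance} \sim\log(1/m))$. Neutral terms are unaffected by the constant mode, and $Z_k$ involves only neutral sums, which converge as $m\to0$ because the truncated neutral correlations depend only on differences. Charged terms pick up a factor $e^{-\frac{\beta}{2}\mathrm{Var}}\to0$ after averaging over the zero mode; equivalently, the zero mode averages $e^{i\sqrt\beta c\sum\sigma}$ to the projection onto neutral configurations. The massless limit therefore keeps only neutral sectors, which gives evenness in $z$: neutral configurations have even $n$, since $\sum\sigma=0$.

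The main obstacle is step 2, the uniform control of the renormalized neutral multipole terms for $\beta$ up to $(d+1)2\pi$ (and up to $4\pi$ when $d=1$). The difficulty is the cancellation between the counterterm $Z_k$, which is a pure constant, and the $\Phi$-dependent neutral $2k$-point term in $v_t$. After subtraction one is left with $\E^T[\cdots](\prod e^{i\sqrt\beta\sigma_j\Phi(x_j)}-1)$. A Taylor expansion in $x_j-x_1$ produces $\nabla\Phi$-type dipole terms whose integrability requires one extra power of distance, $|x|^{1}$ against $|x|^{-\beta/(2\pi)\cdot\frac{1}{2}}$ in $d$ dimensions. I expect this to be the source of the $d+1$ threshold. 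I would first handle the dipole ($k=1$) case explicitly, then use the tree structure of the Brydges--Kennedy formula to reduce higher $k$ (only relevant for $d=1$) to products of such bounds.
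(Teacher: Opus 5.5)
Your overall architecture is the paper's: scale decomposition, iterated Mayer expansion converging geometrically on scales $t<t_0(z)$, Gaussian tails for the smooth field, Cauchy/Vitali for analyticity, and Jensen for positivity. The massless limit, however, has a genuine gap, and it starts in your step~2.

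\textbf{The infrared counterterm piece.} The counterterms $Z_k$ are built from cumulants of the \emph{full} field $\vp_{m,\eps}$, i.e.\ from $\tilV_\infty^{2k}$. The neutral $2k$-point term of $v_t$ instead carries the small-scale cumulant $\tilV_t^{2k}$. After subtraction you are therefore left with two pieces:
\begin{itemize}
\item the field-dependent piece $\tilV_t^{2k}\bigl(\prod_j e^{i\sqrt\beta\sigma_j\Phi(x_j)}-1\bigr)$, which you discuss;
\item the field-independent infrared piece $\int_{\Lambda^{2k}}(\tilV_t^{2k}-\tilV_\infty^{2k})(\xx,\Si|m,\eps)\d\xx$, which your description of the subtraction omits.
\end{itemize}
The second piece is harmless as $\eps\to0$, but it is exactly where the $m\to0$ problem sits. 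It must be bounded uniformly in $m$ (a naive estimate gives a logarithmic divergence in $m$ at $\beta=\beta_N$, $n=N$) and shown to converge. For $d\ge2$ ($n=N=2$) an explicit $m$-independent majorant exists. In $d=1$ none is available, so the paper (\Cref{cor:convergence of the Vinfty-Vt terms}) controls $\partial_m\tilV_t^n$ inductively (\Cref{lem:differentiability of Vtn with respect to mass ETC}) and runs an explicit error analysis of $R_t^n(m,\cdot)$.

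\textbf{The zero-mode heuristic does not fill this gap.}
\begin{itemize}
\item Neutral truncated correlations are \emph{not} insensitive to an independent constant mode $c$, because their moment expansion contains non-neutral blocks. Already $\kappa(X_+;X_-)=\E[X_+X_-]-\E[X_+]\E[X_-]$ depends on $\mathrm{Var}(c)$.
\item ``$Z_k$ converges as $m\to0$'' has no content, since $Z_k$ itself diverges as $\eps\to0$.
\item The massless log-field is not a Gaussian field on $\R^d$, so $\vp_m=\vp+c$ needs a construction.
\item Even granting convergence of the counterterms, you would still have to pass $m\to0$ inside $\E_{m,\sqrt t}[e^{-S^R_t}]$, although the law of $\vp_{m,\sqrt t}$ has no limit. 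That requires an $m$-uniform decoupling of the zero mode from the increments, which you do not supply.
\end{itemize}
The paper avoids this entirely. It expands $\Zcal_R$ in Taylor coefficients $\Mcal_L$ with bounds $B\delta^LL!$ uniform in $m,\eps$, and passes to the limit coefficientwise (\Cref{prop:expansion for the renormalized partition function,lem:pointwise limits of tildeMcal and their integrals}). The representation through full free-field moments and cumulants gives the pointwise limits; massless moments vanish unless neutral, which is where evenness comes from. The representation through $\tilV_t$ and $\tilV_\infty-\tilV_t$ gives majorants, or in $d=1$ a factorization handled by \Cref{cor:convergence of the Vinfty-Vt terms}.

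\textbf{Higher multipoles in $d=1$ with $N\ge4$.} Reducing higher multipoles to ``products of dipole bounds'' via the tree formula will not close the induction for $\tilV_t^n$. Positivity of the covariance only lets you bound the interaction factor $\exp\bigl(-\tfrac\beta2\sum_{k,l}\sigma_{kl}K^m_{s,t}(k,l)\bigr)$ by $1$. The scale integral is then $\int_0^t s^{-1-\frac d2+n(\frac d2-\frac{\beta}{8\pi})}\ds$, which diverges for every $n\le N$, and also for $n=N+1$ when $\beta\ge\beta_{N+1}$. What is needed is the gain $(s/t)^{\beta/8\pi}[H_t^n]^{-\beta/4\pi}$. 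In words: any cluster of $n\le N+1$ charges carries at least the self-energy of a single charge, up to a Wasserstein-distance correction. This multi-body lower bound is the Onsager-type inequality of \cite{LaRhVa23a} (\Cref{lemma:onsager type inequality with our parametrization}), and it cannot be assembled from pairwise information.

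\textbf{Two smaller slips.}
\begin{itemize}
\item The neutral two-point function behaves like $|x|^{-\beta/2\pi}$, not $|x|^{-\beta/4\pi}$. With one extra power of $|x|$ this gives the threshold $\beta<(d+1)2\pi$; your exponent would give $(d+1)4\pi$.
\item The Mayer expansion is valid on scales $t<t_0(z)$, with $t_0$ \emph{decreasing} as $|z|$ grows. It is not valid on $t\ge t_0(z)$, and the bound $e^{2|z||\Lambda|\eps^{-\beta/4\pi}}$ is useless as $\eps\to0$.
\end{itemize}
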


\begin{proof}
This is precisely the content of the first three statements of \Cref{thm:convergence of the partition function}, proved in \Cref{sec:convergence of the partition function}, for $\eta=-z\1_{\Lambda}$.
\end{proof}

\begin{maintheorem}[Existence of the correlation functions]
\label{thm:existence of the sine Gordon correlation functions}
Fix $\beta\in (0,4\pi)$ if $d=1$ and $\beta\in (0,(d+1)2\pi)$ if $d\geq 2$, and let $N\equiv N(\beta)$ be such that $\beta\in[\beta_N,\beta_{N+2})$. Let $z\in \R$, $n,k\in\Z_+$, $m\in (0,\infty)$, and let $\Lambda\subset \R^d$  be compact. Suppose $f_j\in L_c^\infty(\R^d\times \{-1,1\})$, $g_l\in C_c^\infty(\R^d)$, $j=1,2,\dots, n$, $l=1,2,\dots,k$. Then we have:

\begin{enumerate}
\item If either $n>N(\beta)$, $k\geq 1$, or $d\geq 2$, and, in the last case, $n=N(\beta)=2$ and $\supp(f_1)\cap\supp(f_2)=\emptyset$, then the limit
\begin{equation}
\begin{split}
\Ccal_{\sG(\beta,z,\Lambda)}^{n,k}(\V{f},\V{g})&\,\,\equiv\E_{\mathrm{sG}(\beta,z,\Lambda)}^T\abr{\prod_{j=1}^n:e^{i\sqrt{\beta}\sigma_j\vp}:(f_j)\prod_{l=1}^k[D_l\vp](g_l)}
\\
&:=\lim_{m\to 0}\lim_{\eps\to 0}\Ccal_{\sG(\beta,z,\Lambda|m,\eps)}^{n,k}(\V{f},\V{g})
\end{split}
\end{equation}
exists and is finite.

\item Assuming the conditions in part 1. above, $\Ccal_{\mathrm{sG}(\beta,z,\Lambda)}^{n,k}$ has an analytic extension (in $z$) to some neighborhood $ U\equiv U(\Lambda)$ of the real axis $\R$.
\item Finally, assuming the conditions of part 1. above, we have, for any $q\geq 0$,
\begin{equation}
\begin{split}+
\dder{z}{q}\bigg|_{z=0}&\Ccal_{\mathrm{sG}(\beta,z,\Lambda)}^{n,k}(\V{f},\V{g})
\\
&=
\E^T\abr{\rbr{:e^{i\sqrt{\beta}\vp}:(\1_\Lambda)+:e^{-i\sqrt{\beta}\vp}:(\1_\Lambda)}^q\prod_{j=1}^n:e^{i\sqrt{\beta}\sigma_j\vp}:(f_j)\prod_{l=1}^k(D_l\vp)(g_l)}
\\
\end{split}
\end{equation}
where the expectation is now with respect to the law of the free field, and the RHS is defined by an analogous limit of the regularized free field, as in part 1.
\end{enumerate}

Furthermore, analogous statements hold for fixed $m$.
\end{maintheorem}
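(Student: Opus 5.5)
The correlation functions will be realized as derivatives of a generalized renormalized partition function, which reduces everything to the control of the renormalized potential already used for \Cref{thm:renormalizability of the partition function}. For $\eta\in L_c^\infty(\R^d\times\{-1,1\})$ and $h\in C_c^\infty(\R^d)$, I introduce
\[
\Zcal_R(\eta,h|m,\eps):=\E_{m,\eps}\abr{e^{-\eps^{-\frac{\beta}{4\pi}}e^{i\sqrt{\beta}\sigma\vp}(\eta)+\vp(h)}}\prod_{k=1}^{N/2}Z_k(\eta).
\]
Taking $\eta=-z\1_\Lambda-\sum_j t_jf_j$ and $h=\sum_l s_l D_l^*g_l$, the truncated correlation $\Ccal^{n,k}_{\sG(\beta,z,\Lambda|m,\eps)}(\V{f},\V{g})$ equals, up to sign, the mixed derivative $\partial_{t_1}\cdots\partial_{t_n}\partial_{s_1}\cdots\partial_{s_k}\log\Zcal_R$ at $t=s=0$. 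The logarithm removes the normalization by $\Zcal_0$. The counterterms $Z_k$ are exponentials of polynomials of degree $2k\le N$ in $\eta$, so they contribute only to derivatives of order $\le N$ in the $t$-variables and to no $s$-derivative. This is where the hypotheses enter. If $n>N$ or $k\ge1$, the counterterms drop out of the derivative completely. In the remaining case $d\ge2$, $n=N=2$, the only counterterm contribution is the $k=1$ term
\[
\int f_1(x_1,\sigma_1)f_2(x_2,\sigma_2)\E^T_{m,\eps}[\cdots]\,\d\xx .
\]
Since $\supp f_1\cap\supp f_2=\emptyset$, the free two-point function is bounded there uniformly in $\eps$, so this term converges.

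Next, I shift the Gaussian by $h$ (Cameron–Martin). This turns $\vp(h)$ into the deterministic factor $e^{\frac12\langle h,K^m_\eps h\rangle}$ and replaces $\eta$ by
\[
\eta_h(x,\sigma)=\eta(x,\sigma)e^{i\sigma\sqrt\beta (K_\eps^m h)(x)}.
\]
Because $h=\sum_l s_lD_l^* g_l$ is a derivative of a test function, $K^m_\eps h$ is smooth and bounded uniformly in $m,\eps$ (the $\log m$ divergence is killed by the derivative), including as $m\to0$. The whole problem is then to show that $(\eta,s)\mapsto\log\Zcal_R(\eta_h|m,\eps)$ converges as $\eps\to0$ and then $m\to0$, and that it is analytic in the complex parameters $(z,t,s)$ on a neighbourhood of real $z$ and small $t,s$, with bounds uniform in $\eps,m$.

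For this I invoke the convergence result for the generalized partition function (\Cref{thm:convergence of the partition function}), with complex $\eta$. I expect it to provide, via the renormalized potential and the Brydges–Kennedy iterated Mayer expansion, locally uniform convergence of $\Zcal_R(\eta|m,\eps)$ as an analytic function of $\eta$ in bounded sets of complex $L^\infty$. Positivity for real $z$ and continuity then give non-vanishing of the limit on a complex neighbourhood $U$ of the real axis, intersected with small $|t|,|s|$. So $\log\Zcal_R$ is analytic and converges uniformly there. By Cauchy's estimates, derivatives commute with the limits, which gives parts 1 and 2. I expect the main obstacle to be verifying that the uniform bounds survive the complexification $\eta\mapsto\eta_h$ and the $m\to0$ limit at the $\log$ level: the cosine becomes unbounded for complex $z$, so only a neighbourhood of $\R$ is obtainable, and the width of $U$ depends on $\Lambda$ through the convergence estimates.

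For part 3, I differentiate the limit $q$ times in $z$ at $z=0$, which analyticity permits. By uniform convergence I may first differentiate at fixed $m,\eps$ and then pass to the limit. At $z=0$ the counterterms restricted to $\1_\Lambda$-insertions reduce to free cumulants. The $q$-th $z$-derivative of the log-generating function, with $n$ further $t$-derivatives and $k$ further $s$-derivatives, is exactly the joint free-field cumulant with $q$ copies of $:e^{i\sqrt\beta\vp}:(\1_\Lambda)+:e^{-i\sqrt\beta\vp}:(\1_\Lambda)$ inserted. The counterterm subtractions of order $\le N$ remove precisely the divergent low-order parts. I interpret the right-hand side as the corresponding renormalized limit, whose existence is given by the free-field analysis in Appendix~\ref{sec:Correlation functions of the reference field.}. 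The fixed-$m$ statements follow by the same argument, stopping before the $m\to0$ limit.
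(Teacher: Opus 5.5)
Your reduction has a genuine gap where you pass from your generating function to \Cref{thm:convergence of the partition function}.

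Your counterterms $Z_p(\eta)$ are evaluated at the unshifted $\eta$, so they do carry no $s$-dependence. But after the Cameron--Martin shift, the object that \Cref{thm:convergence of the partition function} controls is $\Zcal_R(\beta,\eta_h|m,\eps)=\Zcal_0(\beta,\eta_h|m,\eps)\prod_p Z_p(\eta_h)$. Its counterterms sit at the \emph{shifted} $\eta_h$. Hence
\[
\log\Zcal_R(\eta,h|m,\eps)=\tfrac12\langle h,K_\eps^m h\rangle+\log\Zcal_R(\beta,\eta_h|m,\eps)+\sum_{p=1}^{N/2}\bigl[\log Z_p(\eta)-\log Z_p(\eta_h)\bigr],
\]
and the last sum depends on $s$ through the phases $e^{i\sigma\sqrt\beta(K_\eps^m h)(x)}$. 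Write $h_l:=D_l^*g_l$. For $k\ge1$ and $n\le 2p\le N$ (including $n=0$), the mixed derivative of this sum at zero is, up to combinatorial constants and signs, $z^{2p-n}$ times
\[
\sum_{\Si}\1_{\sum_a\sigma_a=0}\int\bigl(\text{products of } f_j \text{ and } \1_\Lambda\bigr)\,\E^T_{m,\eps}\abr{\eps^{-\frac{\beta}{4\pi}}e^{i\sqrt\beta\sigma_a\vp(x_a)}\,\big|\,a\in[2p]}\prod_{l=1}^k\Bigl(i\sqrt\beta\sum_{a=1}^{2p}\sigma_a(K_\eps^m h_l)(x_a)\Bigr)\d\xx ,
\]
which does not vanish. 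So your claim that the counterterms drop out whenever $k\ge1$ is false at the level where convergence is actually proved. As a result, the cases $k\ge1$, $n\le N$ of part 1 are not covered; these are an explicit part of the theorem, with no disjoint-support assumption. They correspond to the second row of \eqref{eq:the argument of log in the correlation functions in terms of ZcalR}, which the paper treats separately.

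These terms are not harmless. The free truncated $2p$-point function fails to be locally integrable for $\beta\ge\beta_{2p}$, which is why $Z_p$ is present at all. What makes $k\ge1$ sufficient is a cancellation your argument never uses:

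1. For a neutral configuration, pair opposite charges along the Wasserstein-minimizing permutation and Taylor-expand. This gives $\bigl|\sum_a\sigma_a(K^m_\eps h_l)(x_a)\bigr|\le C\,H_1^{2p}(\xx,\Si)$ uniformly in $m,\eps$.
2. Write $\E^T=-\bigl(\tilV_1^{2p}+[\tilV_\infty^{2p}-\tilV_1^{2p}]\bigr)$.
3. Control the first piece with the $H$-weighted bounds of \Cref{prop:induction statement}.
4. Control the second piece with the majorant and the vanishing mass error $R_1^{2p}$ from \Cref{cor:convergence of the Vinfty-Vt terms}.

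In $d=1$ this is what \Cref{prop:convergence of the smeared mixed correlation functions of the reference field in d=1} supplies. For $d\ge2$, $N=2$, it is the explicit two-point computation in which the gradient factor makes $|x_1-x_2|^{1-\beta/(2\pi)}$ locally integrable; this is exactly where $\beta<(d+1)2\pi$ enters. The Gaussian factor $\tfrac12\langle h,K_\eps^mh\rangle$, relevant for $n=0$, $k=2$, also needs \Cref{lem:basic correlations of the derivatives of the free field} for its $m\to0$ limit.

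With these pieces added, the rest of your argument is essentially the paper's: \Cref{thm:convergence of the partition function} with parameters, positivity for real $z$ giving $U$, Cauchy estimates, and part 3 via the free-field appendix.
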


\begin{remark}
\label{rem:dot notations}
The $:(\cdot):$ notations above are formal analogues of Wick ordering adapted to this non-Gaussian setting, as used in \cite{BaWe24a}.
\end{remark}

\begin{maintheorem}[Existence of the field]
\label{thm:existence of the field}
Fix $\beta \in (0,4\pi)$ if $d=1$ and $\beta\in (0,(d+1)2\pi)$ if $d\geq 2$, and let $N\equiv N(\beta)$ be such that $\beta\in [\beta_N,\beta_{N+2})$. Let $z\in\R$, $m\in (0,\infty)$, and let $\Lambda\subset\R^d$ be compact. Suppose $f\in C_c^\infty(\Lambda)\equiv \Dcal(\Lambda)$. Then the limit
\begin{equation}
\lim_{\eps\to 0}\E_{\sG(\beta,z,\Lambda|m,\eps)}\abr{e^{w\vp(f)}}
\end{equation}
exists and defines an entire function of $w$. If, in addition, $\int f=0$, then the limit
\begin{equation}
\lim_{m\to 0}\lim_{\eps\to 0}\E_{\sG(\beta,z,\Lambda|m,\eps)}\abr{e^{w\vp(f)}}
\end{equation}
also exists and defines an entire function of $w$ that is an even function of $z$.

Furthermore, for $w=i$ these are Fourier transforms or characteristic functions of probability measures on $\Dcal'(\Lambda)$ and $\sfrac{\Dcal'(\Lambda)}{\R}$, respectively.
\end{maintheorem}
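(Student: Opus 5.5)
The plan is to express the regularized moment generating function through a shifted partition function, and then lean on the partition-function control behind \Cref{thm:renormalizability of the partition function}.

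\textbf{Step 1: Cameron--Martin shift.} Since $f$ is real and smooth, I apply the Cameron--Martin formula to $\nu_{m,\eps}$. This gives
\[
\E_{\sG(\beta,z,\Lambda|m,\eps)}\abr{e^{w\vp(f)}}=e^{\frac{w^2}{2}\langle f,K_\eps^m f\rangle}\,\frac{\Zcal_0(\beta,z,\Lambda|m,\eps;\,wK_\eps^m f)}{\Zcal_0(\beta,z,\Lambda|m,\eps)},
\]
where $\Zcal_0(\cdot\,;h)$ is the partition function with the field shifted by $h$. Under the shift, $e^{i\sqrt\beta\sigma\vp(x)}$ becomes $e^{i\sqrt\beta\sigma\vp(x)}e^{i\sqrt\beta\sigma h(x)}$. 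The shifted partition function is therefore the generalized partition function of \Cref{sec: the renormalized potential} with $\eta(x,\sigma)=-z\1_\Lambda(x)e^{i\sqrt\beta\sigma wK^m_\eps f(x)}$. This $\eta$ lies in $L_c^\infty$ and depends analytically on $w$.

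\textbf{Step 2: renormalize numerator and denominator.} I multiply numerator and denominator by the same kind of counterterms $\prod_k Z_k(\eta)$ and $\prod_k Z_k(-z\1_\Lambda)$. The ratio becomes
\[
\frac{\Zcal_R(\eta_w)}{\Zcal_R(-z\1_\Lambda)}\;\prod_{k\le N/2}\frac{Z_k(-z\1_\Lambda)}{Z_k(\eta_w)}.
\]
By \Cref{thm:convergence of the partition function} (applied for general $\eta$), the renormalized generalized partition function converges as $\eps\to0$, uniformly in $\eta$ on bounded sets. The limit is analytic in $\eta$, and it is positive for $\eta=-z\1_\Lambda$ with $z$ real.

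The main obstacle is the counterterm ratio when $N=2$. Each exponent is a double integral of $\eta(x_1,\sigma)\eta(x_2,-\sigma)$ against the divergent kernel $\eps^{-\beta/2\pi}(e^{\beta K_\eps^m(x_1,x_2)}-1)\sim |x_1-x_2|^{-\beta/2\pi}$. In the difference, the product $\eta\eta$ contributes $e^{i\sqrt\beta w(h(x_1)-h(x_2))}-1$. Because $h=K^m_\eps f$ is smooth, this factor is $O(|x_1-x_2|)$, and the even part is $O(|x_1-x_2|^2)$ by symmetrization over $\sigma$. For $\beta<(d+1)2\pi$ we have $\beta/2\pi-2<d$, so the difference converges absolutely. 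This is exactly where the hypothesis on $\beta$ enters. I also need $K^m_\eps f\to K^m f$ in $C^1$, which follows from smoothing. The Gaussian prefactor converges to $e^{\frac{w^2}{2}\langle f,K^mf\rangle}$.

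\textbf{Step 3: entirety and the massless limit.} Every factor is entire in $w$, and the convergence is locally uniform in $w$, so the limit is entire in $w$. For $m\to0$ with $\int f=0$, the quantity $\langle f,K^m f\rangle$ converges because the $\log m$ part cancels. Likewise $K^m f=K^m f-\mathrm{const}$ converges up to an additive constant $c_m$. I absorb $e^{i\sqrt\beta\sigma c_m}$ by rotating the phase of $\eta$. The $m\to0$ statement of \Cref{thm:convergence of the partition function}, including invariance under such phase rotations (the charge-neutral structure), then gives convergence. Evenness in $z$ comes from the shift $\vp\mapsto\vp+\pi/\sqrt\beta$, which flips the sign of the cosine, in the massless limit.

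\textbf{Step 4: probability measures.} For $w=i$ the regularized functionals are characteristic functionals of probability measures. The limit is positive definite and has value $1$ at $f=0$. Continuity in $f\in\Dcal(\Lambda)$ follows from the uniform bounds above. Minlos' theorem on the nuclear space $\Dcal(\Lambda)$, or on the test functions of mean zero for the quotient $\Dcal'(\Lambda)/\R$, then yields the measures.
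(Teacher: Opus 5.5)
Your overall architecture is the paper's: the Cameron--Martin shift producing $\eta_{w,z,\Lambda}(x,\sigma|m,\eps)=-z\1_\Lambda(x)e^{i\sqrt{\beta}\sigma w\E_{m,\eps}[\vp(x)\vp(f)]}$, the rewriting as a ratio of renormalized partition functions times a ratio of counterterms, and part 4 of \Cref{thm:convergence of the partition function} for the $\Zcal_R$ factors. Note that part 4 gives convergence for parameter families converging in $L^\infty$ uniformly over compact parameter sets, as your $\eta_w$ does, not uniformly over bounded sets of $\eta$.

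The gap is the counterterm ratio. You only treat $k=1$, but in $d=1$ the theorem covers all $\beta\in(0,4\pi)$, and for $\beta\in[3\pi,4\pi)$ one has $N\geq 4$. Then, for each $2\leq k\leq N/2$, the exponent contains
\[
\sum_{\Si\in\{-1,1\}^{2k}}\1_{\sum_{i=1}^{2k}\sigma_i=0}\int_{\Lambda^{2k}}\Bigl(e^{i\sqrt{\beta}w\sum_{l=1}^{2k}\sigma_l\E_{m,\eps}[\vp(x_l)\vp(f)]}-1\Bigr)\E_{m,\eps}^T\abr{\eps^{-\frac{\beta}{4\pi}}e^{i\sqrt{\beta}\sigma_j\vp(x_j)}\mid j\in[2k]}\d\xx .
\]
The truncated $2k$-point function is not a single kernel like $|x_1-x_2|^{-\beta/2\pi}$. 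It is an alternating combination of Coulomb-gas products, singular along all collision sets, including separate collapses of neutral sub-clusters. It is exactly the object that fails to be locally integrable, which is why $Z_k$ is there. Knowing that the prefactor is $O(d(\xx))$, or $O(d(\xx)^2)$ after your $\Si$-symmetrization, does not by itself give integrability uniformly in $m,\eps$, let alone existence of the limit.

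This is where the paper uses the whole of \Cref{sec: the renormalized potential}:
\begin{itemize}
\item It writes $\E^T_{m,\eps}[\cdots]=-\bigl(\tilV_1^{2k}+[\tilV_\infty^{2k}-\tilV_1^{2k}]\bigr)$ by \Cref{thm:the equivalence of St and Vt}.
\item It proves the uniform bound $\norm{(\E_{m,\eps}[\vp(\cdot)\vp(f)])'}_{L^\infty(B_\Lambda)}\leq B$. The Wasserstein pairing of \Cref{lem:estimate for the field dependent part of the counter term} then gives $|e^{i(\cdots)}-1|\leq C H_1^{2k}(\xx,\Si)$, which is integrated against $h_1^{2k}$ using the $H$-weighted norm bounds of \Cref{prop:induction statement}.
\item It bounds the $\tilV_\infty^{2k}-\tilV_1^{2k}$ piece by the majorant of \Cref{cor:convergence of the Vinfty-Vt terms}.
\item In $d=1$ that majorant depends on $m$, so $m\to 0$ cannot be taken by dominated convergence. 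The paper instead decomposes $\tilV_\infty^{2k}(\cdot|m)-\tilV_1^{2k}(\cdot|m)=\tilV_\infty^{2k}-\tilV_1^{2k}+R_1^{2k}(m,\cdot)$. It shows the error term vanishes using the mass-derivative bounds of \Cref{lem:differentiability of Vtn with respect to mass ETC}, and separately controls replacing $\E_m[\vp(x)\vp(f)]$ by its massless limit inside the phase.
\end{itemize}
Your $k=1$ argument is adequate where $N\leq 2$, that is, for all $d\geq 2$ and for $d=1$ with $\beta<3\pi$. The range $\beta\in[3\pi,4\pi)$ in $d=1$ is left unproved.

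Two smaller points:
\begin{itemize}
\item Since $\int f=0$, the $\log m$ term in $K^m f$ vanishes identically, so no phase rotation is needed. A diverging phase $e^{i\sigma c_m}$ would not fit part 4 of \Cref{thm:convergence of the partition function} anyway.
\item Your $O(|x_1-x_2|^2)$ bound integrates up to $\beta<(d+2)2\pi$. So the hypothesis $\beta<(d+1)2\pi$ enters through \Cref{thm:convergence of the partition function}, not through this step.
\end{itemize}
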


\begin{remark}
We will prove the above theorems for $m\in (0,1)$, since other values can be obtained by rescaling space. Also, the main difficulties arise in the limit $m\to 0$, where our main interest also lies. Similarly, we will restrict $\eps \in (0,1)$ in our analysis.
\end{remark}

\subsection{Strategy of the proofs, main novelties and the structure of the rest of the article}
We closely follow the strategy used in \cite{BaWe24a} to prove the analogous results for $d=2$ and $\beta\in (0,6\pi)$ (which is exactly our parameter range $\beta\in (0,(d+1)2\pi)$ when $d=2$). We also make crucial use of the Onsager-type inequality introduced in \cite{LaRhVa23a} in our inductive arguments, especially in the case $d=1$.

In \Cref{sec: the renormalized potential}, we introduce a crucial tool, the renormalized potential. First, we use the scale decomposition $\vp_{m,\eps}\overset{d}{=}\vp_{m,(\eps^2,t)}+\vp_{m,\sqrt{t}}$, where the fields on the right-hand side are independent Gaussian fields with covariances $K_\eps^m-K_{\sqrt{t}}^m$ and $K_{\sqrt{t}}^m$, respectively; recall the notation from \Cref{eq:def of Keps}. We then integrate out the UV-rough part, that is, scales from $\eps^2$ to $t$, and introduce the so-called renormalized potential $V_t$, which has a series representation known as the iterated Mayer expansion discovered in \cite{BrKe87a}. This series is written in terms of recursively defined functions $\tilV_t^n\colon (\R^{d}\times\{-1,1\})^n\to \R$, which turn out to be nothing more than the charge cumulants of the field $\vp_{m,(\eps^2,t)}$. We control the functions $\tilV_t^n$ inductively using the recursion. The Mayer expansion needs to be renormalized for $\beta>2d\pi$. The counterterms used in the renormalized partition function produce terms involving $\tilV_\infty^n$, which are equal to the full free field charge cumulants. Thus, we also need to control the difference $\tilV_\infty^n-\tilV_t^n$ when $n$ is even and less than or equal to $N$. These yield geometric convergence of the iterated Mayer expansion for suitable values of $t$.

In \Cref{sec:Analysis of the partition function} we use the iterated Mayer expansion for the renormalized potential to control the renormalized partition function. Furthermore, we derive an everywhere convergent series representation, in powers of $z$, of the renormalized partition function. Using the estimates for $\tilV_t^n$ and $\tilV_\infty^n-\tilV_t^n$ obtained in \Cref{sec: the renormalized potential}, we prove that this power series also has a limit as we first take $\eps\to 0$ and then take $m\to 0$. This yields \Cref{thm:renormalizability of the partition function}.

In \Cref{sec:The analysis of the correlation functions and the existence of the field and t}, we prove \Cref{thm:existence of the sine Gordon correlation functions,thm:existence of the field}. In both proofs, after the crucial step of using the Cameron-Martin-Girsanov transformation, we write everything in terms of the renormalized partition function and the free field correlation functions. Controlling these allows us to prove the theorems.

The free field correlation functions are introduced and analyzed in Appendix \ref{sec:Correlation functions of the reference field.}. We control the charge cumulants of the free field by writing them as $\tilV_1^n+[\tilV_\infty^n-\tilV_1^n]$ and separately controlling the contribution from each term, using the estimates established in \Cref{sec: the renormalized potential}.

Below, we list the main novelties of this work:
\begin{itemize}
\item In $d=1$, we improve the results in \cite{LaRhVa23a} by studying the analytic properties of the partition function and proving the existence and analyticity of the mixed correlation functions. In particular, we treat the gradient correlation functions, which are not considered in \cite{LaRhVa23a} at all, and we work with smeared rather than pointwise correlation functions. We also explicitly control the $m\to 0$ limit. Since $N$ is arbitrary, the inductive arguments controlling the functions $\tilV_t^n$ must be generalized from the case $N=2$ in \cite{BaWe24a}. Here we use the Onsager-type inequality from \cite{LaRhVa23a} for terms with $n\leq N+1$. Moreover, the control of the terms $\tilV_\infty^n-\tilV_t^n$ becomes more difficult in this case, since they can no longer be computed explicitly and we do not have integrable majorants for them that are independent of the mass $m$.

\item In $d=2$, as mentioned, we reproduce results analogous to those in \cite{BaWe24a}. However, we also make the analysis of the renormalized potential more robust and easier to generalize. First of all, we avoid using the infinite-dimensional Polchinski equation in the proof that the renormalized potential coincides with the iterated Mayer expansion. The induction controlling the functions $\tilV_t^n$ is also made simpler: we do not need to treat the term $n=4$ separately, nor do we need to treat separately the cases where the recursion for higher-order terms contains the term $n=2$. Furthermore, all of our estimates are uniform over the charge configuration. We could control the terms $\tilV_t^2$ and $\tilV_t^3$ as in \cite{BaWe24a}, but we choose to use the Onsager-type inequality here as well, to remain consistent with our analysis in $d=1$. The analysis used here generalizes straightforwardly to $d\geq 3$, and is in fact carried out simultaneously for all $d\geq 2$.

\item To the best of our knowledge, the construction is completely novel for $d\geq 3$. Most notably, we establish that the first collapse point $\beta_2$, that is, the point where the smeared charge two-point function of the free field diverges, scales linearly with the dimension. Our results also imply that the higher collapse points do not scale with the dimension. Indeed, as discussed in \Cref{rem: renormalizability in higher dimensions}, even the charge two-point function cannot be made locally integrable above $(d+2)2\pi<\beta_4$ by our methods.
\end{itemize}

The guiding philosophy of this article is to use the case $d=1$ as a testing ground for the procedure introduced above, while allowing an arbitrary number of counterterms. Consequently, if the analysis in \Cref{sec: the renormalized potential} can be refined to cover the full regime $\beta \in (0,8\pi)$ in $d=2$, then analogous results would follow from our computations. Our analysis of the partition function and the correlation functions treats all dimensions equally up to the point where more refined estimates would be needed. Everything rests on the estimates in \Cref{sec: the renormalized potential}, in particular \Cref{prop:induction statement,cor:convergence of the Vinfty-Vt terms}. Thus, if we can prove analogues of these in the full range $\beta\in (0,8\pi)$ for $d=2$, everything else should follow readily. As they stand, they do not utilize possible cancellations arising from summing over $\Si\in\{-1,1\}^n$. These are difficult to propagate in the induction using the recursion \Cref{eq:def of V1,eq:def of Vn}. However, we can also represent the functions defined by such a recursion as an expansion over certain types of trees, where the $\Si$-summation might be more tractable. We leave this analysis for future work.
\end{subsection}

\begin{subsection}{Earlier results}
The mathematically rigorous study of the two-dimensional Euclidean sine-Gordon field theory began with Fr\"ohlich in 1976 \cite{Fr76a}, where he constructed the field theory in the finite regime $\beta<4\pi$. There, the only renormalization required is Wick ordering. He also investigated connections to one- and two-dimensional Coulomb and Yukawa gases. In the early 1980s, Benfatto, Gallavotti and Nicol\`o initiated a program to study the UV-behavior of the two-dimensional massive sine-Gordon model on the full range $(0,8\pi)$ \cite{BeGaNi82a,Ni83a,NiReSt86a}. These results use non-perturbative renormalization techniques, but they do not explicitly construct the model or its correlation functions. Instead, they prove the UV-stability of the problem by establishing upper and lower bounds for the partition function.

In the latter half of the 1980s, cluster expansion methods for the model were developed in \cite{Be85a,BrKe87a}. In particular, Brydges and Kennedy initiated the development of several tools that we also use in this article. Dimock and Hurd were the first to apply rigorous renormalization group methods to construct the two-dimensional sine-Gordon model \cite{DiHu93a,DiHu00a}; see also \cite{Pe25a} for a recent account of similar methods. In \cite{DiHu93a} they construct the model for the full range $(0,8\pi)$ and prove analyticity of the field correlations for sufficiently small $|z|$. Dimock and Hurd also observed that the UV- and IR-behavior of the model exhibit a form of duality when the threshold $\beta = 8\pi$ is crossed: the model is asymptotically free in the UV-regime with an IR-cutoff for $\beta < 8\pi$, and the situation is reversed for $\beta > 8\pi$.

The papers \cite{Di98a,BeFaMa09a} construct the model for $\beta < 16\pi/3$ and prove the Coleman correspondence at the free fermion point $\beta = 4\pi$ and, respectively, for the full range $\beta \in (0,16\pi/3)$. Both articles treat only the case of small $|z|$. All of the constructions of the model above work in finite volume for the massless case. However, Fr\"ohlich already constructed the free energy (or equivalently the pressure in the coulomb gas picture) of the massless model in the finite regime ($\beta\in (0,4\pi)$) and infinite volume in \cite{Fr76a}.

We now arrive at the line of work to which this article belongs methodologically. In \cite{BaWe24a} Bauerschmidt and Webb construct the model and its charge and gradient correlation functions in finite volume for $\beta \in (0,6\pi)$, and for all $z\in\R$, using renormalization techniques originating in \cite{BrKe87a}. They also establish the Coleman correspondence at the free fermion point $\beta=4\pi$ and use this to extend the massless model (as a probability measure on $\Scal'(\R^2)$) to the whole space $\mathbb{R}^2$ at this point. They also construct the massless model as a probability measure on $\sfrac{\Dcal'(\R^2)}{\R}$ without the Coleman correspondence using correlation inequalities from \cite{FrPa78a} for the range $\beta\in (0,6\pi)$.  Relying on similar techniques and together with collaborators, they have also proved other important instances of bosonization at the free fermion point \cite{BaMaWe25a,PaViWe25a}.

For the two-dimensional model, the non-perturbative construction in the full range $\beta\in (0,8\pi)$ (even in finite volume) for both the massless and massive models, and the full plane construction for the massless model outside the free fermion point (even in the perturbative regime), are still elusive in rigorous mathematical literature.

In one dimension, we mention the construction by Lacoin, Rhodes and Vargas of the model for the full subcritical range $\beta \in (0,4\pi)$ in finite volume. They consider a more general log-gas, which leads to a wider class of covariance kernels for the reference field in the sine-Gordon representation. As a consequence, they do not distinguish between the massive and massless models. They also construct the model and the pointwise charge correlation functions for arbitrary $z \in \mathbb{R}$, but they do not address the analyticity of these in $z$.

One may also consider various extensions or modifications of the two-dimensional model, as well as approaches based on techniques quite different from those employed in the present article, see for example \cite{BaBo21a,Ba22a,GuHaTa25a,OhRoSo21a,KaPiWi91a,HaSh16a,BrCa25a}, and references therein. Finally, we stress that this list of earlier work is far from complete, as the mathematical literature on the sine-Gordon model is extensive.
\end{subsection}

\begin{subsection}{Notation}
In addition to the notation introduced thus far, we collect some further notation in this section.

For index sets we will use the following notation: $[n]:=\{1,2,\dots,n\}$ and $[[n,k]]:=\{n,n+1,\dots,k\}$ for $n,k\in\N$ with $k>n$. Arbitrary subsets of $[n]$ will be denoted by capital letters like $I$ or $J$, and we will use sets as indices for the spatial variables $x$ and the charges $\sigma\in\{-1,1\}$. That is, for $I:=\{i_1,i_2,\dots,i_j\}$, by $(\xx_I,\Si_I)$ we mean the vector $(x_{i_1},x_{i_2},\dots,x_{i_j},\sigma_{i_1},\sigma_{i_2},\dots,\sigma_{i_j})\in(\R^d)^j\times\{-1,1\}^j$. If $I=[n]$ we omit the subscripts so that $(\xx,\Si)\in (\R^d)^n\times\{-1,1\}^n$. If there is only one type of field ($\vp$) or test function $\eta$ involved in the computation, we will use the shorthand notations $\vp_j:=\vp(x_j)$ and $\eta_j:=\eta(x_j,\sigma_j)$.

In addition to the regularized covariance $K_\eps^m$ introduced in \Cref{eq:def of Keps}, we will also encounter the integrals
\begin{equation}
\begin{split}
\label{eq:def of Kst}
K_{s,t}^m(x,y)&:=\int_s^t C_r^m(x,y)\dr
\\
K_{s,t}(x,y)&:=\int_s^t C_r(x,y)\dr
\end{split}
\end{equation}
for any $t>s>0$. Furthermore, we will use the following shorthand notations
\begin{equation}
\label{eq:some notations for K and the heat kernel}
\begin{split}
\Kcal(x,x)&=:\Kcal(0)
\\
\Kcal(x_i,x_j)&=:\Kcal(i,j)
\\
\Ccal_s(x,x)&=:\Ccal_s(0)
\\
\Ccal_s(x_i,x_j)&=:\Ccal_s(i,j).
\end{split}
\end{equation}
Above, $\Kcal$ represents $K_{\cdot}^m$, $K_{\cdot,\cdot}^m$ or $K_{\cdot,\cdot}$, and $\Ccal_s$ represents $C_s^m$ or $C_s$.

We will denote constants in the estimates by $C$ or $B$, and these may have subscripts or variables ($C_\beta$ and $B(\beta)$ would both denote a constant depending on $\beta$). We allow the constants to change from (in)equality to (in)equality in deriving estimates or making computations, to avoid naming new constants that are unnecessary to distinguish and which we do not need in the future.
\end{subsection}

\end{section}

\section*{Acknowledgements}
The author wishes to thank Christian Webb and Eero Saksman for useful discussions during this work. 

\section*{Funding}
The author was supported by the Emil Aaltonen Foundation, the Academy of Finland through Grant 348452, the Academy of Finland CoE FiRST, and ERC grant
CONFSTAT funded by the European Union. Views and opinions expressed are however those
of the authors only and do not necessarily reflect those of the European Union or ERC. Neither
the European Union nor ERC can be held responsible for them.

\begin{section}{The renormalized potential}
\label{sec: the renormalized potential}
We begin the analysis of the construction of our model by introducing an approach that was first used in \cite{BrKe87a} and has recently been employed by Bauerschmidt and Webb, and their collaborators in \cite{BaBo21a,BaHo22a,BaWe24a,BaMaWe25a,PaViWe25a}. First let $\eta\in L_c^\infty(\R^d\times \{-1,1\})$ and $\vp\in C(\R^d)$, and define
\begin{equation}
\label{eq:def of V0}
V_0(\beta,\eta,\vp|\eps):=\eps^{-\frac{\beta}{4\pi}}\sum_{\sigma\in\{-1,1\}}\int_{\R^d}\eta(x,\sigma)e^{i\sqrt{\beta}\sigma\vp(x)}\dx. 
\end{equation}
Next we define the generalized partition function
\begin{equation}
\label{eq:def of the generalized partition function}
\Cal{Z}_0(\beta,\eta| m,\eps):=\E_{m,\eps}\abr{e^{-V_0(\beta,\eta,\vp | \eps)}},
\end{equation}
which will play a crucial role in our analysis. Note that choosing $\eta=-z\1_\Lambda$ yields exactly the partition function $\Zcal_0(\beta,z,\Lambda|m,\eps)$ of our sine-Gordon model. In the proofs of the existence of the sine-Gordon correlation functions and of the field itself, we will choose $\eta$ to depend suitably on certain parameters. The correlation functions will then be obtained by taking logarithmic derivatives of this generalized partition function with respect to these parameters.

To analyze the generalized partition function, we decompose the regularized free field $\vp_{m,\eps}$ into two independent components living on different scales. More precisely, recalling the notation from \Cref{eq:def of Keps,eq:def of Kst}, we write $\vp_{m,(\eps^2,t)}$ for the Gaussian field with covariance $K_{\eps^2,t}^m$ and $\vp_{m,\sqrt{t}}$ for the Gaussian field with covariance $K_{\sqrt{t}}^m$. Since these are independent Gaussians and $K_\eps^m=K_{\eps^2,t}^m+K_{\sqrt{t}}^m$, we have $\vp_{m,\eps}\overset{d}{=}\vp_{m,(\eps^2,t)}+\vp_{m,\sqrt{t}}$. The field $\vp_{m,(\eps^2,t)}$ contains the UV-rough part (in the limit $\eps\to 0$) at small scales, while $\vp_{m,\sqrt{t}}$ is the UV-regular part.

Next, as is customary in renormalization theory, we integrate out the rough part and write
\begin{equation}
\label{eq:partition function in terms of th renormalized potential}
\Zcal_0(\beta,\eta| m,\eps)=\E_{m,\sqrt{t}}\abr{e^{-V_t(\beta,\eta,\vp | \eps)}},
\end{equation}  
where
\begin{equation}
\label{eq:def of the renormalized potential}
e^{-V_t(\beta,\eta,\vp_{m,\sqrt{t}}|m,\eps)}:=\E_{m,(\eps^2,t)}\abr{e^{-V_0(\beta,\eta,\vp_{m,\eps}|\eps)}}=\E_{m,(\eps^2,t)}\abr{e^{-V_0(\beta,\eta,\vp+\vp_{m,\sqrt{t}}|\eps)}},
\end{equation}
where we have written $\vp\equiv \vp_{m,(\eps^2,t)}$, and $\E_{m,(\eps^2,t)}$ denotes expectation with respect to the law of $\vp$. The quantity $V_t$ above can be defined for an arbitrary sufficiently regular field $\psi$ independent of $\vp_{m,(\eps^2,t)}$ (or deterministic) in place of $\vp_{m,\sqrt{t}}$ by the rightmost formula. Note that, since the test functions $\eta$ are complex-valued, a priori only the exponential $e^{-V_t}$ is well-defined. However, by looking at the map $z\mapsto e^{-V_0(\beta,z\eta,\vp+\psi|m,\eps)}$ (and its expectation with respect to the law of $\vp\equiv \vp_{m,(\eps^2,t)}$) we can prove that for small enough $\eta$ also
\begin{equation}
V_t(\beta,\eta,\psi|m,\eps):=\log(\E_{m,(\eps^2,t)}\abr{e^{-V_0(\beta,\eta,\vp+\psi|m,\eps)}})
\end{equation}
is well-defined; see the beginning of the proof of \Cref{thm:the equivalence of St and Vt} part (ii) from \Cref{eq:first step in the proof of theorem 2.1} onwards for details. Alternatively, considering the equivalence with the series $S_t$ introduced in \Cref{eq:def of S} below, we can view $V_t$ as well-defined for any $\eta\in L_c^\infty(\R^d\times \{-1,1\})$ if $t$ is chosen small enough depending on $\eta$.

The quantity $V_t$ is called the renormalized potential in \cite{BaWe24a}. We will adhere to this terminology even though it is somewhat at odds with the rest of our usage of the word “renormalized,” since every other renormalized quantity involves explicit counterterms. Instead of studying the partition function $\mathcal{Z}_0$ directly, it is more convenient to study the renormalized potential $V_t$. Moreover, in order to analyze $V_t$, it turns out to be advantageous to consider the following series. For $\vp\in C(\R^d)$ put 
\begin{equation}
\label{eq:def of S}
S_t(\beta,\eta,\vp|m,\eps):=\sum_{n=1}^\infty\frac{1}{n!}\sum_{\Si\in\{-1,1\}^n}\int_{\R^{nd}}\rbr{\prod_{i=1}^n \eta_i}e^{i\sqrt{\beta}\sum_{j=1}^n\sigma_j\vp_j}\tilde{V}_t^n(\xx,\Si|m,\eps)\d\xx,
\end{equation}
where we have used the shorthand notation $\eta_i:=\eta(x_i,\sigma_i)$ and $\vp_i:=\vp(x_i)$, and we have omitted the dependence on $\beta$ of the functions $\tilV_t^n$. The functions $\tilde{V}_t^n$ satisfy the following recursion. For $n=1$,
\begin{equation}
\begin{split}
\label{eq:def of V1}
\tilde{V}_t^1(x,\sigma|m,\eps)\equiv\tilde{V}_t^1(m,\eps)&:=\exp\!\Big(-\frac{\beta}{2}\abr{K_{\eps^2,t}^m(0)+\frac{1}{4\pi}\log(\eps^2)}\Big)
\\
&=\exp\!\Big(-\frac{\beta}{2}\abr{\int_{\eps^2}^t C_s^m(0)\ds+\frac{1}{4\pi}\log(\eps^2)}\Big),
\end{split}
\end{equation}
where we have used the notations from \Cref{eq:some notations for K and the heat kernel}. For any $n\geq 2$,
\begin{equation}
\begin{split}
\label{eq:def of Vn}
\tilde{V}_t^n(\xx,\Si|m,\eps):=\frac{\beta}{2}\sum_{I\varsubsetneq [n]}\int_{\eps^2}^t&\bigg\{\bigg(\sum_{\substack{i\in I\\ j\in I^c}}\sigma_{ij}C_s^m(i,j)\bigg)\tilde{V}_s^{|I|}(\xx_I,\Si_I|m,\eps)\tilde{V}_s^{|I^c|}(\xx_{I^c},\Si_{I^c}|m,\eps)
\\
&\qquad\times \exp\!\Big(-\frac{\beta}{2}\sum_{k,l=1}^n\sigma_{kl}K_{s,t}^m(k,l)\Big)\bigg\}\ds,
\end{split}
\end{equation}
where we have denoted by $[n]:=\{1,2,\dots,n\}$ the discrete interval from $1$ to $n$, by $I^c$ the complement of $I$ with respect to $[n]$, and by $|I|$ the cardinality of $I$. Furthermore, we have indexed variables by sets: for $k\in\N$ and $J=\{j_1,j_2,\dots,j_k\}$ we denote $\zz_J=(z_{j_1},z_{j_2},\dots,z_{j_k})$ for $z=x,\sigma$. We have also written $\sigma_{pq}:=\sigma_p\sigma_q$ for $\sigma_p,\sigma_q\in\{-1,1\}$.

We note that the sum $\sum_{I\varsubsetneq [n]}$ over non-empty proper subsets $I$ of $[n]$ accounts for all ordered bipartitions $\{A,A'\}$ of the set $[n]$ such that $A\uplus A'=[n]$, where $\uplus$ denotes disjoint union. This means that for fixed $A,A'\varsubsetneq [n]$ satisfying $A\uplus A'=[n]$, the sum contains both $\{A,A'\}$ and $\{A',A\}$.

We readily see that
\begin{equation}
\tilV_t^1(m,\eps)=\E_{m,(\eps^2,t)}\abr{\eps^{-\frac{\beta}{4\pi}}e^{i\sigma\sqrt{\beta}\vp(x)}},
\end{equation}
where $\sigma=\pm 1$. This is the first charge correlation function with respect to the regularized free field at small scales below $t>0$. The equivalence above will generalize to $n>1$ with the truncated correlation functions as stated in \Cref{thm:the equivalence of St and Vt} below. To obtain convergence of $S_t$ in the limit $m,\eps\to 0$ for all relevant values of $\beta$, it needs to be renormalized. First fix $\beta\in (0,4\pi)$ for $d=1$ or $\beta\in (0,(d+1)2\pi)$ for $d\geq 2$, and pick $N\equiv N(\beta)$ such that $\beta\in[\beta_N,\beta_{N+2})$. Then define
\begin{equation}
\label{eq:def of StR}
\begin{split}
S_t^R&(\beta,\eta,\vp|m,\eps)
\\
&:=\sum_{n=1}^\infty\frac{1}{n!}\sum_{\Si\in\{-1,1\}^n}\int_{\R^{nd}}
\rbr{\prod_{i=1}^n \eta_i}
\rbr{
e^{i\sqrt{\beta}\sum_{j=1}^n\sigma_j\vp_j}\tilde{V}_t^n(\xx,\sigma|m,\eps)
-
\1_{n\leq N}\1_{\{\sum_{i=1}^n\sigma_i=0\}}
\tilV_\infty^n(\xx,\Si|m,\eps)
}
\d\xx.
\end{split}
\end{equation}
Note that the subtracted $\tilV_\infty^n$ terms turn out to produce exactly the multiplicative counterterms $Z_k$ in the definition \Cref{eq:definition of the renormalized partition function of the sG model} of the renormalized partition function $\Zcal_R$ when we put $\eta=-z\1_{\Lambda}$. As already mentioned, for finite $t$ the functions $\tilV_t^n$ are just the pointwise charge cumulants of the regularized free field at scales below $t$. Furthermore, the full ($t=\infty$) regularized free field charge cumulants exist, so we define $\tilV_\infty^n$ through this equivalence.

Now we are ready to state the main result of this section.
\begin{theorem}
\label{thm:the equivalence of St and Vt}
Let $\beta\in (0,4\pi)$ for $d=1$ and $\beta\in (0,(d+1)2\pi)$ for $d\geq 2$, $\eta\in L_c^\infty(\R^d\times \{-1,1\})$. Then there exists $t^*(\eta)\equiv t^*>0$ such that for $t<t^*$ (this restriction is needed only for the first two claims) the following statements are true.
\begin{enumerate}
\item For $\vp\in C^1(\R^d)$ the renormalized series $S_t^R(\beta,\eta,\vp|m,\eps)$ defined in \Cref{eq:def of StR} converges absolutely and uniformly in $m,\eps\in (0,1)$.
\item For $\vp\in C(\R^d)$ and fixed $m,\eps\in (0,1)$ the series $S_t(\beta,\eta,\vp|m,\eps)$ converges absolutely and equals the renormalized potential $V_t(\beta,\eta,\vp|m,\eps)$.
\item Finally, for fixed $m,\eps\in (0,1)$ and any $t>0$ we have
\begin{equation}
\tilV_t^n(\xx,\Si|m,\eps)
=
(-1)^{\,n-1}\,
\E_{m,(\eps^2,t)}^T\abr{\eps^{-\frac{\beta}{4\pi}}e^{i\sqrt{\beta}\sigma_j\vp(x_j)}\mid j\in[n]}.
\end{equation}
\end{enumerate}
\end{theorem}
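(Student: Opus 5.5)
The plan is to prove (iii) first, by a finite-dimensional argument. It identifies the recursion \Cref{eq:def of V1,eq:def of Vn} with the cumulant structure, and (ii) and (i) then follow from it. Fix $n$, points $x_1,\dots,x_n$ and charges $\Si$. Set $X_j:=\eps^{-\frac{\beta}{4\pi}}e^{i\sqrt{\beta}\sigma_j\vp(x_j)}$, where $\vp=\vp_{m,(\eps^2,t)}$. For $J\subset[n]$ the moments are explicit:
\begin{equation*}
M_t(J):=\E_{m,(\eps^2,t)}\Big[\prod_{j\in J}X_j\Big]=\eps^{-\frac{\beta|J|}{4\pi}}\exp\Big(-\frac{\beta}{2}\sum_{k,l\in J}\sigma_{kl}K^m_{\eps^2,t}(k,l)\Big).
\end{equation*}
Call the cumulants $\kappa_t(J)$ and put $W_t^{|J|}:=(-1)^{|J|-1}\kappa_t(J)$. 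I would show that $W_t^n$ satisfies the same Duhamel-type recursion as $\tilV_t^n$, with the same $n=1$ term. For $n=1$, $W_t^1=M_t(\{1\})$, which equals $\tilV_t^1$ by \Cref{eq:def of V1} since $K^m_{\eps^2,t}(0)=\int_{\eps^2}^tC^m_s(0)\ds$.

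To obtain the recursion, consider the finite-dimensional generating function $G_t(\lambda,\psi):=\log\E\big[\exp\big(-\sum_j\lambda_jX_je^{i\sqrt\beta\sigma_j\psi_j}\big)\big]$, where $\psi\in\R^n$ is a deterministic shift. As a function of $\psi$, $e^{G_t}$ solves the heat equation $\partial_t e^{G_t}=\frac12\sum_{k,l}C^m_t(k,l)\partial_{\psi_k}\partial_{\psi_l}e^{G_t}$. Hence $G_t$ solves a finite-dimensional Polchinski equation. Its Duhamel formulation uses the heat semigroup $e^{\frac12(t-s)\Delta_{K}}$, which acts on a monomial $e^{i\sqrt\beta\sum_{j\in J}\sigma_j\psi_j}$ by multiplying it by $\exp(-\frac\beta2\sum_{k,l\in J}\sigma_{kl}K^m_{s,t}(k,l))$. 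This multiplier is exactly the exponential in \Cref{eq:def of Vn}. The quadratic term $\frac12\langle\nabla G_s,C_s\nabla G_s\rangle$ produces the sum over bipartitions $I\uplus I^c$ with the factor $\frac\beta2\sum_{i\in I,j\in I^c}\sigma_{ij}C^m_s(i,j)$, up to the sign $i^2=-1$. This sign is absorbed by the $(-1)^{n-1}$ convention. Extracting the coefficient of $\lambda_1\cdots\lambda_n$ gives the recursion, and since the recursion determines $\tilV_t^n$ uniquely from lower orders, induction on $n$ yields (iii). Everything here is a polynomial identity in finitely many variables, so it is rigorous. It also avoids the infinite-dimensional Polchinski equation.

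For (ii), fix $m,\eps$ and $\psi\in C(\R^d)$, and consider $z\mapsto\E[e^{-V_0(z\eta,\vp+\psi)}]$. Because $|e^{i\sqrt\beta\sigma\vp}|=1$, this function is entire and equals $1$ at $z=0$, so its logarithm is analytic near $z=0$. The Taylor coefficients of the logarithm are the cumulants of $V_0$. By multilinearity of cumulants, combined with (iii) and the sign bookkeeping, these coefficients are exactly the terms of $S_t$. Absolute convergence for $z=1$ then needs a bound $|\tilV_t^n|\le n!\,C(\eps,m)^n\,\delta(t)^{n-1}$ with $\delta(t)\to0$ as $t\downarrow\eps^2$. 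I would obtain this bound by induction from the recursion: the $\ds$-integral runs over $[\eps^2,t]$ and $C^m_s$ is bounded there. It gives a radius of convergence exceeding $1$ once $t<t^*(\eta)$. A uniqueness argument for analytic functions then identifies $S_t=V_t$.

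Part (i) is the main obstacle, because the bounds must be uniform as $\eps,m\to0$. The plan is an induction on $n$ via \Cref{eq:def of Vn}. I would aim for an estimate of the form $\int|\prod\eta_i||\tilV_t^n|\d\xx\le n!\,C^n t^{c(n-1)}\|\eta\|^n|\supp\eta|$ for $n>N$. At low orders $n\le N+1$ the integrability near the diagonal, where charges collapse, is controlled by the Onsager-type inequality of \cite{LaRhVa23a}. For even $n\le N$ with $\sum\sigma_i=0$, I would split
\begin{equation*}
e^{i\sqrt\beta\sum\sigma_j\vp_j}\tilV_t^n-\tilV^n_\infty=\big(e^{i\sqrt\beta\sum\sigma_j\vp_j}-1\big)\tilV_t^n-\big(\tilV^n_\infty-\tilV^n_t\big).
\end{equation*}
Since $\vp\in C^1$ and $\sum\sigma_j=0$, the first factor is $O(\|\nabla\vp\|_\infty\max_{i,j}|x_i-x_j|)$, and this gain of one power of distance restores integrability up to $\beta<\beta_{N+2}$, or up to $(d+1)2\pi$ when $d\ge2$. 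The difference $\tilV^n_\infty-\tilV^n_t$ I would control uniformly in $m$ by bounding the large-scale covariance $K^m_{t,\infty}$ only through its oscillation $K^m_{t,\infty}(x,y)-K^m_{t,\infty}(x,x)$ on charge-neutral configurations. The step I expect to be hardest is making the $n\le N$ estimates uniform in $m$ for $d=1$ with arbitrary $N$; a geometric summation in $n$ for $t<t^*$ then finishes the proof.
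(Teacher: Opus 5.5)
\textbf{Parts (iii) and (ii).} Your part (iii) is correct, and it argues differently from the paper. The paper checks that $(-1)^{n-1}[\tilV^1_t]^{-n}\tilV^n_t$ and the Ursell functions solve the same ODE system, and then cites Brydges--Kennedy for the Ursell side and for uniqueness. You instead derive the Duhamel identity for the cumulants directly from the finite-dimensional Polchinski equation for $\log\E[\exp(-\sum_j\lambda_jX_je^{i\sqrt\beta\sigma_j\psi_j})]$. Taking the coefficient of $\lambda_1\cdots\lambda_n$ reproduces \Cref{eq:def of Vn} exactly, signs included, and the initial datum vanishes for $n\ge2$. Because the recursion is explicit, a plain induction replaces the uniqueness argument, so your route is more self-contained.

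Part (ii) follows the paper's route: cumulant generating function, then (iii), then the identity theorem. However, your crude fixed-$\eps$ bound (with $C^m_s\le(4\pi\eps^2)^{-1}$) gives a threshold depending on $\eps$ and $m$, which shrinks to $0$ as $\eps\to0$. The statement asks for $t^*(\eta)$ independent of $m,\eps$. This uniformity is also what is used later, to write $\Zcal_R=\E_{m,\sqrt t}[e^{-S^R_t}]$ at one fixed $t$ while $\eps\to0$. The paper gets it from the $\eps,m$-uniform bounds for $n>N$ obtained in part (i), which do not use $\vp\in C^1$, together with fixed-$\eps$ bounds for the finitely many low orders. So you should prove (i) first and argue this way.

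\textbf{The gap in (i).} The induction you describe cannot close. For every $n$, the recursion \Cref{eq:def of Vn} contains products $\tilV^{|I|}_s\tilV^{|I^c|}_s$ in which $I$ may be a neutral block with $|I|\le N$. For such blocks, $\int_{\Lambda^{|I|}}|\tilV^{|I|}_s|$ is not bounded uniformly in $\eps$, because $\beta\ge\beta_N\ge\beta_{|I|}$; this is precisely the divergence the counterterms remove. Your unweighted hypothesis for $n>N$ says nothing usable about these blocks, and neither of your tools helps:
\begin{itemize}
\item the field factor $e^{i\sqrt\beta\sum_j\sigma_j\vp_j}-1$ exists only for the explicit low-order terms of $S^R_t$, not inside the recursion;
\item the Onsager-type inequality (\Cref{lemma:onsager type inequality with our parametrization}) bounds only the exponential, and at the price of a factor $[H^n_t]^{-\beta/4\pi}$ that is \emph{singular} at neutral collapse.
\end{itemize}

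The missing idea is that the recursion itself supplies the compensating factor. For a neutral block, pair opposite charges along the optimal matching in the interaction kernel $\sum_{i\in I,j\in I^c}\sigma_{ij}C^m_s(i,j)$ and apply a mean-value estimate. This gives \Cref{lem:estimate for GI function}: $G^I_s\le\frac{C}{s}H^{|I|}_s(\xx_I,\Si_I)H^{|I^c|}_s(\xx_{I^c},\Si_{I^c})\sum_{i\in I,j\in I^c}f_d(|x_i-x_j|/\sqrt s)$. One can then induct on the weighted quantities $\sup_{\Si}\norm{H^n_th^n_t}_n$ of \Cref{prop:induction statement} and absorb $[H^n_t]^{-\beta/4\pi}$. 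In $d=1$ this is automatic since $\beta<4\pi$; in $d\ge2$ it requires an explicit integral, and that is where $\beta<(d+1)2\pi$ enters.

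For the same reason, your field bound must use the matching distance $d(\xx)$, not $\max_{i,j}|x_i-x_j|$. The bound $|\sum_j\sigma_j\vp(x_j)|\le\norm{\nabla\vp}_{L^\infty(B_\Lambda)}\,d(\xx)$ follows from the same pairing. Only a factor of $d(\xx)$ can absorb the Onsager singularity: for two tight dipoles at mutual distance of order $\sqrt t$, $d(\xx)\to0$ while $\max|x_i-x_j|$ stays of order $\sqrt t$.

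Finally, your oscillation idea for $\tilV^n_\infty-\tilV^n_t$ is not yet an argument. The oscillation of $K^m_{t,\infty}$ gains only $O(d(\xx)^2)$ on whole blocks. Beyond $n=2$, individual terms of the moment--cumulant expansion stay non-integrable under partial collapses, for instance one dipole shrinking inside a neutral four-block. The cancellations between partitions would therefore still have to be organized. The paper does this by writing the difference as $\int_t^\infty\partial_s\tilV^n_s\,\mathrm{d}s$ and bounding it with the same majorants. It also uses the improved large-$s$ decay in \Cref{lem:estimate for the full sum of the massive heat kernels} and the condition $\beta\ge\beta_N$, which keeps the mass exponents nonnegative.
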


\begin{remark}
The regularity assumptions on the field $\vp$ in the above theorem are minimal requirements for this exact argument, but are not necessarily optimal. The Gaussian field $\vp_{m,\sqrt{t}}$ to which the result will be applied is in fact almost surely smooth for $t,m>0$. 
\end{remark}

For the proof of the above theorem we will need additional tools. First we have the basic lemma that serves as the prototype for the behavior of the functions $\tilV_t^n$.

\begin{lemma}
\label{lem:uniform bound for V1}
For any $\beta>0$ we have
\begin{equation}
\label{eq:estimate for V1}
\sup_{m,\eps\in (0,1)}|\tilV_t^1(m,\eps)|\leq C
(t\wedge m^{-2})^{-\frac{\beta}{8\pi}}
\end{equation}
with some constant $C>0$ independent of $m,\eps\in (0,1)$ and $t>0$. Furthermore, for any $t>0$ we can also obtain a constant upper bound, that is,
\begin{equation}
\label{eq:uniform bound for V1}
\sup_{m,\eps\in (0,1)}|\tilV_t^1(m,\eps)|\leq C'
\end{equation}
for some different constant $C'>0$ still independent of $m,\eps \in (0,1)$.  

The constants $C,C'$ may depend on $\beta$, but are finite for all $\beta>0$.
\end{lemma}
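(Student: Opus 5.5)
We compute the exponent in \Cref{eq:def of V1} explicitly. Since $C_s^m(0)=\frac{1}{4\pi s}e^{-m^2s}$, we have
\begin{equation*}
\tilV_t^1(m,\eps)=\exp\!\Big(-\frac{\beta}{2}\Big[\int_{\eps^2}^t\frac{e^{-m^2s}}{4\pi s}\ds+\frac{1}{4\pi}\log(\eps^2)\Big]\Big)=\exp\!\Big(-\frac{\beta}{8\pi}\Big[\int_{\eps^2}^t\frac{e^{-m^2s}}{s}\ds+\log(\eps^2)\Big]\Big).
\end{equation*}
Here $\tilV_t^1$ is a positive real number. Both claims therefore reduce to a lower bound on
\begin{equation*}
A:=\int_{\eps^2}^t\frac{e^{-m^2s}}{s}\ds+\log(\eps^2).
\end{equation*}
For \Cref{eq:estimate for V1} we need $A\geq \log(t\wedge m^{-2})-C$. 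For \Cref{eq:uniform bound for V1} we need $A\geq -C_t$ with $C_t$ uniform in $m,\eps\in(0,1)$. (If $t\le\eps^2$, the integral is read as $-\int_t^{\eps^2}$, and this case is handled the same way.)

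The key step is a case split according to the relative sizes of $\eps^2$, $t$ and $m^{-2}$. Since $m,\eps<1$, we always have $\eps^2<1<m^{-2}$.

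\emph{Case $t\le m^{-2}$.} On $[\eps^2,t]$ we have $m^2s\le 1$, hence $e^{-m^2s}\ge e^{-1}$. Using only this would give a bad constant factor in front of the logarithm, so we argue more carefully. Write $e^{-m^2s}=1-(1-e^{-m^2s})$ and use $1-e^{-m^2s}\le m^2s$. This gives
\begin{equation*}
\int_{\eps^2}^t\frac{e^{-m^2s}}{s}\ds\ge \log(t/\eps^2)-m^2t\ge \log t-\log\eps^2-1,
\end{equation*}
so $A\ge \log t-1$. If $t<\eps^2$, the same estimate holds, because the sign-reversed integral is bounded by $\log(\eps^2/t)$ and the errors only help.

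\emph{Case $t>m^{-2}$.} The integrand is positive, so
\begin{equation*}
\int_{\eps^2}^t\frac{e^{-m^2s}}{s}\ds\ge\int_{\eps^2}^{m^{-2}}\frac{e^{-m^2s}}{s}\ds\ge \log(m^{-2})-\log\eps^2-1
\end{equation*}
by the previous computation with $t=m^{-2}$. Hence $A\ge\log(m^{-2})-1$.

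Combining the two cases gives $A\ge \log(t\wedge m^{-2})-1$. Therefore
\begin{equation*}
\tilV_t^1\le e^{\beta/(8\pi)}(t\wedge m^{-2})^{-\beta/(8\pi)},
\end{equation*}
which is \Cref{eq:estimate for V1} with $C=e^{\beta/8\pi}$.

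For \Cref{eq:uniform bound for V1}, fix $t>0$. Since $m<1$, we have $m^{-2}>1$, so $t\wedge m^{-2}\ge t\wedge 1$. Thus $(t\wedge m^{-2})^{-\beta/8\pi}\le (t\wedge1)^{-\beta/8\pi}$, and we may take $C'=C(t\wedge 1)^{-\beta/8\pi}$. This constant depends on $t$, which is permitted for fixed $t$, but not on $m,\eps$.

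The only delicate point is keeping the coefficient of $\log$ exactly $1$, so that the $\log\eps^2$ divergence cancels. This is handled by the estimate $1-e^{-x}\le x$ rather than a crude lower bound on the exponential. The constants are finite for every $\beta>0$, as claimed.
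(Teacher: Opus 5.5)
Your proof is correct and follows essentially the paper's route: the paper also writes $\log(\eps^2)$ as an integral, controls the defect $1-e^{-m^2s}$ by $m^2s$ when $t<m^{-2}$, and for $t\ge m^{-2}$ drops the positive contribution of the scales beyond $m^{-2}$, just as your truncation does. Your $t$-dependent constant $C'=C(t\wedge 1)^{-\beta/(8\pi)}$ is slightly more careful than the paper's ``bound the second exponential by $1$''. That step is only valid for $t\ge 1$, which is the regime where the constant bound is later used, and there your $C'$ reduces to the $t$-independent value $e^{\beta/(8\pi)}$.
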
  

\begin{proof}
Firstly, we have
\begin{equation}
\label{eq:useful form of V1}
\tilV_t^1(m,\eps)
=
\underbrace{
\exp(\frac{\beta}{2}\int_{\eps^2}^1\frac{1-e^{-m^2s}}{4\pi s}\ds)
}_{\leq C}
\exp(-\frac{\beta}{2}\int_1^t\frac{e^{-m^2s}}{4\pi s}\ds),
\end{equation}
where we have simply written the logarithm in the definition as an integral. The constant $C>0$ is finite and can be chosen independent of $m,\eps\in (0,1)$. Consider then the second factor. For $t<m^{-2}$ we have
\begin{equation}
\exp(-\frac{\beta}{2}\int_1^t\frac{e^{-m^2s}}{4\pi s}\ds)
=
\exp(\frac{\beta}{2}\int_1^t\frac{1-e^{-m^2s}}{4\pi s}\ds)
\exp(-\frac{\beta}{2}\int_1^t\frac{\ds}{4\pi s})
\leq
e^{\frac{\beta}{2}m^2t}t^{-\frac{\beta}{8\pi}}
\leq Ct^{-\frac{\beta}{8\pi}},
\end{equation}
where the constant $C>0$ is independent of $m$ and $t$ if $t<m^{-2}$. Consider then the opposite case $t\geq m^{-2}$:
\begin{equation}
\begin{split}
\exp(-\frac{\beta}{2}\int_1^t\frac{e^{-m^2s}}{4\pi s}\ds)
&=
\exp(-\frac{\beta}{2}\int_{m^2}^1\frac{e^{-s}}{4\pi s}\ds)
\underbrace{
\exp(-\frac{\beta}{2}\int_1^{m^2 t}\frac{e^{-s}}{4\pi s}\ds)
}_{\leq 1}
\\
&\leq
\exp(-\frac{\beta}{2}\int_{m^2}^1\frac{\du}{4\pi u})
\underbrace{
\exp(\frac{\beta}{2}\int_{m^2}^1\frac{1-e^{-s}}{4\pi s}\ds)
}_{\leq C \text{ for all } m\in (0,1)}
\\
&\leq C m^{\frac{\beta}{4\pi}}
\\
&=C(m^{-2})^{-\frac{\beta}{8\pi}},
\end{split}
\end{equation}
where the constant $C>0$ is independent of $m$. Thus, by choosing the largest constant in the above analysis, we have
\begin{equation}
\sup_{m,\eps\in (0,1)}|\tilV_t^1(m,\eps)|\leq C(t\wedge m^{-2})^{-\frac{\beta}{8\pi}}. 
\end{equation}
Finally, to obtain a constant bound we simply bound the second exponential in \Cref{eq:useful form of V1} by $1$.
\end{proof}

In the analysis of $\tilV_t^n$ for general $n$, we will need the following norms. For $f\colon \Lambda^n\to \C$ we define the norm $\norm{\cdot}_{n,\Lambda}$ for functions on $\Lambda^n$ with $\Lambda\subset \R^d$ by
\begin{equation}
\label{eq:n norms without the sigma sums}
\norm{f}_{n,\Lambda}:=
\begin{cases} 
\norm{f}_{L^\infty(\Lambda)}:=\esssup_{x\in \Lambda}|f(x)|, &\text{ if $n=1$}
\\
\esssup_{x_1\in\Lambda}\int_{\Lambda^{(n-1)}}|f(\xx)|\d\xx_{n-1}, &\text{ if $n\geq 2$}
\end{cases}
\end{equation}
for suitable functions $f\colon \Lambda^n\to \C$.

If $\Lambda=\R^d$ we omit it from the notation, that is, $\norm{\cdot}_{n,\R^d}\equiv \norm{\cdot}_n$. Above, we have denoted $\d\xx_{n-1}:=\dx_2\dx_3\dots\dx_n$. If the variables are indexed by a set $I$ of size $n$ other than $[n]$, then the supremum is taken with respect to the smallest index, and for permutation-invariant functions we may take it with respect to any of the variables $x_j$ and integrate with respect to the rest. In these cases, we omit the corresponding variables from the integration and use the same notation $\d\xx_{n-1}$ for the integration measure.

Let us then introduce an auxiliary function that we will need in the analysis. Recall that $N\equiv N(\beta)$ is such that $\beta\in [\beta_{N},\beta_{N+2})$. Then set
\begin{equation}
\label{eq:def of Htn}
H_t^n(\xx,\sigma):=
\begin{cases}
\frac{d(\xx)}{\sqrt{t}}\wedge 1,\quad &\text{if $n\leq N$ and $\sum_{i=1}^n\sigma_i=0$}
\\
1,\quad &\text{otherwise.} 
\end{cases}
\end{equation}
Above, $d(\xx)$ is the $1$-Wasserstein distance between the distributions of positive and negative charges. First let us denote by $\nu$ the charge distribution 
\begin{equation}
\nu\equiv\nu(n,\xx,\Si):=\sum_{k=1}^n \sigma_k\delta_{x_k},
\end{equation}
where $\xx:=(x_1,x_2,\dots,x_n)\in (\R^d)^n$, $\Si:=(\sigma_1,\sigma_2,\dots,\sigma_n)\in \{-1,1\}^n$ and $\delta_{x_k}$ is the Dirac measure concentrated at the point $x_k\in\R^d$. We denote the positive (negative) charge distribution by $\nu_+$ ($\nu_-$). Next, assume that the configuration is neutral, that is, $\sum_{i=1}^n\sigma_i=0$. Furthermore, w.l.o.g. we may assume that the first $p=n/2$ charges have signs different from the rest. Then we define the $1$-Wasserstein distance between the positive and negative charge distributions by
\begin{equation}
\label{eq:def Wasserstein distance}
d_{1-\mathrm{Wass}}(\nu_+,\nu_-)\equiv d(\xx):=\min_{\rho\in \mathbb{S}_p}\sum_{k=1}^p|x_k-x_{p+\rho(k)}|,
\end{equation} 
where $\mathbb{S}_p$ is the symmetric group, that is, the group of permutations on $[p]=\{1,2,\dots,p\}$.

The function $H_t^n$ arises naturally when we apply the Onsager-type inequality established in \cite{LaRhVa23a} to the exponential in the definition of $\tilV_t^n$. We will present this result, adapted to our parametrization of the problem, in \Cref{sec:Onsager type inequality}. Our task is then to recover the same function in the remaining estimates, so that we can propagate the induction through the recursion for $\tilV_t^n$ and prove the results stated below, which are central to our analysis.

We can now state the auxiliary results needed in the proof of \Cref{thm:the equivalence of St and Vt} and in our proofs of the main results. First, we have

\begin{lemma}
\label{lem:differentiability of Vtn with respect to mass ETC}
Let $d=1$, fix $\beta \in (0,4\pi)$, and let $N\equiv N(\beta)$ be s.t.\ $\beta\in [\beta_{N},\beta_{N+2})$. Suppose also that $\Lambda\subset \R$ is compact. Then for all $t\in (0,\infty)$, $n\leq N$, $\Si\in\{-1,1\}^n$ and $\xx\in\Lambda^n$ outside a Lebesgue null set containing the intersection of $\Lambda^n$ with the diagonal $\{\xx=(x_1,x_2,\dots,x_n)\in \R^{n}\mid x_i=x_j \text{ for some } i\neq j\}$, the functions $\tilV_t^n(\xx,\Si|m,\eps)$ are differentiable with respect to $m$ on the interval $(0,1)$. The above holds for any $\eps\in [0,1)$. Furthermore, if $\eps=0$, then $\tilV_t^n$ is also continuous in $m$ at $m=0$. In particular, in this situation $\tilV_t^n$ is continuous in $m$ on any closed subinterval of $[0,1)$.

Furthermore, there exist functions $\tilde{h}_\cdot^n(\cdot,\cdot)\equiv \tilde{h}_\cdot^n\colon \R_+\times(\R^{n}\times \{-1,1\}^n)\to [0,\infty]$ and $\tilde{g}_\cdot^n(\cdot,\cdot)\equiv \tilde{g}_\cdot^n \colon \R_+\times(\R^n\times \{-1,1\}^n)\to [0,\infty]$ such that $|\tilV_t^n(\cdot,\cdot|m,\eps)|<\tilde{h}_t^n$ and $|\partial_m[\tilV_t^n(\cdot,\cdot|m,\eps)|_{\Lambda}]|<\tilde{g}_t^n|_{\Lambda}$ for all $t>0$ and $n\in\N$. In addition, $\tilde{h}_t^n$ and $\tilde{g}_t^n$ are permutation-invariant, independent of $m,\eps\in(0,1)$, and satisfy the following estimates
\begin{equation}
    \sup_{\Si\in\{-1,1\}^n}\norm{H_t^n(\cdot,\Si)\tilde{h}_t^n(\cdot,\Si)}_{n,\Lambda}\leq [B_1(\beta,N,\Lambda)]^{n-1}C^nn^{n-2}
    \begin{cases}
        t^{-\half}\rbr{t^{\half-\frac{\beta}{8\pi}}}^n, &\text{ if } t<1
        \\
        \max(1,[\log(t)]^{n-1}), &\text{ if } t\geq 1.
    \\
    \end{cases}
\end{equation}
and
\begin{equation}
    \sup_{\Si\in\{-1,1\}^n}\norm{\tilde{g}_t^n(\cdot,\Si)}_{n,\Lambda}\leq [\tilde{B}_1(\beta,N,\Lambda)]^{n-1}C^nn^{n-2}
    \begin{cases}
        t^{\half}\rbr{t^{\half-\frac{\beta}{8\pi}}}^n, &\text{ if } t<1
        \\
        t\max(1,[\log(t)]^{n-1}), &\text{ if } t\geq 1.
    \\
    \end{cases}
\end{equation}
where $\Lambda\subset\R$ is compact.
\end{lemma}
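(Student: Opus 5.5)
The plan is an induction on $n$ driven by the recursion \Cref{eq:def of Vn}, run simultaneously for the majorants $\tilde{h}_t^n$ and $\tilde{g}_t^n$. I would define the majorants by the same recursion with absolute values and with every $m$-dependent quantity replaced by an $m$-uniform bound. Concretely, $\tilde h_t^1$ is the bound $C(t\wedge 1)^{-\beta/8\pi}$ coming from \Cref{lem:uniform bound for V1}, using the constant bound for $t\ge 1$. In the recursion, $|C_s^m(i,j)|\le C_s(i,j)$. The exponential factor $\exp(-\frac{\beta}{2}\sum_{k,l}\sigma_{kl}K_{s,t}^m(k,l))$ is bounded by the Onsager-type inequality of \Cref{sec:Onsager type inequality}. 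That inequality gives a product of $\tilde V^1$-type factors times a decay that is exactly where $H_t^n$ comes from: for neutral configurations with $n\le N$ it yields a factor $\frac{d(\xx)}{\sqrt{t}}\wedge 1$ relative to scale $t$. Positivity of every term then gives $|\tilV_t^n|\le\tilde h_t^n$ inductively. Permutation invariance is clear, since the recursion sums over all $I$.

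For differentiability in $m$, I would differentiate the recursion under the integral. The derivative $\partial_m C_s^m=-2ms\,C_s^m$ is bounded by $2s\,C_s$, which produces the extra factor of $s\le t$. That extra factor is why $\tilde g$ carries $t^{1/2}$ in place of $t^{-1/2}$ for $t<1$, and a factor $t$ for $t\ge1$. The derivative of the exponent is bounded by $\beta\sum|\partial_m K_{s,t}^m|\le C t\sum C$-terms. The Leibniz rule applied to the product then gives a recursion for $\tilde g^n$ whose terms involve $\tilde g^{|I|}\tilde h^{|I^c|}$ and $\tilde h\tilde h\cdot t$. Dominated convergence, with $\tilde h,\tilde g$ as majorants, justifies exchanging $\partial_m$ and $\int\ds$ away from the diagonal null set. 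Continuity at $m=0$ for $\eps=0$ follows in the same way from dominated convergence, since $C_s^m\to C_s$ pointwise.

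The core is the norm estimate. I would fix $x_1$, integrate out the rest, and split the sum over $I$ according to which block contains $x_1$. For $i\in I$ and $j\in I^c$ I would bound $\int C_s(i,j)\dx_j\le C s^{(1-d)/2}=C$ (with $d=1$) against the integration over the block not containing the $\sup$ variable. Each block is then controlled by its own $\norm{\cdot}_{|I|}$ norm. The Cayley-type count $\sum_{k}\binom{n}{k}k\,(n-k)\,k^{k-2}(n-k)^{n-k-2}\le C n^{n-2}$ closes the combinatorics, with $B_1$ absorbing $|\Lambda|$ and the Onsager constants. For $t<1$ the scaling is as follows. The induction hypotheses give $s^{-1}(s^{1/2-\beta/8\pi})^{n}$, which is integrated against $\ds$. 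The Onsager factor converts this to the $t$-power, and the result is integrable at $0$ precisely because the exponent $n(\frac12-\frac{\beta}{8\pi})>0$ when $\beta<4\pi$. For $t\ge1$, the integral over $s\in[1,t]$ of $s^{-1}\log^{n-2}s$ produces the $\log^{n-1}t$.

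The main obstacle is the neutral subcritical case $n\le N$ with $\sum\sigma_i=0$. Here $t^{-1/2}(t^{1/2-\beta/8\pi})^n$ is too singular without the $H$ weight. One must show that $H_t^n\tilde h_t^n$ is controlled by the children's weighted quantities $H_s^{|I|}\tilde h_s^{|I|}$, $H_s^{|I^c|}\tilde h_s^{|I^c|}$. The Wasserstein distance is not subadditive in the needed direction when a neutral configuration splits into two non-neutral blocks. My first attempt would be the following. When both blocks are neutral, use $d(\xx)\le d(\xx_I)+d(\xx_{I^c})$ together with $\frac{d}{\sqrt t}\le\frac{d}{\sqrt s}$. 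When the blocks are non-neutral, use the Onsager decay from the exponential factor at scale $s$ versus $t$ directly, paying one $|x_i-x_j|/\sqrt{s}$ absorbed by $C_s(i,j)$ via $\int |x|\,C_s(x)\dx\le C\sqrt{s}\,$.
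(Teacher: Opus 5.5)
There is a genuine gap, and it lies in the case you single out as the main obstacle. First, you have the Onsager-type inequality backwards. \Cref{lem:estimate for the exponential in tilVtn} gives
\[
\exp\Bigl(-\frac{\beta}{2}\sum_{k,l=1}^n\sigma_{kl}K_{s,t}^m(k,l)\Bigr)\leq \bar{B}_N\Bigl(\frac{s}{t}\Bigr)^{\frac{\beta}{8\pi}}[H_t^n(\xx,\Si)]^{-\frac{\beta}{4\pi}}.
\]
This is a \emph{singular} factor for nearly neutral configurations (opposite charges attract), not a decay. The weight $H_t^n$ in the norm exists to absorb it, via $H_t^n[H_t^n]^{-\beta/(4\pi)}=[H_t^n]^{1-\beta/(4\pi)}\leq 1$ for $\beta<4\pi$. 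So the parent's weight is used up and nothing is left to hand down to the children.

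The children's weights $H_s^{|I|}H_s^{|I^c|}$ must instead come from cancellation in the interaction kernel. One pairs opposite charges inside each neutral block by the Wasserstein-optimal permutation and applies a (second-order) mean value theorem. This gives $G_s^I(\xx,\Si)\leq \frac{C}{s}H_s^{|I|}(\xx_I,\Si_I)H_s^{|I^c|}(\xx_{I^c},\Si_{I^c})\sum_{i\in I,j\in I^c}f_1(|x_i-x_j|/\sqrt{s})$, which is \Cref{lem:estimate for GI function}. Your termwise bound $|C_s^m(i,j)|\leq C_s(i,j)$ destroys this cancellation, and your substitutes cannot recover it.

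Concretely, take $N\geq 4$ and $n=3$, so $H_t^3\equiv 1$, with $I=\{1,2\}$, $\sigma_1=-\sigma_2$, and $I^c=\{3\}$.
\begin{itemize}
\item Any $\eps$-uniform majorant of $\tilV_s^2(\cdot,\Si_I)$ must be at least of order $|x_1-x_2|^{-\beta/(2\pi)}$ near the diagonal. Its unweighted $\norm{\cdot}_{2,\Lambda}$ is therefore infinite, since $\beta\geq\beta_4=3\pi>2\pi$.
\item The parent carries no weight, and the exponential offers no decay in $|x_1-x_2|$ (its Onsager bound is just $\bar B_N(s/t)^{\beta/(8\pi)}$).
\item ``Paying'' a cross distance $|x_i-x_j|/\sqrt{s}$ with $i\in I$, $j\in I^c$ does not produce the intra-block factor $|x_1-x_2|/\sqrt{s}$. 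Only the cancellation $|C_s(1,3)-C_s(2,3)|\lesssim s^{-3/2}|x_1-x_2|f_1(\cdot)$ does.
\end{itemize}
Even when both blocks are neutral, $d(\xx)\leq d(\xx_I)+d(\xx_{I^c})$ yields at best $H_t^n\leq H_s^{|I|}+H_s^{|I^c|}$. That is a sum, whereas splitting the norm requires the product.

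The same missing ingredient undermines your $\tilde g$ recursion. Differentiating the exponential produces, up to the prefactor $m$, the factor $\int_s^t re^{-m^2r}\sum_{k,l=1}^n\sigma_{kl}C_r(k,l)\dr$, with the sum over \emph{all} pairs, diagonal included. For neutral configurations this full sum is small: $\sum_{k,l}\sigma_{kl}C_r(k,l)\leq n^2\frac{B}{r}H_r^n(\xx,\Si)$ (\Cref{lem:estimate for the full sum of the massive heat kernels}). The paper then
\begin{itemize}
\item bounds $e^{-\frac{\beta}{2}\sum\sigma_{kl}K_{r,t}^m(k,l)}\leq 1$;
\item applies the Onsager bound at the intermediate scale $r$, obtaining $H_r^n[H_r^n]^{-\beta/(4\pi)}\leq 1$, so the whole factor is $\lesssim t^{1-\beta/(8\pi)}s^{\beta/(8\pi)}$ with no singular power of $H$ left;
\item in the remaining terms, bounds the exponential by $1$ using positive definiteness.
\end{itemize}
This is what makes the \emph{unweighted} norm bound on $\tilde g_t^n$ possible. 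Your termwise bound $\beta\sum_{k,l}|\partial_m K_{s,t}^m(k,l)|\leq Ct$ discards the cancellation between diagonal and off-diagonal terms. Already for $n=2$, $\sigma_1=-\sigma_2$, it leaves a majorant of order $t|x_1-x_2|^{-\beta/(2\pi)}$, which is not integrable since $\beta\geq\beta_2=2\pi$. The paper's majorant is of order $t^{1-\beta/(8\pi)}|x_1-x_2|^{-\beta/(4\pi)}$, integrable for $\beta<4\pi$.

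Two smaller points. With the paper's kernel, $\int_\R C_s(x,y)\dy=(4\pi s)^{-1/2}$ in $d=1$, not a constant, so your $t$-powers come out off by $t^{-1/2}$. For $t\geq 1$, the factor $s^{-1}$ that yields $[\log t]^{n-1}$ comes from splitting the $s$-integral at $s=1$ (Onsager at scale $1$ below, exponential $\leq 1$ above) and bounding the $x_j$-integral by $|\Lambda|$ for $s\geq 1$. This is where the $\Lambda$-dependence of $B_1(\beta,N,\Lambda)$ enters.
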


\begin{remark}
For $d=1$, we need the mass derivatives of the functions $\tilV_t^n$ in order to estimate terms of the form
\begin{equation}
e^{-m^2s}\tilV_t^n(\xx,\Si|m,\eps=0)-\tilV_t^n(\xx,\Si|m=0,\eps=0)=\int_0^m\frac{\partial}{\partial \mu}\abr{e^{-\mu^2s}\tilV_t^n(\xx,\Si|\mu,\eps=0)}\d\mu
\end{equation} 
in the proof of \Cref{cor:convergence of the Vinfty-Vt terms}, when estimating the error term. This is needed because, in this case, we do not have mass-independent majorants for the differences $\tilV_\infty^n-\tilV_t^n$ when $n>2$. Furthermore, the uniform majorants $\tilde{g}_t^n(\xx,\Si)$ are needed to justify differentiating in the mass under the $s$-integral in the definition \Cref{eq:def of Vn} of the functions $\tilV_t^n$.
\end{remark}

\begin{remark}
\label{rem:limits of the Vtn functions}
By the third claim of \Cref{thm:the equivalence of St and Vt}, the functions $\tilV_t^n$ are precisely the truncated charge correlation functions of the free field at small scales below $t>0$. The proof of this claim does not rely on any of the estimates in this section. Furthermore, the existence of the limit $\eps \to 0$ (and similarly the limit obtained by taking $m\to 0$ after $\eps\to 0$) holds for the truncated charge correlation function of the full free field (that is, $\tilV_\infty^n$) by \cref{lem:pointwise charge correlations of the free field,rem:pointwise truncated charge correlation functions of the free field}. The same holds for $\tilV_t^n$ by \Cref{lem:pointwise charge correlation functions of the scale decomposed free field}. We define $\lim_{\eps\to 0}\tilV_t^n(\cdot,\cdot|m,\eps)=:\tilV_t^n(\cdot,\cdot|m)$ and $\lim_{m\to 0}\lim_{\eps\to 0}\tilV_t^n(\cdot,\cdot|m,\eps)=:\tilV_t^n(\cdot,\cdot)$ by these limits of the truncated charge correlation functions. Moreover, since a priori we have $m,\eps\in(0,1)$, we also define the values of the functions $\tilV_t^n$ at $\eps=0$ and at $\eps=0=m$ by these limits, that is 
\begin{equation}
\tilV_t^n(\xx,\Si|m=0,\eps=0):=\lim_{m\to 0}\lim_{\eps\to 0}\tilV_t^n(\xx,\Si|m,\eps) \quad\text{ and }\quad \tilV_t^n(\xx,\Si|m,\eps=0)=\lim_{\eps\to 0}\tilV_t^n(\xx,\Si|m,\eps).
\end{equation} 
All the above statements hold for any charge configuration $\Si\in\{-1,1\}^n$. 
\end{remark}

The proofs of the norm estimates for $\tilde{h}_t^n$ in \Cref{lem:differentiability of Vtn with respect to mass ETC} are straightforward modifications of the proofs of the corresponding estimates for $h_t^n$ in \Cref{prop:induction statement} below. The differentiability claim and the existence of the functions $\tilde{g}_t^n$ must be established iteratively and simultaneously, but the estimates are again analogous to those for the functions $g_t^n$ in \Cref{prop:induction statement}. Since \Cref{prop:induction statement} is the main driving force in our analysis, and we also need the techniques from its proof in the proof of \Cref{cor:convergence of the Vinfty-Vt terms}, we prove it first, assuming \Cref{lem:differentiability of Vtn with respect to mass ETC}, and postpone the proof of the latter to the final subsection of this section. Furthermore, \Cref{lem:differentiability of Vtn with respect to mass ETC} is needed only in the case $d=1$. The proof of \Cref{lem:differentiability of Vtn with respect to mass ETC} does not use \Cref{prop:induction statement} itself, but only techniques developed in its proof. Thus, this is not a circular argument. 

\begin{proposition}
\label{prop:induction statement}
Fix $\beta \in (0,4\pi)$ if $d=1$ and $\beta\in (0,(d+1)2\pi)$ if $d\geq 2$, and let $N\equiv N(\beta)$ be s.t.\ $\beta\in [\beta_{N},\beta_{N+2})$. Suppose also that $\Lambda\subset \R^d$ is compact. 

Then for all $t>0$ and $n\in \N$ there exist functions $h_\cdot^n(\cdot,\cdot|m)\equiv h_\cdot^n\colon \R_+\times(\R^{dn}\times \{-1,1\}^n)\to [0,\infty]$ such that $|\tilV_t^n(\cdot,\cdot|m,\eps)|<h_t^n$. In addition, $h_t^n$ is permutation-invariant and independent of $\eps\in(0,1)$. Furthermore, $h_t^n$ is independent of $m$ for $t<m^{-2}$. The functions $h_t^n$ satisfy the following norm estimates for all $t>0$,
\begin{equation}
\label{eq:statement for the norms of the htn functions}
\sup_{\Si\in\{-1,1\}^n}\norm{H_t^n(\cdot,\Si) h_t^n(\cdot,\Si)}_n
\leq [B_d(\beta,N)]^{n-1}C^n n^{n-2}
\begin{cases}
t^{-\frac{d}{2}}\rbr{t^{\frac{d}{2}-\frac{\beta}{8\pi}}}^n, &\text{ if $t<m^{-2}$}
\\
(m^{-2})^{(n-1)\frac{d}{2}-\frac{n\beta}{8\pi}}, &\text{ if $t\geq m^{-2}$ and $d=1$}
\end{cases}
\end{equation}
with some constants $B_d(\beta,N),C>0$, independent of $m,\eps,t$ and $n$, to be chosen suitably during the proof.

Furthermore, if $d=1$, $n\leq N$ and $\beta$ are as above, then there also exist functions $g_\cdot^n(\cdot,\cdot|m)\equiv g_\cdot^n \colon \R_+\times(\R^n\times \{-1,1\}^n)\to [0,\infty]$ that are also permutation-invariant, independent of $\eps\in (0,1)$, and s.t.
 \\
$|\partial_m[\tilV_t^n(\cdot,\cdot|m,\eps)]|<g_t^n$.

The functions $g_t^n$ satisfy the following estimates for all $t>0$
\begin{equation}
\sup_{\Si\in\{-1,1\}^n}\norm{g_t^n(\cdot,\Si|m)}_{n}
\leq [\tilde{B}_1(\beta,N)]^{n-1}C^n n^{n-2}
\begin{cases}
m\, t^{\frac{1}{2}}\rbr{t^{\frac{1}{2}-\frac{\beta}{8\pi}}}^n, &\text{ if $t<m^{-2}$}
\\
(m^{-2})^{n\rbr{\frac{1}{2}-\frac{\beta}{8\pi}}}, &\text{ if $t\geq m^{-2}$}
\end{cases}
\end{equation}
with some constants $\tilde{B}_1(\beta,N),C>0$ independent of $m,\eps,t$ and $n$.

The constant $C$ may initially be different in the estimates for the functions $h_t^n$ and $g_t^n$, but we may choose the maximum of these. The bounds for $t\geq m^{-2}$ are simply the same as for $t<m^{-2}$ with the replacement $t\to m^{-2}$.
\end{proposition}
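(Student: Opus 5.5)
We argue by strong induction on $n$, constructing $h_t^n$ explicitly from the recursion \Cref{eq:def of Vn}. For $n=1$ we set $h_t^1:=C(t\wedge m^{-2})^{-\frac{\beta}{8\pi}}$, which is admissible by \Cref{lem:uniform bound for V1}. Since $H_t^1=1$, this already gives the claimed norm bound: for $t<m^{-2}$ we have $t^{-\frac{d}{2}}(t^{\frac{d}{2}-\frac{\beta}{8\pi}})^1=t^{-\frac{\beta}{8\pi}}$. The lemma also shows that $h_t^1$ is independent of $m$ when $t<m^{-2}$.

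For $n\ge2$ we define $h_t^n$ by inserting absolute values into \Cref{eq:def of Vn}:
\begin{equation*}
h_t^n:=\frac{\beta}{2}\sum_{I\varsubsetneq[n]}\int_0^t\Big(\sum_{i\in I,\,j\in I^c}C_s^m(i,j)\Big)h_s^{|I|}h_s^{|I^c|}\,\Big|e^{-\frac{\beta}{2}\sum_{k,l}\sigma_{kl}K^m_{s,t}(k,l)}\Big|\ds .
\end{equation*}
The lower limit is $0$ rather than $\eps^2$ so that $h_t^n$ does not depend on $\eps$. Permutation invariance holds by construction.

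The key input is the Onsager-type inequality of \Cref{sec:Onsager type inequality}. We would use it in the form
\begin{equation*}
-\frac{\beta}{2}\sum_{k,l}\sigma_{kl}K_{s,t}(k,l)\le \frac{n\beta}{8\pi}\log(t/s)+\log\big(H_s^n/H_t^n\big)+C_n ,
\end{equation*}
or a variant of it. In words, the self-energy exponential costs at most $(t/s)^{n\beta/8\pi}$, improved by the Wasserstein factor when the configuration is neutral and $n\le N$. Multiplying through by $H_t^n$ then transfers the weight $H_t^n$ onto $H_s^{|I|}H_s^{|I^c|}$, up to a factor $(t/s)^{n\beta/8\pi}$. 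To make this work we will also need the subadditivity $d(\xx)\le d(\xx_I)+d(\xx_{I^c})$ whenever $I$ splits a neutral configuration into neutral pieces. When the pieces are non-neutral, the $H$-weight on them is $1$, and the improvement must instead come from the interaction itself.

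Next we estimate the norm $\norm{\cdot}_n$. We fix $x_1\in I$ and integrate the remaining variables. The kernel $C_s(i,j)$ connecting the two clusters is integrated over $x_j$, giving $\int C_s(i,j)\,\dx_j\le C s^{\frac{d}{2}-1}$. The factor $|I||I^c|$ of pairs, combined with the Cayley-type count $\binom{n}{|I|}|I|^{|I|-2}|I^c|^{|I^c|-2}|I||I^c|\lesssim n^{n-2}$, reproduces the combinatorial factor $n^{n-2}$. Using the induction hypothesis, the $s$-integral becomes
\begin{equation*}
\int_0^t s^{\frac d2-1}\,s^{-d}\,s^{\frac{nd}{2}-\frac{n\beta}{8\pi}}\,(t/s)^{\frac{n\beta}{8\pi}}\ds \;\propto\; t^{-\frac d2+\frac{nd}{2}-\frac{n\beta}{8\pi}} .
\end{equation*}
This is finite whenever $(n-1)\frac d2>0$, which gives the desired power of $t$. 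The constants then close with $B_d$ chosen large.

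The main obstacle is the neutral small clusters with $n\le N$. There the gain from $H_t^n$ must compensate for the non-integrability at the diagonal, and integrability in $s$ near $0$ for pieces with $n\beta/8\pi$ large is not automatic. We would first try to split $\int C_s(i,j)H\ldots$ using $|x_i-x_j|\lesssim\sqrt s$, so as to gain $\sqrt s/\sqrt t$ from the factor $d(\xx)/\sqrt t$. This is exactly where the restrictions $\beta<(d+1)2\pi$ (and $\beta<4\pi$ for $d=1$) should enter.

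For $t\ge m^{-2}$ and $d=1$, the mass factor $e^{-m^2s}$ in $C_s^m$ and the resulting bound $K^m_{s,t}\le K^m_{s,m^{-2}}+O(1)$ effectively replace $t$ by $m^{-2}$. This yields the second case of \Cref{eq:statement for the norms of the htn functions}, and also shows that $h_t^n$ is $m$-independent for $t<m^{-2}$.

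For $g_t^n$ we differentiate \Cref{eq:def of Vn} in $m$, which is justified by \Cref{lem:differentiability of Vtn with respect to mass ETC}. The derivative falls on one of three places: on $C_s^m$, giving $2ms\,C_s^m$; on $e^{-\ldots K_{s,t}^m}$, giving a bound of order $m(t-s)$ times the self-energy; or on a factor $\tilV_s^{|I|}$, which is handled by induction. Each case produces an extra factor $m\,t$ relative to the $h$ bound, which matches the stated estimate.
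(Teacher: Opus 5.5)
The induction does not close as proposed: the two mechanisms that drive it are missing. First, your form of the Onsager bound, $-\frac{\beta}{2}\sum_{k,l}\sigma_{kl}K_{s,t}(k,l)\le\frac{n\beta}{8\pi}\log(t/s)+\log(H_s^n/H_t^n)+C_n$, holds for trivial reasons. The left side is $\le 0$ because $K_{s,t}$ is a covariance, and the right side is $\ge 0$ because $s\le t$ and $H_s^n\ge H_t^n$. So it is weaker than bounding the exponential by $1$, and it gives a loss $(t/s)^{n\beta/8\pi}$ where a gain is needed.

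What the paper uses (\Cref{lem:estimate for the exponential in tilVtn}) is that the diagonal self-energy $-\frac{n\beta}{8\pi}\log(t/s)$ beats the off-diagonal Onsager bound $\frac{(n-1)\beta}{8\pi}\log(t/s)-\frac{\beta}{4\pi}\log H_t^n+\beta B_n$. This leaves $\bar{B}_N(s/t)^{\frac{\beta}{8\pi}}[H_t^n]^{-\frac{\beta}{4\pi}}$. It is used only for $n\le N+1$. For $n\ge N+2$ the exponential is bounded by $1$, and $\int_0^t s^{-1-\frac{d}{2}+n(\frac{d}{2}-\frac{\beta}{8\pi})}\ds$ converges precisely because $\beta<\beta_{N+2}$.

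Your $s$-integral is also miscomputed. With the factor $(t/s)^{n\beta/8\pi}$ the exponent of $s$ is $-1+(n-1)\frac{d}{2}-\frac{n\beta}{4\pi}$, which is $\le -1$ for every $n$ once $\beta\ge 2d\pi$. The exponent $-1+(n-1)\frac{d}{2}$ that you quote would require a decay $(s/t)^{n\beta/8\pi}$. That decay is false: for a tight neutral dipole the exponential is close to $1$. It would also prove the estimate for every $\beta>0$; note that $N$ never enters your argument.

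Second, you replace $\abs{\sum_{i\in I,j\in I^c}\sigma_{ij}C_s^m(i,j)}$ by $\sum_{i,j}C_s^m(i,j)$ in the definition of $h_t^n$. This discards the multipole cancellation of \Cref{lem:estimate for GI function}, $G_s^I\le\frac{C}{s}H_s^{|I|}H_s^{|I^c|}\sum_{i\in I,j\in I^c}f_d(|x_i-x_j|/\sqrt{s})$. That lemma is the only source of the weights needed to apply the induction hypothesis, which controls $\norm{H_s^kh_s^k}_k$ and not $\norm{h_s^k}_k$. For a neutral pair, any $\eps$-independent majorant of $\tilV_s^2$ is of order $|x_1-x_2|^{-\beta/2\pi}$ near the diagonal, which is not locally integrable once $\beta\ge\beta_2=2d\pi$. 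Subadditivity of $d(\xx)$ together with one factor $H_t^n$ gives at best $H_s^{|I|}+H_s^{|I^c|}$. When both $I$ and $I^c$ are tight neutral clusters you need the product $H_s^{|I|}H_s^{|I^c|}$.

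In the paper, $H_t^n$ has a different job: it absorbs the $[H_t^n]^{-\beta/4\pi}$ coming from Onsager, giving $[H_t^n]^{1-\beta/4\pi}\le 1$ for $d=1$. For $d\ge 2$ and $N=2$, the cases $n=2,3$ are computed explicitly with the singular weight $(|x_1-x_2|/\sqrt{t}\wedge 1)^{1-\beta/4\pi}$, and this is where $\beta<(d+1)2\pi$ comes from.

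The same defect breaks your sketch for $g_t^n$. Bounding the $m$-derivative of the exponential by $m(t-s)$ times the self-energy, with the exponential bounded by $1$, leads to $mt\int_0^t s^{-\frac{3}{2}+n(\frac{1}{2}-\frac{\beta}{8\pi})}\ds$. This diverges for every $n\le N$, since $\beta\ge\beta_n$. One needs \Cref{lem:estimate for the full sum of the massive heat kernels} combined with the Onsager bound on $[s,r]$ to obtain the factor $t^{1-\frac{\beta}{8\pi}}s^{\frac{\beta}{8\pi}}$.
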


\begin{remark}
\label{rem:the t geq m2 estimates are not needed for d geq 2}
The statement \Cref{eq:statement for the norms of the htn functions} for $t\geq m^{-2}$ is also true for $d\geq 2$, but we do not need it. Indeed, the estimates for $t\geq m^{-2}$ are used in the proof of \Cref{cor:convergence of the Vinfty-Vt terms}, but as discussed right after stating this corollary, the case $d\geq 2$ can be treated using results from \cite{BaWe24a}.
\end{remark}

\begin{remark}
These bounds imply that the mass derivative of $\tilV_t^n(\cdot,\cdot|m,\eps)$ behaves better than $\tilV_t^n(\cdot,\cdot|m,\eps)$ itself as $\eps\to 0$, since the function $H_t^n$ is needed to cancel the singularity for $h_t^n$, but not for $g_t^n$. However, the opposite is true for the limit $m\to 0$ when $t$ is large. Indeed, the estimate for $h_t^n$ in the regime $t\geq m^{-2}$ can be made independent of $m$ provided the total exponent $(n-1)d/2-n\beta/8\pi$ is non-positive, that is, $\beta\geq (1-1/n)4\pi$ for all $n\leq N$. This is guaranteed by the assumption $\beta\in [\beta_N,\beta_{N+2})$. In contrast, the upper bound for $g_t^n$ in the same regime diverges as $m\to 0$ for all $\beta\in (0,4\pi)$. 
\end{remark}

Finally, we have the following corollary.

\begin{corollary}
\label{cor:convergence of the Vinfty-Vt terms}
Let $d=1$ so that $\beta\in (0,4\pi)$ and let $N\equiv N(\beta)$ be such that $\beta\in[\beta_N,\beta_{N+2})$. Suppose also that $\eta\in L_c^\infty(\R\times\{-1,1\})$, $n\leq N$ and $\sum_{i=1}^n\sigma_i=0$. 
Then there exist functions $F_\cdot^{1,n}(\cdot,\cdot|m)\colon\R_+\times(\R^n\times\{-1,1\}^n)\to [0,\infty]$ independent of $\eps$ such that
\begin{equation}
\label{eq:majorant condition for Vinfty-Vt for d=1}
|\tilV_\infty^n(\xx,\Si|m,\eps)-\tilV_t^n(\xx,\Si|m,\eps)|\leq F_t^{1,n}(\xx,\Si|m)
\end{equation}
and for any compact $\Lambda\subset \R$
\begin{equation}
\label{eq:uniform bound for the majorant in d=1}
\sup_{m\in (0,1)}\sum_{\Si\in\{-1,1\}^n}\int_{\Lambda^{n}}F_t^{1,n}(\xx,\Si|m)\d\xx<\infty.
\end{equation}

Next let $d\geq 2$ and $\beta\in [2d\pi,(d+1)2\pi)$, so that $n=N=2$. Suppose also that $\eta\in L_c^\infty(\R^d\times\{-1,1\})$ and $\sigma_1=-\sigma_2$. Then there also exist functions $F_\cdot^{d,2}(\cdot,\cdot)\colon \R_+\times(\R^{2d}\times\{-1,1\}^2)\to [0,\infty]$ independent of $\eps$ and $m$ such that for any compact $\Lambda\subset\R^d$ we have
\begin{equation}
\label{eq:uniform bound for Vinfty minus Vt for d greater than 2 part 1}
|\tilV_\infty^2(\xx,\Si|m,\eps)-\tilV_t^2(\xx,\Si|m,\eps)|\leq F_t^{d,2}(\xx,\Si),
\end{equation}
and
\begin{equation}
\label{eq:uniform bound for Vinfty minus Vt for d greater than 2 part 2}
\sum_{\Si\in\{-1,1\}^2}\int_{\Lambda^2}F_t^{d,2}(\xx,\Si)\d\xx<\infty.
\end{equation}

Furthermore, for any $d\geq 1$ we have
\begin{equation}
\begin{split}
\label{eq:limits of the difference Vinfty-Vt}
\lim_{m\to 0}\lim_{\eps\to 0}&\sum_{\Si\in\{-1,1\}^n}\int_{\R^{dn}}\rbr{\prod_{i=1}^n\eta_i}\1_{n\leq N}\1_{\sum_{i=1}^n\sigma_i=0}(\tilV_\infty^n(\xx,\Si|m,\eps)-\tilV_t^n(\xx,\Si|m,\eps))\d\xx
\\
&=\sum_{\Si\in\{-1,1\}^n}\int_{\R^{dn}}\rbr{\prod_{i=1}^n\eta_i}\1_{n\leq N}\1_{\sum_{i=1}^n\sigma_i=0}(\tilV_\infty^n(\xx,\Si)-\tilV_t^n(\xx,\Si))\d\xx
\end{split}
\end{equation}
where $\tilV_s^n(\xx,\Si):=\tilV_s^n(\xx,\Si|m=0,\eps=0):=\lim_{m\to 0}\lim_{\eps\to 0}\tilV_s^n(\xx,\Si|m,\eps)$. The claim in \Cref{eq:limits of the difference Vinfty-Vt} also holds for fixed $m$ with obvious changes. Lastly, we also have $\tilV_\infty^n-\tilV_t^n\in L_{loc}^1(\R^{nd}\times\{-1,1\}^n)$.
\end{corollary}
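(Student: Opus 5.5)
The plan is to write the difference $\tilV_\infty^n-\tilV_t^n$ as an integral over scales $s\in(t,\infty)$ and bound it with the majorants of \Cref{prop:induction statement}. I first note that the recursion \Cref{eq:def of Vn} is a Duhamel formula: differentiating in the upper limit gives
\begin{equation}
\partial_t\tilV_t^n=\frac{\beta}{2}\sum_{I\varsubsetneq[n]}\Bigl(\sum_{i\in I,j\in I^c}\sigma_{ij}C_t^m(i,j)\Bigr)\tilV_t^{|I|}\tilV_t^{|I^c|}-\frac{\beta}{2}\Bigl(\sum_{k,l}\sigma_{kl}C_t^m(k,l)\Bigr)\tilV_t^n ,
\end{equation}
where the diagonal terms $k=l$ contribute the factor $nC_t^m(0)$. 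Integrating this from $t$ to $\infty$ gives $\tilV_\infty^n-\tilV_t^n=\int_t^\infty\partial_s\tilV_s^n\ds$. Alternatively, I can use the cumulant representation from part 3 of \Cref{thm:the equivalence of St and Vt} together with the scale decomposition $\vp_{m,\eps}=\vp_{m,(\eps^2,t)}+\vp_{m,\sqrt t}$, conditioning on the smooth part.

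For $d\ge2$ and $n=2$ everything is explicit. Here $\tilV_s^2$ is a difference of products of exponentials of $K$, namely $\tilV_s^1\tilV_s^1(e^{-\beta\sigma_{12}K_{\eps^2,s}^m(1,2)}-1)$ up to sign. Using $\sigma_1=-\sigma_2$ and \Cref{lem:uniform bound for V1}, the difference $\tilV_\infty^2-\tilV_t^2$ is bounded by a constant times $|x_1-x_2|^{-\beta/2\pi}\bigl(1\wedge|x_1-x_2|^2/t\bigr)$ plus a bounded term. The second factor comes from expanding $e^{\beta K_{t,\infty}(1,2)}-e^{\beta K_{t,\infty}(0)}$, since $C_s(0)-C_s(1,2)\lesssim |x_1-x_2|^2/s\cdot C_s(0)$. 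Because $\beta<(d+1)2\pi$, the exponent satisfies $\beta/2\pi-2<d$, so this majorant is integrable on $\Lambda^2$ uniformly in $m$, following the computation of \cite{BaWe24a} with $C_s$ in dimension $d$. This gives $F_t^{d,2}$, and the limit \Cref{eq:limits of the difference Vinfty-Vt} follows by dominated convergence from the pointwise limits in \Cref{rem:limits of the Vtn functions}.

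For $d=1$ and general even $n\le N$, I split $\int_t^\infty=\int_t^{m^{-2}}+\int_{m^{-2}}^\infty$. On $s<m^{-2}$ I insert the bounds $h_s^k$ from \Cref{prop:induction statement} into the right-hand side of the Duhamel formula. The $C_s^m(i,j)\le C s^{-1/2}$ factors together with the $\|\cdot\|_{n,\Lambda}$ bounds give an integrand of order $s^{-1}\cdot s^{-1/2}s^{n(1/2-\beta/8\pi)}$ after integrating over $\Lambda^n$. The total exponent $(n-1)/2-n\beta/8\pi\le0$ by $\beta\ge\beta_N\ge\beta_n$, so the integrand is at most $s^{-1}$ times something $\le1$. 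To get integrability at $s=\infty$ I exploit neutrality: for large $s$ the leading contributions cancel (the sum $\sum_{k,l}\sigma_{kl}C_s(k,l)$ is $O(d(\xx)^2/s\cdot C_s(0))$ for neutral configurations), yielding an extra $(d(\xx)^2/s)$ factor. This is exactly where the weight $H_s^n$ must reappear, and $\Lambda$ compact keeps $d(\xx)$ bounded. On $s\ge m^{-2}$ the mass factor $e^{-m^2s}$ gives exponential decay, but the bounds there are $m$-dependent. So I compare with $m=0$ by writing $e^{-m^2s}\tilV_s^n(\cdot|m,0)-\tilV_s^n(\cdot|0,0)=\int_0^m\partial_\mu[\dots]\d\mu$. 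I control this with $g_s^n$ and the $\tilde g$ majorants of \Cref{lem:differentiability of Vtn with respect to mass ETC}, which justifies exchanging limits and yields a majorant $F_t^{1,n}$ independent of $\eps$ with the uniform bound \Cref{eq:uniform bound for the majorant in d=1}.

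Given the majorants, \Cref{eq:limits of the difference Vinfty-Vt} follows by dominated convergence using the pointwise limits of \Cref{rem:limits of the Vtn functions}. The $L^1_{loc}$ claim follows from the integrability of the majorants on compacts. The main obstacle is the $d=1$, $n>2$ large-scale regime: no explicit formula is available, and I have to extract the neutrality cancellation uniformly in $m$ through the recursion, combining it with the $g$-bounds, whose $m\to0$ behaviour is bad, only on the window where $m^2 s\gtrsim1$.
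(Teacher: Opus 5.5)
Your $d\ge2$ argument is fine: the explicit $n=2$ formula, the bound $K^m_{t,\infty}(0)-K^m_{t,\infty}(1,2)\lesssim |x_1-x_2|^2/t$, and dominated convergence. Your $d=1$ skeleton is also the paper's: the Duhamel integral over $(t,\infty)$ bounded with the majorants $h_s^k$, a split at $m^{-2}$, and a mass interpolation using the $g$-bounds.

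The gap is in the key step, the $m$-uniform bound on $\sum_{\Si}\int_{\Lambda^n}F_t^{1,n}$ over the window $s\in(t,m^{-2})$. You use $C^m_s(i,j)\le Cs^{-1/2}$, but here $C_s(x,y)=(4\pi s)^{-1}e^{-|x-y|^2/4s}$ in every dimension. Your count, which amounts to scaling the linking $x_j$-integral to get $\sqrt{s}$, gives the integrand $s^{-3/2+n(1/2-\beta/8\pi)}$. At $\beta=\beta_N$, $n=N$ this is exactly $s^{-1}$, so you get a $\log(m^{-2}/t)$ divergence.

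Neutrality cures this only for the diagonal term $\bigl(\sum_{k,l}\sigma_{kl}C_s(k,l)\bigr)h_s^n$. There your second-order bound is correct and plays the role of the second claim of \Cref{lem:estimate for the full sum of the massive heat kernels}. It does nothing for the cross terms $G_s^I h_s^{|I|}h_s^{|I^c|}$, because the blocks $I,I^c$ need not be neutral. For example, when $n=2$ the cross term is just $\beta\sigma_{12}C^m_s(1,2)(\tilV_s^1)^2$, with no cancellation at all. So there is also no cancellation to propagate through the recursion.

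The missing idea is \Cref{eq:trading the measure of Lambda}. For $s\in(t,m^{-2})$, integrate the linking variable over the compact $\Lambda$ instead of over $\R$: $\int_\Lambda f_1(|x_j-x_i|/\sqrt{s})\dx_j\le|\Lambda|$. This gains exactly the missing $s^{-1/2}$ and gives the integrand $|\Lambda|^2s^{-2+n(1/2-\beta/8\pi)}$. That is integrable at infinity since $\beta\ge\beta_N>(1-2/n)4\pi$ for all $n\le N$. On $s\ge m^{-2}$, the large-$t$ bounds of \Cref{prop:induction statement} already give a factor $m^{n(\beta/4\pi-1)+1}\le 1$, so no interpolation is needed for the uniform bound.

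Conversely, the interpolation is needed for the $m\to0$ limit, which your last paragraph attributes to dominated convergence instead. That does not work: $F_t^{1,n}$ depends on $m$, and only its integral is uniformly bounded. You must write $\tilV^n_\infty(\cdot|m)-\tilV^n_t(\cdot|m)=\tilV^n_\infty-\tilV^n_t+R^n_t(m)$ and prove $\sum_{\Si}\int\prod_i|\eta_i|\,|R^n_t(m)|\to0$ directly. This uses $\int_0^m\partial_\mu$, the $g$-bounds, splitting the $s$-integrals at $m^{-2},\mu^{-2},\lambda^{-2}$, and the same $|\Lambda|$-trading. You then check that each resulting term vanishes as $m\to0$ under $\beta\ge\beta_N$.
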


\begin{remark}
\label{rem:finiteness of the limit of the Vinfty-Vt terms}
The limit in \Cref{eq:limits of the difference Vinfty-Vt} is always finite by the bounds \Cref{eq:uniform bound for Vinfty minus Vt for d greater than 2 part 1,eq:uniform bound for Vinfty minus Vt for d greater than 2 part 2,eq:uniform bound for the majorant in d=1}.
\end{remark}

\begin{proof}[Proof of the case $d\geq 2$]
This case does not actually utilize \Cref{prop:induction statement}. The first claim for the $d\geq 2$ case, that is, \Cref{eq:uniform bound for Vinfty minus Vt for d greater than 2 part 1,eq:uniform bound for Vinfty minus Vt for d greater than 2 part 2}, can be established analogously to the proof of \cite[Lemma 5.4]{BaWe24a}, and the bound $\beta< (d+1)2\pi$ follows directly from that argument. Then dominated convergence allows us to pass both limits through the spatial integrals in this setting to obtain \Cref{eq:limits of the difference Vinfty-Vt}. Furthermore, the function $\tilV_t^2$ and the corresponding pointwise limits can be computed explicitly.
\end{proof}

For $d=1$, the proof of this corollary relies on estimates similar to those used in the proof of \Cref{prop:induction statement} and is more involved than in the case $d\geq 2$, since the majorant $F_t^{1,n}$ cannot be chosen independently of the mass (or at least such a choice appears hard to achieve through inductive methods). For this reason, we postpone the detailed proof for the one-dimensional case and first establish \Cref{prop:induction statement}. For now, we only give a skeleton of the argument for the proof of \Cref{cor:convergence of the Vinfty-Vt terms} in the case $d=1$. The estimates \Cref{eq:majorant condition for Vinfty-Vt for d=1,eq:uniform bound for the majorant in d=1}, together with dominated convergence, allow us to take the $\eps\to 0$ limit inside the spatial integral, and the pointwise limits exist by the discussion in \Cref{rem:limits of the Vtn functions}.

For the $m\to 0$ limit, we write
\begin{equation}
\label{eq:decomposition of the Vinfty-Vt terms}
\tilV_\infty^n(\xx,\Si|m)-\tilV_t^n(\xx,\Si|m)
=
\tilV_\infty^n(\xx,\Si)-\tilV_t^n(\xx,\Si)
+
R_t^n(m,\xx,\Si)
\end{equation}
and show that
\begin{equation}
\label{eq:vanishing of the error in the Vinfty-Vt terms}
\lim_{m\to 0}
\sum_{\Si\in\{-1,1\}^n}
\int_{\R^n}
\rbr{\prod_{i=1}^n|\eta_i|}|R_t^n(m,\xx,\Si)|
\dx
=0.
\end{equation}

We will now prove \Cref{thm:the equivalence of St and Vt} assuming \Cref{prop:induction statement,cor:convergence of the Vinfty-Vt terms}. After that, we will establish \Cref{prop:induction statement,cor:convergence of the Vinfty-Vt terms} assuming \Cref{lem:differentiability of Vtn with respect to mass ETC}. Finally, in the last subsection, we will prove \Cref{lem:differentiability of Vtn with respect to mass ETC}.

\begin{subsection}{Proof of {{\Cref{thm:the equivalence of St and Vt}}}}
\label{sec:proof of equivalnce of St and Vt}
We will prove each of the claims separately, but first we need to rewrite the series $S_t^R$ in a form that is more tractable for our analysis.

First, let us denote
\begin{equation}
\label{eq: renormalized series for the potential Vt}
S_t^R(\beta,\eta,\vp|m,\eps)=\sum_{n=1}^\infty \Acal_n(\beta,t|m,\eps)
\end{equation}
Then we may write
\begin{equation}
\begin{split}
\label{eq:splitting of the counter terms}
\Acal_n(\beta,t|m,\eps)&=\frac{1}{n!}\sum_{\Si\in\{-1,1\}^n}\int_{\R^{nd}}\rbr{\prod_{i=1}^n\eta_i}\rbr{e^{i\sqrt{\beta}\sum_{j=1}^n\sigma_j\vp_j}-\1_{n\leq N}\1_{\sum_{i=1}^n\sigma_i=0}}\tilV_t^n(\xx,\Si|m,\eps)\d\xx
\\
&\quad +\frac{1}{n!}\sum_{\Si\in\{-1,1\}^n}\int_{\R^{nd}}\rbr{\prod_{i=1}^n\eta_i}\1_{n\leq N}\1_{\sum_{i=1}^n\sigma_i=0}\rbr{\tilV_t^n(\xx,\Si|m,\eps)-\tilV_\infty^n(\xx,\Si|m,\eps)}\d\xx
\\
&\,\,=:\Acal_n^\vp(\beta,t|m,\eps)+\Acal_n^\infty(\beta,t|m,\eps)
\end{split}
\end{equation}
where we have simply added and subtracted $\1_{n\leq N}\1_{\sum_{i=1}^n\sigma_i=0}\tilV_t^n(\xx,\Si|m,\eps)$.

\begin{remark}
\label{rem: renormalizability in higher dimensions}
Analysing the term $\Acal_2^\vp(\beta,t|m,\eps)$ explicitly for $d\geq 2$ gives some indication of how far our methods might extend beyond the restrictions on $\beta$ imposed in this article. In particular, it suggests limits on the renormalizability in the parameter range $\beta>(d+1)2\pi$. First of all, $\tilV_t^2$ can be computed explicitly (see \cite[Equation 4.60]{BaWe24a}). We only consider the neutral part ($\sigma:=\sigma_1=-\sigma_2$) of $\Acal_2^\vp$, since the non-neutral part is bounded for all $\beta>0$. Using this exact form, the neutral part of $\Acal_2^\vp$ becomes
\begin{equation}
\label{eq:sigma summed estimate for the Vt2 term}
\half\sum_{\sigma\in\{-1,1\}}\int_{\R^{2d}}\eta_1\eta_2\rbr{e^{i\sqrt{\beta}\sigma[\vp(x_1)-\vp(x_2)]}-1}[\tilV_t^1(m,\eps)]^2\rbr{1-e^{\beta K_{\eps^2,t}^m(x_1,x_2)}}\dx_1\dx_2
\end{equation}
By applying the cancellation from the $\sigma$ sums (which we do not use in the proofs in this article), we can estimate the expression in \Cref{eq:sigma summed estimate for the Vt2 term}. For simplicity, suppose $\eta(\cdot,1)=\eta(\cdot,-1)$, which yields
\begin{equation}
\int_{\R^{2d}}\eta_1\eta_2(1-\cos(\sqrt{\beta}[\vp(x_1)-\vp(x_2)]))[\tilV_t^1(m,\eps)]^2\rbr{1-e^{\beta K_{\eps^2,t}^m(x_1,x_2)}}\dx_1\dx_2
\end{equation}

Then, using
\begin{equation}
1-\cos(\sqrt{\beta}[\vp(x_1)-\vp(x_2)])\leq C_\beta (\norm{\grad\vp}_{L^\infty(B_\Lambda)}|x_1-x_2|)^{2-\varepsilon(\beta)}
\end{equation} 
for some $C_\beta>0$ and any $\varepsilon(\beta)>0$ (note that the choice $\varepsilon(\beta)=0$ would cause problems when we apply \Cref{lem:Gaussian tails}). Above, we used arguments and notation similar to those in \Cref{lem:estimate for the field dependent part of the counter term} below. Thus, using \cite[Lemma 4.4]{BaWe24a} and \Cref{lem:uniform bound for V1}, we obtain for $t<m^{-2}$
\begin{equation}
|\Acal_{2,neutral}^\vp(\beta,t|m,\eps)|\leq C_{\beta,t,\Lambda}\norm{\eta}_{L^{\infty}(\R^d)}^2\norm{\grad\vp}_{L^\infty(B_\Lambda)}^{2-\varepsilon(\beta)}\int_{B_\Lambda}|x|^{2-\varepsilon(\beta)}\rbr{\frac{|x|}{\sqrt{t}}\wedge 1}^{-\frac{\beta}{2\pi}}\dx.
\end{equation}
for some constant $C_{\beta,t,\Lambda}$ independent of $m$ and $\eps$. By choosing $\varepsilon(\beta)$ suitably, the above integral can be made to converge arbitrarily close to $\beta=(d+2)2\pi$, but not at that value. For $d=2$, $(d+2)2\pi=8\pi$ is the KT-point. Therefore, it is at least in principle possible to extend our methods to the whole subcritical range $\beta\in (0,8\pi)$. However, for large enough $d$ (in fact for $d\geq 5$), $(d+2)2\pi< \beta_4:=3d\pi$ if we assume that the behavior remains scalable beyond the first collapse point, as explained in the introduction. Therefore, there is no hope of extending our methods even up to the second projected collapse point for all $d>2$.
\end{remark}

We need one more lemma before we begin the actual proof.

\begin{lemma}
\label{lem:estimate for the field dependent part of the counter term}
    Let $\Lambda\subset \R^d$ be compact and $\vp\in C^1(\R^d)$. Then we have
    \begin{equation}
    \abs{e^{i\sqrt{\beta}\sum_{j=1}^n\sigma_j\vp_j}-\1_{n\leq N}\1_{\sum_{i=1}^n\sigma_i=0}}\leq 
    \begin{cases}
    2([\sqrt{\beta t}\norm{\grad\vp}_{L^\infty(B_\Lambda)}]\vee 1)\,H_t^n(\xx,\Si), &\text{if } n\leq N,
    \\
    1, &\text{otherwise},
    \end{cases}
    \end{equation}
for all $(\xx,\Si)\in \Lambda^n\times\{-1,1\}^n$. Here $B_\Lambda:=\overline{B_{2R(\Lambda)}(0)}$ and $R(\Lambda)>0$ is chosen so that $\Lambda\subset B_{R(\Lambda)}(0)$, and $H_t^n$ is defined in \Cref{eq:def of Htn}.
\end{lemma}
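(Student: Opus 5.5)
The plan is a case split according to whether $n>N$, the configuration is non-neutral, or it is neutral with $n\le N$. In the last case we use the optimal matching from the definition of the Wasserstein distance together with the mean value theorem.

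\emph{Trivial cases.} If $n>N$, the indicator vanishes and the quantity is $|e^{i\sqrt{\beta}\sum_j\sigma_j\vp_j}|=1$, since $\vp$ is real-valued. If $n\le N$ but $\sum_i\sigma_i\neq 0$, then again the indicator vanishes, so the modulus equals $1$. By \Cref{eq:def of Htn} we have $H_t^n=1$ here, so the bound $1\le 2([\sqrt{\beta t}\norm{\grad\vp}_{L^\infty(B_\Lambda)}]\vee 1)H_t^n$ holds trivially.

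\emph{Neutral case with $n\le N$.} Relabel so that the first $p=n/2$ charges are $+1$ and the rest are $-1$; the quantity is symmetric under permutations, so this is harmless. Let $\rho\in\S_p$ attain the minimum in \Cref{eq:def Wasserstein distance}. Neutrality lets us rewrite the phase as $\sqrt{\beta}\sum_{k=1}^p(\vp(x_k)-\vp(x_{p+\rho(k)}))$.

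We then combine two bounds. The first is the crude estimate $|e^{i\theta}-1|\le 2$. The second is $|e^{i\theta}-1|\le|\theta|$. Each segment $[x_k,x_{p+\rho(k)}]$ lies in the convex set $\overline{B_{R(\Lambda)}(0)}\subset B_\Lambda$, because all points lie in $\Lambda\subset B_{R(\Lambda)}(0)$. Hence the mean value theorem gives
\[
|\theta|\le\sqrt{\beta}\,\norm{\grad\vp}_{L^\infty(B_\Lambda)}\sum_{k=1}^p|x_k-x_{p+\rho(k)}|=\sqrt{\beta}\,\norm{\grad\vp}_{L^\infty(B_\Lambda)}\,d(\xx).
\]
Writing $A:=\sqrt{\beta t}\norm{\grad\vp}_{L^\infty(B_\Lambda)}$, this becomes $|\theta|\le A\,\frac{d(\xx)}{\sqrt t}$, so
\[
|e^{i\theta}-1|\le\min\Bigl(2,\,A\tfrac{d(\xx)}{\sqrt t}\Bigr)\le 2(A\vee1)\Bigl(\tfrac{d(\xx)}{\sqrt t}\wedge1\Bigr).
\]
For the last inequality, split into cases. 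If $d(\xx)\ge\sqrt t$, the right side is $2(A\vee1)\ge2$, which dominates the crude bound. Otherwise the right side is $2(A\vee1)\frac{d(\xx)}{\sqrt t}$, which dominates $A\frac{d(\xx)}{\sqrt t}$.

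There is no real obstacle. The only point needing care is that the segments joining matched charges stay inside $B_\Lambda$, which convexity of the ball guarantees, together with the elementary min/max bookkeeping above.
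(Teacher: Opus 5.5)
Your proof is correct and follows essentially the same route as the paper: pair opposite charges with the permutation realizing $d(\xx)$, use $|e^{i\theta}-1|\le|\theta|$ together with the mean value theorem when $d(\xx)<\sqrt t$, and use the crude bound $2$ otherwise. You are slightly more explicit than the paper on two points: you check the non-neutral case $n\le N$ directly (there $H_t^n=1$), and you justify via convexity of the ball that the matched segments lie in $B_\Lambda$.
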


\begin{proof}
First assume that $n\leq N$ and that the configuration is neutral, that is, $\sum_{i=1}^n\sigma_i=0$. Without loss of generality, we may again assume that the first $n/2=:p$ charges $\sigma_i$ differ from the remaining ones. Then the left-hand side of the claim becomes
\begin{equation}
\label{eq:permutation version of exp-1}
\begin{split}
F(\vp,\xx,\Si)&:=e^{i\sqrt{\beta}\sum_{j=1}^n\sigma_j\vp_j}-1
=e^{i\sqrt{\beta}\sum_{j=1}^p\sigma_j(\vp_j-\vp_{p+\pi(j)})}-1,
\end{split}
\end{equation}
where $\pi\in\mathbb{S}_{p}$ is a permutation, which we may choose to be the one realizing the minimum in the Wasserstein distance $d(\xx)$ defined in \Cref{eq:def Wasserstein distance}.

Using the basic estimate $|e^{ix}-1|=\abs{\int_0^x e^{iy}\,\dy}\le |x|$ together with the triangle inequality and the mean value theorem, we obtain
\begin{equation}
\begin{split}
|F(\vp,\xx,\Si)|&\leq \sqrt{\beta}\sum_{j=1}^p|\vp_j-\vp_{p+\pi(j)}|
\leq \sqrt{\beta}\norm{\nabla\vp}_{L^\infty(B_\Lambda)}\sum_{j=1}^p|x_j-x_{p+\pi(j)}|
=\sqrt{\beta}\norm{\nabla\vp}_{L^\infty(B_\Lambda)}\, d(\xx).
\end{split}
\end{equation}

We may also choose to apply this estimate only for those $\xx\in\Lambda^{n}$ satisfying $d(\xx)<\sqrt{t}$, and use the crude bound $|F(\vp,\xx,\Si)|\leq 2$ otherwise. Hence the final estimate becomes
\begin{equation}
\label{eq:field dependent counterterm estimate}
\begin{split}
|F(\vp,\xx,\Si)|&\leq 2\bigl(\sqrt{\beta}\norm{\grad\vp}_{L^\infty(B_\Lambda)}\sqrt{t}\bigr)^{\1_{\{d(\xx)\leq \sqrt{t}\}}}
\left(\frac{d(\xx)}{\sqrt{t}}\wedge 1\right)
\\
&\leq 2\bigl([\sqrt{\beta t}\norm{\grad\vp}_{L^\infty(B_\Lambda)}]\vee 1\bigr)\,H_t^n(\xx,\Si)
\end{split}
\end{equation}
Because of the $\vee 1$ factor and the definition of $H_t^n$, the same estimate applies to all cases with $n\leq N$ and $\Si\in\{-1,1\}^n$. The case $n>N$ is trivial.
\end{proof}

Now we may begin the actual proof with the first statement. 
\begin{proof}[Proof of {{\Cref{thm:the equivalence of St and Vt}}} 1.]

Recall the notation from \Cref{eq:splitting of the counter terms}
\begin{equation}
S_t^R(\beta,\eta,\vp|m,\eps)=\sum_{n=1}^\infty[\Acal_n^\vp(\beta,t|m,\eps)+\Acal_n^\infty(\beta,t|m,\eps)]
\end{equation}
and note that there are only finitely many ($N$) non-zero $\Acal_n^\infty(\beta,t|m,\eps)$ terms. Thus, the terms $\Acal_n^\vp(\beta,t|m,\eps)$ determine the convergence of the full series, and for the $\Acal_n^\infty$ terms we may use any uniform bound. \Cref{cor:convergence of the Vinfty-Vt terms} provides such a bound. Indeed, we have
\begin{equation}
\label{eq:uniform bound for Acaln}
|\Acal_n^\infty(\beta,t|m,\eps)|\leq\norm{\eta}_\infty^n\sum_{\Si\in\{-1,1\}^n}\int_{\Lambda^n}F_t^{d,n}(\xx,\Si|m)\dx<C_d(\beta,t,N,\Lambda)<\infty
\end{equation}
where the constant $C_d(\beta,t,N,\Lambda)>0$ is independent of $m$ and $\eps$, and where $\Lambda\subset \R^d$ is a compact set such that $\supp(\eta)\subset\Lambda\times\{-1,1\}$.

Thus, it remains to show that the series $\sum_{n=1}^\infty\Acal_n^\vp(\beta,t|m,\eps)$ converges absolutely and uniformly in $m,\eps$. 
Without loss of generality, we may assume that $t<m^{-2}$ (at the end we take the minimum of the condition obtained in this way and $m^{-2}$). By \Cref{lem:estimate for the field dependent part of the counter term,prop:induction statement} we have
\begin{equation}
\begin{split}
\label{eq:uniform bound for Acalphi}
|\Acal_n^\vp(\beta,t|m,\eps)|&\leq \frac{1}{n!}(2([\sqrt{\beta t}\norm{\grad\vp}_{L^\infty(B_\Lambda)}]\vee 1)\1_{n\leq N}+\1_{n>N})\norm{\eta}_\infty^n|\Lambda|2^n\sup_{\Si\in\{-1,1\}^n}\norm{H_t^n(\cdot,\Si)h_t^n(\cdot,\Si)}_n
\\
&\leq \frac{1}{n!}(2([\sqrt{\beta t}\norm{\grad\vp}_{L^\infty(B_\Lambda)}]\vee 1)\1_{n\leq N}+\1_{n>N})|\Lambda|\frac{t^{-\frac{d}{2}}}{B_d(\beta,N)}[2\norm{\eta}_{\infty}B_d(\beta,N)Ct^{\frac{d}{2}-\frac{\beta}{8\pi}}]^nn^{n-2}
\\
&\leq (2([\sqrt{\beta t}\norm{\grad\vp}_{L^\infty(B_\Lambda)}]\vee 1)\1_{n\leq N}+\1_{n>N})|\Lambda|\frac{t^{-\frac{d}{2}}}{B_d(\beta,N)}\rbr{\hat{B}(\beta,N)\norm{\eta}_{\infty}t^{\frac{d}{2}-\frac{\beta}{8\pi}}}^n,
\end{split}
\end{equation}
where $\hat{B}(\beta,N):=2eB_d(\beta,N)C$ with $B_d,C>0$ from \Cref{prop:induction statement}. Here we used the facts that $n^{-2}\leq 1$, $n^n/n!\leq e^n$, and $\sum_{\Si\in\{-1,1\}^n}1=2^n$. Then for 
\begin{equation}
\label{eq:def of tilde t}
t<(\hat{B}(\beta,N)\norm{\eta}_\infty)^{-\frac{1}{\frac{d}{2}-\frac{\beta}{8\pi}}}=:\tilde{t}
\end{equation}
we obtain a uniform geometric majorant.

Thus, for $t<t^*:=\min(\tilde{t},m^{-2})$ the series 
\begin{equation}
\sum_{n=1}^\infty\Acal_n^\vp(\beta,t|m,\eps)
\end{equation}
converges absolutely and uniformly in $m,\eps\in (0,1)$. 
\end{proof}

Knowing the equivalence of the series $S_t$ and the renormalized potential $V_t$ would imply the equivalence of the functions $\tilV_t^n$ with the cumulants in part 3 of this theorem, and vice versa. The proof of the former statement, without assuming the latter, would proceed through an infinite-dimensional PDE known as the Polchinski equation, while the proof of the latter, without assuming the former, can be carried out using a finite-dimensional system of ODEs. Both approaches can be traced back to \cite{BrKe87a}. We choose the latter approach and therefore prove the final claim of our theorem next.

\begin{proof}[Proof of {{\Cref{thm:the equivalence of St and Vt}}} 3.]
Consider first the moments corresponding to $\E_{m,(\eps^2,t)}^T[\eps^{-\frac{\beta}{4\pi}}e^{i\sqrt{\beta}\sigma_j\vp_j}\mid j\in[n]]$, where $\vp_j=\vp(x_j)\equiv\vp_{m,(\eps^2,t)}(x_j)$. Namely,
\begin{equation}
\begin{split}
\E_{m,(\eps^2,t)}\abr{\prod_{j=1}^n\eps^{-\frac{\beta}{4\pi}}e^{i\sqrt{\beta}\sigma_j\vp_j}}&=\eps^{-\frac{n\beta}{4\pi}}e^{-\frac{\beta}{2}\sum_{k,l=1}^n\sigma_{kl}K_{\eps^2,t}(k,l)}
\\
&=\rbr{\prod_{j=1}^n\E_{m,(\eps^2,t)}[\eps^{-\frac{\beta}{4\pi}}e^{i\sqrt{\beta}\sigma_j\vp_j}]}e^{-\sum_{1\leq k<l\leq n}\beta\sigma_{kl}K_{\eps^2,t}(k,l)}
\\
&=[\tilV_t^1(\eps)]^n\prod_{1\leq k<l\leq n}e^{-\beta\sigma_{kl}K_{\eps^2,t}(k,l)}
\\
&=[\tilV_t^1(\eps)]^n\prod_{1\leq k<l\leq n}\rbr{1+\abr{e^{-\beta\sigma_{kl}K_{\eps^2,t}(k,l)}-1}}
\\
&=[\tilV_t^1(\eps)]^n\sum_{G\in\Gamma(n)}\prod_{e=(k,l)\in G}\rbr{e^{-\beta \sigma_{kl}K_{\eps^2,t}(k,l)}-1},
\end{split}
\end{equation}
where $\Gamma(n)$ is the set of all Mayer graphs on $[n]$, that is, all unordered pairings $(k,l)$ with $k,l\in [n]$. Above we used the basic Gaussian identity $\E_{m,(\eps^2,t)}[e^{i\sqrt{\beta}\sigma_j\vp_j}]=e^{-\beta/2 K_{\eps^2,t}(0)}$ and the definition \Cref{eq:def of V1} of $\tilV_t^1$. It is not strictly necessary to factor out $(\tilV_t^1)^n$, but we do so in order to apply the results of \cite{BrKe87a} directly. The connected part of the sum above is called the Ursell coefficient or Ursell function $(e^{U([n],\eps,t)})_c$, where we denote
\begin{align}
U([n],\eps,t)&:=\sum_{1\leq k<l\leq n}\beta\sigma_{kl}K_{\eps^2,t}(k,l)
\\
\rbr{e^{-U([n],\eps,t)}}_c&:=
\begin{cases}
\sum_{G\in\Gamma_c(n)}\prod_{(k,l)\in G}\rbr{e^{-\beta\sigma_{kl}K_{\eps^2,t}(k,l)}-1}, &\text{if $n\geq 2$}
\\
1, &\text{if $n=1$}.
\end{cases}
\end{align}
Here $\Gamma_c(n)$ denotes the set of all connected labeled graphs on $[n]$. We have suppressed the $m$-dependence in $U([n],\eps,t)$. We will also use an analogous definition for any $I\varsubsetneq [n]$.

The Ursell functions are known to represent the truncated correlation functions or cumulants. Thus, since we may factor out $[\tilV_t^1(m,\eps)]^n$ from the cumulants in the representation \Cref{eq:cumulants to moments}, we obtain
\begin{equation}
\E_{m,(\eps^2,t)}^T\abr{\eps^{-\frac{\beta}{4\pi}}e^{i\sqrt{\beta}\sigma_j\vp_j}\mid j\in [n]}=[\tilV_t^1(m,\eps)]^n\rbr{e^{-U([n],\eps,t)}}_c.
\end{equation}

We may likewise factor out $[\tilV_t^1(m,\eps)]^n$ from $\tilV_t^n(\xx,\Si|m,\eps)$ by writing
\begin{equation}
\begin{split}
\tilV_t^n(\xx,\Si|m,\eps)&\,\,=[\tilV_t^1(m,\eps)]^n\tilde{\Cal{V}}_t^n(\xx,\Si|m,\eps)
\\
\tilde{\Cal{V}}_t^n(\xx,\Si|m,\eps)&:=
\begin{cases}
\half\sum_{I\varsubsetneq [n]}\int_{\eps^2}^t\bigg\{\rbr{\sum_{\substack{i\in I\\j\in I^c}}\beta\sigma_{ij}C_s^m(i,j)}\tilde{\Cal{V}}_s^{|I|}(\xx_I,\Si_I|m,\eps)
\\
\qquad\qquad\qquad \times \tilde{\Cal{V}}_s^{|I^c|}(\xx_{I^c},\Si_{I^c}|m,\eps) e^{-\beta\sum_{1\leq k<l\leq n}\sigma_{kl}K_{s,t}(k,l)}\bigg\}\ds, &\text{if $n\geq 2$}
\\
1, &\text{if $n=1$}.
\end{cases}
\end{split}
\end{equation}
To prove this, we may use induction. The base case $n=1$ is immediate from the definition. For the induction step, it suffices to observe that
\begin{equation}
[\tilV_s^1(m,\eps)]^ne^{-\frac{\beta}{2}\sum_{k=1}^n\sigma_{kk}K_{s,t}(k,k)}
=\abr{\tilV_s^1(m,\eps)e^{-\frac{\beta}{2}K_{s,t}(0)}}^n
=[\tilV_t^1(m,\eps)]^n.
\end{equation}

Therefore, it remains to show that $(e^{-U([n],\eps,t)})_c=(-1)^{n-1}\tilde{\Cal{V}}_t^n(\xx,\Si|m,\eps)$. This follows from \cite[Lemma 3.3]{BrKe87a}, but we highlight the main idea. The point is to derive a system of ODEs that both of these satisfy. We have by direct differentiation
\begin{equation}
\begin{split}
\der{t}\tilde{\Cal{V}}_t^n(\xx,\Si|m,\eps)=
\begin{cases}
\half\sum_{I\varsubsetneq [n]}\sum_{\substack{i\in I\\ j\in I^c}}\beta\sigma_{ij}C_t^m(i,j)\tilde{\Cal{V}}_t^{|I|}(\xx_{I},\Si_I|m,\eps)\tilde{\Cal{V}}_t^{|I^c|}(\xx_{I^c},\Si_{I^c}|m,\eps)
\\
\qquad\qquad-\sum_{1\leq k<l\leq n}\beta\sigma_{kl}C_t^m(k,l)\tilde{\Cal{V}}_t^n(\xx,\Si|m,\eps), \qquad \text{if } n\geq 2
\\
0, \qquad\qquad\qquad\qquad\qquad\qquad\qquad\qquad\qquad\qquad \text{if } n=1.
\end{cases}
\end{split}
\end{equation}
where we just used the Leibniz rule. Thus, for any $I\subset[n]$, the function $(-1)^{|I|-1}\tilde{\Cal{V}}_t^{|I|}(\xx_I,\Si_I|m,\eps)$ satisfies the system of ODEs
\begin{equation}
\label{eq:ODE for Ursell functions}
\begin{split}
\der{t}f_t(I,\xx_I)&=-\half\sum_{J\varsubsetneq I}\sum_{\substack{i\in I\\j\in I\setminus J}}\beta \sigma_{ij}C_t^m(i,j)f_t(J,\xx_J)f_t(I\setminus J,\xx_{I\setminus J})
\\
&\qquad\qquad-\sum_{\substack{k,l\in I\\ k<l}}\beta\sigma_{kl}C_t^m(k,l)f_t(I,\xx_I), \qquad \text{if $|I|\geq 2$}
\\
\der{t}f_t(I,\xx_I)&=0,\qquad \text{if $|I|=1$}.
\end{split}
\end{equation}

We must show that the Ursell functions also satisfy the system \Cref{eq:ODE for Ursell functions}, and identify an initial condition shared by both $(-1)^{n-1}\tilde{\Cal{V}}_t^n$ and the Ursell functions that ensures uniqueness of the solution. The correct initial condition is
\begin{equation}
f_{\eps^2}(I,\xx_I)=
\begin{cases}
1, &\text{if $|I|=1$}
\\
0, &\text{otherwise}.
\end{cases}
\end{equation}
We refer to the proof of \cite[Lemma 3.3]{BrKe87a} for the facts that the Ursell functions satisfy this system and that the solution is unique.
\end{proof}

Finally, we can prove the second statement.

\begin{proof}[Proof of {{\Cref{thm:the equivalence of St and Vt}}} 2.]
We will again fix $\beta$ and choose $N(\beta)\equiv N$ such that $\beta\in [\beta_N,\beta_{N+2})$. In the following, the expectation will always be taken with respect to the law of $\vp_{m,(\eps^2,t)}$. Formally, for deterministic (or equivalently independent of $\vp_{m,(\eps^2,t)}$) $\vp\in C_b(\R^d)$ we may write
\begin{equation}
\label{eq:first step in the proof of theorem 2.1}
-V_t(\beta,z\eta,\vp|m,\eps)=\log(\E\abr{e^{-V_0(\beta,z\eta,\vp_{m,(\eps^2,t)}+\vp|\eps)}})=:\log(\E\abr{e^{\zeta Q(\eta,\vp_{m,(\eps^2,t)}+\vp|\eps)}}),
\end{equation}
where 
\begin{equation}
Q(\eta,\vp_{m,(\eps^2,t)}+\vp|\eps)=\eps^{-\frac{\beta}{4\pi}}\sum_{\sigma\in\{-1,1\}}\int_{\R^d}\eta(x,\sigma)e^{i\sqrt{\beta}\sigma(\vp_{m,(\eps^2,t)}(x)+\vp(x))}\dx
\end{equation}
and $\zeta=-z$. 

For fixed $m,\eps\in (0,1)$ we have $|Q|\leq 2|\Lambda|\norm{\eta}_\infty\eps^{-\beta/(4\pi)}$, where $\Lambda\subset \R^d$ is compact and satisfies $\supp(\eta)\subset \Lambda\times \{-1,1\}$, so that $Q$ is a deterministically bounded random variable. Thus, $\zeta\mapsto e^{\zeta Q}$ is an entire function (pointwise in the probability space $\Omega$), and therefore $\zeta\mapsto\E[e^{\zeta Q}]$ is also entire by Fubini’s and Morera’s theorems. Now choose $\zeta$ such that $|\zeta Q|\ll 1$, say $|\zeta Q|<\varepsilon$ for some very small number $\varepsilon$. Then for $|\zeta|=|z|\leq (2|\Lambda|\norm{\eta}_\infty\eps^{-\beta/(4\pi)})^{-1}\varepsilon=:\delta(\varepsilon)$ we have $|e^{\zeta Q}-1|\ll 1$. Thus,
\begin{equation}
\E\abr{e^{\zeta Q(\eta,\vp_{m,(\eps^2,t)}+\vp|\eps)}}=1+\E[e^{\zeta Q(\eta,\vp_{m,(\eps^2,t)}+\vp|\eps)}-1]\neq 0
\end{equation} 
for $\zeta \in B_{\delta(\varepsilon)}(0)$. Therefore, there exists a branch of the logarithm such that $V_t(\beta,z\eta,\vp|m,\eps)$ is analytic in $B_{\delta(\varepsilon)}(0)$ for any $\beta\in\R$. Furthermore, for $\zeta\in\R$ the function $-V_t(\beta,-\zeta\eta,\vp|m,\eps)$ is the cumulant generating function
\begin{equation}
\begin{split}
V_t(\beta,z\eta,\vp|m,\eps)&=V_t(\beta,-\zeta\eta,\vp|m,\eps)
\\
&=-\sum_{n=1}^\infty\frac{\zeta^n}{n!}\kappa^n\abr{Q(\eta,\vp_{m,(\eps^2,t)}+\vp|\eps)}
\\
&=\sum_{n=1}^\infty\frac{z^n}{n!}(-1)^{n-1}\kappa^n\abr{Q(\eta,\vp_{m,(\eps^2,t)}+\vp|\eps)},
\end{split}
\end{equation} 
where $\kappa^n$ denotes the $n^{\text{th}}$-order cumulant of the single random variable $Q$ with respect to the law of $\vp_{m,(\eps^2,t)}$. Then by boundedness and multilinearity we have
\begin{equation}
\kappa^n[Q(\eta,\vp_{m,(\eps^2,t)}+\vp|\eps)]=\sum_{\Si\in\{-1,1\}^n}\int_{\R^{dn}}\rbr{\prod_{i=1}^n\eta_i}e^{i\sqrt{\beta}\sum_{j=1}^n\sigma_j \vp_j}\E^T[\eps^{-\frac{\beta}{4\pi}}e^{i\sqrt{\beta}\sigma_k\vp_{m,(\eps^2,t)}(x_k)}\mid k\in[n]]\d\xx.
\end{equation}

By the proof of part 1, the terms 
\begin{equation}
\frac{1}{n!}\sum_{\Si\in\{-1,1\}^n}\int_{\R^{nd}}\rbr{\prod_{i=1}^nz\eta_i}e^{i\sqrt{\beta}\sum_{j=1}^n\sigma_j\vp_j}\tilV_t^n(\xx,\Si|m,\eps)\d\xx
\end{equation}
of the series $S_t(\beta,z\eta,\vp|m,\eps)$ have geometric majorants on the set 
\begin{equation}
\label{eq:t set where geometric majorant}
\{t\in[\eps^2,\infty)\mid t<t^*\},
\end{equation}
where
\begin{equation}
\label{eq:def of tstar}
t^*:=\min\rbr{m^{-2},\abr{\hat{B}(\beta,N)|z|\norm{\eta}_\infty}^{-\frac{1}{\frac{d}{2}-\frac{\beta}{8\pi}}}},
\end{equation}
with $\hat{B}(\beta,N)=2eB_d(\beta,N)C$ and the constants $B_d$ and $C$ from \Cref{prop:induction statement}. For the terms $n\leq N+1$ we can derive $\eps$- and $m$-dependent upper bounds by a similar iterative method as in the proof of \Cref{prop:induction statement}. For fixed $t$ such that $0<\eps^2\leq t<m^{-2}$ we have that the series $S_t(\beta,z\eta,\vp|m,\eps)$ converges in $B_R(0)$ with 
\begin{equation}
\label{eq:radius of convergence for zS}
R=(\hat{B}(\beta,N)t\norm{\eta}_\infty)^{-\frac{1}{\frac{d}{2}-\frac{\beta}{8\pi}}},
\end{equation}
and represents an analytic function. Thus, by part 3, $V_t(\beta,z\eta,\vp|m,\eps)=S_t(\beta,z\eta,\vp|m,\eps)$ on $\R\cap B_R(0)\cap B_{\delta(\varepsilon)}(0)$, and by analytic continuation we may extend the equivalence to all of $B_R(0)$. Therefore, choosing $t<t^*$ with $t^*$ defined in \Cref{eq:def of tstar} ensures $1\in B_R(0)$, which yields the result.
\end{proof}

\end{subsection}

\begin{subsection}{Proofs of {{\Cref{lem:differentiability of Vtn with respect to mass ETC,prop:induction statement,cor:convergence of the Vinfty-Vt terms}}}}
\label{sec:proofs of prop 2.1 etc}
The first two subsections will introduce some preliminary estimates. Namely, we will first present the previously mentioned Onsager-type inequality established in \cite{LaRhVa23a}, and afterwards we will analyze the relevant sums of the covariance kernels $C_s$ (the mass exponential can be factored out of all $C_s^m$ and does not affect the estimates). The remainder of the section will then be devoted to the proofs of \Cref{prop:induction statement,cor:convergence of the Vinfty-Vt terms,lem:differentiability of Vtn with respect to mass ETC} in this order. 

\begin{subsubsection}{Onsager-type inequality}
\label{sec:Onsager type inequality}
The topic of this subsection is a tool from \cite{LaRhVa23a}, where the authors analyzed the boundary sine-Gordon model, which is equivalent to our setup with $d=1$, but with a more general covariance kernel in place of $C_s^m$. However, they did not explicitly consider the mass dependence or the limit $m\to 0$. This result is valid for arbitrary dimension $d$, although their other results are not.

Their parametrization of the problem uses a slightly different representation and regularization of the covariance of the reference field. More precisely, they used the regularized covariance kernels
\begin{equation}
\tilde{K}_t(x,y):=\int_0^tQ_r(x,y)\dr,
\end{equation}
where $Q_r$ is a symmetric positive definite kernel such that $\tilde{K}_t$ becomes a logarithmic covariance kernel in the limit $t\to\infty$. The prototype for such a kernel is the translation invariant case $Q_r(x,y)=q(e^r|x-y|)$ with $q\colon[0,\infty)\to\R$. We are going to use the result for our massless kernel $C_u(x,y)$, which fits into this translation invariant framework. The result yields an estimate for the charge weighted sums of the quantities
\begin{equation}
\tilde{K}_{s,t}(x,y):=\int_s^tq(e^r|x-y|)\dr
\end{equation}
The basic substitution $u=e^{-2r}$ then yields, for our parametrization,
\begin{equation}
K_{s,t}(x,y):=\int_s^tC_u(x,y)\du=\int_s^t e^{-\frac{|x-y|^2}{4u}}\frac{\du}{4\pi u}=\frac{1}{2\pi}\int_{\log(1/\sqrt{t})}^{\log(1/\sqrt{s})}q(e^r|x-y|)\dr
\end{equation}
with $q(x)=e^{-x^2/4}$. This $q$ certainly satisfies their assumptions.

Using the above relation we can translate the Onsager-type inequality (\cite[Lemma 4.3]{LaRhVa23a}) to our setup. The parts 1. and 2. are valid for all $n\geq 1$, but the last part is valid only for $n\leq N+1$.
\begin{lemma}[Onsager type inequality]
\label{lemma:onsager type inequality with our parametrization}
Given $n\geq 2$ and $p\leq n/2$ denoting the number of positive charges, there exists a constant $B_n$ depending only on $n$ such that
\begin{enumerate}
\item If $n$ is even and $p=n/2$, then 
\begin{equation}
\begin{split}
-\sum_{1\leq k<l\leq n} \sigma_{kl} K_{s,t}(k,l)&\leq\frac{1}{4\pi}(n-1)\log(\frac{\sqrt{t}}{\sqrt{s}})-\frac{1}{4\pi}\log(\frac{d(\xx)}{\sqrt{t}}\wedge 1)+B_n,
\end{split}
\end{equation}
where $d(\xx)$ is the Wasserstein distance introduced in \Cref{eq:def Wasserstein distance}.
\item 
In all other cases we have
\begin{equation}
-\sum_{1\leq k<l\leq n}\sigma_{kl}K_{s,t}(k,l)\leq \frac{1}{4\pi} (n-1)\log(\frac{\sqrt{t}}{\sqrt{s}})+B_n.
\end{equation}
\item Furthermore, for $n\leq N+1$ (the cases for which we need this result) we can write this using the functions $H_t^n$ for all charge configurations simultaneously
\begin{equation}
\begin{split}
-\sum_{1\leq k<l\leq n} \sigma_{kl} K_{s,t}(k,l)&\leq\frac{1}{4\pi}(n-1)\log(\frac{\sqrt{t}}{\sqrt{s}})-\frac{1}{4\pi}\log(H_t^n(\xx,\Si))+B_n.
\end{split}
\end{equation}
\end{enumerate}
\end{lemma}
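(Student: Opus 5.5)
The plan is to derive the lemma from \cite[Lemma 4.3]{LaRhVa23a} by the change of variables $u=e^{-2r}$, and to check that the constants transform correctly.

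\textbf{Step 1: translate the quantity.} For the kernel $q(x)=e^{-x^2/4}$, the substitution gives
\[
K_{s,t}(x,y)=\frac{1}{2\pi}\tilde K_{a,b}(x,y), \qquad a:=\log(1/\sqrt t),\quad b:=\log(1/\sqrt s).
\]
Hence
\[
-\sum_{k<l}\sigma_{kl}K_{s,t}(k,l)=-\frac{1}{2\pi}\sum_{k<l}\sigma_{kl}\tilde K_{a,b}(k,l).
\]

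\textbf{Step 2: apply the cited inequality.} In the neutral case, \cite[Lemma 4.3]{LaRhVa23a} bounds the right-hand side by
\[
\tfrac{n-1}{2}(b-a)-\tfrac12\log\big(e^{a}d(\xx)\wedge 1\big)+\tilde B_n .
\]
In all other cases the logarithmic term is absent. This is their normalization, in which a single charge carries self-energy $\tfrac12(b-a)$. Multiplying by $\tfrac{1}{2\pi}$, and using $b-a=\log(\sqrt t/\sqrt s)$ and $e^{a}d(\xx)=d(\xx)/\sqrt t$, gives parts 1 and 2 with $B_n:=\tilde B_n/(2\pi)$.

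If their statement is formulated with the full log-correlated kernel, or with a scale-dependent Wasserstein-type quantity, I will reduce it to the above form. Two facts suffice for this:
\begin{itemize}
\item $\tilde K_{a,b}\le \tilde K_{a,\infty}$, with errors bounded by constants depending only on $n$;
\item $d(\xx)$ is comparable, with constants depending on $n$, to the optimal matching cost in their statement.
\end{itemize}
Both sources of error are absorbed into $B_n$.

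I expect the main obstacle to be exactly this bookkeeping: matching their normalization of the logarithmic term, that is, the prefactor of the log of the matching distance, to our $\tfrac1{4\pi}$. If this is in doubt, I will verify the exponent directly on the dipole $n=2$. There
\[
-\sigma_{12}K_{s,t}(1,2)=K_{s,t}(1,2)\approx\tfrac{1}{4\pi}\log\big(t/(|x_1-x_2|^2\vee s)\big)
\]
up to constants, since $\int_s^t e^{-r^2/4u}\,\frac{du}{4\pi u}$ equals $\frac1{4\pi}\log(t/\max(s,r^2))+O(1)$. This reproduces $\tfrac1{4\pi}\log(\sqrt t/\sqrt s)-\tfrac1{4\pi}\log(d/\sqrt t\wedge1)$ exactly, which fixes the constants.

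\textbf{Step 3: part 3.} This follows from parts 1 and 2 by the definition of $H_t^n$.
\begin{itemize}
\item If $n\le N$ and the configuration is neutral, then $H_t^n=d(\xx)/\sqrt t\wedge1$, and part 1 is exactly the claim.
\item Otherwise $H_t^n=1$, so $-\tfrac1{4\pi}\log H_t^n=0$. If the configuration is non-neutral, part 2 gives the claim.
\item The remaining case is a neutral configuration with $n=N+1$. Since $N$ is even, $N+1$ is odd, so such a configuration cannot be neutral, and part 2 applies.
\end{itemize}
The restriction $n\le N+1$ enters only here, because a neutral configuration with $n>N$ still has the term $-\tfrac1{4\pi}\log(d(\xx)/\sqrt t\wedge1)\ge0$ but $H_t^n=1$. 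Part 3 can then fail for, for instance, neutral $n=N+2$. This is why the lemma is stated only for $n\le N+1$.
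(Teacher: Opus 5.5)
Your proposal follows the paper's route. The substitution $u=e^{-2r}$ gives $K_{s,t}=\frac{1}{2\pi}\tilde K_{\log(1/\sqrt t),\log(1/\sqrt s)}$ with $q(x)=e^{-x^2/4}$, and the three claims are read off from \cite[Lemma 4.3]{LaRhVa23a}. Your parity observation (a neutral configuration has even $n$, so $n=N+1$ is never neutral and $H_t^n=1$ there) makes explicit the deduction of part 3 that the paper leaves implicit.

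Discard your fallback bullet comparing $\tilde K_{a,b}$ with $\tilde K_{a,\infty}$. The difference $\tilde K_{b,\infty}$ is unbounded near the diagonal, and a pointwise kernel comparison does not survive the signed charge sum. If the cited lemma is stated only for $\tilde K_{0,T}$, the correct reduction is the scaling identity $\tilde K_{a,b}(x,y)=\tilde K_{0,b-a}(e^{a}x,e^{a}y)$ together with $d(e^{a}\xx)=e^{a}d(\xx)$.
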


\end{subsubsection}

\begin{subsubsection}{Estimates for sums of covariance kernels}
\label{sec:estimates for sums of covariance kernels}
The first estimate in this section stems from \cite[pages 17–18]{LaRhVa23a}, but we rewrite it in our parametrization and notation so that it can be used in our recursion \Cref{eq:def of Vn} to prove \Cref{prop:induction statement}. The second estimate is novel and is tailored to our setup. 

First, let us pick a smooth monotonically decreasing decay function $f_d\colon[0,\infty)\to [0,\infty)$ such that for $x\in \R^d$, $f_d(|x|)=\bigO((1+|x|)^{-(d+1)})$ as $|x|\to\infty$ and $f_d(|x|)\leq 1$ for all $x$. The Gaussian obviously satisfies this in all dimensions, but we work with $f_d$ for the first estimate since it also applies to more general kernels satisfying the derivative estimates below. Furthermore, we will continue to denote by $f_d$ this decay function in the estimates later in this section even though it would be simply the Gaussian. In addition, we need the fact that for all $d\geq 1$ there exist a function $f_d$ as described above and satisfying 
\begin{equation}
\label{eq:estimate for derivatives of the massive heat kernel}
|(\partial_y^{\nu}\partial_x^{\mu} C_s^m)(x,y)|=e^{-m^2s}|(\partial_y^{\nu}\partial_x^{\mu} C_s)(x,y)|\leq C\frac{e^{-{m^2s}}}{s^{1+\frac{|\nu|+|\mu|}{2}}}f_d\rbr{\frac{|x-y|}{\sqrt{s}}}
\end{equation}
where $\mu$ and $\nu$ are multi-indices, $|\mu|$ denotes the order of $\mu$, and $C>0$ is a constant independent of $m,s,x$ and $y$. The above can be proved by induction. In this section, we need it only for $|\nu|,|\mu|\in\{0,1\}$, which is similar to the condition for the kernels in \cite{LaRhVa23a}.

Let us define 
\begin{equation}
\label{eq:def of GsI function}
G_s^I(\xx,\Si):=\bigg|\sum_{\substack{i\in I\\j\in I^c}}\sigma_{ij}C_s(i,j)\bigg|\Rightarrow e^{-m^2s}G_s^I(\xx,\Si)=\bigg|\sum_{\substack{i\in I\\j\in I^c}}\sigma_{ij}C_s^m(i,j)\bigg|
\end{equation}
where $I\subset [n]$, with $I^c$ denoting the complement with respect to $[n]$, and we have used the shorthand notation from \Cref{eq:some notations for K and the heat kernel}, that is, $C_s(i,j):=C_s(x_i,x_j)$.

Now we can state the first estimate of this section
\begin{lemma}
\label{lem:estimate for GI function}
For $s>0$ we have
\begin{equation}
G_s^I(\xx,\Si)\leq \frac{C}{s} H_s^{|I|}(\xx_I,\Si_I)H_s^{|I^c|}(\xx_{I^c},\Si_{I^c})\sum_{\substack{i\in I\\ j\in I^c}}f_d\rbr{\frac{|x_i-x_j|}{\sqrt{s}}},
\end{equation}
where $
C>0$ is a universal constant and $H_t^n(\xx,\Si)$ is the function defined in \Cref{eq:def of Htn}.
\end{lemma}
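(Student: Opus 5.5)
The trivial bound comes first. By the heat kernel estimate \Cref{eq:estimate for derivatives of the massive heat kernel} with $\mu=\nu=0$, we have $|C_s(i,j)|\leq \frac{C}{s}f_d(|x_i-x_j|/\sqrt{s})$. The triangle inequality then gives
\begin{equation*}
G_s^I(\xx,\Si)\leq \frac{C}{s}\sum_{i\in I,\, j\in I^c}f_d\rbr{\frac{|x_i-x_j|}{\sqrt{s}}}.
\end{equation*}
Since $H_s^{k}\leq 1$ always, and $H_s^k=1$ whenever the subconfiguration is non-neutral or $k>N$, this already proves the claim when neither $(\xx_I,\Si_I)$ nor $(\xx_{I^c},\Si_{I^c})$ is a neutral configuration of size at most $N$. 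It also covers the case $d(\xx_I)\geq\sqrt{s}$ (respectively $d(\xx_{I^c})\geq\sqrt{s}$), where the corresponding $H$ factor equals $1$. So the only real work is to extract a factor $d(\xx_I)/\sqrt{s}$ when $I$ is neutral with $d(\xx_I)<\sqrt{s}$, and likewise for $I^c$. If both are neutral, we need the product of the two factors.

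The plan is to exploit neutrality through cancellation. Suppose $I$ is neutral with $|I|=2p$. Pair its charges using the optimal permutation $\pi$ realizing $d(\xx_I)$, so that each $i\in I$ with $\sigma_i=+1$ is matched with $i'=\pi$-partner carrying charge $-1$. Then
\begin{equation*}
\sum_{i\in I}\sigma_iC_s(x_i,y)=\sum_{\text{pairs}}\sigma_i\bigl(C_s(x_i,y)-C_s(x_{i'},y)\bigr).
\end{equation*}
By the mean value theorem and the derivative bound with $|\mu|=1$, each difference is at most
\begin{equation*}
C\,s^{-3/2}|x_i-x_{i'}|\sup_{\xi\in[x_i,x_{i'}]}f_d\bigl(|\xi-y|/\sqrt{s}\bigr).
\end{equation*}
When $|x_i-x_{i'}|\leq d(\xx_I)<\sqrt{s}$, monotonicity and decay of $f_d$ let us replace the supremum by $C f_d(|x_i-y|/\sqrt{s})$, since $(1+|a|)/(1+|a|-1)$ stays bounded when shifting by at most one unit. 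More cleanly, one can redefine $f_d$ to absorb the unit shift: if $f_d$ decays like $(1+r)^{-(d+1)}$, then $\sup_{|h|\le 1}f_d(r+h)\le C f_d(r)$. Summing over pairs then gives the factor $d(\xx_I)/\sqrt{s}$ multiplied by the sum of $f_d$ terms. Taking $y=x_j$ and summing over $j\in I^c$ handles the case where only one side is neutral.

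When both $I$ and $I^c$ are neutral and close, we apply the same pairing in both variables simultaneously. We write $\sum_{i,j}\sigma_i\sigma_jC_s(i,j)$ as a sum over pairs $(i,i')$ and $(j,j')$ of second differences
\begin{equation*}
C_s(x_i,x_j)-C_s(x_{i'},x_j)-C_s(x_i,x_{j'})+C_s(x_{i'},x_{j'}),
\end{equation*}
and bound each by $|x_i-x_{i'}||x_j-x_{j'}|\,\sup|\partial_x\partial_yC_s|$. This uses the mixed derivative estimate with $|\mu|=|\nu|=1$, which supplies a factor $s^{-2}$. The result is $\frac{C}{s}\cdot\frac{d(\xx_I)}{\sqrt s}\cdot\frac{d(\xx_{I^c})}{\sqrt s}$ times the $f_d$ sum. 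The mixed cases, where one side is close and the other far, combine the single-difference bound with the factor $1$.

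The main obstacle is the step of replacing the supremum of $f_d$ over segments by $f_d$ evaluated at the original points. Two things must hold there. First, the shift by at most $\sqrt{s}$ in each argument must cost only a constant, which follows from the polynomial tail and boundedness of $f_d$ after possibly enlarging $f_d$ to a comparable decreasing majorant. Second, the resulting sum over pairs must be dominated by the full sum $\sum_{i\in I,j\in I^c}f_d$, which is immediate since each pair term is one of its summands. The constants depend only on the number of charges, which is bounded by $N+1$ in the relevant cases, or otherwise only on the trivial bound, and this yields a universal $C$.
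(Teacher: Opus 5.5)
Your proposal is correct and follows essentially the same route as the paper. Both arguments pair opposite charges within each neutral block via the Wasserstein-optimal permutation, apply the first- or second-order mean value theorem with the kernel derivative bounds, absorb the $O(\sqrt{s})$ shift in the argument of $f_d$ into a constant when $d(\xx_I),d(\xx_{I^c})<\sqrt{s}$, and fall back on the naive bound otherwise. Your explicit remark that the shift step needs $f_d$ of polynomial type such as $(1+r)^{-(d+1)}$ (it fails for the Gaussian itself) is a point that the paper's inequality $f_d(|x-2|)\leq B f_d(|x|)$ relies on without saying so.
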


\begin{proof}
Let $p$ be the number of positive charges in $I$ and $p^c$ the number of positive charges in $I^c$. The choice of positive rather than negative is arbitrary. If we denote $I=\{i_1,i_2,\dots,i_{|I|}\}$ and $I^c=\{j_1,j_2,\dots,j_{|I^c|}\}$, we may assume without loss of generality that the first $p$ of the $\sigma_{i_k}$ are positive, and analogously for $I^c$. Then, if $p=|I|/2$ and $p^c=|I^c|/2$, we can write
\begin{equation}
\sum_{\substack{i\in I\\ j\in I^c}}\sigma_{ij}C_s(i,j)=\sum_{k=1}^p\sum_{l=1}^{p^c}\sigma_{i_kj_l}(C_s(i_k,j_l)-C_s(i_k,j_{p^c+l})-C_s(i_{p+k},j_l)+C_s(i_{p+k},j_{p^c+l})).
\end{equation}
If instead $p=|I|/2$ but $p^c\neq |I^c|/2$ (the reverse case is analogous), then we write
\begin{equation}
\sum_{\substack{i\in I\\ j\in I^c}}\sigma_{ij}C_s(i,j)=\sum_{k=1}^{p}\sum_{l=1}^{|I^c|}\sigma_{i_kj_l}(C_s(i_k,j_l)-C_s(i_{p+k},j_l)).
\end{equation}
If $p\neq|I|/2$ and $p^c\neq |I^c|/2$, we do not do any partial pairing, but use the estimate \Cref{eq:naive estimate for the sum of covariance kernels}.

We may always reorder the pairings of opposite charges so that we write $i_{p+\pi(k)}$ instead of $i_{p+k}$, where $\pi$ is a permutation on $I$. Moreover, we may choose $\pi$ to be the permutation that realizes the minimum in the Wasserstein distance $d(\xx_I)$ defined in \Cref{eq:def Wasserstein distance}. The same can be done for $I^c$. For simplicity of notation, and without loss of generality, we assume that the minimizing permutation is the identity.

By a simple mean value theorem estimate (this can be reduced to a situation similar to the ``second-order'' mean value theorem on a square $Q\subset \R^2$, treated for example in \cite[Theorem 9.40]{Ru76a}), we obtain
\begin{equation}
\begin{split}
|C_s(i_k,j_l)-C_s(i_k,j_{p^c+l})&-C_s(i_{p+k},j_l)+C_s(i_{p+k},j_{p^c+l})|
\\
&\leq \frac{C}{s}\frac{|x_{i_k}-x_{i_{p+k}}|}{\sqrt{s}}\frac{|x_{j_l}-x_{j_{p^c+l}}|}{\sqrt{s}}\sup_{\substack{u_k\in [x_{i_k},x_{i_{p+k}}]\\v_l\in[x_{j_l},x_{j_{p^c+l}}]}}f_d\rbr{\frac{|u_k-v_l|}{\sqrt{s}}},
\end{split}
\end{equation}
where $[a,b]$ denotes the line segment between $a,b\in\R^d$, and $C>0$ is a universal constant. We also have $|x_{i_k}-x_{i_{p+k}}||x_{j_l}-x_{j_{p^c+l}}|\leq d(\xx_I)d(\xx_{I^c})$. If $d(\xx_I)\leq \sqrt{s}$, then also $|x_{i_k}-x_{i_{p+k}}|\leq \sqrt{s}$, so up to a universal multiplicative constant we may replace $u_k$ by either endpoint of the segment $[x_{i_k},x_{i_{p+k}}]$. The same applies to $I^c$. Indeed, consider replacing $u_k$ and $v_l$ with $x_{i_k}$ and $x_{j_l}$
\begin{equation}
f_{d}\rbr{\frac{|u_k-v_l|}{\sqrt{s}}}= f_{d}\rbr{\frac{|u_k-x_{i_k}+x_{i_k}-(v_l-x_{j_l}+x_{j_l})|}{\sqrt{s}}}\leq f_{d}\rbr{\abs{\frac{|x_{i_k}-x_{j_l}|}{\sqrt{s}}-2}},
\end{equation}
where we have used the fact that in this situation $|u_k-x_{i_k}|\leq |x_{i_k}-x_{i_{p+k}}|\leq \sqrt{s}$ (similarly with $v_l,x_{j_l}$ and $x_{j_{p^c+l}}$), triangle inequality and monotonicity of $f_{d}$. Furthermore, there exists a constant $B$ (possibly dependent on $d$) such that $f_{d}(|x-2|)\leq B f_d(|x|)$.

We may symmetrize and use the above pairing estimate only on scales where $d(\xx_I),d(\xx_{I^c})\leq \sqrt{s}$. If only one of these conditions holds, we apply the pairing estimate only to that set. If both fail, we use the naive bound
\begin{equation}
\label{eq:naive estimate for the sum of covariance kernels}
G_s^I(\xx,\Si)\leq \frac{C}{s}\sum_{\substack{i\in I\\ j\in I^c}}f_d\rbr{\frac{|x_i-x_j|}{\sqrt{s}}}
\end{equation}
for some universal constant $C>0$.

Thus, choosing the largest constant $C$ among the possible cases, we may write, in the case $p=|I|/2$ and $p^c=|I^c|/2$,
\begin{equation}
\label{eq: scale dependent estimate for the sum of Cs the fully neutral case}
G_s^I(\xx,\Si)
\leq 
\begin{cases}
\frac{C}{s}\frac{d(\xx_I)}{\sqrt{s}}\frac{d(\xx_{I^c})}{\sqrt{s}}\sum_{\substack{i\in I\\ j\in I^c}}f_d\rbr{\frac{|x_i-x_j|}{\sqrt{s}}}, &\text{if } d(\xx_I),d(\xx_{I^c})<\sqrt{s} 
\\
\frac{C}{s}\frac{d(\xx_I)}{\sqrt{s}}\sum_{\substack{i\in I\\ j\in I^c}}f_d\rbr{\frac{|x_i-x_j|}{\sqrt{s}}},  &\text{if } d(\xx_{I^c})\geq \sqrt{s}, \text{but } d(\xx_I)<\sqrt{s}
\\
\frac{C}{s}\frac{d(\xx_{I^c})}{\sqrt{s}}\sum_{\substack{i\in I\\ j\in I^c}}f_d\rbr{\frac{|x_i-x_j|}{\sqrt{s}}}, &\text{if } d(\xx_{I})\geq \sqrt{s}, \text{but }d(\xx_{I^c})<\sqrt{s}
\\
\frac{C}{s}\sum_{\substack{i\in I\\ j\in I^c}}f_d\rbr{\frac{|x_i-x_j|}{\sqrt{s}}}, &\text{otherwise}.
\end{cases}
\end{equation} 

Similarly, if $p=|I|/2$ but $p^c\neq |I^c|/2$ (the opposite case is similar), we have
\begin{equation}
\label{eq: scale dependent estimate for the sum of Cs the partially neutral case}
G_s^I(\xx,\Si)
\leq
\begin{cases}
\frac{C}{s}\frac{d(\xx_I)}{\sqrt{s}}\sum_{\substack{i\in I\\ j\in I^c}}f_d\rbr{\frac{|x_i-x_j|}{\sqrt{s}}}, &\text{if } d(\xx_{I})<\sqrt{s}
\\
\frac{C}{s}\sum_{\substack{i\in I\\ j\in I^c}}f_d\rbr{\frac{|x_i-x_j|}{\sqrt{s}}}, &\text{otherwise}.
\end{cases}
\end{equation}

If $\sum_{i\in I}\sigma_i\neq 0\neq\sum_{j\in I^c}\sigma_j$, we use the estimate \Cref{eq:naive estimate for the sum of covariance kernels}. Furthermore, if $|I|<N$ and $|I^c|\geq N$ (the opposite case is similar) we also choose to do the pairing argument above only with respect to $I$ even though the configuration of $I^c$ would also be neutral. Similarly, if $|I|,|I^{c}|\geq N$ we may also choose to use the estimate \Cref{eq:naive estimate for the sum of covariance kernels}. Thus, we can rewrite our estimates in terms of the function $H_t^{n}(\xx,\Si)$. 
\end{proof}

The other estimate we need is given in the lemma below. 
\begin{lemma}
\label{lem:estimate for the full sum of the massive heat kernels}
If $s>0$ and $n\leq N$, we have
\begin{equation}
\abs{\sum_{k,l=1}^n\sigma_{kl}C_s(k,l)}=\sum_{k,l=1}^n\sigma_{kl}C_s(k,l)\leq n^2\frac
{B}{s}H_s^n(\xx,\Si)=n^2\frac{B}{s}
\begin{cases}
\frac{d(\xx)}{\sqrt{s}},\quad \text{if } d(\xx)\leq \sqrt{s} \text{ and } \sum_{i=1}^n\sigma_i=0
\\
1,\qquad \text{otherwise}.
\end{cases}
\end{equation}
for some universal constant $B>0$. The first equality holds since $C_s$ is a covariance function.

Furthermore, if in addition $\sum_{i=1}^n\sigma_i=0$ and $s$ is bounded from below, that is, $s\geq t>0$ for some fixed $t$ and $\xx\in \Lambda^n$ with compact $\Lambda\subset\R^d$, then we have
\begin{equation}
\sum_{k,l=1}^n\sigma_{kl}C_s(k,l)\leq n^2\frac
{C(t,\Lambda)}{s^{\frac{3}{2}}}H_s^n(\xx,\Si)
\end{equation}
with some finite constant $C(t,\Lambda)>0$. As before we use the shorthand notations in \Cref{eq:some notations for K and the heat kernel}.
\end{lemma}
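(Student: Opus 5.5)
The plan is to handle the two claims separately, beginning with the positivity and the crude bound. The equality $\abs{\sum_{k,l}\sigma_{kl}C_s(k,l)}=\sum_{k,l}\sigma_{kl}C_s(k,l)$ holds because $C_s$ is positive definite: the sum equals $\E[(\sum_k\sigma_k\psi(x_k))^2]\geq 0$, where $\psi$ is a centred Gaussian with covariance $C_s$. The ``otherwise'' case of the first claim then follows from $|C_s(k,l)|\leq C_s(0)=\frac{1}{4\pi s}$, which gives the bound $n^2\frac{1}{4\pi s}$.

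\textbf{Step 1: the neutral case with $d(\xx)\leq\sqrt{s}$.} Assume $\sum_i\sigma_i=0$ and $d(\xx)\leq\sqrt{s}$. As in the proof of \Cref{lem:estimate for the field dependent part of the counter term}, label the charges so that the first $p=n/2$ have one sign, the rest the other, and the minimizing permutation in \Cref{eq:def Wasserstein distance} is the identity. Write $\sum_{k,l}\sigma_{kl}C_s(k,l)=\sum_l\sigma_l\sum_k\sigma_kC_s(x_k,x_l)$. For fixed $l$, pair the inner sum as
\begin{equation}
\sum_{k=1}^p\sigma_k\bigl(C_s(x_k,x_l)-C_s(x_{p+k},x_l)\bigr).
\end{equation}
By the mean value theorem and \Cref{eq:estimate for derivatives of the massive heat kernel} with $|\mu|=1$, each bracket is at most $\frac{C}{s}\frac{|x_k-x_{p+k}|}{\sqrt{s}}$, since $f_d\leq 1$. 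Summing over $k$ gives $\frac{C}{s}\frac{d(\xx)}{\sqrt{s}}$, and summing over $l$ gives at most $n\frac{C}{s}\frac{d(\xx)}{\sqrt{s}}\leq n^2\frac{B}{s}H_s^n$. A single pairing, i.e.\ first order in $d(\xx)$, is all that is needed here, so no second-order mean value theorem is required. Combining this with the crude bound when $d(\xx)>\sqrt{s}$ (where $H_s^n=1$) gives the first claim, and the case $n\leq N$ is exactly where $H_s^n$ takes the $d(\xx)/\sqrt{s}$ form.

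\textbf{Step 2: the second claim, with $s\geq t$, $\xx\in\Lambda^n$ and neutrality.} I would use the same pairing but extract an extra $s^{-1/2}$ by exploiting the compactness of $\Lambda$. If $d(\xx)\leq\sqrt{s}$, Step 1 already gives $\frac{C}{s}\frac{d(\xx)}{\sqrt{s}}$, and $\frac{d(\xx)}{\sqrt{s}}=H_s^n$; rewriting $\frac{C}{s}H_s^n$ as $\frac{C}{s^{3/2}}\cdot\sqrt{s}H_s^n$ is not the right move. Instead I would keep the mean value bound in the form $\frac{C}{s}\frac{d(\xx)}{\sqrt{s}}$ and note that this is itself $\leq \frac{C}{s^{3/2}}d(\xx)$. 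When $d(\xx)\leq\sqrt{s}$, however, $H_s^n=d(\xx)/\sqrt{s}$, so the target is $\frac{C}{s^{2}}d(\xx)$, which is smaller. That is the genuine obstacle.

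The fix is a second-order cancellation. Because the configuration is neutral, the paired sum over $l$ can also be paired, which yields a double difference (as in \Cref{lem:estimate for GI function}) bounded by $\frac{C}{s^2}d(\xx)^2$ via the mixed derivative $|\mu|=|\nu|=1$. Bounding $d(\xx)\leq n\,\diam(\Lambda)$ then gives $\frac{C(\Lambda)}{s^{2}}d(\xx)\leq\frac{C(\Lambda)}{s^{3/2}}\cdot\frac{d(\xx)}{\sqrt{s}}$, which is the desired bound.

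When $d(\xx)>\sqrt{s}$, we have $H_s^n=1$, and the double-difference bound $\frac{C}{s^2}d(\xx)^2\leq\frac{C\,\diam(\Lambda)^2}{s^{2}}$ together with $s\geq t$ gives $\leq\frac{C(\Lambda)}{t^{1/2}s^{3/2}}$, so the constant is $C(t,\Lambda)$. One must check that the diagonal terms $k=l$ are handled correctly within the double pairing. The full sum $\sum_{k,l}$ factorizes as $\sum_{k,l=1}^{p}\sigma_k\sigma_l\,\Delta_k\Delta_l C_s$ with $\Delta_k$ the difference in the $k$-th pair, and this covers both diagonal and off-diagonal terms uniformly.
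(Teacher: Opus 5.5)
Your proof is correct. For the first claim it is essentially the paper's argument: a first-order pairing of opposite charges along the Wasserstein-minimising matching, the mean value theorem with $f_d\le 1$, and the crude bound $C_s\le (4\pi s)^{-1}$ in the remaining cases. Your row-wise grouping $\sum_l\sigma_l\sum_{k\le p}\sigma_k\bigl(C_s(x_k,x_l)-C_s(x_{p+k},x_l)\bigr)$ is simply tidier bookkeeping than the paper's $n$ terms $C_s(0)-C_s(a,b_a)$ and $n^2/2-n$ terms $\pm(C_s(a,b)-C_s(a,c))$.

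For the second claim you take a genuinely different route. The paper stays at first order and relies on the explicit Gaussian profile. It writes $C_s(x,y)=\frac{1}{4\pi s}F((x-y)/\sqrt s)$ with $\nabla F(u)=-\frac u2F(u)$, so the gradient in the mean value theorem vanishes linearly at the origin. Because all points lie in $\Lambda$, the argument of that gradient is $O(\diam(\Lambda)/\sqrt s)$, and this supplies the extra $s^{-1/2}$.

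You instead use neutrality twice. You write the full sum as $\sum_{k,l\le p}\Delta_k\Delta_l C_s$ and bound each double difference by $\sup|\partial_x\partial_y C_s|\,|x_k-x_{p+k}|\,|x_l-x_{p+l}|\le Cs^{-2}|x_k-x_{p+k}|\,|x_l-x_{p+l}|$. This gives $\frac{C}{s^2}d(\xx)^2$ for every $s>0$. Compactness, via $d(\xx)\le\frac n2\diam(\Lambda)$, then turns one factor of $d(\xx)$ into a constant, and $s\ge t$ is needed only when $d(\xx)>\sqrt s$.

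Your route needs only the generic estimate \Cref{eq:estimate for derivatives of the massive heat kernel} with $|\mu|=|\nu|=1$, the same input as \Cref{lem:estimate for GI function}. So the general decay function $f_d$ does suffice here after all, contrary to the paper's remark that the explicit form must be used. Your intermediate bound $\frac{C}{s}\bigl(d(\xx)/\sqrt s\bigr)^2$ is also quadratic in $H_s^n$ on the neutral region $d(\xx)\le\sqrt s$, so it would even sharpen the first claim. The paper's route avoids the second-order mean value theorem, at the cost of depending on the specific Gaussian $F$.

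One small bookkeeping point: in the case $d(\xx)>\sqrt s$ you should write $d(\xx)^2\le\frac{n^2}{4}\diam(\Lambda)^2$ rather than $\diam(\Lambda)^2$. The resulting factor of $n^2$ is absorbed by the $n^2$ already present in the statement.
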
 

\begin{proof}
Firstly, if $\sum_{i=1}^n\sigma_i\neq 0$, the claim follows readily from the fact that the sum has $n^2$ terms and $C_s(\cdot,\cdot)\leq s^{-1}$ so that in this case we may pick $B=1$.

Then for the rest of the proof we assume that $\sum_{i=1}^n\sigma_i=0$. We claim that the terms can be grouped so that there are $n$ contributions of the form $C_s(0)-C_s(a,b_a)$  for $a=1,2,\dots,n$ with $\sigma_{a b_a}=-1$, and $n^2/2-n$ contributions of the form $\pm (C_s(a,b)-C_s(a,c))$ (or either of the pairs (a,b) and (a,c) in reversed order) with $\sigma_{ab}\neq\sigma_{ac}$. In addition, for each such choice we require $|x_a-x_{b_a}|\leq d(\xx)$ and $|x_b-x_c|\leq d(\xx)$.

Without loss of generality, we may again assume that $\sigma_i=+1$ for $i=1,2,\dots,n/2=p$ and $\sigma_i=-1$ for $i=p+1,\dots,2p$. Let $\tau\in \S_p$ be such that 
\begin{equation}
d(\xx)=\sum_{i=1}^p|x_i-x_{p+\tau(i)}|.
\end{equation}
The pairings can now be written explicitly. We write them for the case where $\tau$ is the identity permutation, since in that case it is easiest to see that all pairs appear exactly once. The general case is obtained by replacing $p+i$ with $p+\tau(i)$ (and similarly with $j$) everywhere. Note that here we do not mod out by symmetries, so for instance both $C_s(a,b)$ and $C_s(b,a)$ will both appear in the sums below. We have 
\begin{equation}
\begin{split}
\sum_{i,j=1}^n\sigma_{ij}C_s(i,j)&=\sum_{i=1}^{p}[C_s(0)-C_s(i+p,i)]+\sum_{i=1}^p[C_s(0)-C_s(i,i+p)]
\\
&\quad+\sum_{\substack{1\leq i,j\leq p\\i\neq j}}[C_s(i,j)-C_s(p+j,i)]
\\
&\quad+\sum_{1\leq i<j\leq p}[C_s(i,p+j)-C_s(j,i)+C_s(j,p+i)-C_s(p+i,p+j)]
\end{split}
\end{equation}
The pairs $(a,b_a)$ are identified in the first row as the pairs $(i,p+i)$ or $(p+i,i)$. The second and third row yield that the triplets $(a,b,c)$ are either $(i,j,p+j)$ or $(j,p+i,p+j)$.
Now the first claim follows by the same mean value theorem argument used in the proof of \Cref{lem:estimate for GI function}, together with the bound $f_d\leq 1$.

For the second claim, we now apply the mean value theorem explicitly since we cannot use the general decay function $f_d$ here. We may write
\begin{equation}
\label{eq:translation invariant scale decomposition of the massive heat kernel}
C_s(x,y)=\frac{1}{4\pi s}F\rbr{\frac{x-y}{\sqrt{s}}}
\end{equation}
with $F(x)=e^{-\frac{1}{4} x^2}$. Since $F$ is smooth, the mean value theorem gives
\begin{equation}
\begin{split}
C_s(0)-C_s(a,b_a)&=\frac{1}{4\pi s}\rbr{F(0)-F\rbr{\frac{x_a-x_{b_a}}{\sqrt{s}}}}
= -\frac{1}{4\pi s}\frac{(x_a-x_{b_a})}{\sqrt{s}}\cdot (\grad F)(u),
\\
C_s(a,b)-C_s(a,c)&=\frac{1}{4\pi s}\rbr{F\rbr{\frac{x_a-x_b}{\sqrt{s}}}-F\rbr{\frac{x_a-x_c}{\sqrt{s}}}}
= -\frac{1}{4\pi s}\frac{(x_b-x_c)}{\sqrt{s}}\cdot (\grad F)(v),
\end{split}
\end{equation}
for some $u\in [0,(x_a-x_{b_a})/\sqrt{s}]$ and $v\in [(x_a-x_b)/\sqrt{s},(x_a-x_c)/\sqrt{s}]$.

Moreover,
\begin{equation}
(\grad F)(u)=-\frac{u}{2}F(u)=-\frac{\tilde{u}}{2\sqrt{s}}F\rbr{\frac{\tilde{u}}{\sqrt{s}}}
\end{equation}
for some $\tilde{u}\in[0,(x_a-x_{b_a})]$. For $d(\xx)<\sqrt{s}$ we estimate
\begin{equation}
|C_s(0)-C_s(x_a,x_{b_a})|\leq \frac{1}{s^{\frac{3}{2}}}\frac{d(\xx)}{\sqrt{s}}
\sup_{x_a,x_{b_a}\in\Lambda}\sup_{s\geq t}\sup_{\tilde{u}\in [0,(x_a-x_{b_a})]}
\frac{|\tilde{u}|}{8\pi}F\rbr{\frac{\tilde{u}}{\sqrt{s}}}.
\end{equation}
For $d(\xx)\geq \sqrt{s}$ we estimate
\begin{equation}
|C_s(0)-C_s(x_a,x_{b_a})|\leq \frac{1}{s^{\frac{3}{2}}}
\sup_{x_a,x_{b_a}\in\Lambda}\sup_{s\geq t}\sup_{\tilde{u}\in[0,(x_a-x_{b_a})]}
\frac{|x_a-x_{b_a}|}{\sqrt{s}}\frac{|\tilde{u}|}{8\pi}F\rbr{\frac{\tilde{u}}{\sqrt{s}}}.
\end{equation}
Both suprema above are finite, and the result follows by choosing $C(t,\Lambda)$ to be their maximum; the remaining estimate is treated in a similar vein. For $d=1$ the necessary modifications are straightforward (we simply use the ordinary derivative instead of the gradient).
\end{proof}

\end{subsubsection}

\begin{subsubsection}{Proof of {{\Cref{prop:induction statement}}} in the case $d\geq 2$}
\label{sec:the case d geq 2 in the proof os induction prop}
We will begin this section by defining the functions $h_t^n$. The definition is the same in all dimensions, including $d=1$. Thus, until further notice we keep $d\geq 1$ arbitrary. The functions $h_t^n$ will be defined recursively, or inductively (in the case $t\geq m^{-2}$ we do not even need the induction principle, since the number of terms is finite). After defining these functions we will prove \Cref{prop:induction statement} in the case $d\geq 2$. This case is considerably simpler than the case $d=1$, which is why we treat it first.

Now we turn to defining the functions $h_t^n$. The base case $n=1$ is obtained by simply choosing $h_t^1$ to be the right-hand side of \Cref{eq:estimate for V1} in \Cref{lem:uniform bound for V1}. Next we make the induction hypothesis that the claim holds for all $k<n$. Thus, for any $t>0$ we have
\begin{equation}
\label{eq:starting point for defining htn}
|\tilV_t^n(\xx,\Si|m,\eps)|\leq \frac{\beta}{2}\sum_{I\varsubsetneq[n]}\int_0^te^{-m^2s}G_s^I(\xx,\Si)h_s^{|I|}(\xx_I,\Si_I|m)h_s^{|I^c|}(\xx_{I^c},\Si_{I^c}|m)e^{-\frac{\beta}{2}\sum_{k,l=1}^n K_{s,t}^m(k,l)}\ds,
\end{equation}
where we used the notation $G_s^I(\xx,\Si):=|\sum_{i\in I,\,j\in I^c}\sigma_{ij}C_s(i,j)|$ from \Cref{sec:estimates for sums of covariance kernels}. Also recall from \Cref{eq:some notations for K and the heat kernel} that $K_{s,t}^m$ is the integral of $C_\cdot^m$ from $s$ to $t$. 

To make the functions $h_t^n$ independent of $m$ for $t< m^{-2}$ we will need one more estimate before defining them. Namely, we need a good estimate for the last exponential factor involving the terms $K_{s,t}^m$ above when $n\leq N+1$. This is provided in the lemma below.
\begin{lemma}
\label{lem:estimate for the exponential in tilVtn}
Let $n\leq N+1$ and $t<m^{-2}$. Then we have
\begin{equation}
\label{eq:estimate for the exponential in tilVtn}
\exp(-\frac{\beta}{2}\sum_{k,l=1}^n\sigma_{kl} K_{s,t}^m(k,l))\leq \bar{B}_N\rbr{\frac{s}{t}}^\frac{\beta}{8\pi}[H_t^n(\xx,\Si)]^{-\frac{\beta}{4\pi}},
\end{equation}
where 
\begin{equation}
\bar{B}_N:=\max_{n=1,2,\dots,N,N+1}\exp(\frac{n^2\beta}{8\pi}+\beta B_n).
\end{equation} 
and $B_n$ is the constant from \Cref{lemma:onsager type inequality with our parametrization}.
\end{lemma}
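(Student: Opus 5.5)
The plan is to separate the diagonal terms $k=l$ from the off-diagonal ones, handle the mass by bounding everything against the massless kernel, and then invoke the Onsager-type inequality for the off-diagonal part. Writing $\sigma_{kk}=1$, we have
\begin{equation}
-\frac{\beta}{2}\sum_{k,l=1}^n\sigma_{kl}K_{s,t}^m(k,l)=-\frac{\beta}{2}\,n\,K_{s,t}^m(0)-\beta\sum_{1\leq k<l\leq n}\sigma_{kl}K_{s,t}^m(k,l).
\end{equation}

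\textbf{Diagonal part.} We have $K_{s,t}^m(0)=\int_s^t\frac{e^{-m^2r}}{4\pi r}\dr$. Since $r\leq t<m^{-2}$, we get $e^{-m^2r}\geq 1-m^2r$. Hence
\begin{equation}
K_{s,t}^m(0)\geq \frac{1}{4\pi}\log(t/s)-\frac{m^2(t-s)}{4\pi}\geq \frac{1}{4\pi}\log(t/s)-\frac{1}{4\pi},
\end{equation}
and therefore
\begin{equation}
e^{-\frac{\beta}{2}nK_{s,t}^m(0)}\leq e^{\frac{n\beta}{8\pi}}(s/t)^{\frac{n\beta}{8\pi}}.
\end{equation}

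\textbf{Off-diagonal part.} Here we must replace $K^m_{s,t}$ by the massless $K_{s,t}$ so that \Cref{lemma:onsager type inequality with our parametrization} applies. We have
\begin{equation}
-\sigma_{kl}K^m_{s,t}(k,l)=-\sigma_{kl}K_{s,t}(k,l)+\sigma_{kl}\int_s^t(1-e^{-m^2r})C_r(k,l)\dr .
\end{equation}
The error term is bounded in absolute value by $\int_s^t m^2r\frac{1}{4\pi r}\dr\leq\frac{1}{4\pi}$, using $C_r\leq (4\pi r)^{-1}$ and $t<m^{-2}$. Summing over the $n(n-1)/2$ pairs and multiplying by $\beta$ costs at most a factor $e^{\beta n(n-1)/(8\pi)}$. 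Part 3 of the Onsager lemma, which is valid because $n\leq N+1$, then gives
\begin{equation}
-\beta\sum_{k<l}\sigma_{kl}K_{s,t}(k,l)\leq\frac{\beta}{8\pi}(n-1)\log(t/s)-\frac{\beta}{4\pi}\log H_t^n(\xx,\Si)+\beta B_n .
\end{equation}

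\textbf{Combining.} The powers of $s/t$ combine to
\begin{equation}
(s/t)^{\frac{n\beta}{8\pi}-\frac{(n-1)\beta}{8\pi}}=(s/t)^{\frac{\beta}{8\pi}},
\end{equation}
and the $H$-factor is $[H_t^n]^{-\beta/(4\pi)}$, as claimed. The constants multiply to
\begin{equation}
\exp\!\Big(\frac{n\beta}{8\pi}+\frac{n(n-1)\beta}{8\pi}+\beta B_n\Big)=\exp\!\Big(\frac{n^2\beta}{8\pi}+\beta B_n\Big)\leq\bar B_N .
\end{equation}
The case $n=1$ is covered by the diagonal estimate alone.

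\textbf{Main obstacle.} The only delicate point is the mass correction. Because the pair terms may come with either sign $\sigma_{kl}$, monotonicity in $m$ cannot be used. The argument therefore relies on the uniform bound $\int_s^t(1-e^{-m^2r})C_r\,\dr\leq 1/(4\pi)$, which holds only because $t<m^{-2}$; this is where that hypothesis is essential. Note also that $\bar B_N$ should be read as a maximum taken over $n\leq N+1$.
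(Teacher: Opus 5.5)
Your proof is correct and is essentially the paper's argument. The paper compares $K^m_{s,t}$ with $K_{s,t}$ using the same bounds $|1-e^{-m^2r}|\le m^2r$, $C_r\le (4\pi r)^{-1}$ and $m^2t\le 1$; it does this for all $n^2$ terms at once instead of splitting off the diagonal first. It then uses $K_{s,t}(0)=\frac{1}{4\pi}\log(t/s)$ and part 3 of the Onsager-type inequality, arriving at the same constant $\exp\!\big(\frac{n^2\beta}{8\pi}+\beta B_n\big)$.
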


\begin{proof}
We may write
\begin{equation}
\exp(-\frac{\beta}{2}\sum_{k,l=1}^n\sigma_{kl} K_{s,t}^m(k,l))=\exp(-\frac{\beta}{2}\sum_{k,l=1}^n\sigma_{kl}[K_{s,t}^m(k,l)-K_{s,t}(k,l)])\exp(-\frac{\beta}{2}\sum_{k,l=1}^n\sigma_{kl}K_{s,t}(k,l)).
\end{equation}
We will apply the Onsager-type inequality from \Cref{lemma:onsager type inequality with our parametrization} to the second factor, but let us first bound the first factor. We have
\begin{equation}
\begin{split}
\exp(-\frac{\beta}{2}\sum_{k,l=1}^n\sigma_{kl}[K_{s,t}^m(k,l)-K_{s,t}(k,l)])&\leq\exp(\frac{\beta}{2}\sum_{k,l=1}^n\int_s^t|1-e^{-m^2r}|e^{-\frac{|x_k-x_l|^2}{4r}}\frac{\dr}{4\pi r})
\\
&\leq \exp(\frac{\beta n^2m^2}{8\pi}\int_0^t\dr)
\\
&\leq \exp(\frac{\beta n^2}{8\pi}),
\end{split}
\end{equation}
where the last step holds if $m^2t\leq 1$, that is, $t<m^{-2}$. The application of \Cref{lemma:onsager type inequality with our parametrization} and $K_{s,t}(0):=K_{s,t}(x,x)=1/(4\pi)\log(t/s)$ yields
\begin{equation}
\begin{split}
\exp(-\frac{\beta}{2}\sum_{k,l=1}^n\sigma_{kl}K_{s,t}(k,l))&\leq \exp(-\frac{n\beta}{8\pi}\log(\frac{t}{s})+\frac{\beta(n-1)}{4\pi}\log(\frac{\sqrt{t}}{\sqrt{s}})-\frac{\beta}{4\pi}\log(H_t^n(\xx,\Si))+\beta B_n)
\\
&=\rbr{\frac{s}{t}}^\frac{\beta}{8\pi}[H_t^n(\xx,\Si)]^{-\frac{\beta}{4\pi}}e^{\beta B_n}.
\end{split}
\end{equation}
\end{proof}

It might be surprising that we need this estimate for the first $N+1$ terms and not just the first $N$ (unless $N=0$, in which case the $(N+1)$th term is covered by \Cref{lem:uniform bound for V1}), but this is indeed the case.

Now we are ready to define the functions $h_t^n$. For $t<m^{-2}$ and $2\leq n\leq N+1$ we use \Cref{lem:estimate for the exponential in tilVtn} for the exponential involving $K_{s,t}^m$ terms in \Cref{eq:starting point for defining htn} and define
\begin{equation}
\label{eq:def of htn for n leq N+1 and t leq m^{-2}}
h_t^n(\xx,\Si):=\frac{\beta}{2}\frac{\bar{B}_N[H_t^n(\xx,\Si)]^{-\frac{\beta}{4\pi}}}{t^{\frac{\beta}{8\pi}}}\sum_{I\varsubsetneq[n]}\int_0^ts^{\frac{\beta}{8\pi}}G_s^I(\xx,\Si)h_s^{|I|}(\xx_I,\Si_I)h_s^{|I^c|}(\xx_{I^c},\Si_{I^c})\ds.
\end{equation}
For $n>N+1$ we define
\begin{equation}
\label{eq:def of htn for n > N+1 and t leq m^{-2}}
h_t^n(\xx,\Si):=\frac{\beta}{2}\sum_{I\varsubsetneq[n]}\int_0^tG_s^I(\xx,\Si)h_s^{|I|}(\xx_I,\Si_I)h_s^{|I^c|}(\xx_{I^c},\Si_{I^c})\ds.
\end{equation}

Finally, for $t\geq m^{-2}$ we only need to define the functions $h_t^n$ for $n\leq N$ and $d=1$. We will split the $s$-integral at $m^{-2}$. For the part where $s\geq m^{-2}$ we bound the exponential involving the terms $K_{s,t}^m$ by $1$. For the part where $s<m^{-2}$ we use the following estimate
\begin{equation}
\label{eq:estimate for the exponential in htn for large t}
\begin{split}
\exp(-\frac{\beta}{2}\sum_{k,l=1}^n\sigma_{kl}K_{s,t}^m(k,l))&= \exp(-\frac{\beta}{2}\sum_{k,l=1}^n\sigma_{kl}K_{s,m^{-2}}^m(k,l))\underbrace{\exp(-\frac{\beta}{2}\sum_{k,l=1}^n\sigma_{kl}K_{m^{-2},t}^m(k,l))}_{\leq 1} 
\\
&\leq \bar{B}_N[H_{m^{-2}}^n(\xx,\Si)]^{-\frac{\beta}{4\pi}}m^{\frac{\beta}{4\pi}}s^{\frac{\beta}{8\pi}},
\end{split}
\end{equation}
where we estimated the first exponential factor in the same way as in the case $t<m^{-2}$, that is, we used \Cref{lem:estimate for the exponential in tilVtn}. Thus, for $t\geq m^{-2}$ define
\begin{equation}
\begin{split}
\label{eq:def of htn for n leq N and t geq m^{-2}}
h_t^n(\xx,\Si|m)&:= \frac{\beta}{2}\bar{B}_N[H_{m^{-2}}^n(\xx,\Si)]^{-\frac{\beta}{4\pi}}m^{\frac{\beta}{4\pi}}\sum_{I\varsubsetneq[n]}\int_0^{m^{-2}}s^{\frac{\beta}{8\pi}}G_s^I(\xx,\Si)h_s^{|I|}(\xx_I,\Si_I)h_s^{|I^c|}(\xx_{I^c},\Si_{I^c})\ds
\\
&\quad +\frac{\beta}{2}\sum_{I\varsubsetneq[n]}\int_{m^{-2}}^te^{-m^2s}G_s^I(\xx,\Si)h_s^{|I|}(\xx_I,\Si_I|m)h_s^{|I^c|}(\xx_{I^c},\Si_{I^c})\ds
\end{split}
\end{equation}
The permutation invariance in all cases is clear, because we sum over all ordered bipartite partitions of $[n]$. To obtain the norm estimates we need to treat the different values of $t$ and $n$ separately. The case $t\geq m^{-2}$ is needed only in the case $d=1$ as discussed in \Cref{rem:the t geq m2 estimates are not needed for d geq 2}.

For the rest of this subsection, we always have $d\geq 2$, $t<m^{-2}$ and $N=0$ or $N=2$. However, the first part, that is, the case $n>N+1$ is similar also in $d=1$ so we keep the notation $N$ arbitrary for this part. This case is also sufficient in all dimensions when $N=0$, that is, when $\beta\in (0,2\pi d)$, since as mentioned above the case $N+1=1$ is then covered by \Cref{lem:uniform bound for V1}.

In the case $n>N+1$, we obtain 
\begin{equation}
\label{eq:First estimate for the n>N case in 1d}
\begin{split}
\norm{H_t^n(\cdot,\Si)h_t^n(\cdot,\Si)}_n &\leq \esssup_{x_1\in\R^d}\int_{\R^{(n-1)d}}\frac{\beta}{2}\sum_{I\varsubsetneq [n]}\int_0^tG_s^I(\xx,\Si)h_s^{|I|}(\xx_I,\Si_I)h_s^{|I^c|}(\xx_{I^c},\Si_{I^c})\ds\d\xx_{n-1}
\\
&\leq\frac{\beta\tilde{C}}{2}\sum_{I\varsubsetneq [n]}\sum_{\substack{i\in I \\ j\in I^c}}\int_0^t\frac{1}{s}\bigg\{\esssup_{x_1\in\R^d}\int_{\R^{(n-1)d}}\bigg[f_d\rbr{\frac{|x_i-x_j|}{\sqrt{s}}}
\\
&\qquad\qquad\qquad\qquad\times H_s^{|I|}(\xx_I,\Si_I)h_s^{|I|}(\xx_I,\Si_I)H_s^{|I^c|}(\xx_{I^c},\Si_{I^c})h_s^{|I^c|}(\xx_{I^c},\Si_{I^c})\bigg]\d\xx_{n-1}\bigg\}\ds,
\end{split}
\end{equation}  
where we immediately used the fact that $H_t^n\equiv 1$ in this case. We also used \Cref{lem:estimate for GI function}, and denoted the constant from it by $\tilde{C}$. Henceforth, we will always denote by $\tilde{C}$ this constant in all similar estimates. 

Next, consider only the part inside the $\{\}$-brackets and fix $I\varsubsetneq[n]$, $i\in I$ and $j\in I^c$. Without loss of generality, assume that $1\in I$. We have
\begin{equation}
\begin{split}
\label{eq:splitting of the n norm}
\esssup_{x_1\in\R^d}&\int_{\R^{(n-1)d}}f_d\rbr{\frac{|x_i-x_j|}{\sqrt{s}}}H_s^{|I|}(\xx_I,\Si_I)h_s^{|I|}(\xx_I,\Si_I)H_s^{|I^c|}(\xx_{I^c},\Si_{I^c})h_s^{|I^c|}(\xx_{I^c},\Si_{I^c})\d\xx_{n-1}
\\
&\leq\esssup_{x_1\in\R^d}\int_{\R^{2d}}\1_{i\neq1}f_d\rbr{\frac{|x_i-x_j|}{\sqrt{s}}}\dx_j\rbr{\int_{\R^{d(|I|-2)}}H_s^{|I|}(\xx_I,\Si_I)h_s^{|I|}(\xx_I,\Si_I)\prod_{\substack{k\in I \\ i\neq k\neq 1}}\dx_k} \dx_i
\\
&\qquad\qquad\qquad\qquad\qquad\qquad\qquad\qquad \times \esssup_{x_j\in\R^d}\int_{\R^{d(|I^c|-1)}}H_s^{|I^c|}(\xx_{I^c},\Si_{I^c})h_s^{|I^c|}(\xx_{I^c},\Si_{I^c})\prod_{\substack{l\in I^c \\ j\neq l}}\dx_l
\\
&\quad+\esssup_{x_1\in\R^d}\int_{\R^d}\1_{i=1}f_d\rbr{\frac{|x_i-x_j|}{\sqrt{s}}}\dx_j\rbr{\int_{\R^{d(|I|-1)}}H_s^{|I|}(\xx_I,\Si_I)h_s^{|I|}(\xx_I,\Si_I)\prod_{\substack{k\in I \\ i\neq k\neq 1}}\dx_k} \dx_i
\\
&\qquad\qquad\qquad\qquad\qquad\qquad\qquad\qquad \times \esssup_{x_j\in\R^d}\int_{\R^{d(|I^c|-1)}}H_s^{|I^c|}(\xx_{I^c},\Si_{I^c})h_s^{|I^c|}(\xx_{I^c},\Si_{I^c})\prod_{\substack{l\in I^c \\ j\neq l}}\dx_l.
\end{split}
\end{equation}
The remaining $x_j$ integral yields, by simple scaling and translation, a factor proportional to $s^{d/2}$. Thus, we obtain the upper bound
\begin{equation}
\label{eq: I and Ic norms}
C_ds^{\frac{d}{2}}\norm{H_s^{|I|}(\cdot,\Si_I)h_s^{|I|}(\cdot,\Si_I)}_{|I|}\norm{H_s^{|I^c|}(\cdot,\Si_{I^c})h_s^{|I^c|}(\cdot,\Si_{I^c})}_{|I^c|},
\end{equation}
with some constant $C_d>0$ depending only on the dimension $d$. Henceforth, in all similar estimates we will denote by $C_ds^{\frac{d}{2}}$ the contribution coming from the remaining integral when splitting the norm $\norm{\cdot}_n$ into $\norm{\cdot}_{|I|}$ and $\norm{\cdot}_{|I^c|}$ as above.

Using this and the induction hypothesis we obtain
\begin{equation}
\begin{split}
\norm{H_t^n(\cdot,\Si)h_t^n(\cdot,\Si)}_n&\leq \half \beta \tilde{C}C_d[B_d(\beta,N)]^{n-2}C^n\sum_{I\varsubsetneq [n]}\sum_{\substack{i\in I\\ j\in I^c}}|I|^{|I|-2}|I^c|^{|I^c|-2}
\\
&\quad \times \int_0^ts^{\frac{d}{2}-1}s^{-\frac{d}{2}}\rbr{s^{\frac{d}{2}-\frac{\beta}{8\pi}}}^{|I|}s^{-\frac{d}{2}}\rbr{s^{\frac{d}{2}-\frac{\beta}{8\pi}}}^{|I^c|}\ds.
\\
\end{split}
\end{equation}
The $s$-integral becomes
\begin{equation}
\int_0^ts^{-1-\frac{d}{2}+n\rbr{\frac{d}{2}-\frac{\beta}{8\pi}}}\ds=\frac{1}{n-1}\frac{1}{\frac{d}{2}-\frac{n}{n-1}\frac{\beta}{8\pi}}t^{-\frac{d}{2}}\rbr{t^{\frac{d}{2}-\frac{\beta}{8\pi}}}^n
\end{equation}
if 
\begin{equation}
-1-\frac{d}{2}+n\rbr{\frac{d}{2}-\frac{\beta}{8\pi}}>-1\Leftrightarrow\beta<\rbr{1-\frac{1}{n}}4d\pi
\end{equation}
for all $n> N+1$. The right-hand side is minimized by $n=N+2$, and since we assumed that $\beta\in [\beta_N,\beta_{N+2})$ if $d=1$ and $\beta\in[d2\pi,(d+1)2\pi)$ for $d\geq 2$ and $(d+1)2\pi\leq \beta_4$ for $d\geq 2$, this condition is indeed satisfied. Note that here the choice not to use \Cref{lem:estimate for the exponential in tilVtn} for the $n=N+1$ term would fail. 

Putting everything together yields
\begin{equation}
\begin{split}
\label{eq:last estimate for the case n leq N+1}
\norm{H_t^n(\cdot,\Si)h_t^n(\cdot,\Si)}_n&\leq \frac{\beta \tilde{C}C_d}{2(n-1)(\frac{d}{2}-
\frac{n}{n-1}\frac{\beta}{8\pi})}[B_d(\beta,N)]^{n-2}C^nt^{-\frac{d}{2}}\rbr{t^{\frac{d}{2}-\frac{\beta}{8\pi}}}^n\sum_{I\varsubsetneq [n]}|I|^{|I|-1}|I^c|^{|I^c|-1}
\\
&\leq
\beta\tilde{C} C_d\rbr{\sup_{n> N+1}{\frac{1}{\frac{d}{2}-\frac{n}{n-1}\frac{\beta}{8\pi}}}}[B_d(\beta,N)]^{n-2}C^nt^{-\frac{d}{2}}\rbr{t^{\frac{d}{2}-\frac{\beta}{8\pi}}}^n n^{n-2},
\end{split}
\end{equation}
where in the last step we used the combinatorial identity (see \cite[Lemma 4.2]{BrKe87a})
\begin{equation}
\label{eq:combinatorial sum giving n to the power n minu 2}
\sum_{I\varsubsetneq [n]}|I|^{|I|-1}|I^c|^{|I^c|-1}=\sum_{k=1}^{n-1}\binom{n}{k}k^{k-1}(n-k)^{n-k-1}=2(n-1)n^{n-2}.
\end{equation}
Estimate \Cref{eq:last estimate for the case n leq N+1} is of the correct form if $B_d(\beta,N)$ is chosen appropriately. Thus, choosing
\begin{equation}
B_d'(\beta,N)=\beta \tilde{C}C_d\sup_{n> N+1}\frac{1}{\frac{d}{2}-\frac{n}{n-1}\frac{\beta}{8\pi}}
\end{equation}
yields the desired result in the case $n>N+1$.

\begin{remark}
For the case $N=0$ the above analysis indeed yields the bound $\beta<2d\pi$. Together with our later analysis this shows that the first transition or collapse point scales with the dimension $d$ in the same way as the scaling from $d=1$ to $d=2$, and this persists for all $d>2$.
\end{remark}

\begin{remark}
We will make this type of inductive/recursive argument several times in \Cref{sec: the renormalized potential}, and the splitting of the spatial integrals at the inductive step together with the combinatorial part will always follow the same pattern. Thus, from now on we keep track only of the constants, the $t$ or mass $m$ dependence, and the convergence of the $s$-integrals.
\end{remark}

If $N=2$ we still need to consider the terms with $n=2$ or $n=3$ explicitly. First consider the $n=2$ term. Proceeding as above, and using symmetry together with \Cref{eq:estimate for V1} for the $\tilV_t^1$ terms, we have
\begin{equation}
\norm{H_t^2(\xx,\Si)h_t^2(\xx,\Si)}_2\leq \frac{\beta\bar{B}_2\tilde{C}}{t^{\frac{\beta}{8\pi}}}\int_0^ts^{\frac{\beta}{8\pi}-1}(Cs^{-\frac{\beta}{8\pi}})^2\esssup_{x_1\in \R^d} \int_{\R^d}f_d\rbr{\frac{|x_1-x_2|}{\sqrt{s}}}(H_t^2(\xx,\Si))^{1-\frac{\beta}{4\pi}}\dx_2\ds.
\end{equation}
For $\sigma_1= \sigma_2$ the above is bounded by
\begin{equation}
\beta \bar{B}_2\tilde{C}C_d C^2t^{-\frac{\beta}{8\pi}}\int_0^ts^{\frac{d}{2}-\frac{\beta}{8\pi}-1}\ds=\frac{\beta \bar{B}_2\tilde{C}C_d C^2}{\frac{d}{2}-\frac{\beta}{8\pi}}t^{-\frac{d}{2}}\rbr{t^{\frac{d}{2}-\frac{\beta}{8\pi}}}^2
\end{equation}
for all $\beta< 4d\pi$ and some constant $C_d>0$. This is of the expected form, noting that $n^{n-2}=1$ for $n=2$. 

Now let $\sigma_1\neq\sigma_2$. In this case we need to compute the $x_2$-integral (note that for $n=2$, $d(\xx)=|x_1-x_2|$):
\begin{equation}
\begin{split}
\label{eq:Spatial integral for n=2 case in higher dimensions}
\int_{\R^d}f_d\rbr{\frac{|x_1-x_2|}{\sqrt{s}}}\rbr{\frac{|x_1-x_2|}{\sqrt{t}}\wedge 1}^{1-\frac{\beta}{4\pi}}\dx_2&=t^{-\frac{1}{2}+\frac{\beta}{8\pi}}\int_{|x|<\sqrt{t}}f_d\rbr{\frac{|x|}{\sqrt{s}}}|x|^{1-\frac{\beta}{4\pi}}\dx+\int_{|x|\geq\sqrt{t}}f_d\rbr{\frac{|x|}{\sqrt{s}}}\dx
\\
&\leq \rbr{\frac{s}{t}}^{\half-\frac{\beta}{8\pi}}s^{\frac{d}{2}}\int_{|x|\leq \sqrt{t/s}}f_d(|x|)|x|^{1-\frac{\beta}{4\pi}}\dx+C_ds^{\frac{d}{2}}
\\
&\leq C_ds^{\frac{d}{2}}\rbr{\rbr{\frac{s}{t}}^{\half-\frac{\beta}{8\pi}}+1},
\end{split} 
\end{equation} 
where we first extended the region of integration of the second integral on the right-hand side of the first line to the whole space and then scaled to obtain the upper bound $s^{d/2}C_d$ for some constant $C_d>0$. As $s$ varies in $(0,t)$, the quantity $\sqrt{t/s}$ ranges from $1$ to $\infty$, and the last integral above is finite in the neighborhood of $\infty$ for all $\beta>0$ and in the neighborhood of the origin for all $\beta<(d+1)4\pi$. Therefore, we have bounded it by a constant, and again denoted the maximum of the constants above by $C_d$.

The contribution from the second term $C_ds^{d/2}$ on the last row of \Cref{eq:Spatial integral for n=2 case in higher dimensions} is treated exactly analogously to the whole contribution from the case $\sigma_1=\sigma_2$. Putting everything together for the first term, we obtain in this case
\begin{equation}
\beta\bar{B}_2\tilde{C}C_dC^2t^{-\half}\int_0^ts^{-1-\frac{\beta}{8\pi}+\frac{d}{2}+\half-\frac{\beta}{8\pi}}\ds=\frac{\beta\bar{B}_2\tilde{C}C_d}{\frac{d+1}{2}-\frac{\beta}{4\pi}}C^2t^{- \frac{d}{2}}\rbr{t^{\frac{d}{2}-\frac{\beta}{8\pi}}}^2
\end{equation}
if $-1/2+d/2-\beta/(4\pi)>-1$, that is, $\beta<(d+1)2\pi$. 
Thus, the $n=2$ term satisfies the claim by choosing (note that the sum is always greater than either of the terms individually for appropriate values of $\beta$)
\begin{equation}
B_d(\beta,2)=\beta\bar{B}_2\tilde{C}C_d\rbr{\frac{1}{\frac{d}{2}-\frac{\beta}{8\pi}}+\frac{1}{\frac{d+1}{2}-\frac{\beta}{4\pi}}}.
\end{equation}
   
Next consider the $n=3$ term. Below we denote $I_i:=[3]\setminus\{i\}\equiv\{k,l\}$ and use the symmetry of the kernels $C_s$ everywhere to simplify the sums. By using the definition \Cref{eq:def of htn for n leq N+1 and t leq m^{-2}} first for $h_t^3$ itself and then for $h_s^2$, and using \Cref{lem:estimate for GI function}, \Cref{eq:estimate for V1} and the fact that $H_t^3\equiv 1$, we obtain
\begin{equation}
\begin{split}
\norm{H_t^3(\cdot,\Si)h_t^3(\cdot,\Si)}_3&\leq \esssup_{x_1\in \R^d}\int_{\R^{2d}}\bigg\{\beta\sum_{i=1}^3\sum_{j\in I_i\equiv\{k,l\}}\int_0^t\bigg[ \tilde{C}H_s^2(\xx_{I_i},\Si_{I_i})\frac{1}{s}f_d\rbr{\frac{|x_i-x_j|}{\sqrt{s}}}Cs^{-\frac{\beta}{8\pi}}\bar{B}_2\rbr{\frac{s}{t}}^{\frac{\beta}{8\pi}}
\\
&\qquad\beta\int_0^s\frac{\tilde{C}}{r}f_d\rbr{\frac{|x_k-x_l|}{\sqrt{r}}}\bar{B}_2[H_s^2(\xx_{I_i},\Si_{I_i})]^{-\frac{\beta}{4\pi}}(Cr^{-\frac{\beta}{8\pi}})\rbr{\frac{r}{s}}^{\frac{\beta}{8\pi}}\dr\bigg]\ds\bigg\}\dx_2\dx_3
\\
&\leq \beta^2\bar{B}_2^2\tilde{C}^2C^3t^{-\frac{\beta}{8\pi}}\sum_{i=1}^3\sum_{j\in \{k,l\}}\int_0^t s^{-1-\frac{\beta}{8\pi}}\bigg\{\int_0^s r^{-1-\frac{\beta}{8\pi}}\bigg[
\\
&\qquad \times \esssup_{x_1\in \R^d}\int_{\R^{2d}}f_d\rbr{\frac{|x_i-x_j|}{\sqrt{s}}}f_d\rbr{\frac{|x_k-x_l|}{\sqrt{r}}}[H_s^2(\xx_{I_i},\Si_{I_i})]^{1-\frac{\beta}{4\pi}}\dx_1\dx_2\bigg]\dr\bigg\}\ds.
\end{split}
\end{equation}
If $\sigma_k=\sigma_l$, the spatial integrals decouple (say for $j=k$) by the changes of variables $x_i-x_k=x$ and $x_k-x_l=y$. Scaling then yields the upper bound
\begin{equation}
\begin{split}
6\beta^2\bar{B}_2^2\tilde{C}^2C_d^2C^3t^{-\frac{\beta}{8\pi}}\int_0^t s^{\frac{d}{2}-1-\frac{\beta}{8\pi}}\int_0^s r^{\frac{d}{2}-1-\frac{\beta}{8\pi}}\dr\ds&=\frac{6\beta^2\bar{B}_2^2\tilde{C}^2C_d^2}{\frac{d}{2}-\frac{\beta}{8\pi}}C^3t^{-\frac{\beta}{8\pi}}\int_0^t s^{d-1-\frac{\beta}{4\pi}}\ds
\\
&=3\rbr{\frac{\beta\bar{B}_2\tilde{C}C_d}{\frac{d}{2}-\frac{\beta}{8\pi}}}^2C^3t^{-\frac{d}{2}}\rbr{t^{\frac{d}{2}-\frac{\beta}{8\pi}}}^3,
\end{split}
\end{equation}
which is again of the correct form, since $n^{n-2}=3$ for $n=3$.  

For the case $\sigma_k\neq\sigma_l$ we obtain the spatial integral (after the decoupling translations)
\begin{equation}
\int_{\R^d}f_d\rbr{\frac{|x|}{\sqrt{s}}}\dx\int_{\R^d}f_d\rbr{\frac{|y|}{\sqrt{r}}}\rbr{\frac{|y|}{\sqrt{s}}\wedge 1}^{1-\frac{\beta}{4\pi}}\dy\leq s^{\frac{d}{2}}r^{\frac{d}{2}}C_d^2\rbr{\abr{\frac{r}{s}}^{\half-\frac{\beta}{8\pi}}+1},
\end{equation}
where we used the estimate \Cref{eq:Spatial integral for n=2 case in higher dimensions} from the $n=2$ case above. The $+1$ term on the right-hand side yields the same contribution as in the $\sigma_k=\sigma_l$ case. Thus, it remains to check the $t$-behavior and convergence of the $r$- and $s$-integrals for the first term above. We have
\begin{equation}
t^{-\frac{\beta}{8\pi}}\int_0^t s^{\frac{d}{2}-\frac{3}{2}}\int_0^s r^{\frac{d}{2}-\half-\frac{\beta}{4\pi}}\dr\ds=\frac{t^{-\frac{d}{2}}\rbr{t^{\frac{d}{2}-\frac{\beta}{8\pi}}}^3}{2\rbr{\frac{d}{2}-\frac{\beta}{8\pi}}\rbr{\frac{d+1}{2}-\frac{\beta}{4\pi}}}
\end{equation}
again for $\beta<(d+1)2\pi$. Putting these together, we are done for the case $d\geq 2$ if we choose
\begin{equation}
\label{eq:def of the constant BdbetaN for d larger than 2 and small t}
B_d(\beta,N):=\beta\tilde{C}C_d\max\rbr{\bar{B}_2\rbr{\frac{1}{\frac{d}{2}-\frac{\beta}{8\pi}}+\frac{1}{\frac{d+1}{2}-\frac{\beta}{4\pi}}},\sup_{n>3}\frac{1}{\frac{d}{2}-\frac{n}{n-1}\frac{\beta}{8\pi}}}.
\end{equation}

\end{subsubsection}

\begin{subsubsection}{Proof of {{\Cref{prop:induction statement}}} in the case $d=1$}
\label{sec:the case d=1 in the proof os induction prop}
We begin this section by proving the missing cases for estimates of the functions $h_t^n$ for $d=1$. This includes the cases $t<m^{-2}$ with $n\leq N+1$ and $t\geq m^{-2}$ with $n\leq N$. We consider the case $n\leq N+1$ and $t<m^{-2}$ first. Proceeding as in the case $n> N+1$ we obtain by \Cref{lem:estimate for GI function}
\begin{equation}
\begin{split}
\label{eq:first estimate for the case n<N+1 and t<m^{-2}}
\norm{H_t^n(\cdot,\Si)h_t^n(\cdot,\Si)}_n&\leq\frac{\beta\bar{B}_N\tilde{C}}{2t^\frac{\beta}{8\pi}}\sum_{I\varsubsetneq [n]}\sum_{\substack{i\in I\\j\in I^c}}\int_0^t\bigg\{
\\
&\quad\times \esssup_{x_1\in\R}\int_{\R^{n-1}}\bigg[s^{\frac{\beta}{8\pi}-1}f_1\rbr{\frac{|x_i-x_j|}{\sqrt{s}}}\rbr{H_t^n(\xx,\Si)}^{1-\frac{\beta}{4\pi}}
\\
&\qquad\qquad\qquad\times H_s^{|I|}(\xx_I,\Si_I)h_s^{|I|}(\xx_I,\Si_I)H_s^{|I^c|}(\xx_{I^c},\Si_{I^c})h_s^{|I^c|}(\xx_{I^c},\Si_{I^c})\bigg]\d\xx_{n-1}\bigg\}\ds
\end{split}
\end{equation}
Since we are working with $d=1$ and $\beta<4\pi$, we have $(H_t^n(\xx,\Si))^{1-\beta/(4\pi)}\leq 1$. Therefore, repeating the analysis in the case $n>N+1$ starting from \Cref{eq:splitting of the n norm} in the previous section we obtain 
\begin{equation}
\begin{split}
\norm{H_t^n(\cdot,\Si)h_t^n(\cdot,\Si)}_n&\leq \beta \bar{B}_N \tilde{C}C_1[B_1(\beta,N)]^{n-2} C^n(n-1)n^{n-2} t^{-\frac{\beta}{8\pi}}\int_0^ts^{\frac{\beta}{8\pi}-\frac{3}{2}}\rbr{s^{\half-\frac{\beta}{8\pi}}}^n\ds
\\ 
&\leq \frac{\beta \bar{B}_N \tilde{C}C_1}{\half-\frac{\beta}{8\pi}}[B_1(\beta,N)]^{n-2}C^nn^{n-2}t^{-\half}\rbr{t^{\half-\frac{\beta}{8\pi}}}^n,
\end{split}
\end{equation}
provided that $-1+(n-1)(1/2-\beta/8\pi)>-1$, that is, $\beta<4\pi$.

Thus, this part satisfies the claim with the choice $B_1(\beta,N)=\bar{B}_NC_1\tilde{C}\beta (\half-\beta/(8\pi))^{-1}$. The proof of the estimates for the functions $h_t^n$ in the case $t<m^{-2}$ for $d=1$ is completed by choosing 
\begin{equation}
\label{eq:def of constant BbetaN for d=1 and small t}
B_1(\beta,N)=B_1'(\beta,N):=\beta \tilde{C}C_1\max \rbr{\bar{B}_N\rbr{\frac{1}{2}-\frac{\beta}{8\pi}}^{-1}, \sup_{n> N+1}\rbr{\frac{1}{2}-\frac{n}{n-1}\frac{\beta}{8\pi}}^{-1}}.
\end{equation} 

Consider then the case $t\geq m^{-2}$. First we note that for $s\leq r$ we have $H_s^n(\xx,\Si)\geq H_r^n(\xx,\Si)$ for all $n$. Therefore, we have 
\begin{equation}
\label{eq:combining the Hn functions}
H_t^n(\xx,\Si)[H_{m^{-2}}^n(\xx,\Si)]^{-\frac{\beta}{4\pi}}\leq [H_{m^{-2}}^n(\xx,\Si)]^{1-\frac{\beta}{4\pi}}\leq 1
\end{equation}
since $\beta<4\pi$ for $d=1$. We will then proceed in a similar vein to the proof of the $t<m^{-2}$ case.

Then we use the definition \Cref{eq:def of htn for n leq N and t geq m^{-2}}, and proceed as in the case $t<m^{-2}$ using also \Cref{eq:combining the Hn functions}. More precisely, we first use \Cref{lem:estimate for GI function} and proceed analogously to \Cref{eq:First estimate for the n>N case in 1d} for both parts ($s\in (0,m^{-2})$ and $s\in (m^{-2},t)$). After this, splitting the norms as in \Cref{eq:splitting of the n norm}, using the induction hypothesis and the combinatorial identity \Cref{eq:combinatorial sum giving n to the power n minu 2} yields
\begin{equation}
\begin{split}
\label{eq:first estimate in the induction for large t and $d=1$}
\norm{H_t^n(\cdot,\Si)h_t^n(\cdot,\Si)}_{n}&\leq \beta\tilde{C}C_1[B_1(\beta,N)]^{n-2}C^n(n-1)n^{n-2}\bigg(\bar{B}_Nm^{\frac{\beta}{4\pi}}\int_0^{m^{-2}}s^{\frac{\beta}{8\pi}-\frac{3}{2}+n\rbr{\half-\frac{\beta}{8\pi}}}\ds 
\\
&\qquad\qquad\qquad\qquad\qquad\qquad\qquad\qquad\qquad
+(m^{-2})^{-1+n(\half-\frac{\beta}{8\pi})}\int_{m^{-2}}^t\frac{e^{-m^2s}}{\sqrt{s}}\ds\bigg)
\\
&\leq \beta\tilde{C}C_1\rbr{\frac{\bar{B}_N}{\half-\frac{\beta}{8\pi}}+(N-1)\Gamma\rbr{\half,1}}[B_1(\beta,N)]^{n-2}C^nn^{n-2}(m^{-2})^{-\half+n(\half-\frac{\beta}{8\pi})},
\end{split}
\end{equation}
where $\Gamma(z,a):=\int_a^\infty s^{z-1}e^{-s}\ds$ is the incomplete gamma function. 

Thus, we choose
\begin{equation}
\label{eq:def of B1betaN}
B_1(\beta,N)=\max\rbr{B_1'(\beta,N),\beta\tilde{C}C_1\abr{\frac{\bar{B}_N}{\half-\frac{\beta}{8\pi}}+(N-1)\Gamma\rbr{\half,1}}},
\end{equation}
where the constant $B_1'(\beta,N)$ came from the analysis of the $t<m^{-2}$ case and was defined in \Cref{eq:def of constant BbetaN for d=1 and small t}. This completes the proof of the $d=1$ case for all $t>0$ and thus the entire proof for the estimates of the $h_t^n$ functions. 

We will utilize the existence of the functions $h_t^n$ and proceed with the functions $g_t^n$ again inductively. The case $n=1$ is clear by \Cref{lem:uniform bound for V1}. 
Indeed,
\begin{equation}
\label{eq:mass derivative of Vt1}
\partial_m\tilV_t^1(m,\eps)=\frac{\beta m}{4\pi}\underbrace{\rbr{\int_{\eps^2}^te^{-m^2s}\ds}}_{\leq ( t \wedge m^{-2})}\tilV_t^1(m,\eps)\leq C m(t\wedge m^{-2})^{1-\frac{\beta}{8\pi}},
\end{equation}
for some universal constant $C>0$. Now assume that $\tilV_t^k$ is differentiable, and that the functions $\tilde{g}_t^k$ with the claimed properties exist for all $k<n$ for some $n\in\N$. We then formally differentiate through the $s$-integral in the definition \Cref{eq:def of Vn} of the functions $\tilV_t^n$ (the proof of \Cref{lem:differentiability of Vtn with respect to mass ETC} will justify this) to obtain
\begin{equation}
\begin{split}
\label{eq:mass derivative of Vtn}
\frac{\partial}{\partial m}\tilV_t^n(\xx,\Si|m,\eps)&=\beta m\sum_{I\varsubsetneq [n]}\int_{\eps^2}^t\bigg\{\rbr{-s-\frac{\beta}{2}\int_s^tr e^{-m^2r}\sum_{k,l=1}^n\sigma_{kl}C_r(k,l)\dr}e^{-m^2s}\bigg(\sum_{\substack{i\in I\\j\in I^c}}\sigma_{ij}C_s(i,j)\bigg)
\\
&\qquad\qquad\qquad\times  \tilV_t^{|I|}(\xx_I,\Si_I|m,\eps)\tilV_t^{|I^c|}(\xx_{I^c},\Si_{I^c}|m,\eps)e^{-\frac{\beta}{2}\sum_{k,l=1}^n\sigma_{kl}K_{s,t}^m(k,l)}\bigg\}\ds
\\
&\quad + \frac{\beta}{2}\sum_{I\varsubsetneq [n]}\int_{\eps^2}^t\bigg\{e^{-m^2s}\bigg(\sum_{\substack{i\in I\\j\in I^c}}\sigma_{ij}C_s(i,j)\bigg)e^{-\frac{\beta}{2}\sum_{k,l=1}^n\sigma_{kl}K_{s,t}^m(k,l)}
\\
&\qquad\qquad\qquad\times  \big([\partial_m\tilV_t^{|I|}(\xx_I,\Si_I|m,\eps)]\tilV_t^{|I^c|}(\xx_{I^c},\Si_{I^c}|m,\eps)
\\
&\qquad\qquad\qquad\quad+\tilV_t^{|I|}(\xx_I,\Si_I|m,\eps)[\partial_m\tilV_t^{|I^c|}(\xx_{I^c},\Si_{I^c}|m,\eps)]\big)\bigg\}\ds,
\end{split}
\end{equation}
where we used the shorthand $\partial_m:=\partial/\partial m$ on the right-hand side

The most delicate term is the $r$-integral appearing inside the brackets of the first term. We begin by estimating this contribution. For $t<m^{-2}$ we have by the first estimate of \Cref{lem:estimate for the full sum of the massive heat kernels} and \Cref{lem:estimate for the exponential in tilVtn}
\begin{equation}
\begin{split}
\label{eq:estimate for the integral coming from the mass derivative of the integral for small t}
\bigg(\int_s^tre^{-m^2r}\sum_{k,l=1}^n\sigma_{kl}C_r(k,l)\dr\bigg)  e^{-\frac{\beta}{2}\sum_{k,l=1}^n\sigma_{kl}K_{s,t}^m(k,l)}&\leq\int_0^tr \rbr{\sum_{k,l=1}^n\sigma_{kl}C_r(k,l)}e^{-\frac{\beta}{2}\sum_{k,l=1}^n\sigma_{kl}K_{s,r}^m(k,l)}\dr
\\
&\leq\tilde{B}\bar{B}_N n^2\int_s^t\rbr{\frac{s}{r}}^{\frac{\beta}{8\pi}}\underbrace{[H_r^n(\xx,\Si)]^{1-\frac{\beta}{4\pi}}}_{\leq 1}\dr
\\
&\leq \frac{\tilde{B}\bar{B}_NN^2 }{1-\frac{\beta}{8\pi}}t^{1-\frac{\beta}{8\pi}}s^{\frac{\beta}{8\pi}},
\end{split}
\end{equation}
for some constant $\tilde{B}>0$. As with the other constants, we will henceforth denote the constant coming from the application of the first estimate of \Cref{lem:estimate for the full sum of the massive heat kernels} by $\tilde{B}$. We have also first estimated
\begin{equation} 
\begin{split}
\exp(-\tfrac{\beta}{2}\sum_{k,l=1}^n\sigma_{kl}K_{s,t}^m(k,l))&=\exp(-\tfrac{\beta}{2}\sum_{k,l=1}^n\sigma_{kl}K_{s,r}^m(k,l))\underbrace{\exp(-\tfrac{\beta}{2}\sum_{k,l=1}^n\sigma_{kl}K_{r,t}^m(k,l))}_{\leq 1}
\\
&\leq \exp(-\tfrac{\beta}{2}\sum_{k,l=1}^n\sigma_{kl}K_{s,r}^m(k,l))
\end{split}
\end{equation}
before applying \Cref{lem:estimate for the exponential in tilVtn}.

For $t\geq m^{-2}\geq s$ we obtain similarly
\begin{equation}
\begin{split}
\label{eq:estimate for the tricky term for gtn part 1}
\bigg(\int_s^tr& e^{-m^2r}\sum_{k,l=1}^n\sigma_{kl}C_r(k,l)\dr\bigg)  e^{-\frac{\beta}{2}\sum_{k,l=1}^n\sigma_{kl}K_{s,t}^m(k,l)}
\\
&\leq \int_s^{m^{-2}}r \rbr{\sum_{k,l=1}^n\sigma_{kl}C_r(k,l)}e^{-\frac{\beta}{2}\sum_{k,l=1}^n\sigma_{kl}K_{s,r}^m(k,l)}\dr
\\
&\quad +\int_{m^{-2}}^{t}r e^{-m^2r}\rbr{\sum_{k,l=1}^n\sigma_{kl}C_r(k,l)}e^{-\frac{\beta}{2}\sum_{k,l=1}^n\sigma_{kl}K_{s,m^{-2}}^m(k,l)}\dr
\\
&\leq\tilde{B}\bar{B}_N n^2\bigg(\int_s^{m^{-2}}\rbr{\frac{s}{r}}^{\frac{\beta}{8\pi}}e^{-m^2r}\underbrace{[H_r^n(\xx,\Si)]^{1-\frac{\beta}{4\pi}}}_{\leq 1}\dr
\\
&\qquad\qquad\qquad+\int_{m^{-2}}^t\rbr{\frac{s}{m^{-2}}}^{\frac{\beta}{8\pi}}e^{-m^2r}\underbrace{H_r^n(\xx,\Si)[H_{m^{-2}}^n(\xx,\Si)]^{-\frac{\beta}{4\pi}}}_{\leq 1 \text{ by \Cref{eq:combining the Hn functions}}}\dr\bigg)
\\
&\leq \tilde{B}\bar{B}_NN^2\rbr{\frac{1}{1-\frac{\beta}{8\pi}}+1}(m^{-2})^{1-\frac{\beta}{8\pi}}s^{\frac{\beta}{8\pi}}.
\end{split}
\end{equation}

For $t\geq s\geq m^{-2}$ we use
\begin{equation}
\label{eq:estimate for the tricky term for gtn part 2}
\bigg(\int_s^tr e^{-m^2r}\sum_{k,l=1}^n\sigma_{kl}C_r(k,l)\dr\bigg)  e^{-\frac{\beta}{2}\sum_{k,l=1}^n\sigma_{kl}K_{s,t}^m(k,l)}\leq n^2 \int_0^\infty e^{-m^2r}\dr\leq N^2m^{-2},
\end{equation}
where we used $rC_r(x,y)\leq 1$ for all $x,y\in\R$ and $r>0$. Now we have all the preliminary ingredients to consider the functions $g_t^n$.

Applying the induction hypothesis and using the established existence of the functions $h_t^n$, we define the functions $g_t^n$ recursively. First we have for $t<m^{-2}$
\begin{equation}
\begin{split}
\label{eq:def of g_t}
g_t^n(\xx,\Si|m)&:=\beta m\sum_{I\varsubsetneq [n]}\int_0^te^{-m^2s}\rbr{s+\frac{\beta}{2}\frac{\tilde{B}\bar{B}_NN^2}{1-\frac{\beta}{8\pi}}t^{1-\frac{\beta}{8\pi}}s^{\frac{\beta}{8\pi}}}G_s^I(\xx,\Si)
\\
&\qquad\qquad\qquad \times h_s^{|I|}(\xx_I,\Si_I|m)h_s^{|I^c|}(\xx_{I^c},\Si_{I^c}|m)\ds
\\
&\quad +\frac{\beta}{2}\sum_{I\varsubsetneq [n]}\int_0^te^{-m^2s}G_s^I(\xx,\Si)
\\
&\qquad\qquad\qquad \times [g_s^{|I|}(\xx_I,\Si_I|m)h_s^{|I^c|}(\xx_{I^c},\Si_{I^c}|m)+h_s^{|I|}(\xx_I,\Si_I|m)g_s^{|I^c|}(\xx_{I^c},\Si_{I^c}|m)]\ds
\\
&\,\,=:\sum_{i=1}^4g_t^{n,i}(\xx,\Si|m),
\end{split} 
\end{equation}
where the $i=1$ term corresponds to the $s$ in the brackets on the first line, $i=2$ corresponds to the remaining part, and the $i=3$ and $i=4$ terms are symmetric under $I\leftrightarrow I^c$, since the sum runs over all ordered bipartitions of $[n]$. If $t\geq m^{-2}$ only the $i=2$ term must be modified. In this case we define
\begin{equation}
\begin{split}
    g_t^{n,2}(\xx,\Si|m)&:=\frac{1}{2}\beta^2\tilde{B}\bar{B}_NN^2\rbr{\frac{1}{1-\frac{\beta}{8\pi}}+1}m(m^{-2})^{1-\frac{\beta}{8\pi}}
    \\
    &\qquad\qquad\qquad\qquad\!\!\times\sum_{I\varsubsetneq [n]}\int_0^{m^{-2}}s^{\frac{\beta}{8\pi}}e^{-m^2s}G_s^I(\xx,\Si)h_s^{|I|}(\xx_I,\Si_I|m)h_s^{|I^c|}(\xx_{I^c},\Si_{I^c}|m)\ds
    \\
    &\qquad+\half \beta^2N^2 m^{-1}\sum_{I\varsubsetneq [n]}\int_{m^{-2}}^te^{-m^2s}G_s^I(\xx,\Si)h_s^{|I|}(\xx_I,\Si_I|m)h_s^{|I^c|}(\xx_{I^c},\Si_{I^c}|m)\ds
    \end{split}
\end{equation}
Permutation invariance follows exactly as in the case of the functions $h_t^n$.

Let us begin the estimation with the case $t<m^{-2}$. Consider first the contribution from $g_t^{n,1}$. This can be estimated similarly to the case $n>N+1$ for the functions $h_t^n$. More precisely, starting analogously to \Cref{eq:First estimate for the n>N case in 1d} (with $H_t^n\to 1$), using \Cref{lem:estimate for GI function}, proceeding in a similar vain as in the analysis from \Cref{eq:splitting of the n norm} onwards, and using the combinatorial identity \Cref{eq:combinatorial sum giving n to the power n minu 2}, we have
\begin{equation}
\begin{split}
\norm{g_t^{n,1}(\cdot,\Si)}_{n,\Lambda} &\leq 2\beta \tilde{C}C_1[B_1(\beta,N)]^{n-2} C^n(n-1)n^{n-2}m\int_0^ts^{-\half+n(\half-\frac{\beta}{8\pi})}\ds
\\ 
&\leq 2\tilde{C}C_1\beta\rbr{\sup_{n\leq N}\frac{1}{\half-\frac{n}{n+1}\frac{\beta}{8\pi}}}[B_1(\beta,N)]^{n-2}C^nn^{n-2}mt^{\half}\rbr{t^{\half-\frac{\beta}{8\pi}}}^n,
\end{split}
\end{equation}
where the $s$-integral converges for all $\beta<(1+1/n)4\pi$. The last estimate on the right-hand side is of the correct form. 

For $g_t^{n,2}$ we obtain by proceeding similarly as with $g_t^{n,1}$ above
\begin{equation}
\begin{split}
\norm{g_t^{n,2}(\cdot,\Si)}_{n,\Lambda}&\leq \frac{\beta^2\tilde{B}\bar{B}_N\tilde{C}C_1n^2}{1-\frac{\beta}{8\pi}}[B_1(\beta,N)]^{n-2}C^n(n-1)n^{n-2}mt^{1-\frac{\beta}{8\pi}}\int_0^ts^{-\frac{3}{2}+\frac{\beta}{8\pi}}\rbr{s^{\half-\frac{\beta}{8\pi}}}^n\ds
\\
&\leq \frac{\beta^2\tilde{B}\bar{B}_N\tilde{C}C_1N^2}{\rbr{1-\frac{\beta}{8\pi}}\rbr{\half-\frac{\beta}{8\pi}}}[B_1(\beta,N)]^{n-2}C^nn^{n-2}mt^{\half}\rbr{t^{\half-\frac{\beta}{8\pi}}}^n,
\end{split}
\end{equation}
for all $\beta<4\pi$. Again, we obtained an estimate of the correct form.

Finally, as already mentioned, by symmetry we only need to treat $g_t^{n,3}$ of the remaining terms. We proceed similarly to the above, noting that after the step analogous to \Cref{eq:splitting of the n norm}, the term corresponding to $I$ will be $\norm{g_t^{|I|}(\cdot,\Si_I)}_{|I|}$ (noting that $H_s^{|I|}(\xx_I,\Si_I)\leq 1$) and we need to apply the induction hypothesis to it. Thus, we obtain   
\begin{equation}
\begin{split}
\norm{g_t^{n,3}(\cdot,\Si)}_{n,\Lambda}&\leq \beta\tilde{C}C_1[\tilde{B}_1(\beta,N)]^{|I|-1}[B_1(\beta,N)]^{|I^c|-1}C^n(n-1)n^{n-2}m\int_0^ts^{-\half}\rbr{s^{\half-\frac{\beta}{8\pi}}}^n\ds
\\
&\leq \beta\tilde{C}C_1\rbr{\sup_{n\leq N}\frac{1}{\half-\frac{n}{n+1}\frac{\beta}{8\pi}}}\max\rbr{\tilde{B}_1(\beta,N),B_1(\beta,N)}^{n-2} C^n n^{n-2}mt^{\half}\rbr{t^{\half-\frac{\beta}{8\pi}}}^n 
\end{split}
\end{equation}
The last estimate above is again of the correct form.
Thus, choosing 
\begin{equation}
\label{eq:def of Btilde1'}
\tilde{B}_1(\beta,N):=\tilde{B}_1'(\beta,N):=\max\rbr{B_1(\beta,N),4\beta\tilde{C}C_1\sup_{n\leq N}\frac{1}{\half-\frac{n}{n+1}\frac{\beta}{8\pi}}+\frac{\beta^2\tilde{C}C_1\tilde{B}\bar{B}_N N^2}{\rbr{1-\frac{\beta}{8\pi}}\rbr{\half-\frac{\beta}{8\pi}}}},
\end{equation}
where $B_1(\beta,N)$ is the constant from the estimates for the $h_t^n$ functions, yields the result for $t<m^{-2}$.

Consider then the case $t\geq m^{-2}$. In this case we need to split the $s$-integral at $m^{-2}$; otherwise the treatment is similar to the above, with appropriate modifications to the term with $s>m^{-2}$. First, we have
\begin{equation}
\begin{split}
\norm{g_t^{n,1}(\cdot,\Si)}_{n,\Lambda}
&\leq 2\beta \tilde{C}C_1[B_1(\beta,N)]^{n-2} C^n(n-1)n^{n-2}
\\
&\quad\times m\rbr{\int_0^{m^{-2}}s^{-\half+n(\half-\frac{\beta}{8\pi})}\ds+(m^{-2})^{-1+n\rbr{\half-\frac{\beta}{8\pi}}}\int_{m^{-2}}^t s^{\half}e^{-m^2s}\ds}
\\ 
&\leq 2C_1\tilde{C}\beta\rbr{\sup_{n\leq N}\frac{1}{\half-\frac{n}{n+1}\frac{\beta}{8\pi}}+(N-1)\Gamma\rbr{\frac{3}{2},1}}[B_1(\beta,N)]^{n-2}C^nn^{n-2}(m^{-2})^{n\rbr{\half-\frac{\beta}{8\pi}}}.
\end{split}
\end{equation}

The second case yields
\begin{equation}
\begin{split}
\norm{g_t^{n,2}(\cdot,\Si)}_{n,\Lambda}
&\leq \beta^2\tilde{C}C_1N^2[B_1(\beta,N)]^{n-2}C^n(n-1)n^{n-2}
\\
&\quad \times\bigg(\tilde{B}\bar{B}_N\rbr{\frac{1}{1-\frac{\beta}{8\pi}}+1}m(m^{-2})^{1-\frac{\beta}{8\pi}}\int_0^{m^{-2}}s^{-\frac{3}{2}+\frac{\beta}{8\pi}}s^{n\rbr{\half-\frac{\beta}{8\pi}}}\ds
\\
&\qquad\qquad\quad+m(m^{-2})(m^{-2})^{-1+n\rbr{\half-\frac{\beta}{8\pi}}}\int_{m^{-2}}^t\frac{e^{-m^2s}}{\sqrt{s}}\ds\bigg)
\\
&\leq\beta^2\tilde{C}C_1 N^2\rbr{\tilde{B}\bar{B}_N\rbr{\frac{1}{1-\frac{\beta}{8\pi}}+1}\frac{1}{\half-\frac{\beta}{8\pi}}+(N-1)\Gamma\rbr{\half,1}}
\\
&\quad\times [B_1(\beta,N)]^{n-2}C^nn^{n-2}(m^{-2})^{n\rbr{\half-\frac{\beta}{8\pi}}} 
\end{split}
\end{equation}

Finally, we have for the last case
\begin{equation}
\begin{split}
\norm{g_t^{n,3}(\cdot,\Si)}_{n,\Lambda}
&\leq \beta\tilde{C}C_1[\tilde{B}_1(\beta,N)]^{|I|-1}[B_1(\beta,N)]^{|I^c|-1}C^n(n-1)n^{n-2}
\\
&\quad\times\rbr{m\int_0^{m^{-2}}s^{-\half}\rbr{s^{\half-\frac{\beta}{8\pi}}}^n\ds+m(m^{-2})^{n\rbr{\half-\frac{\beta}{8\pi}}}\int_{m^{-2}}^t\frac{e^{-m^2s}}{\sqrt{s}}\ds}
\\
&\leq \beta\tilde{C}C_1\rbr{\sup_{n\leq N}\frac{1}{\half-\frac{n}{n+1}\frac{\beta}{8\pi}}+(N-1)\Gamma\rbr{\half,1}}
\\
&\quad \times[\max(\tilde{B}_1(\beta,N),B_1(\beta,N))]^{n-2} C^n n^{n-2}(m^{-2})^{n\rbr{\half-\frac{\beta}{8\pi}}}. 
\end{split}
\end{equation}

All of the above estimates are of the correct form. Thus, the result follows by choosing $\tilde{B}_1(\beta,N)$ to be the maximum of $\tilde{B}_1'(\beta,N)$ (the constant from the $t<m^{-2}$ case defined in \Cref{eq:def of Btilde1'}) and 
\begin{equation}
\begin{split}
\beta\tilde{C}C_1&\bigg[4\sup_{n\leq N}\frac{1}{\half-\frac{n}{n+1}\frac{\beta}{8\pi}}+\beta\tilde{B}\bar{B}_NN^2\rbr{\frac{1}{1-\frac{\beta}{8\pi}}+1}\frac{1}{\half-\frac{\beta}{8\pi}}
\\
&\quad+2(N-1)\rbr{\Gamma\rbr{\frac{3}{2},1}+\Gamma\rbr{\half,1}}+N^2(N-1)\Gamma\rbr{\half,1}\bigg]
\end{split}
\end{equation}

\end{subsubsection}

\begin{subsubsection}{Proof of {{\Cref{cor:convergence of the Vinfty-Vt terms}}} for the case $d=1$}
\label{sec:proof of the convergence of Vinfty-Vt for d=1}
Recall that the case $d\geq 2$ was discussed after the statement of \Cref{cor:convergence of the Vinfty-Vt terms}. Thus, here we only need to prove the case $d=1$. We first establish \Cref{eq:majorant condition for Vinfty-Vt for d=1,eq:uniform bound for the majorant in d=1}, which shows that, if the limit \Cref{eq:limits of the difference Vinfty-Vt} exists, it is finite as mentioned in \Cref{rem:finiteness of the limit of the Vinfty-Vt terms}. We then derive an explicit expression for the error term in $R_t^n(m,\xx,\Si)$ in the decomposition in \Cref{eq:decomposition of the Vinfty-Vt terms}, and finally we prove that the appropriate integral of $|R_t^n(m,\xx,\Si)|$ vanishes in the limit $m\to 0$.

First, note that
\begin{equation}
    \begin{split}
    \tilV_\infty^n(\xx,\Si,|m,\eps)-\tilV_t^n(\xx,\Si|m,\eps)
    &=\int_t^\infty\frac{\partial}{\partial s}\tilV_s^n(\xx,\Si|m,\eps)\ds
    \\
    &=\int_t^\infty \bigg[\frac{\beta}{2}\sum_{I\varsubsetneq [n]}\bigg(\sum_{\substack{i\in I\\ j\in I^c}}\sigma_{ij}C_s^m(i,j)\bigg)\tilV_s^{|I|}(\xx_I,\Si_I|m,\eps)\tilV_s^{|I^c|}(\xx_{I^c},\Si_{I^c}|m,\eps)
    \\
    &\qquad\quad-\frac{\beta}{2}\bigg(\sum_{k,l=1}^n\sigma_{kl}C_s^m(k,l)\bigg)\tilV_s^n(\xx,\Si|m,\eps)\bigg]\ds .
    \end{split}
\end{equation}
We then use the majorants $h_t^n$ from \Cref{prop:induction statement} together with the notation from \Cref{eq:def of GsI function} to obtain
\begin{equation}
\begin{split}
\label{eq:def of the majorant Ft(1n)}
|\tilV_\infty^n(\xx,\Si,|m,\eps)-\tilV_t^n(\xx,\Si|m,\eps)|
&\leq \frac{\beta}{2}\sum_{I\varsubsetneq [n]}\int_t^\infty e^{-m^2s}G_s^I(\xx,\Si)h_s^{|I|}(\xx_I,\Si_I|m)h_s^{|I^c|}(\xx_{I^c},\Si_{I^c}|m)\ds
\\
&\quad+\frac{\beta}{2}\int_t^\infty e^{-m^2s}\rbr{\sum_{k,l=1}^n\sigma_{kl}C_s(k,l)}h_s^n(\xx,\Si|m)\ds
\\
&=: F_t^{1,n}(\xx,\Si|m).
\end{split}
\end{equation}
We now derive upper bounds for the terms above separately.

 Consider the contribution from the first term. We may assume that $t<m^{-2}$ since if $t\geq m^{-2}$ we may replace the lower bound $t$ with $m^{-2}$ by positivity, in which case we only have the second term below. We again use \Cref{lem:estimate for GI function}, split the $s$-integral at $m^{-2}$, and treat the spatial integrals as in \Cref{eq:splitting of the n norm}. However, for the term with $s\in (t,m^{-2})$, instead of scaling the integral corresponding to $x_j$ in this step, we need to use
\begin{equation}
\label{eq:trading the measure of Lambda}
\int_{\Lambda}\underbrace{f_d\rbr{\frac{|x_j-x_i|}{\sqrt{s}}}}_{\leq 1}\dx_j\leq |\Lambda|.
\end{equation}
The term with $s\in (m^{-2},\infty)$ is treated as the terms with $s\in (m^{-2},t)$ in \Cref{eq:first estimate in the induction for large t and $d=1$}.
Thus, we have
\begin{equation}
\begin{split}
\label{eq:first term of the integral bound for |Vinfty-Vt|}
\sum_{\Si\in \{-1,1\}^n}\int_{\Lambda^n}\frac{\beta}{2}
&\sum_{I\varsubsetneq [n]}\int_t^\infty e^{-m^2s}G_s^I(\xx,\Si)h_s^{|I|}(\xx_I,\Si_I|m)h_s^{|I^c|}(\xx_{I^c},\Si_{I^c}|m)\ds\d \xx
\\
&\leq \beta\tilde{C}[B_1(\beta,N)]^{n-2}2^nC^n(n-1)n^{n-2}|\Lambda|
\\
&\quad \times\bigg(|\Lambda|\int_t^{m^{-2}}s^{-2}\rbr{s^{\half-\frac{\beta}{8\pi}}}^n\ds 
+ C_1(m^{-2})^{-1}(m^{-2})^{n\rbr{\half-\frac{\beta}{8\pi}}}\int_{m^{-2}}^\infty\frac{e^{-m^2s}}{\sqrt{s}}\ds\bigg)
\\
&\leq  \beta\tilde{C}[B_1(\beta,N)]^{n-2}2^nC^n(n-1)n^{n-2}|\Lambda|
\\
&\quad\times\rbr{|\Lambda|\frac{m^{n\rbr{\frac{\beta}{4\pi}-1}+2}}{(n-2)\half-\frac{n\beta}{8\pi}}
+ C_1m^{n\rbr{\frac{\beta}{4\pi}-1}+1}\Gamma\rbr{\half,1}} .
\end{split}
\end{equation}
Here, the factor $2^n$ accounts for the number of terms in the $\Si$-sum. The first factor $|\Lambda|$ arises from replacing the $x_1$-integral by a supremum, which allows us to use the norm estimates from \Cref{prop:induction statement} for the functions $h_t^n$. Note that if we did not use the estimate \Cref{eq:trading the measure of Lambda} and proceeded as before, the exponent of $s$ would be $-3/2+n(1/2-\beta/(8\pi))$. Although such an integral would still be convergent, it would lead to a logarithmic dependence on the mass in the case $\beta=\beta_N$ and $n=N$, which is unbounded for $m\in(0,1)$. For the second $s$-integral after the first inequality in \Cref{eq:first term of the integral bound for |Vinfty-Vt|}, we can scale the integral so that it becomes independent of the mass $m$. In this case, the $\beta=\beta_N$ and $n=N$ situation corresponds to the mass exponent being zero in the second term on the last line of \Cref{eq:first term of the integral bound for |Vinfty-Vt|}, which remains bounded for $m\in(0,1)$.

Consequently, we obtain a uniform (in $m$) upper bound for this term provided that the exponents of $m$ in the last line of \Cref{eq:first term of the integral bound for |Vinfty-Vt|} are non-negative. This condition yields the constraints
\begin{equation}
\label{eq:beta constrainsts from the integral bound for |Vinfty-Vt|}
\beta\geq \rbr{1-\frac{2}{n}}4\pi 
\qquad\text{and}\qquad 
\beta\geq \rbr{1-\frac{1}{n}}4\pi,
\quad \text{for all } n\leq N,
\end{equation}
which are satisfied under the assumption $\beta\in[\beta_N,\beta_{N+2})$. 

For the second term, we obtain using \Cref{prop:induction statement,lem:estimate for the full sum of the massive heat kernels} and similar manipulations as with the first term
\begin{equation}
\begin{split}
\label{eq:second term of the integral bound for |Vinfty-Vt|}
    &\sum_{\Si\in\{-1,1\}^n}\int_{\Lambda^n}\frac{\beta}{2}\sum_{I\varsubsetneq [n]}
    \int_t^\infty e^{-m^2 s}\rbr{\sum_{k,l=1}^n\sigma_{kl}C_s(k,l)}h_s^n(\xx,\Si|m)\ds\d\xx
    \\
    &\quad\leq [B_1(\beta,N)]^{n-1}2^nC^nn^n|\Lambda|
    \bigg(C(t,\Lambda)\int_t^{m^{-2}} s^{-2+n\rbr{\half-\frac{\beta}{8\pi}}}\ds
    +\tilde{B}(m^{-2})^{(n-1)\half-\frac{n\beta}{8\pi}}\int_{m^{-2}}^\infty\frac{e^{-m^2s}}{s}\ds\bigg).
\end{split}
\end{equation}

Here, we needed to use the second claim of \Cref{lem:estimate for the full sum of the massive heat kernels} for the part where $s\in (t,m^{-2})$. This provides the extra $s^{-1/2}$-factor, which makes the resulting mass exponents non-negative, but again enforces a stronger dependence on $\Lambda$. That is, the constant $C(t,\Lambda)$ plays the role of the second $|\Lambda|$-factor in the analysis of the first term above. For the part with $s\in (m^{-2},\infty)$ it sufficed to use the first claim of \Cref{lem:estimate for the full sum of the massive heat kernels}. The first term on the right-hand side of \Cref{eq:second term of the integral bound for |Vinfty-Vt|} has the same $s$-integral as the first term in the second row of \Cref{eq:first term of the integral bound for |Vinfty-Vt|} and the second $s$-integral in \Cref{eq:second term of the integral bound for |Vinfty-Vt|} is independent of mass by scaling. Thus, we obtain the same constraints for $\beta$ as in \Cref{eq:beta constrainsts from the integral bound for |Vinfty-Vt|}.

\begin{remark}
\label{rem:lambda dependence}
The strong $\Lambda$-dependence in the above estimates is caused by taking the limit $m\to 0$. Thus, the situation reflects a trade-off between keeping the mass non-zero and allowing stronger dependence of the estimates on the finite volume. Both mechanisms effectively act as an IR-cutoff.
\end{remark}

Since the precise form of the bound is unimportant, by maximizing all of the above estimates over $n\leq N$, we conclude that
\begin{equation}
\label{eq:bound for the integral of the majorant Ft1n}
    \sum_{\Si\in\{-1,1\}^n}\int_{\Lambda^n}F_t^{1,n}(\xx,\Si|m)\d\xx
    \leq C(\beta,t,N,\Lambda)<\infty
\end{equation}
for all $n\leq N$ and some constant $C(\beta,t,N,\Lambda)>0$ that is independent of $m$. This proves \Cref{eq:uniform bound for the majorant in d=1}.

The majorant in the above estimate is independent of $\eps$, so the limit $\eps\to 0$ can be taken through the spatial integrals, and the pointwise limit 
\begin{equation}
\lim_{\eps\to 0}(\tilV_\infty^n(\xx,\Si|m,\eps)-\tilV_t^n(\xx,\Si|m,\eps))=:\tilV_\infty^n(\xx,\Si|m)-\tilV_t^n(\xx,\Si|m)
\end{equation}
exists as explained in \Cref{rem:limits of the Vtn functions}. Thus, we turn to deriving a useful representation for the error function $R_t^n(m,\xx,\Si)$ such that
\begin{equation}
    \tilV_\infty^n(\xx,\Si|m)-\tilV_t^n(\xx,\Si|m)
    =\tilV_\infty^n(\xx,\Si)-\tilV_t^n(\xx,\Si)+R_t^n(m,\xx,\Si),
\end{equation}
where $\tilV_t^n(\xx,\Si):=\lim_{m\to 0}\lim_{\eps\to 0}\tilV_t^n(\xx,\Si|m,\eps)$ for $t\in (0,\infty]$ and these limits again exist by \Cref{rem:limits of the Vtn functions}. Furthermore, by \Cref{lem:differentiability of Vtn with respect to mass ETC} we also have $\tilV_t^n(\xx,\Si)=\tilV_t^n(\xx,\Si|m=0,\eps=0)$.

First we can write
\begin{equation}
\begin{split}
\label{eq:error form for Vinfty-Vt}
    \tilV_\infty^n(\xx,\Si|m)&-\tilV_t^n(\xx,\Si|m)
    \\
    &=\int_t^\infty\bigg\{\frac{\beta}{2}\sum_{I\varsubsetneq [n]}\bigg[\bigg(\sum_{\substack{i\in I\\ j\in I^c}}\sigma_{ij}C_s(i,j)\bigg)
    \rbr{\tilV_s^{|I|}(\xx_I,\Si_I)-\abr{\tilV_s^{|I|}(\xx_I,\Si_I)-e^{-\half m^2s}\tilV_s^{|I|}(\xx_I,\Si_I|m)}}
    \\
    &\qquad\qquad\quad\times
    \rbr{\tilV_s^{|I^c|}(\xx_{I^c},\Si_{I^c})-\abr{\tilV_s^{|I^c|}(\xx_{I^c},\Si_{I^c})-e^{-\half m^2s}\tilV_s^{|I^c|}(\xx_{I^c},\Si_{I^c}|m)}}\bigg]
    \\
    &\qquad -\frac{\beta}{2}\bigg(\sum_{k,l=1}^n\sigma_{kl}C_s(k,l)\bigg)
    \rbr{\tilV_s^{n}(\xx,\Si)-\abr{\tilV_s^{n}(\xx,\Si)-e^{-m^2s}\tilV_s^{n}(\xx,\Si|m)}}\bigg\}\ds
    \\
    &=\int_t^\infty\bigg\{ \frac{\beta}{2}\sum_{I\varsubsetneq [n]}
    \bigg[\bigg(\sum_{\substack{i\in I\\j\in I^c}}\sigma_{ij}C_s(i,j)\bigg)
    \tilV_s^{|I|}(\xx_I,\Si_I)\tilV_s^{|I^c|}(\xx_{I^c},\Si_{I^c})\bigg] 
    \\
    &\qquad\quad-\frac{\beta}{2}\rbr{\sum_{k,l=1}^n\sigma_{kl}C_s(k,l)}\tilV_s^n(\xx,\Si)\bigg\}\ds
    +R_t^n(m,\xx,\Si)
    \\
    &=\tilV_\infty^n(\xx,\Si)-\tilV_t^n(\xx,\Si)+R_t^n(m,\xx,\Si),
\end{split}
\end{equation}
where the error term can be written as
\begin{equation}
    \begin{split}
    \label{eq:def of the error term Rm}
    R_t^n(m,\xx,\Si)
    &:=-\frac{\beta}{2}\sum_{I\varsubsetneq [n]}\int_t^\infty\bigg\{
    \bigg(\sum_{\substack{i\in I\\j\in I^c}}\sigma_{ij}C_s(i,j)\bigg)
    \\
    &\qquad\qquad\qquad\quad\times \bigg(
    \abr{\tilde{W}_s^{|I|}(1/2,\xx_I,\Si_I)-\tilde{W}_s^{|I|}(1/2,\xx_I,\Si_I|m)}
    \tilde{W}_s^{|I^c|}(1/2,\xx_{I^c},\Si_{I^c}|m)
    \\
    &\qquad\qquad\qquad\qquad
    +\tilde{W}_s^{|I|}(1/2,\xx_I,\Si_I|m)
    \abr{\tilde{W}_s^{|I^c|}(1/2,\xx_{I^c},\Si_{I^c})-\tilde{W}_s^{|I^c|}(1/2,\xx_{I^c},\Si_{I^c}|m)}
    \\
    &\qquad\qquad\qquad\qquad
    +\abr{\tilde{W}_s^{|I|}(1/2,\xx_I,\Si_I)-\tilde{W}_s^{|I|}(1/2,\xx_I,\Si_I|m)}
    \\
    &\qquad\qquad\qquad\qquad\quad\times \abr{\tilde{W}_s^{|I^c|}(1/2,\xx_{I^c},\Si_{I^c})-\tilde{W}_s^{|I^c|}(1/2,\xx_{I^c},\Si_{I^c}|m)}
    \bigg)\bigg\}\ds
    \\
    &\qquad+\frac{\beta}{2}\int_t^\infty
    \rbr{\sum_{k,l=1}^n\sigma_{kl}C_s(k,l)}
    \abr{\tilde{W}_s^n(1,\xx,\Si|m)-\tilde{W}_s^n(1,\xx,\Si)}\ds .
    \end{split}
\end{equation}
Here $\tilde{W}_s^n(a,\xx,\Si|m):=e^{-am^2s}\tilV_s^n(\xx,\Si|m)$ and
$\tilde{W}_s^n(a,\xx,\Si):=\tilde{W}_s^n(a,\xx,\Si|m=0)\equiv\tilV_s^n(\xx,\Si)$ for any $a>0$. 

The final goal is to prove \Cref{eq:vanishing of the error in the Vinfty-Vt terms}. For this we need good estimates for the error term.
First we can write 
\begin{equation}
\begin{split}
\label{eq:integral estimate for W-W(m)}
\tilde{W}_s^n(a,\xx,\Si|m)-\tilde{W}_s^n(a,\xx,\Si)&=\int_0^m\frac{\partial}{\partial\mu}\tilde{W}_s^n(a,\xx,\Si|\mu)\d\mu
\\
&=-2as\int_0^m\mu e^{-a\mu^2s}\tilV_s^n(\xx,\Si|\mu)\d\mu+\int_0^me^{-a\mu^2s}\frac{\partial}{\partial \mu}\tilV_s^n(\xx,\Si|\mu)\d\mu,
\end{split}
\end{equation}
which is justified by \Cref{lem:differentiability of Vtn with respect to mass ETC}.

Then, we use \Cref{eq:integral estimate for W-W(m)} and the majorant functions $h_t^n$ and $g_t^n$ together with their estimates from \Cref{prop:induction statement}, to bound $|R_t^n(m,\xx,\Si)|$. 

Thus, we first obtain
\begin{equation}
    \begin{split}
    \label{eq:first estimate for the absolute value of  R_tn}
    |R_t^n(m,\xx,\Si)|
    &\leq \half\beta \int_0^m \sum_{I\varsubsetneq [n]}\int_t^\infty    
    \!\!G_s^I(\xx,\Si)e^{-\half(m^2+\mu^2)s}\bigg\{ 
    \abr{\mu s h_s^{|I|}(\xx_I,\Si_I|\mu)+g_s^{|I|}(\xx_I,\Si_I|\mu)}
    h_s^{|I^c|}(\xx_{I^c},\Si_{I^c}|m)
    \\
    &\qquad\qquad\qquad\qquad\qquad
    +h_s^{|I|}(\xx_{I},\Si_{I}|m)
    \abr{\mu s h_s^{|I^c|}(\xx_{I^c},\Si_{I^c}|\mu)+g_s^{|I^c|}    
    (\xx_{I^c},\Si_{I^c}|\mu)}
    \bigg\}\ds\d\mu
    \\
    &\quad+\half\beta \int_0^m \int_0^m\sum_{I\varsubsetneq [n]}\int_t^\infty    
    \!\!G_s^I(\xx,\Si)e^{-\half(\mu^2+\lambda^2)s}\bigg\{\abr{\lambda s h_s^{|I|}(\xx_I,\Si_I|\lambda)+g_s^{|I|}(\xx_I,\Si_I|\lambda)}
    \\
    &\qquad\qquad\qquad\qquad\qquad\times\abr{\mu s h_s^{|I^c|}(\xx_{I^c},\Si_{I^c}|\mu)+g_s^{|I^c|}(\xx_{I^c},\Si_{I^c}|\mu)}\bigg\}\ds\d\mu\d\lambda
    \\
    &\quad+\half\beta \int_0^m\int_t^\infty e^{-\mu^2s}
    \rbr{\sum_{k,l=1}^n\sigma_{kl}C_s(k,l)}
    \abr{2\mu s h_s^n(\xx,\Si|\mu)+g_s^n(\xx,\Si|\mu)}\ds\d\mu
    \\
    &=:\int_0^m \!\![R_1(m,\mu,\xx,\Si)+R_2(m,\mu,\xx,\Si)]\d\mu+\int_0^m\int_0^m \!\! R_3(\mu,\lambda,\xx,\Si)
    \d\mu\d\lambda+  
    \int_0^m\!\! R_4(\mu,\xx,\Si)\d\mu,
    \end{split}
\end{equation}
where we have suppressed the $t$ and $n$ dependence from the terms $R_i$, $i=1,2,3,4$, since these will not be used elsewhere with different values, unlike the full error term $R_t^n$. Furthermore, the exchange of the mass scale integrals with respect to $\mu$ and $\lambda$, and the $s$-integrals above, as well as the exchange of any of these with the spatial integrals, is justified since everything is positive. We will always evaluate the mass scale integrals last. Furthermore, without loss of generality we assume that $t<m^{-2},\mu^{-2},\lambda^{-2}$, since otherwise some of the terms simply would not appear and the task would be simpler.

Consider the $R_4$ term first, since it involves only one mass scale. Using \Cref{prop:induction statement}, the second claim of \Cref{lem:estimate for the full sum of the massive heat kernels}, splitting the $s$-integral at the mass scale $\mu^{-2}$ and using $e^{-\mu^2s}\leq 1$ for the part with $s\in (t,\mu^{-2})$, we obtain
\begin{equation}
\begin{split}
\label{eq:estimate for R4}
    &\sum_{\Si\in\{-1,1\}^n}\int_{\Lambda^n}R_4(\mu,\xx,\Si)\dx
    \\
    &\quad \leq  \half \beta  C(t,\Lambda)|\Lambda|
    [\tilde{B}_1(\beta,N)]^{n-2}2^nC^n n^n
    \bigg[
    3\mu\int_t^{\mu^{-2}}s^{-1+n\rbr{\half-\frac{\beta}{8\pi}}}\ds
    \\
    &\qquad\qquad\qquad\qquad
    +2\mu(\mu^{-2})^{(n-1)\half-\frac{n\beta}{8\pi}}
    \int_{\mu^{-2}}^\infty\frac{e^{-\mu^2s}}{\sqrt{s}}\ds
    +(\mu^{-2})^{n\rbr{\half-\frac{\beta}{8\pi}}}
    \int_{\mu^{-2}}^\infty\frac{e^{-\mu^2s}}{s^{\frac{3}{2}}}\ds
    \bigg]
    \\
    &\quad\leq  \half\beta C(t,\Lambda)|\Lambda|
    [\tilde{B}_1(\beta,N)]^{n-2}2^nC^nn^n
    \rbr{\frac{3}{n}\frac{1}{\half-\frac{\beta}{8\pi}}
    +2\Gamma\rbr{\half,1}+\Gamma\rbr{-\half,1}}
    \mu^{n\rbr{\frac{\beta}{4\pi}-1}+1},
\end{split}
\end{equation}
where we used $\tilde{B}_1(\beta,N)\geq B_1(\beta,N)$ by definition. The factor $|\Lambda|$ again came from taking a supremum over $x_1$ and $2^n$ is just the number of terms in the $\sigma$-sum. Noting that $0\leq\mu\leq m<1$ we find that
$\mu^{n(\beta/(4\pi)-1)+1}\leq 1$ provided that
\begin{equation}
n\rbr{\frac{\beta}{4\pi}-1}+1\geq 0
\;\;\Leftrightarrow\;\;
\beta\geq \rbr{1-\frac{1}{n}}4\pi,
\end{equation}
which is satisfied for all $n\leq N$ under the assumption $\beta\in[\beta_N,\beta_{N+2})$. Hence, after using this, the integral $\int_0^m\d\mu$ only produces an overall factor $m$, which forces this contribution to vanish as $m\to 0$. Note that here we had to make the trade-off between non-zero mass and stronger dependence on the compact set $\Lambda$, as discussed in \Cref{rem:lambda dependence}, for all terms.

Consider next the terms with $R_1$ and $R_2$. These are symmetric under $I\leftrightarrow I^c$, so we only estimate the term with $R_1$. We can again use \Cref{prop:induction statement,lem:estimate for GI function}. Here we split the $s$-integral at both mass scales $m^{-2}$ and $\mu^{-2}$. We will also use the estimate \Cref{eq:trading the measure of Lambda} at the step analogous to \Cref{eq:splitting of the n norm}, for all terms. This produces again the second factor of $|\Lambda|$. Thus, we obtain
\begin{equation}
\begin{split}
\label{eq:estimate for R1}
\sum_{\Si\in\{-1,1\}^n}
&\int_{\Lambda^n}R_1(m,\mu,\xx,\Si)\d\xx
\\
&\leq \half \beta \tilde{C}[\tilde{B}_1(\beta,N)]^{n-2}2^nC^n|\Lambda|^2
\bigg\{
4(n-1)n^{n-2}\underbrace{\mu\int_t^{m^{-2}}
s^{-1+n\rbr{\half-\frac{\beta}{8\pi}}}\ds}_{I_1(\mu,m)}
\\
&\quad+\sum_{I\varsubsetneq [n]}|I|^{|I|-1}|I^c|^{|I^c|-1}
\bigg[
2\underbrace{\mu (m^{-2})^{-\half+|I^c|\rbr{\half-\frac{\beta}{8\pi}}}
\int_{m^{-2}}^{\mu^{-2}}e^{-\half(m^2+\mu^2)s}
s^{-\frac{1}{2}+|I|\rbr{\half-\frac{\beta}{8\pi}}}\ds}_{=:I_2(\mu,m)}
\\
&\qquad\qquad\qquad\qquad\qquad\qquad
+\underbrace{\mu(m^{-2})^{-\half+|I^c|\rbr{\half-\frac{\beta}{8\pi}}}
(\mu^{-2})^{-\half+|I|\rbr{\half-\frac{\beta}{8\pi}}}
\int_{\mu^{-2}}^\infty e^{-\half(m^2+\mu^2)s}\ds}_{=:I_3(\mu,m)}
\\
&\qquad\qquad\qquad\qquad\qquad\qquad
+\underbrace{(m^{-2})^{-\half+|I^c|\rbr{\half-\frac{\beta}{8\pi}}}
(\mu^{-2})^{|I|\rbr{\half-\frac{\beta}{8\pi}}}
\int_{\mu^{-2}}^\infty\frac{e^{-\half(m^2+\mu^2)s}}{s}\ds}_{=:I_4(\mu,m)}
\bigg]
\bigg\},
\end{split}
\end{equation}
where we have used $e^{-\half(\mu^2+m^2)s}\leq 1$ for the term $I_1(\mu,m)$ and combined the contributions coming from $h_s^{|I|}(\cdot,\cdot|\mu)$ and $g_s^{|I|}(\cdot,\cdot|\mu)$ on the scales $s<\mu^{-2}$, where their $\mu$ and $s$ dependence is the same (we will also do this later with the term involving $R_3$).

Since we are only interested in the dependence on $m$ and $\mu$ as $m\to 0$, we only estimate the $s$-integrals and $\mu$-integrals separately for each term with appropriate prefactors depending on $m$, and ignore all the constants and combinatorial factors. Thus, our task reduces to showing that 
\begin{equation}
\int_0^mI_i(\mu,m)\d\mu\overset{m\to 0}{\longrightarrow}0
\end{equation}
for all $i=1,2,3,4$.

For the first term the $\mu$- and $s$-integral decouple, and we obtain
\begin{equation}
\begin{split}
    \int_0^mI_1(\mu,m)\d\mu
    &\leq B m^2(m^{-2})^{n\rbr{\half-\frac{\beta}{8\pi}}}\leq Bm^{n\rbr{\frac{\beta}{4\pi}-1}+2}\overset{m\to 0}{\longrightarrow}0
\end{split}
\end{equation}
for all $\beta> (1-2/n)4\pi$, which is again satisfied for all $n\leq N$ by the assumption $\beta\in[\beta_N,\beta_{N+2})$. The constant $B>0$ is independent of $m$ and $\mu$, and it is allowed to change from line to line (we will continue this convention in future estimates).

For the second term we obtain
\begin{equation}
\begin{split}
  \int_0^mI_2(\mu,m)\d\mu
    &=B m^{|I^c|\rbr{\frac{\beta}{4\pi}-1}+1}\int_0^m\mu 
    [(m^2+\mu^2)^{-1}]^{\half+|I|\rbr{\half-\frac{\beta}{8\pi}}}
    \int_{\half+\half\frac{\mu^2}{m^2}}^{\half+\half\frac{m^2}{\mu^2}}
    e^{-u}u^{-\half+|I|\rbr{\half-\frac{\beta}{8\pi}}}\du\d\mu
    \\
    &\leq Bm^{|I^c|\rbr{\frac{\beta}{4\pi}-1}+1}\int_0^m
    \underbrace{\frac{\mu}{\sqrt{\mu^2+m^2}}}_{\leq 1}
    \underbrace{\rbr{\frac{1}{m^2+\mu^2}}^{|I|\rbr{\half-\frac{\beta}{8\pi}}}}_{\leq (m^{-2})^{|I|\rbr{\half-\frac{\beta}{8\pi}}}}\d\mu
    \\
    &\leq B
    \underbrace{m^{n\rbr{\frac{\beta}{4\pi}-1}+1}}_{\leq 1}\int_0^m\d\mu
    \overset{m\to 0}{\longrightarrow}0,
\end{split}
\end{equation}
where the $u$-integral is bounded uniformly in $m$ and $\mu$. The exponent of the mass term before the integral on the last line is non-negative for $\beta\geq (1-1/n)4\pi$ and this is guaranteed for all $n\leq N$ by the assumption $\beta\in [\beta_N,\beta_{N+2})$.  

The third term yields 
\begin{equation}
    \begin{split}
 \int_0^mI_3(\mu,m)\d\mu &\leq
    B m^{|I^c|\rbr{\frac{\beta}{4\pi}-1}+1}
    \int_0^m \mu^{|I|\rbr{\frac{\beta}{4\pi}-1}+2}\underbrace{(\mu^2+m^2)^{-1}}_{\leq m^{-2}}
    \underbrace{\int_{\half+\half\frac{m^2}{\mu^2}}^\infty e^{-u}\du}_{\leq 1}\d\mu
    \\
    &\leq Bm^{n\rbr{\frac{\beta}{4\pi}-1}+2}\overset{m\to 0}{\longrightarrow}0
    \end{split}
\end{equation}
where the last $\mu$-integral, after using the estimates written below the terms, is convergent for $\beta>(1-3/|I|)4\pi$, and the exponent of the mass on the last row is positive for $\beta>(1-2/n)4\pi$; these are satisfied for all $n\leq N$ and $\emptyset \neq I\varsubsetneq [n]$, again by the assumption $\beta\in[\beta_N,\beta_{N+2})$.

Finally, the contribution from the last term is 
\begin{equation}
\begin{split}
    \int_0^mI_4(\mu,m)\d\mu
    &= m^{|I^c|\rbr{\frac{\beta}{4\pi}-1}+1}\int_0^m\mu^{|I|\rbr{\frac{\beta}{4\pi}-1}}\int_{\half+\half\frac{m^2}{\mu^2}}^\infty e^{-u}\frac{\du}{u}\d\mu
    \\
    &\leq B m^{n\rbr{\frac{\beta}{4\pi}-1}+2}\overset{m\to 0}{\longrightarrow}0
\end{split}
\end{equation}
where the $u$-integral is again bounded uniformly in $m$ and $\mu$, the last $\mu$-integral is convergent for all $\beta>(1-1/|I|)4\pi$ and the exponent of the mass on the last row is positive for all $\beta>(1-2/n)4\pi$, and these are again satisfied for all relevant cases by the assumption $\beta\in [\beta_N,\beta_{N+2})$.

Let us continue with the term involving $R_3$. By the symmetry $I\leftrightarrow I^c$ we only need to consider the case $\mu\leq \lambda$, which means that the integration region for the mass scale integrals with respect to $(\lambda,\mu)$ is $[0,m]\times[0,\lambda]$. Analogously to \Cref{eq:estimate for R1} we obtain
\begin{equation}
\begin{split}
\sum_{\Si\in\{-1,1\}^n}&\int_{\Lambda^n}R_3(\mu,\lambda,\xx,\Si)\dx
	\\
	&\leq\half\beta\tilde{C}[\tilde{B}_1(\beta,N)]^{n-2}2^nC^n|\Lambda|^2\bigg\{8(n-1)n^{n-2}\mu\lambda\int_t^{\lambda^{-2}}s^{n\rbr{\half-\frac{\beta}{8\pi}}}\ds
	\\
	&\quad+\sum_{I\varsubsetneq [n]}|I|^{|I|-1}|I^c|^{|I^c|-1}\bigg[\int_{\lambda^{-2}}^{\mu^{-2}}\frac{e^{-\half(\lambda^2+\mu^2)s}}{s}\rbr{\lambda s(\lambda^{-2})^{-\half+|I|\rbr{\half-\frac{\beta}{8\pi}}}+(\lambda^{-2})^{|I|\rbr{\half-\frac{\beta}{8\pi}}}}
	\\
	&\qquad\qquad\qquad\qquad\qquad\qquad\times2\mu ss^{-\half+|I^c|\rbr{\half-\frac{\beta}{8\pi}}}\ds
	\\
	&\qquad\qquad\qquad\qquad\qquad\quad\int_{\mu^{-2}}^\infty \frac{e^{-\half(\lambda^2+\mu^2)s}}{s}\rbr{\lambda s(\lambda^{-2})^{-\half+|I|\rbr{\half-\frac{\beta}{8\pi}}}+(\lambda^{-2})^{|I|\rbr{\half-\frac{\beta}{8\pi}}}}
	\\
	&\qquad\qquad\qquad\qquad\qquad\qquad\times\rbr{\mu s(\mu^{-2})^{-\half+|I^c|\rbr{\half-\frac{\beta}{8\pi}}}+(\mu^{-2})^{|I^c|\rbr{\half-\frac{\beta}{8\pi}}}}\ds\bigg]\bigg\}.
	\\
	\end{split}
\end{equation}

We may write for the part inside the $\{\}$-brackets
\begin{equation}
8(n-1)n^{n-2}J_1(\lambda,\mu)+\sum_{I\varsubsetneq [n]}|I|^{|I|-1}|I^c|^{|I^c|-1}\rbr{2J_2(\lambda,\mu)+J_3(\lambda,\mu)}
\end{equation}
with obvious identifications for $J_i$, $i=1,2,3$. Thus, the final task is to prove
\begin{equation}
\int_0^m\int_0^\lambda J_i(\lambda,\mu)\d\mu\d\lambda\overset{m\to 0}{\longrightarrow}0
\end{equation}
for all $i=1,2,3$. Recall that we assumed $\mu\leq \lambda$. 

First we have,
\begin{equation}
\begin{split}
\int_0^m\int_0^\lambda J_1(\lambda,\mu)\d\mu\d\lambda &\leq B\int_0^m\lambda(\lambda^{-2})^{n\rbr{\half-\frac{\beta}{8\pi}}+1}\int_0^\lambda\mu\d\mu\d\lambda\leq B\int_0^m\lambda^{n\rbr{\frac{\beta}{4\pi}-1}+1}\d\lambda\overset{m\to 0}{\longrightarrow}0
\end{split}
\end{equation}
if $\beta>(1-2/n)4\pi$, which is satisfied for all $n\leq N$. In the first step we have simply set $t=0$ in the $s$-integral since it is convergent for all $\beta<4\pi$ in the neighborghood of the origin. 

The second term yields 
\begin{equation}
\begin{split}
\int_0^m\int_0^\lambda &J_2(\lambda,\mu)\d\mu\d\lambda
\\
&\leq C\int_0^m\lambda^{|I|\rbr{\frac{\beta}{4\pi}-1}+2}\int_0^\lambda\mu([\lambda^2+\mu^2]^{-1})^{\frac{3}{2}+|I^c|\rbr{\half-\frac{\beta}{8\pi}}}\int_{\half+\half\frac{\mu^2}{\lambda^2}}^{\half+\half\frac{\lambda^2}{\mu^2}}e^{-u}u^{\half+|I^c|\rbr{\half-\frac{\beta}{8\pi}}}\du\d\mu\d\lambda
\\
&\quad+B\int_0^m\lambda^{|I|\rbr{\frac{\beta}{4\pi}-1}}\int_0^\lambda\mu([\lambda^2+\mu^2]^{-1})^{\frac{1}{2}+|I^c|\rbr{\half-\frac{\beta}{8\pi}}}\int_{\half+\half\frac{\mu^2}{\lambda^2}}^{\half+\half\frac{\lambda^2}{\mu^2}}e^{-u}u^{\half+|I^c|\rbr{\half-\frac{\beta}{8\pi}}}\du\d\mu\d\lambda
\\
&\leq B \int_0^m\lambda^{|I|\rbr{\frac{\beta}{4\pi}-1}}\int_0^\lambda \mu^{|I^c|\rbr{\frac{\beta}{4\pi}-1}}\d\mu\d\lambda
\\
&\leq B m^{n\rbr{\frac{\beta}{4\pi}-1}+2}\overset{m\to 0}{\longrightarrow}0,
\end{split}
\end{equation}
if $\beta>(1-1/|I^c|)4\pi$, $\beta>(1-2/|I|)4\pi$ and $\beta>(1-2/n)4\pi$, and all of these are satisfied for all $n\leq N$ and $\emptyset \neq I\varsubsetneq [n]$ as before. The $u$-integrals are bounded uniformly in $\lambda$ and $\mu$, and we have used $\mu/\sqrt{\lambda^2+\mu^2}\leq 1$ and $(\lambda^2+\mu^2)^{-1}\leq \lambda^{-2}$ or $(\lambda^2+\mu^2)^{-1}\leq \mu^{-2}$ appropriately to get the same exponents for $\lambda$ and $\mu$ for both terms.

Finally, the last term yields
\begin{equation}
\begin{split}
\int_0^m\int_0^\lambda& J_3(\lambda,\mu)\d\mu\d\lambda
\\
&=\int_0^m\int_0^\lambda \int_{\mu^{-2}}^\infty e^{-\half(\lambda^2+\mu^2)s}(s\lambda^{|I|\rbr{\frac{\beta}{4\pi}-1}+2}+\lambda^{|I|\rbr{\frac{\beta}{4\pi}-1}})(s\mu^{|I^c|\rbr{\frac{\beta}{4\pi}-1}+2}+\mu^{|I^c|\rbr{\frac{\beta}{4\pi}-1}})\frac{\ds}{s}\d\mu\d\lambda.
\end{split}
\end{equation}
Each of the terms resulting from expanding the brackets above can be reduced to the cases treated in the estimation of the previous term by similar techniques. Thus, we are done.
\end{subsubsection}

\begin{subsubsection}{Proof of {{\Cref{lem:differentiability of Vtn with respect to mass ETC}}}}
In the proofs of \Cref{prop:induction statement,cor:convergence of the Vinfty-Vt terms} we simply assumed that the functions $\tilV_t^n(\xx,\Si|m,\eps)$ are differentiable with respect to the mass $m$ and that the mass derivative can be computed under the $s$-integral in the definition \Cref{eq:def of Vn}. 
In this section, we justify this assumption. 

We begin the proof with the estimates for the functions $\tilde{h}_t^n$. The proof proceeds similarly to that of \Cref{prop:induction statement}, and we therefore keep the exposition brief. For the base case, we can simply choose $\tilde{h}_t^1\equiv h_t^n$ for $t<1<m^{-2}$ and $\tilde{h}_t^1=C'$ from \Cref{eq:uniform bound for V1} in \Cref{lem:uniform bound for V1} for $t\geq 1$. We then assume that the claim holds for all $k<n\leq N$ and define $\tilde{h}_t^n\equiv h_t^n$ for $t<1<m^{-2}$ (since $m\in(0,1)$). For $t\geq 1$, we define
\begin{equation}
\begin{split}
    \tilde{h}_t^n(\xx,\Si)
    &:=\frac{\beta}{2}\bar{B}_N[H_1^n(\xx,\Si)]^{-\frac{\beta}{4\pi}}
    \sum_{I\varsubsetneq[n]}\int_0^1 s^{\frac{\beta}{8\pi}}
    G_s^I(\xx,\Si)
    \tilde{h}_s^{|I|}(\xx_I,\Si_I)
    \tilde{h}_s^{|I^c|}(\xx_{I^c},\Si_{I^c})\ds
    \\
    &\quad+\frac{\beta}{2}\sum_{I\varsubsetneq [n]}
    \int_1^t G_s^I(\xx,\Si)
    \tilde{h}_s^{|I|}(\xx_I,\Si_I)
    \tilde{h}_s^{|I^c|}(\xx_{I^c},\Si_{I^c})\ds,
\end{split}
\end{equation}

where we have split the integral at $s=1$ rather than at $s=m^{-2}$. The analysis for the case $t<1$ is identical to the analysis for $t<m^{-2}$ in the proof of \Cref{prop:induction statement}, modulo the substitution $m^{-2}\to 1$, and yields the same contribution to the constant $B_1(\beta,N,\Lambda)$, which also coincides with the one obtained for the first term below.

For $t\geq 1$ we first use \Cref{lem:estimate for GI function} and then proceed analogously to \Cref{eq:splitting of the n norm}, except that we again use \Cref{eq:trading the measure of Lambda} for the $x_j$-integral if $s\geq 1$ (this is necessary to obtain the correct $t$-dependence). We may assume that $t>e$ since otherwise the logarithmic part below would not appear at all and the resulting estimates are uniform in $t$. Then, after using the induction hypothesis, we need to further split the term with $s\in (1,t)$ at $s=e$ since the estimate of the claim changes at this value. Indeed, for $s\in (1,e)$ the induction hypothesis produces uniform estimates as opposed to the logarithmic ones for $s\geq e$. Thus, we obtain
\begin{equation}
\begin{split}
     \norm{H_t^n(\cdot,\Si)\tilde{h}_t^n(\cdot,\Si)}_{n,\Lambda}
     &\leq \beta\tilde{C}[B_1(\beta,N,\Lambda)]^{n-2}C^n(n-1)n^{n-2}
     \\
     &\quad\times\bigg(
     \bar{B}_NC_1\int_0^1 s^{\frac{\beta}{8\pi}-\frac{3}{2}
     +n\rbr{\half-\frac{\beta}{8\pi}}}\ds
     +|\Lambda|\abr{\int_1^e\frac{\ds}{s}+\int_e^t s^{-1}[\log(s)]^{n-2}\ds}
     \bigg)
     \\
     &\leq\beta \tilde{C}[B_1(\beta,N,\Lambda)]^{n-2}C^n(n-1)n^{n-2}
     \\
     &\quad\times\rbr{
     \frac{C_1\bar{B}_N}{(n-1)\rbr{\half-\frac{\beta}{8\pi}}}
     +|\Lambda|+|\Lambda|\frac{[\log(t)]^{n-1}}{n-1}}
     \\
     &\leq \beta \tilde{C}
     \rbr{\frac{C_1\bar{B}_N}{\half-\frac{\beta}{8\pi}}
     +N|\Lambda|}
     [B_1(\beta,N,\Lambda)]^{n-2}
     C^nn^{n-2}\max(1,[\log(t)]^{n-1}),
\end{split}
\end{equation}
where we also used the bound
\begin{equation}
\int_e^t s^{-1}[\log(t)]^{n-2}\ds
\leq \frac{1}{n-1}[\log(t)]^{n-1}.
\end{equation}
Thus, the claim follows by choosing
\begin{equation}
B_1(\beta,N,\Lambda)
:=\beta \tilde{C}
\rbr{\frac{C_1\bar{B}_N}{\half-\frac{\beta}{8\pi}}
+N|\Lambda|}.
\end{equation}

We now turn to the differentiability claim and the functions $\tilde{g}_t^n$.
The base case $n=1$ follows from \Cref{eq:mass derivative of Vt1} using $m\leq 1$ for the prefactor, $e^{-m^2s}\leq 1$ and $|\tilV_t^1(m,\eps)|\leq \tilde{h}_ t^1$. We then assume that $\tilV_t^k$ is differentiable for all $k<n\leq N$ and that the functions $\tilde{g}_t^k$ with the stated properties exist. Under this assumption, the formal computation in \Cref{eq:mass derivative of Vtn} is justified once the induction step is completed.

Next, we again mimic the proof of \Cref{prop:induction statement} and decompose
$\tilde{g}_t^n=\sum_{i=1}^4\tilde{g}_t^{n,i}$. The terms with $i=1,3,4$ are obtained directly from the functions $g_t^{n,i}$ by replacing
$h_s^{|I|}$ with $\tilde{h}_s^{|I|}$,
$g_s^{|I|}$ with $\tilde{g}_s^{|I|}$ (and analogously for $I^c$),
the prefactor $m$ in $g_t^{n,1}$ with $1$, and $e^{-m^2s}$ with $1$.
The analysis of these terms for $t<1$ is completely analogous to that of the terms $g_t^{n,i}$ with $t<m^{-2}$, producing the same result.

The term $\tilde{g}_t^{n,2}$ is obtained by the same transformations from $g_t^{n,2}$ in the case $t<1$. For $t\geq 1$, we modify the estimates
\Cref{eq:estimate for the tricky term for gtn part 1,eq:estimate for the tricky term for gtn part 2}
by splitting the integrals at $s=1$ rather than at $s=m^{-2}$. For $s\leq 1\leq t$, we obtain
\begin{equation}
\begin{split}
    \rbr{\int_s^t re^{-m^2r}\rbr{\sum_{k,l=1}^n\sigma_{kl}C_r(k,l)}\dr}
    e^{-\frac{\beta}{2}\sum_{kl}\sigma_{kl}K_{s,t}^m(k,l)}
    &\leq\tilde{B}\bar{B}_NN^2 s^{\frac{\beta}{8\pi}}
    \rbr{\int_s^1 r^{-\frac{\beta}{8\pi}}\dr+\int_1^t\dr}
    \\
    &\leq\tilde{B}\bar{B}_NN^2 s^{\frac{\beta}{8\pi}}
    \int_0^t \rbr{r^{-\frac{\beta}{8\pi}}+1}\dr
    \\
    &\leq \tilde{B}\bar{B}_NN^2
    s^{\frac{\beta}{8\pi}}t
    \rbr{\frac{1}{1-\frac{\beta}{8\pi}}+1},
\end{split}
\end{equation}
where, in the last line, we used the fact that $t^{-\frac{\beta}{8\pi}}\leq 1$ for $t\geq 1$.
For $t\geq s\geq 1$, we instead use
\begin{equation}
    \rbr{\int_s^t re^{-m^2r}\rbr{\sum_{k,l=1}^n\sigma_{kl}C_r(k,l)}\dr}
    e^{-\frac{\beta}{2}\sum_{kl}\sigma_{kl}K_{s,t}^m(k,l)}
    \leq n^2\int_s^t \dr \leq N^2 t.
\end{equation}

For $t\geq 1$, we then define
\begin{equation}
\begin{split}
    \tilde{g}_t^{n,2}(\xx,\Si)
    &:=\half\beta^2\tilde{B}\bar{B}_NN^2
    \rbr{\frac{1}{1-\frac{\beta}{8\pi}}+1}t
    \sum_{I\varsubsetneq [n]}
    \int_0^1 s^{\frac{\beta}{8\pi}}
    G_s^I(\xx,\Si)
    \tilde{h}_s^{|I|}(\xx_I,\Si_I)
    \tilde{h}_s^{|I^c|}(\xx_{I^c},\Si_{I^c})\ds
    \\
    &\quad+\half\beta^2N^2t
    \sum_{I\varsubsetneq [n]}
    \int_1^t G_s^I(\xx,\Si)
    \tilde{h}_s^{|I|}(\xx_I,\Si_I)
    \tilde{h}_s^{|I^c|}(\xx_{I^c},\Si_{I^c})\ds .
\end{split}
\end{equation}

We may again assume that $t\geq e$ since otherwise the logarithmic contribution will again not appear and the task becomes simpler. We can now estimate the contributions from the range $t\geq 1$. 

Analogously to the estimates for the functions $\tilde{h}_t^n$, we obtain for the first term
\begin{equation}
    \begin{split}
    \label{eq:estimate for gtilde1}
    \norm{\tilde{g}_t^{n,1}(\cdot,\Si)}_{n,\Lambda}
    &\leq 2\beta  \tilde{C}
    [B_1(\beta,N,\Lambda)]^{n-2}C^n(n-1)n^{n-2}
    \\
    &\quad\times
    \bigg(
    C_1\int_0^1 s^{-\half+n\rbr{\half-\frac{\beta}{8\pi}}}\ds
    +|\Lambda|\int_1^e\ds
    +|\Lambda|\int_e^t[\log(s)]^{n-2}\ds
    \bigg)
    \\
    &\leq 2\beta\tilde{C}
    \rbr{C_1\sup_{n\leq N}\frac{1}{\half-\frac{n}{n+1}\frac{\beta}{8\pi}}+N|\Lambda|}
    [B_1(\beta,N,\Lambda)]^{n-2}
    C^nn^{n-2}
    t\max(1,[\log(t)]^{n-1}),
    \end{split}
\end{equation}
where $B_1(\beta,N,\Lambda)$ is the constant obtained in the estimate for $\tilde{h}_t^n$. In this estimate, we multiplied the first term in the second row of \Cref{eq:estimate for gtilde1} by $t\geq 1$ and used
\begin{equation}
\int_1^e\ds+\int_e^t[\log(s)]^{n-2}\ds\leq \int_0^t\ds+t\int_1^ts^{-1}[\log(s)]^{n-2}\ds=\rbr{1+(n-1)^{-1}[\log(t)]^{n-1}}t.
\end{equation}
Note also that the first $s$-integral in the second row of \Cref{eq:estimate for gtilde1} is convergent for all $\beta<(1+1/n)4\pi$.

For the second case, we obtain analogously 
\begin{equation}
    \begin{split}
        \norm{\tilde{g}_t^{n,2}(\cdot,\Si)}_{n,\Lambda}
        &\leq \beta^2\tilde{C}N^2
        [B_1(\beta,N,\Lambda)]^{n-2}
        C^n(n-1)n^{n-2}t 
        \\
        &\quad\times\bigg(
        \bar{B}_N\tilde{B}C_1
        \rbr{\frac{1}{1-\frac{\beta}{8\pi}}+1}
        \int_0^1 s^{\frac{\beta}{8\pi}-\frac{3}{2}
        +n\rbr{\half-\frac{\beta}{8\pi}}}\ds
        \\
        &\qquad\qquad
        +|\Lambda|\rbr{\int_1^e\frac{\ds}{s}+\int_e^ts^{-1}[\log(s)]^{n-2}\ds}
        \bigg)
        \\
        &\leq \beta^2\tilde{C}N^2
        \abr{
        \tilde{B}\bar{B}_N
        \rbr{\frac{1}{1-\frac{\beta}{8\pi}}+1}
        \frac{1}{\half-\frac{\beta}{8\pi}}
        +|\Lambda|N}
        \\
        &\quad\times
        [B_1(\beta,N,\Lambda)]^{n-2}
        C^nn^{n-2}
        t\max(1,[\log(t)]^{n-1}),
    \end{split}
\end{equation}
where we used manipulations analogous to those in the estimate for $\tilde{h}_t^n$.
Finally, we have
\begin{equation}
    \begin{split}
        \norm{\tilde{g}_t^{n,3}(\cdot,\Si)}_{n,\Lambda}
        &\leq \beta\tilde{C}
        [B_1(\beta,N,\Lambda)]^{|I|-1}
        [\tilde{B}_1(\beta,N,\Lambda)]^{|I^c|-1}
        C^n(n-1)n^{n-2}
        \\
        &\quad\times\bigg(
        C_1\int_0^1 s^{-\half+n\rbr{\half-\frac{\beta}{8\pi}}}\ds
        +|\Lambda|\int_1^e\ds
        +|\Lambda|\int_e^t[\log(s)]^{n-2}\ds
        \bigg)
        \\
        &\leq\beta\tilde{C}
        \rbr{C_1\sup_{n\leq N}\frac{1}{\half-\frac{n}{n+1}\frac{\beta}{8\pi}}+N|\Lambda|}
        \abr{\max(B_1(\beta,N,\Lambda),
        \tilde{B}_1(\beta,N,\Lambda))}^{n-2}
        C^n(n-1)n^{n-2}
        \\
        &\quad\times
        t\max(1,[\log(t)]^{n-1}),
    \end{split}
\end{equation}
where we used the same manipulations as for the term $\tilde{g}_t^{n,1}$.

We conclude by choosing $\tilde{B}_1(\beta,N,\Lambda)$ as the maximum of
$B_1(\beta,N,\Lambda)$, $\tilde{B}_1'(\beta,N)$ (the constant arising from the $t<1$ contribution, which is, as mentioned, the same as \Cref{eq:def of Btilde1'} in the proof of \Cref{prop:induction statement}), and
\begin{equation}
\begin{split}
4\beta\tilde{C}
\rbr{C_1\sup_{n\leq N}\frac{1}{\half-\frac{n}{n+1}\frac{\beta}{8\pi}}+N|\Lambda|}
+\beta^2\tilde{C}N^2
\abr{
\tilde{B}\bar{B}_N C_1
\rbr{\frac{1}{1-\frac{\beta}{8\pi}}+1}
\frac{1}{\half-\frac{\beta}{8\pi}}
+N|\Lambda|} .
\end{split}
\end{equation}

This completes the proof of differentiability and of the norm estimates. Note that the majorants $\tilde{g}_t^n$ are independent of $\eps$, so the estimates are also applicable in the case $\eps=0$. Thus, $\tilV_t^n(\xx,\Si|m):=\lim_{\eps\to 0}\tilV_t^n(\xx,\Si|m,\eps)$ is also differentiable with respect to $m$ and respects the same estimates. Differentiability on $(0,1)$ implies continuity on $(0,1)$, and by the discussion in \Cref{rem:limits of the Vtn functions}, continuity at $m=0$ holds by definition. This concludes the proof.

\end{subsubsection}
\end{subsection}

\end{section}

\begin{section}{Analysis of the renormalized partition function}
\label{sec:Analysis of the partition function}
The main tasks of this section are to prove \Cref{thm:renormalizability of the partition function} and to provide further tools for the proof of \Cref{thm:existence of the sine Gordon correlation functions,thm:existence of the field}. Recall from \Cref{sec:main results} that \Cref{thm:renormalizability of the partition function} is equivalent to parts 1--3 of \Cref{thm:convergence of the partition function} (with the choice $\eta=-z\1_{x\in \Lambda}$), which is stated below. Part 4 of \Cref{thm:convergence of the partition function} is needed in the proofs of \Cref{thm:existence of the sine Gordon correlation functions,thm:existence of the field}. The three subsections below provide preliminaries for the proof of \Cref{thm:convergence of the partition function}. In \Cref{sec:Uniform bounds for partition function} we prove uniform bounds for the renormalized partition function $\Zcal_R$, \Cref{sec:Expansion of the renormalized partition function} provides an everywhere convergent power series representation for $\Zcal_R$, and in \Cref{sec:convergence of the partition function} we prove that this expansion also has a limit as we first take $\eps\to 0$ and then $m\to 0$.

In what follows, we treat all dimensions on equal footing until the point where estimates derived in the previous section are applied. We first fix $\beta\in(0,4d\pi)$ and choose $N(\beta)\equiv N$ even and such that
$\beta\in[\beta_N,\beta_{N+2})$, where $\beta_N:=(1-1/N)4d\pi$. The restriction
$\beta\in(0,(d+1)2\pi)$ is also in force in this section for $d\geq 2$. Nevertheless, we perform all computations in arbitrary dimension. Thus, if we were to obtain more refined estimates, analogous to the results in \Cref{sec: the renormalized potential} and the techniques used in their proofs, in the full parameter range $(0,8\pi)$ for $d=2$, then all remaining arguments would carry over without difficulty as long as these estimates satisfy the assumptions of \Cref{lem:Gaussian tails} below. All of the results in this section have their analogues in \cite[Section 5]{BaWe24a}.

We begin by defining the renormalized generalized partition function analogously to
\Cref{eq:definition of the renormalized partition function of the sG model,eq:definition of the multiplicative counter terms}: Let $\eta\in L_c^\infty(\R^d\times\{-1,1\})$ and
\begin{equation}
\label{eq:def of renormalized general partition function}
\begin{split}
\Cal{Z}_R(\beta,\eta|m,\eps)
&:=\Cal{Z}_0(\beta,\eta|m,\eps)\prod_{k=1}^{N/2}Z_k(\eta)
\\
&:=\Cal{Z}_0(\beta,\eta|m,\eps)
\\
&\qquad\times\prod_{k=1}^{N/2}
e^{-\frac{1}{(2k)!}
\sum_{\Si\in\{-1,1\}^{2k}}
\1_{\sum_{i=1}^{2k}\sigma_i=0}
\int_{\R^{2kd}}
\Bigl(\prod_{i=1}^{2k}\eta_i\Bigr)
\E_{m,\eps}^{T}\abr{\eps^{-\frac{\beta}{4\pi}}
e^{i\sqrt{\beta}\sigma_j\vp_j}\mid j\in[2k]}
\d\xx}
\\
&\,\,=\E_{m,\sqrt{t}}\abr{e^{-S_t^R(\beta,\eta,\vp|m,\eps)}},
\end{split}
\end{equation}
where $\Zcal_0$ was defined in \Cref{eq:def of the generalized partition function}, and the last equality follows from \Cref{thm:the equivalence of St and Vt}. Recall that the renormalized series $S_t^R$ was defined in \Cref{eq:def of StR}. We have also used the shorthand notations $\eta_i:=\eta(x_i,\sigma_i)$ and $\vp_j:=\vp(x_j)$ as before.

We now state the main convergence result for the partition function.
\begin{theorem}[Convergence of the partition function]
\label{thm:convergence of the partition function}
Let $\beta\in(0,4\pi)$ if $d=1$ and $\beta\in(0,(d+1)2\pi)$ if $d\geq 2$, and let
$\eta\in L_c^\infty(\R^d\times\{-1,1\})$. Then the following statements hold.
\begin{enumerate}
\item The limits
\begin{equation}
\begin{split}
\lim_{\eps\to 0}\Zcal_R(\beta,\eta|m,\eps)
&=:\Zcal_R(\beta,\eta|m),
\\
\lim_{m\to 0}\lim_{\eps\to 0}\Zcal_R(\beta,\eta|m,\eps)
&=:\Zcal_R(\beta,\eta)
\end{split}
\end{equation}
exist and are finite.

\item The maps $z\mapsto \Zcal_R(\beta,z\eta|m)$ and $z\mapsto \Zcal_R(\beta,z\eta)$ define entire functions of $z\in\C$. Moreover, $\Zcal_R(\beta,z\eta)$ is an even function, i.e.,
$\Zcal_R(\beta,z\eta)=\Zcal_R(\beta,-z\eta)$.

\item If $\eta(x,1)=\overline{\eta(x,-1)}$ for a.a.\ $x\in\R^d$, then
$\Zcal_R(\beta,\eta|m)>0$ and $\Zcal_R(\beta,\eta)>0$.

\item Suppose that $\eta_\alpha$ depends on complex parameters $\alpha\in\C^k$ for some $k\in\N$, and that
$\eta_\alpha(\cdot|m,\eps)$ depends on $m,\eps\in(0,1)$ and the same parameters in such a way that, for every compact set $K\subset\C^k$, there exists a compact set $\Lambda\in\R^d$ such that the following three statements hold
\begin{equation}
\bigcup_{\alpha\in K}\supp(\eta_\alpha)\subset \Lambda\times \{-1,1\} \quad \text{ and } \quad \bigcup_{\alpha\in K}\bigcup_{m,\eps\in (0,1)}\supp(\eta_\alpha(\cdot,\cdot|m,\eps))\subset \Lambda\times \{-1,1\},
\end{equation}
\begin{equation}
\sup_{\alpha\in K}\norm{\eta_\alpha}_{L^\infty(\R^d\times\{-1,1\})}<\infty \quad \text{ and } \quad \sup_{m,\eps\in (0,1)}\sup_{\alpha\in K}\norm{\eta_\alpha(\cdot,\cdot|m,\eps)}_{L^\infty(\R^d\times\{-1,1\})}<\infty,
\end{equation}
and 
\begin{equation}
\lim_{m\to 0}\limsup_{\eps\to 0}
\sup_{\alpha\in K}
\norm{\eta_\alpha(\cdot,\cdot|m,\eps)-\eta_\alpha}_{L^\infty(\R^d\times\{-1,1\})}
=0.
\end{equation}
Then
\begin{equation}
\lim_{m\to 0}\limsup_{\eps\to 0}\sup_{\alpha\in  K}
\bigl|\Zcal_R(\beta,\eta_\alpha(\cdot,\cdot|m,\eps)|m,\eps)
-\Zcal_R(\beta,\eta_\alpha)\bigr|
=0.
\end{equation}
An analogous statement also holds for fixed $m\in(0,1)$.
\end{enumerate}
\end{theorem}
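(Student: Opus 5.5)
The plan follows \cite[Section 5]{BaWe24a} and the roadmap announced at the start of this section. Everything rests on the representation $\Zcal_R(\beta,\eta|m,\eps)=\E_{m,\sqrt{t}}[e^{-S_t^R(\beta,\eta,\vp|m,\eps)}]$ from \Cref{eq:def of renormalized general partition function}, valid for $t<t^*(\eta)$.

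\textbf{Step 1: uniform bounds.} Fix $t<t^*$ independent of $m,\eps$, which is possible since $t^*$ in \Cref{eq:def of tstar} is bounded below for $m<1$ once $\norm{\eta}_\infty$ is fixed. By the proof of \Cref{thm:the equivalence of St and Vt} part 1, combined with \Cref{lem:estimate for the field dependent part of the counter term}, \Cref{eq:uniform bound for Acalphi} and \Cref{eq:uniform bound for Acaln}, we get
\[
|S_t^R(\beta,\eta,\vp|m,\eps)|\le C_1+C_2\bigl([\sqrt{\beta t}\norm{\grad\vp}_{L^\infty(B_\Lambda)}]\vee 1\bigr),
\]
with constants independent of $m,\eps$. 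The field $\vp_{m,\sqrt{t}}$ has covariance $K^m_{\sqrt t}$, which is smooth, and its gradient has covariance that is bounded uniformly in $m$. Indeed, derivatives of $C_s^m$ are integrable at $s=\infty$ by \Cref{eq:estimate for derivatives of the massive heat kernel}, with decay $s^{-3/2}$ for $d=1$. By a Gaussian tail (Fernique/Dudley) estimate, $\E[e^{c\norm{\grad\vp}_{L^\infty(B_\Lambda)}}]<\infty$ uniformly in $m$. This is the role of the \Cref{lem:Gaussian tails} mentioned in the section. It yields $\sup_{m,\eps}|\Zcal_R|<\infty$, uniformly over $\eta$ in bounded sets with support in a fixed $\Lambda$.

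\textbf{Step 2: everywhere convergent expansion.} I would expand $\Zcal_R(\beta,z\eta|m,\eps)=\sum_q a_q(m,\eps)z^q$. Since $\Zcal_0$ is entire in $z$ for fixed $\eps$ and each $Z_k(z\eta)$ is the exponential of a polynomial in $z$, this is an entire function. Step 1 applied to $z\eta$ with $|z|\le R$ gives a uniform bound $M_R$ on the circle $|z|=R$, where $t$ is chosen depending on $R$. Cauchy's estimates then give $|a_q|\le M_R R^{-q}$ uniformly in $m,\eps$, for every $R$. It therefore suffices to prove convergence of each coefficient $a_q$.

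Each $a_q$ is a finite combination of free-field moments of the charges, corrected by the counterterms. Equivalently, I would write $\log$-free identities: by the cumulant expansion, $a_q$ is a polynomial in the integrated cumulants $\int\prod\eta_i\,\E^T_{m,\eps}[\cdots]$ of order $\le q$, with the neutral cumulants of order $\le N$ removed for the even ones and replaced by their non-neutral parts. These are exactly $\tilV_\infty^n$ paired with $\eta$. I would split $\tilV_\infty^n=\tilV_1^n+[\tilV_\infty^n-\tilV_1^n]$ and use \Cref{prop:induction statement} together with \Cref{cor:convergence of the Vinfty-Vt terms}.

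\textbf{Main obstacle.} The hard part is that neutral cumulants with $n\le N$ are not integrable individually. This is the expected blocking point, and the safest fallback is to avoid the coefficient route and instead prove convergence of $\E[e^{-S_t^R}]$ directly. I would do this via dominated convergence in $\vp$. I would show that $S_t^R(\beta,\eta,\vp|m,\eps)\to S_t^R(\beta,\eta,\vp)$ for each smooth $\vp$, by dominated convergence termwise using $H_t^nh_t^n$, the pointwise limits in \Cref{rem:limits of the Vtn functions}, and \Cref{eq:limits of the difference Vinfty-Vt}. I would also couple the fields $\vp_{m,\sqrt t}$ over $m$ (for example via white noise) so that they converge in $C^1(B_\Lambda)$ as $m\to0$ after subtracting the zero mode. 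The divergent constant mode must cancel: for neutral configurations $e^{i\sqrt\beta\sum\sigma_j\vp_j}$ is invariant under $\vp\mapsto\vp+c$, and non-neutral terms are bounded by $H\equiv1$ majorants with $t<m^{-2}$ effects absorbed. After decomposing $\vp_{m,\sqrt t}=\vp^0+Y_m$ with $Y_m$ the variance-$\sim\log m^{-1}$ constant mode, averaging over $Y_m$ kills all non-neutral terms in the limit. This gives both the limit and evenness in $z$ (only neutral, hence even-order, terms survive). Entirety follows from locally uniform limits of entire functions (Vitali/Montel), using Step 1. Part 4 follows by the same argument with $\eta_\alpha(\cdot|m,\eps)$, since all bounds depend only on $\norm{\eta}_\infty$ and $\Lambda$, and $S_t^R$ is Lipschitz in $\eta$ in $L^\infty$ on bounded sets.

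\textbf{Positivity.} Under $\eta(x,1)=\overline{\eta(x,-1)}$ the potential $V_0$ is real, so $\Zcal_0>0$, and the counterterms are real exponentials. The limit is therefore $\ge0$. To get strict positivity I would use Jensen: $\E[e^{-S_t^R}]\ge e^{-\E S_t^R}$, with $\E|S_t^R|$ bounded uniformly by Step 1.
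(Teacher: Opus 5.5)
Your Steps 1 and 2 are the paper's \Cref{prop:uniform bounds} and the Cauchy bound \Cref{eq:uniform bound for Mcal}. The coefficient route you abandon is not a dead end. The paper proves convergence of each coefficient through the second, scale-decomposed representation \Cref{eq:second representation for tildeMcal}:
the $\vp$-dependent blocks $\tilV_t^{|\pi|}\,\E_{m,\sqrt t}[\prod(e^{i\sqrt\beta\sum\sigma_j\vp_j}-\1)]$ get $m,\eps$-independent majorants from $H_t^nh_t^n$;
the blocks $\tilV_t-\tilV_\infty$ factor off and converge by \Cref{cor:convergence of the Vinfty-Vt terms};
and the limit is identified with the integral of the explicit pointwise limit of the first representation (\Cref{lem:pointwise limits of tildeMcal and their integrals}).

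Your fallback is a genuinely different route, but its decisive $m\to0$ step has a gap as written.

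\emph{First gap: the decomposition.} There is no decomposition $\vp_{m,\sqrt t}=\vp^0+Y_m$ with $Y_m$ an exactly constant field independent of $\vp^0$. The value $\vp_{m,\sqrt t}(0)$ is correlated with the increments, and a white-noise coupling only makes the increments converge. So "averaging over $Y_m$" is not available as stated.

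\emph{Second gap: what the averaging does.} Averaging over the zero mode does not kill the non-neutral terms of $S_t^R$. Since $e^{-S_t^R}$ is nonlinear, products of non-neutral terms with opposite net charge are neutral and survive. For example, for $S=ae^{i\sqrt\beta Y}+\bar a e^{-i\sqrt\beta Y}$ the average tends to $I_0(2|a|)$, not to $1$. All charges are integers, so $c\mapsto S_t^R(\beta,\eta,\vp+c)$ is $2\pi/\sqrt\beta$-periodic. What the diverging zero mode actually does is replace $e^{-S_t^R}$ by its average over one period. The candidate limit is therefore
\[
\Zcal_R(\beta,\eta)=\E\Big[\frac{\sqrt\beta}{2\pi}\int_0^{2\pi/\sqrt\beta}e^{-S_t^R(\beta,\eta,\psi+c)}\d c\Big],
\]
where $\psi$ is the massless increment field on $B_\Lambda$ and $S_t^R$ is built from the massless $\tilV_t^n$ and the massless limits of the $\Acal_n^\infty$.

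\emph{How to close it.} Split once more, $\vp_{m,\sqrt t}\overset{d}{=}\vp_{m,(t,T_m)}+\vp_{m,\sqrt{T_m}}$, with $T_m\to\infty$ and $m^2T_m\to0$, and set $Y_m:=\vp_{m,\sqrt{T_m}}(0)$. Then:
\begin{itemize}
\item $Y_m$ is independent of $\vp_{m,(t,T_m)}$ and has variance $\frac{1}{4\pi}E_1(m^2T_m)\to\infty$;
\item $\vp_{m,\sqrt{T_m}}-Y_m\to0$ in $C^1(B_\Lambda)$ in probability, since its gradient has variance $O(T_m^{-1})$;
\item the increments of $\vp_{m,(t,T_m)}$ converge in law in $C^1(B_\Lambda)$ to the massless ones.
\end{itemize}
The period average is shift invariant, so the diverging constant $\vp_{m,(t,T_m)}(0)$ drops out. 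Combine this with:
\begin{itemize}
\item local Lipschitz continuity of $\vp\mapsto S_t^R$ in $C^1(B_\Lambda)$, via $|e^{ia}-e^{ib}|\le|a-b|$ and the $H_t^n$ bound for neutral terms;
\item uniform convergence of $S_t^R(\cdot|m)$ on $C^1$-bounded sets, from the $m$-independent majorants for $t<m^{-2}$ and the Corollary;
\item the uniform exponential moments from your Step 1;
\item total-variation convergence of a wrapped Gaussian of diverging variance to the uniform law.
\end{itemize}
Together these give part 1. Evenness follows from $S_t^R(\beta,-\eta,\vp)=S_t^R(\beta,\eta,\vp+\pi/\sqrt\beta)$, because the counterterm blocks have even size. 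Your Vitali, Jensen and Lipschitz-in-$\eta$ arguments for parts 2--4 then go through.

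\emph{Comparison.} Your repaired route gives a probabilistic formula for the massless limit and avoids the combinatorics of two kernel representations. The paper's route instead produces the explicit everywhere convergent Taylor series, with kernels $\widetilde{\Mcal}_L$ built from massless free-field correlations. That makes evenness immediate, and the series is reused later, e.g.\ for the $z$-derivatives in \Cref{thm:existence of the sine Gordon correlation functions}.
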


For the proof of \Cref{thm:convergence of the partition function}, we require several preliminary results. As a first step, one could invoke \cite[Lemma~5.2]{BaWe24a}, since the proof given there readily extends to dimensions other than $d=2$. However, we anticipate that in higher dimensions, and for $\beta$ beyond $(d+1)2\pi$—in particular for $\beta\in[6\pi,8\pi)$ when $d=2$—additional input will be necessary. We therefore present a slightly more general statement in arbitrary dimension, postponing its proof to Appendix \ref{sec:proof of Gaussian tails}, since it closely follows the proof of \cite[Lemma~5.2]{BaWe24a}.

Recall that $\vp_{m,\sqrt{t}}$ is the centered Gaussian process on $\R^d$ with covariance
\begin{equation}
K_{\sqrt{t}}^m(x,y):=\int_t^\infty C_s^m(x,y)\ds,
\end{equation}
whose law is supported on $C^\infty(\R^d)$. We denote expectation with respect to this law by $\E_{m,\sqrt{t}}$. In the lemma below, we restrict $\vp_{m,\sqrt{t}}$ to a compact set $\Lambda\subset\R^d$, but for simplicity, we write
$\vp_{m,\sqrt{t}}|_\Lambda\equiv\vp_{m,\sqrt{t}}$.

For a sufficiently smooth function $f\colon\R^d\to\R$, we define
\begin{equation}
\begin{split}
\norm{f}_{C^k(\Lambda)}
&:=\sum_{|\nu|\leq k}p_{\nu,\Lambda}(f),
\\
p_{\nu,\Lambda}(f)
&:=\sup_{x\in\Lambda}\bigl|(\partial^\nu f)(x)\bigr|,
\end{split}
\end{equation}
where $\nu$ is a multi-index and the sum runs over all multi-indices of order at most $k\geq 1$.

We then have the following generalization of \cite[Lemma~5.2]{BaWe24a}.

\begin{lemma}
\label{lem:Gaussian tails}
Let $\Lambda\subset\R^d$ be compact, $B>0$, and $1\leq\alpha<2$, $k\in\N$ and $t\in (0,\infty)$. Then there exists a constant
$C_{k,B,\alpha,t,\Lambda}<\infty$ such that
\begin{equation}
\E_{m,\sqrt{t}}\abr{e^{B\norm{\vp}_{C^k(\Lambda)}^\alpha}}
\leq C_{k,B,\alpha,t,\Lambda},
\end{equation}
where the constant $C_{k,B,\alpha,t,\Lambda}>0$ is independent of $m$.
\end{lemma}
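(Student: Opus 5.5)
The plan is to reduce the exponential moment of the $C^k(\Lambda)$ norm to Gaussian concentration for finitely many suprema of centred Gaussian processes. The first step is to bound the norm by a sum of such suprema. Write $\vp\equiv\vp_{m,\sqrt t}$ and fix a base point $x_0\in\Lambda$. For each multi-index $\nu$ with $|\nu|\le k$, the derivative $\partial^\nu\vp$ is a centred smooth Gaussian process on $\Lambda$ with covariance $\partial^\nu_x\partial^\nu_y K^m_{\sqrt t}$. Since $1\le\alpha<2$ and the sum has finitely many terms, the elementary inequality $(\sum_{i} a_i)^\alpha\le M^{\alpha-1}\sum_{i} a_i^\alpha$ together with Hölder's inequality reduces the claim to showing, for each $\nu$ and each $B'>0$, that
\begin{equation}
\E_{m,\sqrt t}\bigl[e^{B' p_{\nu,\Lambda}(\vp)^\alpha}\bigr]\le C
\end{equation}
with $C$ independent of $m$. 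For $\nu=0$ I would first write
\begin{equation}
\sup_\Lambda|\vp|\le|\vp(x_0)|+\diam(B_\Lambda)\,\sup_{B_\Lambda}|\nabla\vp|,
\end{equation}
so that the $\nu=0$ term splits into a single Gaussian variable and a first-order derivative term.

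The second step treats the derivative terms, that is, $|\nu|\ge1$, which are the easy part. By the heat-kernel derivative bound \Cref{eq:estimate for derivatives of the massive heat kernel} we have
\begin{equation}
\sup_{x}\E|\partial^\nu\vp(x)|^2\le C\int_t^\infty s^{-1-|\nu|}\ds=C_{\nu}t^{-|\nu|},
\end{equation}
and the same bound applied with one more derivative gives $\E|\partial^\nu\vp(x)-\partial^\nu\vp(y)|^2\le C_t|x-y|^2$, both uniformly in $m$ because the mass enters only through the factor $e^{-m^2s}\le1$. Dudley's entropy bound on the compact set $\Lambda$ then gives $\E\sup_\Lambda|\partial^\nu\vp|\le C_{t,\Lambda}$, and the Borell--TIS inequality gives Gaussian tails $\P(\sup|\partial^\nu\vp|>C+u)\le e^{-u^2/(2\sigma^2)}$ with $\sigma^2\le C_\nu t^{-|\nu|}$. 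Since $\alpha<2$, integrating these tails against $e^{B'u^\alpha}$ converges, which yields an $m$-independent constant $C_{k,B,\alpha,t,\Lambda}$ for every term with $|\nu|\ge1$.

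The third step is the zero-mode term $|\vp(x_0)|$, and I expect this to be the main obstacle. Its variance is
\begin{equation}
K^m_{\sqrt t}(0)=\int_t^\infty\frac{e^{-m^2s}}{4\pi s}\ds=\frac{1}{4\pi}\log\frac{1}{m^2t}+\bigO(1),
\end{equation}
which diverges as $m\to0$. Consequently $\E e^{B|\vp(x_0)|^\alpha}$ is finite for every $m>0$, since $\alpha<2$, but it is not bounded uniformly as $m\to0$. The first thing I would try is to check whether the intended uniformity is over $m\in[m_0,1)$ for a fixed $m_0>0$, or over $t$ with $m^2t\gtrsim1$. In that regime the variance is bounded by $C(1+\log^+(1/(m_0^2t)))$ and the Borell--TIS argument above closes immediately. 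For the genuinely massless limit I would instead verify that every application of the lemma, namely the counterterm estimate \Cref{lem:estimate for the field dependent part of the counter term}, involves only $\norm{\nabla\vp}$. These applications use neutral charge configurations, in which the constant mode cancels, so in that setting the $\nu=0$ contribution can be replaced by $\sup_\Lambda|\vp-\vp(x_0)|$, which the second step controls uniformly in $m$. If the full norm including the zero mode is really required uniformly down to $m=0$, the logarithmic divergence of the variance shows that this step cannot be completed as stated, and the statement must be read with one of these two restrictions.
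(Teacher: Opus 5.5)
Your analysis is correct, and where it departs from the paper, the paper is the one in error. For $1\le|\nu|\le k$ your argument is the paper's proof:
\begin{itemize}
\item reduce by Jensen and H\"older to a single seminorm $p_{\nu,\Lambda}$;
\item feed an $m$-uniform increment bound into Dudley's entropy bound;
\item apply Borell--TIS with an $m$-uniform variance;
\item integrate the Gaussian tail against $r^{\alpha-1}e^{Br^\alpha}$, using $\alpha<2$.
\end{itemize}
Your Lipschitz increment bound $C_t|x-y|^2$ is sharper than the paper's $|x-y|^{1/2}$, but either suffices.

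The paper treats $\nu=0$ the same way and asserts $\sigma_\nu^2=\int_t^\infty(\partial_1^\nu\partial_2^\nu C_s^m)(0)\,\mathrm{d}s\le\bar B_{t,\nu}$ uniformly in $m$. For $\nu=0$ the heat-kernel estimate only gives $\int_t^\infty Ce^{-m^2s}s^{-1}\,\mathrm{d}s$. In fact $\sigma_0^2=K^m_{\sqrt t}(0)=\frac{1}{4\pi}E_1(m^2t)\to\infty$. The Dudley step survives for $\nu=0$, since increments are uniformly controlled, but the Borell--TIS step does not.

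You need not hedge: the lemma as stated is false. We have $\lVert\vp\rVert_{C^k(\Lambda)}\ge|\vp(x_0)|$ for $x_0\in\Lambda$, and $|y|^\alpha\ge|y|-1$ for $\alpha\ge1$. Hence
\[
\E_{m,\sqrt t}\bigl[e^{B\lVert\vp\rVert_{C^k(\Lambda)}^\alpha}\bigr]\ge e^{-B}\,\E_{m,\sqrt t}\bigl[e^{B\vp(x_0)}\bigr]=e^{-B}\exp\Bigl(\frac{B^2}{8\pi}E_1(m^2t)\Bigr),
\]
which grows like $(m^2t)^{-B^2/(8\pi)}$ as $m\to0$ with $t$ fixed.

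Of your two repairs, restricting to $m\ge m_0$ is of no use to the paper, whose main interest is $m\to0$ at fixed $t$. The right correction is your second one: sum only over $1\le|\nu|\le k$, or equivalently replace $p_{0,\Lambda}(\vp)$ by $\sup_\Lambda|\vp-\vp(x_0)|$. Your second step proves exactly this version, and it is all the paper ever uses. Every application involves only $\lVert\nabla\vp\rVert_{L^\infty(B_\Lambda)}$:
\begin{itemize}
\item the uniform bounds on $\Zcal_R$;
\item the Fubini step in the expansion of $\Zcal_R$;
\item the majorants in the lemma on the limits of $\widetilde{\Mcal}_L$.
\end{itemize}
This is because the constant mode cancels in $e^{i\sqrt{\beta}\sum_j\sigma_j\vp_j}-1$ for neutral configurations.
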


\begin{remark}
By \Cref{eq:estimate for derivatives of the massive heat kernel},
\begin{equation}
\label{eq:covariance for the derivative fields}
K_{\sqrt{t}}^{m,\nu}(x,y)
:=(\partial_1^\nu\partial_2^\nu K_{\sqrt{t}}^m)(x,y)
=\int_t^\infty(\partial_1^\nu\partial_2^\nu C_s^m)(x,y)\ds .
\end{equation}
Thus, $\partial^\nu\vp_{m,\sqrt{t}}$ is a smooth centered Gaussian field with covariance
\Cref{eq:covariance for the derivative fields}, by Kolmogorov--Chentsov type arguments (see, for example, \cite[Appendix B]{LaRhVa15a}).
\end{remark}

\begin{remark}
Note that $\lim_{t\to 0}C_{k,B,\alpha,t,\Lambda}$ is not finite, since the field ceases to be differentiable, or even an ordinary function, in this limit. Similarly, $\lim_{\alpha\to 2}C_{k,B,\alpha,t,\Lambda}$ cannot be finite for all values of $B$ by Gaussianity. Our proof also breaks down in the limit $|\Lambda|\to\infty$. However, none of these divergences is restrictive for our application.
\end{remark}

\begin{subsection}{Uniform bounds for $\Zcal_R$}
\label{sec:Uniform bounds for partition function}
We now turn to proving uniform bounds for $\Zcal_R$.

\begin{proposition}[Uniform bounds]
\label{prop:uniform bounds}
Fix a compact set $\Lambda\subset\R^d$ and let $\beta\in(0,4\pi)$ for $d=1$ and
$\beta\in(0,(d+1)2\pi)$ for $d\geq 2$. Then the following statements hold.
\begin{enumerate}
\item For any fixed $M>0$,
\begin{equation}
\sup_{m,\eps\in (0,1)}
\sup_{\substack{\eta\in L_c^\infty(\R^d\times \{-1,1\}):\\
\supp(\eta)\subset \Lambda\times \{-1,1\}\\
\norm{\eta}_\infty\leq M}}
|\Zcal_R(\beta,\eta|m,\eps)|<\infty .
\end{equation}

\item Let $\overline{L_c^\infty}(\R^d\times \{-1,1\})$ denote the set of
$\eta\in L_c^\infty(\R^d\times\{-1,1\})$ satisfying
$\overline{\eta(\cdot,1)}=\eta(\cdot,-1)$. Then $\Zcal_R$ is positive, and for all $M>0$ we have
\begin{equation}
 \inf_{m,\eps\in (0,1)}
 \inf_{\substack{\eta\in \overline{L_c^\infty}(\R^d\times \{-1,1\}):\\
 \supp(\eta)\subset \Lambda\times \{-1,1\}\\
 \norm{\eta}_\infty\leq M}}
 \Zcal_R(\beta,\eta|m,\eps)>0 .
\end{equation}

\item There exists $\delta\equiv\delta_{\Lambda,\beta}>0$, independent of $\eps$ and $m$, such that
\begin{equation}
\inf_{m,\eps\in (0,1)}
\inf_{\substack{\eta\in L_c^\infty(\R^d\times \{-1,1\}):\\
\supp(\eta)\subset \Lambda\times \{-1,1\}\\
\norm{\eta}_\infty\leq \delta}}
|\Zcal_R(\beta,\eta|m,\eps)|>0 .
\end{equation}
\end{enumerate}
\end{proposition}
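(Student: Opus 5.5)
The plan is to work with the representation $\Zcal_R(\beta,\eta|m,\eps)=\E_{m,\sqrt{t}}[e^{-S_t^R(\beta,\eta,\vp|m,\eps)}]$ from \Cref{eq:def of renormalized general partition function}. We fix $t<t^*$, where $t^*$ is chosen as in \Cref{thm:the equivalence of St and Vt} with $\norm{\eta}_\infty$ replaced by $M$, so that $t$ depends only on $M,\beta,\Lambda$. The key intermediate bound is a pathwise estimate, uniform in $m,\eps\in(0,1)$ and in admissible $\eta$:
\begin{equation*}
|S_t^R(\beta,\eta,\vp|m,\eps)|\leq A_1+A_2\norm{\vp}_{C^1(B_\Lambda)}.
\end{equation*}
This follows from the proof of part 1 of \Cref{thm:the equivalence of St and Vt}. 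The terms $\Acal_n^\infty$ are bounded by \Cref{eq:uniform bound for Acaln} via \Cref{cor:convergence of the Vinfty-Vt terms}. The terms $\Acal_n^\vp$ are bounded by \Cref{eq:uniform bound for Acalphi}, which is affine in $\norm{\grad\vp}_{L^\infty(B_\Lambda)}$ (the field dependence enters only for $n\leq N$ through \Cref{lem:estimate for the field dependent part of the counter term}) times a geometric series in $\hat B M t^{d/2-\beta/(8\pi)}<1$. The constants depend only on $M,\Lambda,\beta,t$.

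\textbf{Part 1.} We use $|\Zcal_R|\leq \E_{m,\sqrt t}[e^{|S_t^R|}]\leq e^{A_1}\E_{m,\sqrt t}[e^{A_2\norm{\vp}_{C^1(B_\Lambda)}}]$ and apply \Cref{lem:Gaussian tails} with $\alpha=1$, $k=1$ and compact set $B_\Lambda$. The resulting bound is independent of $m$ and $\eps$.

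\textbf{Part 2.} For $\eta$ with $\overline{\eta(\cdot,1)}=\eta(\cdot,-1)$, $V_0$ is real, so $\Zcal_0>0$. The counterterms $Z_k(\eta)$ are exponentials of real quantities: the neutral cumulants are real and invariant under the global flip $\Si\to-\Si$, which conjugates $\prod\eta_i$. Hence $\Zcal_R>0$. For the lower bound we apply Jensen's inequality, $\Zcal_R\geq \exp(-\E_{m,\sqrt t}[S_t^R])$; reality of $S_t^R$ follows by the same conjugation symmetry. Then
\begin{equation*}
\E_{m,\sqrt t}[S_t^R]\leq A_1+A_2\,\E_{m,\sqrt t}\norm{\vp}_{C^1(B_\Lambda)},
\end{equation*}
which is bounded uniformly by \Cref{lem:Gaussian tails}.

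\textbf{Part 3.} For general complex $\eta$, we use $|e^{-S}-1|\leq |S|e^{|S|}$. This gives
\begin{equation*}
|\Zcal_R-1|\leq \E_{m,\sqrt t}\bigl[|S_t^R|e^{|S_t^R|}\bigr].
\end{equation*}
We then show that $A_1,A_2\to0$ as $\norm{\eta}_\infty\to0$, with $t$ held fixed. Each $\Acal_n$ carries a factor $\norm{\eta}_\infty^n$ with $n\geq1$, and the geometric series bound is then $O(\norm{\eta}_\infty)$. Combining with Cauchy--Schwarz and \Cref{lem:Gaussian tails}, the right-hand side is $\leq C\delta$ uniformly in $m,\eps$, so $|\Zcal_R|\geq 1/2$ for $\delta$ small.

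\textbf{Main obstacle.} The delicate step is making sure that the choice of $t$ and all constants (in particular the $\Acal_n^\infty$ bound for $d=1$, which relies on mass-uniform majorants from \Cref{cor:convergence of the Vinfty-Vt terms}) are uniform over the whole class of $\eta$, not just a fixed $\eta$. This requires tracking that those bounds depend on $\eta$ only through $\norm{\eta}_\infty$ and $\Lambda$.
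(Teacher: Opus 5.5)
Your proposal is correct. Parts 1 and 2 follow the paper's proof closely. You use the same representation $\Zcal_R=\E_{m,\sqrt t}[e^{-S_t^R}]$, the same pathwise bound from the proof of part 1 of \Cref{thm:the equivalence of St and Vt}, which is affine in $\norm{\nabla\vp}_{L^\infty(B_\Lambda)}$ once $t$ is fixed with $\hat B(\beta,N)Mt^{\frac{d}{2}-\frac{\beta}{8\pi}}<1$, and you finish with \Cref{lem:Gaussian tails}. In Part 2 your use of Jensen differs slightly from the paper's:
\begin{itemize}
\item you apply it to $x\mapsto e^{-x}$ to get $\Zcal_R\geq e^{-\E[S_t^R]}$, which needs only a first moment of the field norm;
\item the paper uses $e^{-S}\geq e^{-|S|}$ together with $\E[X^{-1}]\geq\E[X]^{-1}$ to get $\Zcal_R\geq(\E[e^{|S_t^R|}])^{-1}$, and then reuses the Part 1 bound.
\end{itemize}
These are interchangeable. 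Your $\Si\to-\Si$ argument for why $S_t^R$ is real makes explicit what the paper only asserts.

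Part 3 is where you take a genuinely different route. The paper argues softly:
\begin{itemize}
\item $z\mapsto\Zcal_R(\beta,z\eta|m,\eps)$ is entire;
\item Part 1 bounds it uniformly in $m,\eps$, in $|z|\leq R$, and in $\eta$ with $\norm{\eta}_\infty=1$;
\item Cauchy's formula turns this into a uniform bound on the $z$-derivative;
\item since $\Zcal_R(\beta,0|m,\eps)=1$, $|\Zcal_R|$ stays bounded below for $|z|<\delta$, and scaling finishes.
\end{itemize}
You instead reopen the pathwise estimate at a frozen $t$. This is legitimate, because the constraint on $t$ only weakens as $\norm{\eta}_\infty$ decreases, and $t<1<m^{-2}$. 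Every $\Acal_n$ carries $\norm{\eta}_\infty^n$ with $n\geq1$, so $|S_t^R|\leq C\norm{\eta}_\infty(1+\norm{\nabla\vp}_{L^\infty(B_\Lambda)})$. Then $|e^{-S}-1|\leq|S|e^{|S|}$ and the Gaussian tail lemma give $|\Zcal_R-1|\leq C\norm{\eta}_\infty$.

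The paper's route is more economical: it needs nothing beyond Part 1 and analyticity, and never has to recheck how the constants depend on $\eta$. Yours gives an explicit Lipschitz-type bound at $\eta=0$ without complex analysis. Its price is exactly the bookkeeping you flag as the main obstacle. That bookkeeping does go through: the $\Acal_n^\infty$ bounds depend on $\eta$ only through $\norm{\eta}_\infty^n$ and $\Lambda$, and the $\Acal_n^\vp$ geometric series is $O(\norm{\eta}_\infty)$ at fixed $t$.
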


\begin{proof}
For the first claim, fix $M>0$, choose $\beta$ in the appropriate interval, and let
$N\equiv N(\beta)$ as usual. Then, by the bounds \Cref{eq:uniform bound for Acaln,eq:uniform bound for Acalphi} derived in the proof of the first claim in
\Cref{thm:the equivalence of St and Vt}, we obtain
\begin{equation}
\begin{split}
|\Zcal_R(\beta,\eta|m,\eps)|
&=\abs{\E_{m,\sqrt{t}}\abr{e^{-S_t^R(\beta,\eta,\vp|m,\eps)}}}
\\
&\leq \E_{m,\sqrt{t}}\abr{e^{|S_t^R(\beta,\eta,\vp|m,\eps)|}}
\\
&\leq \exp\!\rbr{\sum_{n=1}^N|\Acal_n^\infty(\beta,t|m,\eps)|}
\E_{m,\sqrt{t}}\abr{\exp\!\rbr{\sum_{n=1}^\infty|\Acal_n^\vp(\beta,t|m,\eps)|}}
\\
&\leq \exp\!\rbr{NC(\beta,N,t,\eta)}
\\
&\quad\times
\E_{m,\sqrt{t}}\abr{\exp\!\rbr{
\bigl(2[\sqrt{\beta t}\norm{\grad \vp}_{L^\infty(B_\Lambda)}\vee 1]+1\bigr)
\frac{t^{-\frac{d}{2}}}{B_d(\beta,N)}
\sum_{n=1}^\infty
\bigl(\hat{B}(\beta,N)M t^{\frac{d}{2}-\frac{\beta}{8\pi}}\bigr)^n }} .
\end{split}
\end{equation}
The renormalized series $S_t^R$ and the summands $\Acal_n^\infty$ and $\Acal_n^\vp$
were defined in \Cref{eq:def of StR} and \Cref{eq:splitting of the counter terms},
respectively. The constants $B_d(\beta,N)$,
$\hat{B}(\beta,N)=2eB_d(\beta,N)C$, and $C(\beta,N,t,\eta)$ were derived in the proofs
of \Cref{prop:induction statement,cor:convergence of the Vinfty-Vt terms} and are
independent of $m$ and $\eps$. More precisely, $B_d$ is defined in \Cref{eq:def of B1betaN} and \Cref{eq:def of the constant BdbetaN for d larger than 2 and small t} for the cases $d=1$ and $d\geq 2$, respectively, and the precise form of $C(\beta,N,t,\eta)$ is irrelevant to our analysis. We also used the assumption $\norm{\eta}_\infty\leq M$. Recall that $B_\Lambda:=\overline{B_{2R(\Lambda)}(0)}$, where $R(\Lambda)$ is chosen so that
$\Lambda\subset B_{R(\Lambda)}(0)$ and $\Lambda$ is compact with
$\supp(\eta)\subset\Lambda$. We now choose $t>0$ such that
$\hat{B}(\beta,N)M t^{\frac{d}{2}-\frac{\beta}{8\pi}}<1$, so that the series above converges geometrically. The first claim then follows from \Cref{lem:Gaussian tails}.

For the second claim, observe that if
$\overline{\eta(\cdot,1)}=\eta(\cdot,-1)$, then the exponent in
$\Zcal_R=\E_{m,\sqrt{t}}[e^{-S_t^R}]$ is real-valued. Hence,
$|e^{-x}|\geq e^{-|x|}$, and arguing as above, together with
\Cref{lem:Gaussian tails} and Jensen’s inequality in the form
$\E[X^{-1}]\geq \E[X]^{-1}$, yields the desired lower bound.

Finally, we turn to the third claim. The proof is completely analogous to the proof of \cite[Proposition 5.3 (ii)]{BaWe24a}, but we reproduce it here for the convenience of the reader. First fix $\eta$ with $\norm{\eta}_\infty=1$ and consider
the function $z\mapsto \Zcal_R(\beta,z\eta|m,\eps)$. Note that
$\Zcal_0(\beta,\eta|m,\eps)=\E_{m,\eps}\!\rbr{\exp(-V_0(\beta,\eta,\vp|\eps))}$, where,
for fixed $m,\eps\in(0,1)$,
\begin{equation}
V_0(\beta,\eta,\vp_{m,\eps}|\eps)
:=\eps^{-\frac{\beta}{4\pi}}
\sum_{\sigma\in\{-1,1\}}
\int_{\R^d}\eta(x,\sigma)e^{i\sqrt{\beta}\sigma\vp_{m,\eps}(x)}\dx
\end{equation}
is a deterministically bounded random variable. By Fubini’s theorem and Morera’s
theorem, it follows that $z\mapsto \Zcal_R(\beta,z\eta|m,\eps)$ is entire (for every
$\omega$ in the underlying probability space).

Let $K\subset\C$ be compact and choose $R>0$ such that $K\subset B_R(0)$. Then, for
$z\in K$, the Cauchy integral formula yields
\begin{equation}
\der{z}\Zcal_R(\beta,z\eta|m,\eps)
=\frac{1}{2\pi i}
\oint_{\partial B_R(0)}
\frac{\Zcal_R(\beta,\zeta\eta|m,\eps)}{(\zeta-z)^2}\d\zeta .
\end{equation}
By the first part, $\Zcal_R(\beta,\zeta\eta|m,\eps)$ is uniformly bounded (in $m,\eps\in (0,1)$ and $\zeta\in \overline{B_R(0)}$), and the
denominator is uniformly bounded away from zero by the choice of $K$ and $R$. Therefore, the derivative of the map $z\mapsto \Zcal_R(\beta,z\eta|m,\eps)$ is uniformly
bounded on compact subsets of $\C$ and also in $m,\eps\in (0,1)$. Since $\Zcal_R(\beta,0|m,\eps)=1$, this implies
that there exists $\delta>0$, independent of $m$ and $\eps$, such that
\begin{equation}
\label{eq:uniform estimate for ZcalR before scaling}
\inf_{m,\eps\in (0,1)}
\inf_{|z|<\delta}
\inf_{\substack{\eta\in L_c^\infty(\R^d\times \{-1,1\}):\\
\supp(\eta)\subset \Lambda\times \{-1,1\}\\
\norm{\eta}_\infty=1}}
|\Zcal_R(\beta,z\eta|m,\eps)|>0 
\end{equation}
by continuity of the map $z\mapsto \Zcal_R(\beta,z\eta|m,\eps)$. By scaling, this implies the claim.
\end{proof}

Note that the first two claims of \Cref{prop:uniform bounds} are valid for arbitrary $\eta$, which here plays the role of the coupling constant, whereas the third claim holds only for sufficiently
small coupling.
\end{subsection}

\begin{subsection}{Expansion of $\Zcal_R$}
\label{sec:Expansion of the renormalized partition function}
The strategy for proving that the limit
$\lim_{m\to 0}\lim_{\eps\to 0}\Zcal_R(\beta,\eta|m,\eps)$ exists is to first establish the
existence of an everywhere convergent series expansion for
$\Zcal_R(\beta,z\eta|m,\eps)$. More precisely, we have the following result.

\begin{proposition}
\label{prop:expansion for the renormalized partition function}
For fixed $m,\eps\in(0,1)$ and $\eta\in L_c^\infty(\R^d\times\{-1,1\})$, the map
$z\mapsto \Zcal_R(\beta,z\eta|m,\eps)$ is an entire function with the Taylor
expansion
\begin{equation}
\begin{split}
\Zcal_R(\beta,z\eta|m,\eps)
&=\sum_{L=0}^\infty\frac{z^L}{L!}\Mcal_L(\beta,\eta|m,\eps)
\\
&=\sum_{L=0}^\infty\frac{z^L}{L!}
\sum_{\Si\in\{-1,1\}^L}
\int_{\R^{Ld}}
\Bigl(\prod_{i=1}^L\eta_i\Bigr)
\widetilde{\Mcal}_L(\xx,\Si,\beta|m,\eps)\d\xx ,
\end{split}
\end{equation}
where $\eta_i:=\eta(x_i,\sigma_i)$ as before. Recall also that $\vp_j:=\vp(x_j)$. The kernels $\widetilde{\Mcal}_L$ admit two representations. The first is
\begin{equation}
\begin{split}
\label{eq:first representation for tildeMcal}
\widetilde{\Mcal}_L(\xx,\Si,\beta|m,\eps)
&=
\sum_{\pi\in\S_L}
\sum_{\substack{n,j_1,j_2,\dots,j_{N/2}\geq 0\\ L=n+\sum_{k=1}^{N/2}2k j_k}}
\binom{L}{n,2j_1,4j_2,\dots,N j_{N/2}}
(-1)^n
\prod_{l=1}^{N/2}\frac{(-1)^{j_l}}{[(2l)!]^{j_l}j_l!}
\\
&\quad\times
\Biggl(
\prod_{k=1}^{N/2}\prod_{l_k=1}^{j_k}
\1_{\bigl\{\sum_{j\in\pi([[\,n+(l_k-1)2k+1,\;n+2l_k k\,]])}\sigma_j=0\bigr\}}
\Biggr)
\\
&\quad\times
\E_{m,\eps}\abr{
\prod_{a=1}^n
\eps^{-\frac{\beta}{4\pi}}
e^{i\sqrt{\beta}\sigma_{\pi(a)}\vp_{\pi(a)}}
}
\\
&\quad\times
\prod_{k=1}^{N/2}\prod_{l_k=1}^{j_k}
\E_{m,\eps}^T\abr{
\eps^{-\frac{\beta}{4\pi}}
e^{i\sqrt{\beta}\sigma_{p_{l_k}}\vp_{p_{l_k}}}
\mid p_{l_k}\in
\pi([[\,n+2(l_k-1) k+1,\;n+2l_kk\,]])
},
\end{split}
\end{equation}
where $\S_L$ denotes the permutation group of $[L]$ and we set $[[k,l]]:=\{k,k+1,\dots,l\}$ for integers $k<l$.

The second representation is
\begin{equation}
\begin{split}
\label{eq:second representation for tildeMcal}
\widetilde{\Mcal}_L(\xx,\Si,\beta|m,\eps)
&=
\sum_{\substack{p,\, \text{$p$ even}\\ p+q=L}}
\binom{L}{p}
\bigg(
\sum_{\Pi_1\in \mathfrak{P}_{q}}
(-1)^{|\Pi_1|}
\prod_{\pi_1\in\Pi_1}
\tilV_t^{|\pi_1|}(\xx_{\pi_1},\Si_{\pi_1}|m,\eps)
\\
&\qquad\times
\E_{m,\sqrt{t}}\abr{
\prod_{\pi_1\in\Pi_1}
\Bigl(
e^{i\sqrt{\beta}\sum_{j\in\pi_1}\sigma_j\vp_j}
-\1_{|\pi_1|\leq N}
\1_{\sum_{l\in\pi_1}\sigma_l=0}
\Bigr)
}
\bigg)
\\
&\quad\times
\bigg(
\sum_{\Pi_2\in\mathfrak{P}_{[[q+1,L]],\,\mathrm{even}}^{\leq N/2}}
(-1)^{|\Pi_2|}
\prod_{\pi_2\in\Pi_2}
\1_{\sum_{i\in\pi_2}\sigma_i=0}
\\
&\qquad\qquad\times
\rbr{
\tilV_t^{|\pi_2|}(\xx_{\pi_2},\Si_{\pi_2}|m,\eps)
-\tilV_\infty^{|\pi_2|}(\xx_{\pi_2},\Si_{\pi_2}|m,\eps)
}
\bigg),
\end{split}
\end{equation}

where $\mathfrak{P}_A$ denotes the collection of all partitions of a set $A$, and
we write $\mathfrak{P}_q\equiv\mathfrak{P}_{[q]}$. Moreover,
\begin{equation}
\mathfrak{P}_{A,\mathrm{even}}^{\leq N}
:=
\Bigl\{
\Pi\text{ is a partition of }A
\;\Big|\;
\text{each block $\pi$ of $\Pi$ has cardinality $2k$ for some $\N\ni k\leq N/2$}
\Bigr\}.
\end{equation}

Finally, for every $M,\delta>0$ and compact $\Lambda\subset\R^d$, there exists a
constant $0<B(\delta,M,\Lambda)<\infty$, independent of $\eps$, $m$, $\eta$, and
$L$, such that
\begin{equation}
\label{eq:uniform bound for Mcal}
\sup_{m,\eps\in(0,1)}
\sup_{\substack{
\eta\in L_c^\infty(\R^d\times\{-1,1\})\\
\supp(\eta)\subset \Lambda\times\{-1,1\}\\
\norm{\eta}_\infty\leq M}}
|\Mcal_L(\beta,\eta|m,\eps)|
\leq B(\delta,M,\Lambda)\,\delta^L\,L! .
\end{equation}
\end{proposition}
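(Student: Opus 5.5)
Entireness comes first. For fixed $m,\eps$, $V_0(\beta,z\eta,\vp_{m,\eps}|\eps)$ is deterministically bounded by $2|z||\Lambda|\norm{\eta}_\infty\eps^{-\beta/(4\pi)}$. Hence $z\mapsto\Zcal_0(\beta,z\eta|m,\eps)$ is entire by Fubini and Morera, exactly as in the proof of \Cref{prop:uniform bounds}. Each counterterm $Z_k(z\eta)=\exp(-c_k z^{2k})$ is entire, with $c_k$ finite for fixed $\eps>0$ because the regularized cumulants are bounded. So $\Zcal_R(\beta,z\eta|m,\eps)$ is entire and equals its Taylor series with coefficients $\Mcal_L/L!$.

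\emph{First representation.} Expand $e^{-zV_0}$ as $\sum_n(-z)^n V_0^n/n!$ and interchange with $\E_{m,\eps}$, which is legitimate by boundedness. This gives the moments $\E_{m,\eps}[\prod_a\eps^{-\beta/(4\pi)}e^{i\sqrt\beta\sigma_a\vp_a}]$ integrated against $\prod\eta_a$. Expand each $Z_k(z\eta)$ as $\sum_{j_k}(-1)^{j_k}z^{2kj_k}(\cdots)^{j_k}/((2k)!^{j_k}j_k!)$. Take the Cauchy product and collect the coefficient of $z^L$. To write it as a single integral over $\R^{Ld}$, assign the $L$ variables to the $n$ moment slots and to the $j_k$ blocks of size $2k$. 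Symmetrizing over $\pi\in\S_L$ and dividing by $L!$ produces the multinomial coefficient in \Cref{eq:first representation for tildeMcal}. This step is pure bookkeeping.

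\emph{Second representation.} By \Cref{thm:the equivalence of St and Vt} and \Cref{eq:def of renormalized general partition function}, $\Zcal_R(\beta,z\eta|m,\eps)=\E_{m,\sqrt t}[e^{-S_t^R(\beta,z\eta,\vp)}]$ for $|z|$ small, with $t<t^*$. Split $S_t^R=\sum_n(\Acal_n^\vp+\Acal_n^\infty)$ as in \Cref{eq:splitting of the counter terms}. Since $\Acal_n^\infty$ is deterministic and nonzero only for even $n\le N$, $e^{-\sum\Acal^\infty_n}$ factors out of the expectation. Expand $e^{-\sum_n z^n\Acal_n^\vp}$ and $e^{-\sum_n z^n\Acal_n^\infty}$ by the standard exponential formula: the coefficient of $z^q/q!$ in $\exp(-\sum_n z^n a_n/n!)$ is a sum over set partitions of $[q]$ of $\prod_{\text{blocks}}(-a_{|\pi|})$. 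This yields the two bracketed partition sums, with blocks restricted to even size at most $N$ and to neutral charge in the second factor. Multiplying the two series and combining the variables as $p+q=L$ gives the $\binom{L}{p}$ structure. By uniqueness of Taylor coefficients, which are independent of $t$, this holds for all $L$. The absolute convergence needed for the interchanges follows from the geometric majorants \Cref{eq:uniform bound for Acalphi} and \Cref{eq:uniform bound for Acaln} together with \Cref{lem:Gaussian tails}.

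\emph{Uniform bound \Cref{eq:uniform bound for Mcal}.} I would not estimate the representations term by term. Instead I would use Cauchy's estimate on the circle $|z|=1/\delta$:
\[
|\Mcal_L(\beta,\eta|m,\eps)|\le L!\,\delta^{L}\sup_{|z|=1/\delta}|\Zcal_R(\beta,z\eta|m,\eps)|.
\]
On this circle $\norm{z\eta}_\infty\le M/\delta$ and $\supp(z\eta)\subset\Lambda\times\{-1,1\}$. Then part 1 of \Cref{prop:uniform bounds}, applied with $M$ replaced by $M/\delta$, bounds the supremum by a constant $B(\delta,M,\Lambda)$ independent of $m$, $\eps$, $\eta$ and $L$. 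The one point to check carefully is that \Cref{prop:uniform bounds} applies to complex $\eta$ of arbitrary size. This holds because its proof chooses $t$ depending on the bound $M$ and uses only $|e^{-S}|\le e^{|S|}$. Everything else is routine; the main effort is the combinatorial bookkeeping in the second representation.
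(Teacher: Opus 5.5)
Your argument follows the paper's proof step for step. Entireness comes from deterministic boundedness plus Fubini--Morera. The first representation comes from expanding $e^{-zV_0}$ and each $Z_k(z\eta)$ and symmetrizing the Cauchy product. The second comes from factoring $\Zcal_R=F_1F_2$ with $F_1=\E_{m,\sqrt t}\abr{e^{-\sum_n z^n\Acal_n^\vp}}$ and $F_2=e^{-\sum_k z^{2k}\Acal_{2k}^\infty}$ and expanding by the exponential formula. The paper phrases this with Fa\`a di Bruno and a Cauchy-integral justification of the interchange, which is the same thing. Finally, \Cref{eq:uniform bound for Mcal} follows from Cauchy's estimate on $|z|=\delta^{-1}$ together with part~1 of \Cref{prop:uniform bounds} at level $M/\delta$. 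You state that last step more carefully than the paper does.

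The one step that fails as stated is your claim that the bookkeeping for the first representation ``produces the multinomial coefficient''. The series being multiplied are ordinary power series. Their coefficients are $F_{0,n}\propto(-1)^n/n!$ and $F_{k,j_k}\propto(-1)^{j_k}/([(2k)!]^{j_k}j_k!)$. So the coefficient of $z^L$ is $\sum F_{0,n}\prod_k F_{k,j_k}$, with no multinomial. Symmetrizing and multiplying by $L!$ then gives
\[
\widetilde\Mcal_L(\xx,\Si,\beta|m,\eps)=\sum_{\pi\in\S_L}\sum_{\substack{n,j_1,\dots,j_{N/2}\geq 0\\ L=n+\sum_k 2kj_k}}\frac{(-1)^n}{n!}\prod_{l=1}^{N/2}\frac{(-1)^{j_l}}{[(2l)!]^{j_l}j_l!}\,(\cdots).
\]
That is, you get $1/n!$ where the display has $\binom{L}{n,2j_1,\dots,Nj_{N/2}}$.

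The displayed first representation is in fact off by a combinatorial factor. Write $X_j:=\eps^{-\frac{\beta}{4\pi}}e^{i\sqrt\beta\sigma_j\vp(x_j)}$.
\begin{itemize}
\item For $N=0$, the display gives $L!\,(-1)^L\E_{m,\eps}\abr{\prod_{j=1}^L X_j}$. But $\Zcal_0(\beta,z\eta|m,\eps)=\sum_L\frac{(-z)^L}{L!}\E_{m,\eps}\abr{V_0(\beta,\eta,\vp|\eps)^L}$ forces $(-1)^L\E_{m,\eps}\abr{\prod_j X_j}$. So does the second representation, via moments-to-cumulants conditionally on $\vp_{m,\sqrt t}$.
\item For $N=2$, $L=2$, the display gives $2\E_{m,\eps}\abr{X_1X_2}-\1_{\sigma_1+\sigma_2=0}\E^T_{m,\eps}\abr{X_j\mid j\in[2]}$. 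The correct kernel is $\E_{m,\eps}\abr{X_1X_2}-\1_{\sigma_1+\sigma_2=0}\E^T_{m,\eps}\abr{X_j\mid j\in[2]}$.
\end{itemize}
The paper's proof makes the same slip: it inserts the multinomial when combining ordinary power series. You should carry out the bookkeeping and record the corrected coefficient rather than assert agreement with the display. Nothing else in your proposal is affected; the second representation and the uniform bound are correct as you derive them.
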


\begin{proof}
To derive the first representation of the kernels, we expand the expression
\begin{equation}
\begin{split}
\Zcal_R(\beta,z\eta|m,\eps)
&=\E_{m,\eps}\abr{e^{-V_0(\beta,z\eta,\vp|m,\eps)}}
\\
&\quad\times
\prod_{k=1}^{N/2}
e^{-\frac{z^{2k}}{(2k)!}
\sum_{\Si\in\{-1,1\}^{2k}}
\1_{\sum_{i=1}^{2k}\sigma_i=0}
\int_{\R^{2kd}}
\Bigl(\prod_{i=1}^{2k}\eta_i\Bigr)
\E_{m,\eps}^{T}
\abr{
\eps^{-\frac{\beta}{4\pi}}
e^{i\sqrt{\beta}\sigma_j\vp_j}
\mid j\in[2k]}
\d\xx } .
\end{split}
\end{equation}
into Taylor series in the variable $z$. Since all relevant random variables are
deterministically bounded, each factor is an entire function (for all
$\omega\in\Omega$) by Fubini’s and Morera’s theorems. Consequently,
$\Zcal_R$ itself is an entire function. We may therefore expand each exponential
separately and collect powers of $z$. In the first factor we may also exchange the
expectation and summation. This yields
\begin{equation}
\begin{split}
\Zcal_R(\beta,z\eta|m,\eps)
&=
\bigg(
\sum_{n=0}^\infty\frac{(-z)^n}{n!}
\sum_{\Si\in\{-1,1\}^n}
\int_{\R^{nd}}
\Bigl(\prod_{i=1}^n\eta_i\Bigr)
\E_{m,\eps}
\abr{
\prod_{j=1}^n
\eps^{-\frac{\beta}{4\pi}}
e^{i\sqrt{\beta}\sigma_j\vp_j}
}
\d\xx
\bigg)
\\
&\quad\times
\prod_{k=1}^{N/2}
\bigg\{
\sum_{j_k=0}^\infty
\frac{(-1)^{j_k}z^{2kj_k}}{[(2k)!]^{j_k}j_k!}
\sum_{\Si\in\{-1,1\}^{2kj_k}}
\Bigl(\prod_{b=0}^{j_k-1}
\1_{\sum_{l_b\in[[2bk+1,(b+1)2k]]}\sigma_{l_b}=0}\Bigr)
\\
&\qquad\qquad\times
\int_{\R^{2kj_k d}}
\Bigl(\prod_{i=1}^{2kj_k}\eta_i\Bigr)
\prod_{a=0}^{j_k-1}
\E_{m,\eps}^{T}
\abr{
\eps^{-\frac{\beta}{4\pi}}
e^{i\sqrt{\beta}\sigma_{p_a}\vp_{p_a}}
\mid p_a\in[[2ak+1,(a+1)2k]]
}
\d\xx
\bigg\}
\\
&=:
\bigg(\sum_{n=0}^\infty F_{0,n}z^n\bigg)
\prod_{k=1}^{N/2}
\sum_{j_k=0}^\infty F_{k,j_k}z^{2kj_k},
\end{split}
\end{equation}
where we use the notation $[[k,l]]:=\{k,k+1,\dots,l\}$ for integers $k<l$. The final
line serves only to simplify notation when combining the sums.

We now sum over $L=n+\sum_{k=1}^{N/2}2kj_k$ to obtain
\begin{equation}
\sum_{L=0}^\infty z^L
\sum_{\substack{
L=n+\sum_{k=1}^{N/2}2kj_k\\
n,j_1,j_2,\dots,j_{N/2}\geq 0}}
\binom{L}{n,2j_1,4j_2,\dots,Nj_{N/2}}
F_{0,n}\prod_{k=1}^{N/2}F_{k,j_k}.
\end{equation}
Next, we extract the integrals from the factors $F_{0,n}$ and $F_{k,j_k}$, exchange
them with the inner sum, and symmetrize the resulting expression to introduce the
factor $1/L!$. This yields the formula
\begin{equation}
\begin{split}
\Zcal_R(\beta,z\eta|m,\eps)
&=\sum_{L=0}^\infty\frac{z^L}{L!}
\sum_{\Si\in\{-1,1\}^L}
\int_{\R^{Ld}}
\Biggl\{
\Bigl(\prod_{i=1}^L\eta_i\Bigr)
\sum_{\pi\in\S_L}
\sum_{\substack{
L=n+\sum_{k=1}^{N/2}2kj_k\\
n,j_1,j_2,\dots,j_{N/2}\geq 0}}
\binom{L}{n,2j_1,4j_2,\dots,Nj_{N/2}}
\\
&\quad\times
(-1)^n
\rbr{\prod_{l=1}^{N/2}
\frac{(-1)^{j_l}}{[(2l)!]^{j_l}j_l!}}
\prod_{k=1}^{N/2}\prod_{l_k=1}^{j_k}
\1_{\bigl\{\sum_{j\in\pi([[\,n+(l_k-1)2k+1,\;n+2l_k k\,]])}\sigma_j=0\bigr\}}
\\
&\quad\times\E_{m,\eps}
\abr{
\prod_{a=1}^n
\eps^{-\frac{\beta}{4\pi}}
e^{i\sqrt{\beta}\sigma_{\pi(a)}\vp_{\pi(a)}}
}
\\
&\quad\times
\prod_{k=1}^{N/2}\prod_{l_k=1}^{j_k}
\E_{m,\eps}^{T}
\abr{
\eps^{-\frac{\beta}{4\pi}}
e^{i\sqrt{\beta}\sigma_{p_{l_k}}\vp_{p_{l_k}}}
\mid p_{l_k}\in
\pi([[\,n+2(l_k-1) k+1,\;n+2l_kk\,]])
}
\Biggr\}
\d\xx,
\end{split}
\end{equation}
from which the desired representation can be seen.

To derive the second representation, we write
\begin{equation}
\Zcal_R(\beta,z\eta|m,\eps)=F_1(z)F_2(z),
\end{equation}
where the entire functions $F_1$ and $F_2$ are given by
\begin{equation}
F_1(z)
:=\E_{m,\sqrt{t}}
\abr{
\exp(
-\widetilde{S}_R^t(\beta,z\eta,\vp|m,\eps)
)
},
\end{equation}
with
\begin{equation}
\widetilde{S}_R^t(\beta,z\eta,\vp|m,\eps):=
S_t(\beta,z\eta,\vp|m,\eps)
-\sum_{k=1}^{N/2}\frac{z^{2k}}{(2k)!}
\sum_{\Si\in\{-1,1\}^{2k}}
\1_{\sum_{i=1}^{2k}\sigma_i=0}
\int_{\R^{2kd}}
\Bigl(\prod_{j=1}^{2k}\eta_j\Bigr)
\tilV_t^{2k}(\xx,\Si|m,\eps)
\d\xx,
\end{equation}
where $S_t$ is the series defined in \Cref{eq:def of S},
and
\begin{equation}
\begin{split}
F_2(z)
&:=
\prod_{k=1}^{N/2}
\exp\!\bigg(
-\frac{z^{2k}}{(2k)!}
\sum_{\Si\in\{-1,1\}^{2k}}
\1_{\sum_{i=1}^{2k}\sigma_i=0}
\int_{\R^{2kd}}
\Bigl(\prod_{j=1}^{2k}\eta_j\Bigr)
\abr{
\tilV_t^{2k}(\xx,\Si|m,\eps)
-\tilV_\infty^{2k}(\xx,\Si|m,\eps)
}
\d\xx
\bigg)
\\
&=:\prod_{k=1}^{N/2}\exp(-z^{2k}A_{2k}),
\end{split}
\end{equation}
where $A_{2k}\equiv \Acal_{2k}^\infty(\beta,t|m,\eps)$ as defined in
\Cref{eq:splitting of the counter terms}. For the purposes of this computation we use this shorthand and suppress the dependence of $A_{2k}$ on the parameters.

Since $z\mapsto\Zcal_R(\beta,z\eta|m,\eps)$ is entire, we have
\begin{equation}
\Mcal_L(\beta,\eta|m,\eps)
=\frac{d^L}{dz^L}\bigg|_{z=0}\!\bigl(F_1(z)F_2(z)\bigr)
=\sum_{p=0}^L\binom{L}{p}F_1^{(L-p)}(0)F_2^{(p)}(0).
\end{equation}
The surviving terms in $F_2^{(p)}(0)$ are most easily identified using
Faà~di~Bruno’s formula,
\begin{equation}
\label{eq:Faa di Bruno}
\dder{z}{p} f(g(z))
=\sum_{\Pi\in\mathfrak{P}_p}
f^{(|\Pi|)}(g(z))
\prod_{\pi\in\Pi} g^{(|\pi|)}(z),
\end{equation}
where the sum runs over all partitions $\mathfrak{P}_p$ of $[p]$, and $|\cdot|$
denotes cardinality. In the present case,
\begin{align}
f(z)&=e^{-z},
\\
g(z)&=\sum_{k=1}^{N/2}A_{2k} z^{2k}.
\end{align}
We readily observe that
\begin{equation}
g^{(n)}(0)=
\begin{cases}
(2k)!A_{2k}, & \text{if } n=2k \text{ for some } k\leq N/2,\\
0, & \text{otherwise},
\end{cases}
\end{equation}
and that $f^{(n)}(g(0))=f^{(n)}(0)=(-1)^n$. Hence,
\begin{equation}
F_2^{(p)}(0)=
\begin{cases}
\displaystyle
\sum_{\Pi\in\mathfrak{P}_{p,\mathrm{even}}^{\leq N}}
(-1)^{|\Pi|}\prod_{\pi\in\Pi}|\pi|!\,A_{|\pi|},
& \text{if $p$ is even},\\[1ex]
0, & \text{otherwise}.
\end{cases}
\end{equation}
Recall that $\mathfrak{P}_{p,\mathrm{even}}^{\leq N}$ is the set of partitions of $[p]$ with blocks that have even number (less than or equal to $N$) of elements. Suppose now that $p$ is even. We may then rewrite
\begin{equation}
\begin{split}
F_2^{(p)}(0)
&=
\sum_{\Si\in\{-1,1\}^p}
\int_{\R^{pd}}
\Biggl\{
\Bigl(\prod_{j=1}^p\eta_j\Bigr)
\sum_{\Pi\in\mathfrak{P}_{p,\mathrm{even}}^{\leq N}}
(-1)^{|\Pi|}
\\
&\qquad\qquad\times
\prod_{\pi\in\Pi}
\1_{\sum_{i\in\pi}\sigma_i=0}
\abr{
\tilV_t^{|\pi|}(\xx_\pi,\Si_\pi|m,\eps)
-\tilV_\infty^{|\pi|}(\xx_\pi,\Si_\pi|m,\eps)
}
\Biggr\}
\d\xx .
\end{split}
\end{equation}

For $F_1$ we need the Cauchy integral representation of the derivatives to justify the interchange of differentiation and expectation. For any $R>0$ we have
\begin{equation}
\label{eq:Cauchy integral representation of the derivatives of F_1}
F_1^{(q)}(0)
=\frac{q!}{2\pi i}
\oint_{\partial B_R(0)}\frac{F_1(z)}{z^{q+1}}\dz .
\end{equation}
Furthermore, if $R<t^*$, where $t^*$ is defined in \Cref{eq:def of tstar}, then Fubini’s theorem
(justified by the proof of the first claim in \Cref{thm:the equivalence of St and Vt}, more precisely the discussion around \Cref{eq:uniform bound for Acalphi}, and
\Cref{lem:Gaussian tails}) yields
\begin{equation}
\begin{split}
F_1^{(q)}(0)
&=
\E_{m,\sqrt{t}}
\bigg[
\dder{z}{q}\bigg|_{z=0}
\exp\!\bigg(
-\sum_{k=1}^\infty\frac{z^k}{k!}
\\
&\qquad\qquad\quad\times
\sum_{\Si\in\{-1,1\}^k}
\int_{\R^{kd}}
\Bigl(\prod_{i=1}^k\eta_i\Bigr)
\abr{
e^{i\sqrt{\beta}\sum_{j=1}^k\sigma_j\vp_j}
-\1_{\sum_{l=1}^k\sigma_l=0}\1_{k\leq N}
}
\tilV_t^k(\xx,\Si|m,\eps)
\d\xx
\bigg)
\bigg],
\end{split}
\end{equation}
where we used the Cauchy representation in the reverse direction again after interchanging the expectation and the complex line integral from \Cref{eq:Cauchy integral representation of the derivatives of F_1}.
As in the computation for $F_2$, the derivatives can now be evaluated using
Faà~di~Bruno's formula, without restrictions on the partitions. Consequently, after computing the derivatives we can interchange the expectation with the spatial integrals by Fubini's theorem to obtain
\begin{equation}
\begin{split}
F_1^{(q)}(0)
&=
\sum_{\Si\in\{-1,1\}^q}
\int_{\R^{qd}}
\Bigl(\prod_{i=1}^q\eta_i\Bigr)
\sum_{\Pi\in\mathfrak{P}_q}
(-1)^{|\Pi|}
\prod_{\pi\in\Pi}
\tilV_t^{|\pi|}(\xx_\pi,\Si_\pi|m,\eps)
\\
&\qquad\qquad\times
\E_{m,\sqrt{t}}
\abr{
\prod_{\pi\in\Pi}
\Bigl(
e^{i\sqrt{\beta}\sum_{j\in\pi}\sigma_j\vp_j}
-\1_{|\pi|\leq N}\1_{\sum_{l\in\pi}\sigma_l=0}
\Bigr)
}
\d\xx .
\end{split}
\end{equation}

By a suitable relabelling of the integration variables, we finally arrive at
\begin{equation}
\begin{split}
\Mcal_L(\beta,\eta|m,\eps)
&=
\sum_{\Si\in\{-1,1\}^L}
\int_{\R^{Ld}}
\Biggl\{
\Bigl(\prod_{i=1}^L\eta_i\Bigr)
\sum_{\substack{q+p=L\\ \text{$p$ is even}}}
\binom{L}{p}
\\
&\quad\times
\bigg(
\sum_{\Pi_1\in\mathfrak{P}_q}
(-1)^{|\Pi_1|}
\prod_{\pi_1\in\Pi_1}
\tilV_t^{|\pi_1|}(\xx_{\pi_1},\Si_{\pi_1}|m,\eps)
\\
&\qquad\times
\E_{m,\sqrt{t}}
\abr{
\prod_{\pi_1\in\Pi_1}
\Bigl(
e^{i\sqrt{\beta}\sum_{j\in\pi_1}\sigma_j\vp_j}
-\1_{|\pi_1|\leq N}\1_{\sum_{l\in\pi_1}\sigma_l=0}
\Bigr)
}
\bigg)
\\
&\quad\times
\bigg(
\sum_{\Pi_2\in\mathfrak{P}_{[[q+1,L]],\,\mathrm{even}}^{\leq N}}
(-1)^{|\Pi_2|}
\prod_{\pi_2\in\Pi_2}
\1_{\sum_{i\in\pi_2}\sigma_i=0}
\\
&\qquad\qquad\times
\abr{
\tilV_t^{|\pi_2|}(\xx_{\pi_2},\Si_{\pi_2}|m,\eps)
-\tilV_\infty^{|\pi_2|}(\xx_{\pi_2},\Si_{\pi_2}|m,\eps)
}
\bigg)
\Biggr\}
\d\xx .
\end{split}
\end{equation}

Finally, the uniform bound on the integrals $\Mcal_L$ follows from the fact that
$z\mapsto\Zcal_R(\beta,z\eta|m,\eps)$ is entire, and previous uniform bounds. By the Cauchy integral formula, for
any $R>0$,
\begin{equation}
\Mcal_L(\beta,\eta|m,\eps)
=\frac{L!}{2\pi i}
\oint_{\partial B_R(0)}
\frac{\Zcal_R(\beta,w\eta|m,\eps)}{w^{L+1}}\d w .
\end{equation}
Choosing $R=\delta^{-1}$ for $\delta>0$, we obtain
\begin{equation}
|\Mcal_L(\beta,\eta|m,\eps)|
\leq
\frac{L!}{2\pi}
\oint_{\partial B_{\delta^{-1}}(0)}
\frac{|\Zcal_R(\beta,w\eta|m,\eps)|}{|w|^{L+1}}\d w
\leq B(\delta,M,\Lambda)\,\delta^L\,L!,
\end{equation}
where we used the first part of \Cref{prop:uniform bounds}. Furthermore, $M>0$ and the compact set $\Lambda\subset \R^d$ are such that $\norm{\eta}_{L^\infty(\R^d\times\{-1,1\})}<M$ and $\supp(\eta)\subset \Lambda\times\{-1,1\}$.
\end{proof}
\end{subsection}

\begin{subsection}{Convergence of the renormalized partition function: proof of {{\Cref{thm:convergence of the partition function}}}}
\label{sec:convergence of the partition function}
The last ingredient needed for the proof of \Cref{thm:convergence of the partition function}
is the existence of the pointwise limits of $\Mcal_L$ at distinct points, that is,
for $x_i\neq x_j$ whenever $i\neq j$ in
$(\xx,\Si):=(x_1,x_2,\dots,x_L,\sigma_1,\sigma_2,\dots,\sigma_L)$, as well as the convergence
of $\widetilde{\Mcal}_L(\beta,\eta|m,\eps)$ to the corresponding integrals of these pointwise limits.

More precisely, define
\begin{equation}
\begin{split}
\widetilde{\Mcal}_L(\xx,\Si,\beta)
&:=\sum_{\pi\in\S_L}
\sum_{\substack{L=n+\sum_{k=1}^{N/2}2kj_k\\ n,j_1,j_2,\dots,j_k\geq 0}}
\binom{L}{n,2j_1,4j_2,\dots,Nj_{N/2}}
(-1)^n
\rbr{\prod_{l=1}^{N/2}\frac{(-1)^{j_l}}{[(2l)!]^{j_l}j_l!}}
\\
&\quad\times
\rbr{
\prod_{k=1}^{N/2}\prod_{l_k=1}^{j_k}
\1_{\{\sum_{j\in\pi([[n+(l_k-1)2k+1,n+2l_k k]])}\sigma_j=0\}}
}
\E\abr{\prod_{a=1}^n :e^{i\sqrt{\beta}\sigma_{\pi(a)}\vp_{\pi(a)}}:}
\\
&\quad\times
\prod_{k=1}^{N/2}\prod_{l_k=1}^{j_k}
\E^T\abr{
:e^{i\sqrt{\beta}\sigma_{p_{l_k}}\vp_{p_{l_k}}}:
\mid p_{l_k}\in \pi([[n+2(l_k-1) k+1,n+2l_kk]])
},
\end{split}
\end{equation}

where
\begin{equation}
\E\abr{\prod_{j=1}^n :e^{i\sqrt{\beta}\sigma_j\vp(x_j)}:}
:=\lim_{m\to 0}\lim_{\eps\to 0}
\E_{m,\eps}\abr{
\prod_{j=1}^n
\eps^{-\frac{\beta}{4\pi}}e^{i\sqrt{\beta}\sigma_j\vp(x_j)}
}
\end{equation}
and the corresponding truncated expectation is defined via the analogous limit.
$\widetilde{\Mcal}_L(\xx,\Si,\beta|m)$ is defined similarly by replacing $\E$ and $\E^T$
above with $\E_m$ and $\E_m^T$, respectively, and omitting the limit $m\to 0$.

\begin{lemma}
\label{lem:pointwise limits of tildeMcal and their integrals}
For $x_i\neq x_j$ whenever $i\neq j$, we have
\begin{equation}
\begin{split}
\lim_{\eps\to 0}\widetilde{\Mcal}_L(\xx,\Si,\beta|m,\eps)
&=\widetilde{\Mcal}_L(\xx,\Si,\beta|m),
\\
\lim_{m\to 0}\lim_{\eps\to 0}\widetilde{\Mcal}_L(\xx,\Si,\beta|m,\eps)
&=\widetilde{\Mcal}_L(\xx,\Si,\beta),
\end{split}
\end{equation}
and $\widetilde{\Mcal}_L(\xx,\Si,\beta)=0$ for odd $L$.

Furthermore, if $\beta\in(0,4\pi)$ for $d=1$, or $\beta\in(0,(d+1)2\pi)$ for $d\geq 2$,
the limit
\begin{equation}
\Mcal_L(\beta,\eta)
:=\lim_{m\to 0}\lim_{\eps\to 0}\Mcal_L(\beta,\eta|m,\eps)
=\sum_{\Si\in\{-1,1\}^L}\int_{\R^{Ld}}
\rbr{\prod_{i=1}^L\eta_i}
\widetilde{\Mcal}_L(\xx,\Si,\beta)\d\xx
\end{equation}
exists and is finite.

Finally, suppose that $\eta_\alpha(\cdot,\cdot)$ and $\eta_\alpha(\cdot,\cdot|m,\eps)$
depend on complex parameters $\alpha\in\C^k$, $k\in\N$, in a manner similar to that in the 4th statement of \Cref{thm:convergence of the partition function}. That is, for every compact $K\subset \C^k$, there exists a compact set $\Lambda\subset\R^d$ such that the following three statements hold:
\begin{equation}
\bigcup_{\alpha\in K}\supp(\eta_\alpha)\subset \Lambda\times \{-1,1\} \quad \text{ and } \quad \bigcup_{\alpha\in K}\bigcup_{m,\eps\in (0,1)}\supp(\eta_\alpha(\cdot,\cdot|m,\eps))\subset \Lambda\times\{-1,1\},
\end{equation}
\begin{equation}
\sup_{\alpha\in K}\norm{\eta_\alpha}_{L^\infty(\R^d\times\{-1,1\})}<\infty \quad \text{ and } \quad \sup_{m,\eps\in (0,1)}\sup_{\alpha\in K}\norm{\eta_\alpha(\cdot,\cdot|m,\eps)}_{L^\infty(\R^d\times\{-1,1\})}<\infty,
\end{equation}
and 
\begin{equation}
\lim_{m\to 0}\limsup_{\eps\to 0}
\sup_{\alpha\in K}
\norm{\eta_\alpha(\cdot,\cdot|m,\eps)-\eta_\alpha}_{L^\infty(\R^d\times\{-1,1\})}
=0.
\end{equation}
Then 
\begin{equation}
\lim_{m\to 0}\lim_{\eps\to 0}\sup_{\alpha\in K}|\Mcal_L(\beta,\eta_\alpha(\cdot,\cdot|m,\eps)|m,\eps)-\Mcal_L(\beta,\eta_\alpha)|=0
\end{equation}
with
\begin{equation}
\Mcal_L(\beta,\eta_\alpha)
:=\lim_{m\to 0}\lim_{\eps\to 0}
\Mcal_L(\beta,\eta_\alpha(\cdot,\cdot|m,\eps)|m,\eps)
=\sum_{\Si\in\{-1,1\}^L}\int_{\R^{Ld}}
\rbr{\prod_{i=1}^L\eta_\alpha(x_i,\sigma_i)}
\widetilde{\Mcal}_L(\xx,\Si,\beta)\d\xx,
\end{equation}
and this quantity is finite. Analogous results hold for fixed $m$.
\end{lemma}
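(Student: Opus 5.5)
The plan has three parts. Pointwise convergence comes from the first representation \Cref{eq:first representation for tildeMcal}. Convergence of the integrals comes from the second representation \Cref{eq:second representation for tildeMcal}, using dominated convergence with the majorants of \Cref{sec: the renormalized potential}. Uniformity in $\alpha$ follows from multilinearity in $\eta$.

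\textbf{Pointwise limits.} For distinct points, each factor in \Cref{eq:first representation for tildeMcal} is a finite product of free-field charge moments $\E_{m,\eps}[\prod_a \eps^{-\beta/4\pi}e^{i\sqrt\beta\sigma_a\vp_a}]$ and charge cumulants. The moment equals
\[
[\tilV_\infty^1(m,\eps)]^n\prod_{k<l}e^{-\beta\sigma_{kl}K^m_\eps(k,l)}.
\]
Off the diagonal each factor converges as $\eps\to0$ and then as $m\to0$; this is \Cref{lem:pointwise charge correlations of the free field} together with \Cref{rem:pointwise truncated charge correlation functions of the free field}. The cumulants are finite polynomial combinations of moments, so they converge as well. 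For the massless limit of a moment to be nonzero, the configuration must be neutral. Indeed, for a non-neutral configuration the factor $\tilV^1$ behaves like $m^{\beta/4\pi}$ per charge, and the net $\log m$ divergences of $K^m$ do not cancel. The limiting moments therefore vanish unless $\sum\sigma=0$. The counterterm blocks are neutral by their indicators, so for odd $L$ every term contains a non-neutral moment factor and $\widetilde{\Mcal}_L(\xx,\Si,\beta)=0$.

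\textbf{Convergence of the integrals.} We fix $t$ small as in \Cref{thm:the equivalence of St and Vt}, independently of $m,\eps$, and use the second representation. This is legitimate because both representations equal $\frac{d^L}{dz^L}\Zcal_R$, so they agree pointwise after symmetrization. The $\Pi_2$-factors $\tilV_t-\tilV_\infty$ are dominated by the majorants $F_t^{d,n}$ of \Cref{cor:convergence of the Vinfty-Vt terms}, which are uniformly integrable on $\Lambda^n$. The $\Pi_1$-factors are bounded by
\[
\prod_{\pi_1} h_t^{|\pi_1|}\;\E_{m,\sqrt t}\Bigl[\prod_{\pi_1}2\bigl(\sqrt{\beta t}\norm{\nabla\vp}_{L^\infty(B_\Lambda)}\vee1\bigr)H_t^{|\pi_1|}\Bigr],
\]
using \Cref{lem:estimate for the field dependent part of the counter term}. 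The Gaussian expectation is finite uniformly in $m$ by \Cref{lem:Gaussian tails}. Moreover, $H_t h_t$ is integrable in the $\norm{\cdot}_n$ sense by \Cref{prop:induction statement}, with bounds independent of $m,\eps$.

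For $d=1$ the bound for $h_t$ with $t\ge m^{-2}$ is also needed, but we choose $t<1<m^{-2}$, so only the mass-independent regime enters. Over the compact set $\Lambda$, products over a partition of such $\norm{\cdot}_{|\pi|}$-integrable functions are integrable, since $\int_{\Lambda^n}|f|\le|\Lambda|\norm{f}_{n,\Lambda}$. This gives an $m,\eps$-independent integrable majorant for the integrand of $\Mcal_L$.

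The pointwise limits of the second-representation integrand exist: $\tilV_t^n$ and $\tilV_\infty^n$ converge by \Cref{rem:limits of the Vtn functions}. The field part $\E_{m,\sqrt t}[\cdots]$ converges because the covariance $K^m_{\sqrt t}$ of the smooth field converges locally uniformly, in the $C^1$ sense, up to an additive constant as $m\to0$. Here I expect the main obstacle: $\vp_{m,\sqrt t}$ has variance diverging like $\log(1/m)$. We therefore separate out its zero mode, writing $\vp=\vp(0)+(\vp-\vp(0))$. For neutral blocks the zero mode drops out, and for non-neutral blocks it produces a factor $e^{-\frac\beta2 (\sum\sigma)^2\mathrm{Var}}\to0$, matching the pointwise vanishing above. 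The exception is the non-neutral part of $\tilV_t$ itself, which is independent of the zero mode; its $m$-independence for $t<m^{-2}$ is exactly \Cref{prop:induction statement}.

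Dominated convergence then yields the limit, once for $\eps\to0$ and once for $m\to0$, and identifies it with $\int\prod\eta_i\,\widetilde{\Mcal}_L(\xx,\Si,\beta)$. Finiteness follows from the majorant.

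\textbf{Uniformity in $\alpha$.} $\Mcal_L$ is $L$-linear in $\eta$. Writing the difference $\prod\eta_\alpha(\cdot|m,\eps)-\prod\eta_\alpha$ as a telescoping sum, each term is bounded by
\[
L\,M^{L-1}\norm{\eta_\alpha(\cdot|m,\eps)-\eta_\alpha}_\infty\int_{\Lambda^L}(\text{majorant}).
\]
Together with the previous step applied to the fixed function $\eta_\alpha$, this gives convergence uniformly over $\alpha\in K$. Uniformity of the majorant integral in $\alpha$ holds because only $\Lambda$ and $\sup\norm{\eta}_\infty$ enter. The fixed-$m$ statements follow by omitting the $m\to0$ step.
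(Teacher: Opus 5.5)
Your pointwise limits, the parity argument and the case $d\geq 2$ are fine and match the paper. Your zero-mode argument for $\lim_{m\to0}\E_{m,\sqrt t}[\cdots]$ is an acceptable substitute for citing \Cref{lem:pointwise charge correlation functions of the scale decomposed free field}.

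\textbf{The gap is in $d=1$, in the $m\to0$ step for the $\Pi_2$-factors $\tilV_t^{n}-\tilV_\infty^{n}$ with $2\le n\le N$.}
\begin{itemize}
\item \Cref{cor:convergence of the Vinfty-Vt terms} does not give a mass-independent majorant there. It gives functions $F_t^{1,n}(\cdot,\cdot|m)$ that depend on $m$, and only the bound $\sup_{m}\int_{\Lambda^n}F_t^{1,n}(\cdot,\cdot|m)<\infty$ from \eqref{eq:uniform bound for the majorant in d=1}.
\item A uniform $L^1$ bound is neither a dominating function nor uniform integrability. So your claim of ``an $m,\eps$-independent integrable majorant for the integrand of $\Mcal_L$'' does not follow from what you cite once $N\geq 2$, and dominated convergence is unavailable as $m\to0$.
\item The paper states explicitly that no mass-independent majorant for $\tilV_\infty^n-\tilV_t^n$ is available in $d=1$. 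This is the whole reason the mass derivatives and $g_t^n$ appear.
\end{itemize}

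\textbf{The repair is the paper's route.}
\begin{enumerate}
\item The $\Pi_2$-factors contain no field. Their variables are disjoint from the $\Pi_1$-variables and from each other. Hence the $\Si$-sum and $\xx$-integral of each term of \eqref{eq:second representation for tildeMcal} factorize into one integral for the $\Pi_1$-part and one integral per block $\pi_2$.
\item Apply dominated convergence only to the $\Pi_1$-part. There your majorant really is mass-independent, because $t<1<m^{-2}$.
\item For each $\pi_2$-block, use the convergence statement \eqref{eq:limits of the difference Vinfty-Vt} itself. In $d=1$ it is proved not by domination but via the decomposition \eqref{eq:decomposition of the Vinfty-Vt terms} and $\int_{\Lambda^n}|R_t^n(m,\cdot,\cdot)|\to0$, as in \eqref{eq:vanishing of the error in the Vinfty-Vt terms}. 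That estimate rests on the mass-derivative majorants $g_t^n$ and \Cref{lem:differentiability of Vtn with respect to mass ETC}.
\end{enumerate}
This $L^1$ convergence, not just convergence of integrals against one fixed $\eta$, is also what makes your telescoping argument for uniformity in $\alpha$ work. There, $\sup_m\int_{\Lambda^n}F_t^{1,n}(\cdot|m)$ replaces the fixed majorant.

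\textbf{A smaller point.} \Cref{prop:induction statement} makes the majorant $h_t^n$ mass-independent for $t<m^{-2}$, not $\tilV_t^n$ itself. The existence of the limits of $\tilV_t^n$ comes from \Cref{rem:limits of the Vtn functions}.
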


\begin{proof}
The first claim follows directly from \Cref{lem:pointwise charge correlations of the free field,rem:pointwise truncated charge correlation functions of the free field},
 and the representation
\Cref{eq:first representation for tildeMcal}. The parity claim also follows from
\Cref{lem:pointwise charge correlations of the free field} and from the fact that
$n$ in the expression is odd whenever $L$ is odd. Indeed, we have
\begin{equation}
\E\abr{\prod_{a=1}^n :e^{i\sqrt{\beta}\sigma_{a}\vp_{a}}:}=0
\end{equation}
unless $\sum_{a=1}^n\sigma_a=0$.

We now turn to the remaining claims and first consider the case $d\geq 2$.
The strategy in this case is to derive an integrable majorant and then apply
dominated convergence. Using the second representation
\Cref{eq:second representation for tildeMcal} of $\widetilde{\Mcal}_L$, we see that
$\abs{\widetilde{\Mcal}_L}$ is bounded by a finite combinatorial sum of terms of the
form
\begin{equation}
\begin{split}
\prod_{i}
&\abs{\tilV_\infty^{|A_i|}(\xx_i,\Si_i|m,\eps)
-\tilV_t^{|A_i|}(\xx_i,\Si_i|m,\eps)}
\\
&\times\prod_{j}
\abs{\tilV_t^{|\tilde{A}_j|}
(\xx_{\tilde{A}_j},\Si_{\tilde{A}_j}|m,\eps)}
\E_{m,\sqrt{t}}\abr{\prod_{j}
\abs{
e^{i\sqrt{\beta}\sum_{l_j\in\tilde{A}_j}\sigma_{l_j}\vp_{l_j}}
-\1_{|\tilde{A}_j|\leq N}
\1_{\sum_{k_j\in\tilde{A}_j}\sigma_{k_j}=0}
}
},
\end{split}
\end{equation}
where
\begin{equation}
\rbr{\biguplus_{i}A_i}\biguplus\rbr{\biguplus_{j}\tilde{A}_j}=[L]
\end{equation}
and $\uplus$ denotes disjoint union. 

Recall that we work within a compact set (since $\eta$ is compactly supported). An integrable majorant is then obtained using the bounds \Cref{eq:uniform bound for Vinfty minus Vt for d greater than 2 part 1,eq:uniform bound for Vinfty minus Vt for d greater than 2 part 2} from \Cref{cor:convergence of the Vinfty-Vt terms} for the terms corresponding to the sets $A_i$, and using \Cref{lem:estimate for the field dependent part of the counter term} followed by Hölder inequality applied to the expectation, the functions $h_t^{|\tilde{A}_j|}$ from \Cref{prop:induction statement} and \Cref{lem:Gaussian tails} for the terms corresponding to the sets $\tilde{A}_j$. \Cref{lem:Gaussian tails} ensures that $\E_{m,\sqrt{t}}[\norm{\grad \vp}_{L^\infty(F\times\{-1,1\})}]$ is uniformly bounded in $m,\eps\in (0,1)$ for any compact $F\subset\R^d$. Since the sets $A_i$ and $\tilde{A}_j$ are distinct the above argument yields an integrable majorant for each of the sets separately. Consequently, we may take the pointwise limit in the first representation
\Cref{eq:first representation for tildeMcal} and obtain the second claim from the
first one. The same argument applies to the final claim with the parameters $\alpha\in \C^k$ due to the assumptions.

For the case $d=1$, more care is required, since there is no integrable majorant
for the terms corresponding to the sets $A_i$ in the above discussion for $d\geq 2$, as discussed after
\Cref{cor:convergence of the Vinfty-Vt terms}. Nevertheless, even without taking absolute values, the integrals factor into those corresponding to the sets $A_i$ and those corresponding to $\tilde{A}_j$ as above, and so do the limits $m,\eps\to 0$. Furthermore, the part corresponding to the sets $A_i$ factors further for each separate set $A_i$. Then \Cref{cor:convergence of the Vinfty-Vt terms} implies that the integral
corresponding to each set $A_i$ converges. The limit is given by the integral in which we simply take pointwise limits of the integrands, and these pointwise limits exist by the discussion in
\Cref{rem:limits of the Vtn functions}.

Moreover, the terms corresponding to the
sets $\tilde{A}_j$ do admit integrable majorants by the same argument as in the
case $d\geq 2$ above. Note that here we can work with $t<m^{-2}$. It therefore remains to show that the pointwise limits of these terms exist.
The pointwise limits of the factors $\tilV_t^{|\tilde{A}_j|}$ exist by
\Cref{rem:limits of the Vtn functions}. For the remaining term, we expand the
product in order to define the limit. Specifically, we have
\begin{equation}
\begin{split}
\lim_{m\to 0}
\E_{m,\sqrt{t}}
&\abr{
\prod_{j=1}^M
\Bigl(
e^{i\sqrt{\beta}\sum_{l_j\in\tilde{A}_j}\sigma_{l_j}\vp_{l_j}}
-\1_{|\tilde{A}_j|\leq N}
\1_{\sum_{a_j\in\tilde{A}_j}\sigma_{a_j}=0}
\Bigr)
}
\\
&:=
\sum_{F\uplus G=[M]}
(-1)^{|F|}\rbr{
\prod_{k\in F}
\1_{|\tilde{A}_k|\leq N}
\1_{\sum_{i\in\tilde{A}_k}\sigma_i=0}
}
\lim_{m\to 0}
\E_{m,\sqrt{t}}
\abr{
e^{i\sqrt{\beta}
\sum_{l\in G}\sum_{a_l\in\tilde{A}_l}\sigma_{a_l}\vp_{a_l}}
},
\end{split}
\end{equation}
where the sum runs over all ordered bipartitions $\{F,G\}$ of the set $[M]$,
including the trivial ones $\{[M],\emptyset\}$ and $\{\emptyset,[M]\}$.
The final limit exists by
\Cref{lem:pointwise charge correlation functions of the scale decomposed free field}.

By linearity, we therefore obtain
\begin{equation}
\lim_{m\to 0}\lim_{\eps\to 0}
\sum_{\Si\in\{-1,1\}^L}
\int_{\R^{Ld}}
\rbr{\prod_{i=1}^L\eta_i}
\widetilde{\Mcal}_L(\xx,\Si,\beta|m,\eps)\d\xx
=
\sum_{\Si\in\{-1,1\}^L}
\int_{\R^{Ld}}
\rbr{\prod_{i=1}^L\eta_i}
\widetilde{\Mcal}_L^*(\xx,\Si,\beta)\d\xx,
\end{equation}
where $\widetilde{\Mcal}^*(\xx,\Si,\beta)$ denotes the limit of
$\widetilde{\Mcal}(\xx,\Si,\beta|m,\eps)$ taken termwise in the second
representation \Cref{eq:second representation for tildeMcal}, as described above.
Since the two representations agree for every $m,\eps\in(0,1)$, their limits must
also coincide, that is, $\widetilde{\Mcal}^*\equiv\widetilde{\Mcal}$. This completes the proof. The case involving parameters $\alpha\in\C^k$ follows again
immediately from the assumptions as in the case $d\geq 2$.
\end{proof}

With this at hand, we can finally prove \Cref{thm:convergence of the partition function}. Now that we have our estimates in place, the proof is analogous to that of \cite[Theorem 5.1]{BaWe24a}.

\begin{proof}[Proof of \Cref{thm:convergence of the partition function}]
It is enough to consider only the statements with both limits, since the cases with fixed $m$ are analogous, and it also suffices to consider the case with parameters $\alpha$. We can modify the proof of the bound \Cref{eq:uniform bound for Mcal} to the case with $\eta_\alpha(\cdot,\cdot|m,\eps)$ by using the first claim of \Cref{prop:uniform bounds} with
\begin{equation}
\label{eq:uniform bound for eta with parameters}
\sup_{m,\eps\in (0,1)}\sup_{\alpha\in  K}\norm{\eta_\alpha(\cdot,\cdot|m,\eps)}_{L^\infty(\R^d\times\{-1,1\})}<M
\end{equation}
instead of $\norm{\eta}_{L^\infty(\R^d\times\{-1,1\})}<M$ for $\alpha\in K$ with fixed compact set $K\subset \C^k$. Note that the condition \Cref{eq:uniform bound for eta with parameters} makes sense by our assumptions. Then for all $\delta>0$ and $M>0$ such that \Cref{eq:uniform bound for eta with parameters} holds we have the bound $\sup_{\alpha\in K}|\Mcal_L(\eta_\alpha,\beta)|\leq C(\delta,M,K)\delta^LL!$. In particular, by choosing $0<\delta\equiv \delta (z)$ small enough we obtain that
\begin{equation}
\label{eq:limit candidate for Zcal}
\Zcal_R(\beta,z\eta_\alpha):=\sum_{L=0}^\infty\frac{z^L}{L!}\Mcal_L(\beta,\eta_\alpha),
\end{equation}
where $\Mcal_L(\beta,\eta_\alpha)$ is defined in \Cref{lem:pointwise limits of tildeMcal and their integrals}, converges absolutely (and uniformly in $\alpha\in K$) for all $z\in \C$, and so defines an analytic function such that for fixed $r>0$ we also have
\begin{equation}
\sup_{|z|<r}\sup_{\alpha\in K}|\Zcal_R(\beta,z\eta_\alpha)|<\infty.
\end{equation}
The proof of the second claim is finished by noting that $\Zcal_R(\beta,z\eta_\alpha)$ is even, because $\Mcal_L(\beta,\eta_\alpha)$ is zero if $L$ is odd. The positivity in the third claim is implied by the second part of \Cref{prop:uniform bounds} using \Cref{eq:uniform bound for eta with parameters}.

Finally, to obtain convergence, let $A\geq 1$ and fix again a compact $K\subset \C^k$ and assume $\alpha\in K$. Then, using \Cref{eq:limit candidate for Zcal} as a Taylor expansion at zero with $z=1$, the Cauchy formula for derivatives, and simple triangle-inequality estimates, we have
\begin{equation}
\begin{split}
|\Zcal_R&(\beta,\eta_\alpha(\cdot|m,\eps)|m,\eps)-\Zcal_R(\beta,\eta_\alpha)|
\\
&\leq \sum_{L=0}^A\frac{1}{L!}|\Mcal_L(\beta,\eta_\alpha(\cdot|m,\eps)|m,\eps)-\Mcal_
L(\beta,\eta_\alpha)|
\\
&\quad+\sum_{L=A+1}^\infty\bigg(\oint_{\partial B_r(0)}\frac{|\Zcal_R(\beta,w\eta_\alpha(\cdot|m,\eps)|m,\eps)|}{|w|^{L+1}}\frac{\d|w|}{2\pi}+\oint_{\partial B_r(0)}\frac{|\Zcal_R(\beta,w\eta_\alpha)|}{|w|^{L+1}}\frac{\d|w|}{2\pi}\bigg)
\\
&\leq \sum_{L=1}^A\frac{1}{L!}|\Mcal_L(\beta,\eta_\alpha(\cdot|m,\eps)|m,\eps)-\Mcal_L(\beta,\eta_\alpha)|
\\
&\quad+\rbr{\sup_{m,\eps\in (0,1)}\sup_{|w|=r}|\Zcal_R(\beta,w\eta_\alpha(\cdot|m,\eps)|m,\eps)|+\sup_{|w|=r}|\Zcal_R(\beta,w\eta_\alpha)|}r^{-A}
\end{split}
\end{equation}
for any $r>1$. The uniform convergence follows from the uniform bounds for the partition function and the uniform convergence of $\Mcal_L$ in \Cref{prop:uniform bounds} (augmented with \Cref{eq:uniform bound for eta with parameters}) and \Cref{lem:pointwise limits of tildeMcal and their integrals} respectively , by first taking the limits $m,\eps\to 0$ and then $A\to \infty$. This proves the last claim.
\end{proof}

\end{subsection}

\end{section}

\begin{section}{Correlation functions and the existence of the field}
\label{sec:The analysis of the correlation functions and the existence of the field and t}
In this section, we consider our actual sine-Gordon model, that is, we put
$\eta=-z\1_\Lambda$, where $\Lambda\subset \R^d$ is a compact set, in the generalized formalism of \Cref{sec: the renormalized potential,sec:Analysis of the partition function}. After the crucial
Cameron--Martin--Girsanov transformation, we choose an $\eta$ depending on certain complex
parameters and write everything in terms of $\Zcal_R$ so that we can utilize results
from the previous section. Indeed, after this, everything will follow from
\Cref{thm:convergence of the partition function} and the results in
\Cref{sec: the renormalized potential} and Appendix
\ref{sec:Correlation functions of the reference field.}.

We can carry out the preliminary analysis for the proofs of
\Cref{thm:existence of the field,thm:existence of the sine Gordon correlation functions}
simultaneously. Consider the mappings $\C^{n+k+1}\to\C$
\begin{equation}
\begin{split}
&(\boldsymbol{\mu},\boldsymbol{\nu},z)\mapsto  \log(\E_{\mathrm{sG}(\beta,z,\Lambda|m,\eps)}\abr{\exp(\sum_{j=1}^n\mu_j\eps^{-\frac{\beta}{4\pi}}e^{i\sqrt{\beta}\sigma_j\vp}(f_j)+\sum_{l=1}^k\nu_l[D_l\vp](g_l))})
\\
&:=\log(\!\![\Zcal_0(\beta,z,\Lambda|m,\eps)]^{-1}\E_{m,\eps}\!\! \abr{\exp(\sum_{j=1}^n\mu_j\eps^{-\frac{\beta}{4\pi}}e^{i\sqrt{\beta}\sigma_j\vp}(f_j)+\sum_{l=1}^k\nu_l[D_l\vp](g_l))e^{2z\int_{\Lambda}\eps^{-\frac{\beta}{4\pi}}\cos(\sqrt{\beta}\vp(x))\dx}}),
\end{split}
\end{equation}
and $\C^2\to \C$ 
\begin{equation}
(w,z)\mapsto \E_{\mathrm{sG}(\beta,z,\Lambda|m,\eps)}\abr{e^{w\vp(f)}}:=[\Zcal_0(\beta,z,\Lambda|m,\eps)]^{-1}\E_{m,\eps}\abr{e^{w\vp(f)}e^{2z\int_{\Lambda}\eps^{-\frac{\beta}{4\pi}}\cos(\sqrt{\beta}\vp(x))\dx}}
\end{equation}

Since the field $\vp_{m,\eps}$, its derivatives, and the integral against a test
function $\vp(f)$ are Gaussian, and other relevant random variables are bounded,
these functions are analytic separately in each variable $\mu_j,\nu_l,w$ in some
open disc centered at the origin of $\C$, which may a priori depend on $m,\eps,z$, and $\Lambda$. Indeed, the argument of the logarithm in the first map is analytic and especially continuous. Furthermore, it equals $1$ at $\mu_j=0=\nu_l$ for all $j,l$ and $z=0$.  Therefore, for small enough disc it is especially strictly positive so that we may choose an analytic branch of the logarithm.  Thus, the first
mapping is also a jointly analytic function in some neighborhood of the origin in
$\C^{n+k}$.  Differentiating it with respect to the parameters $\mu_j,\nu_l$ and
setting them to zero yields the truncated correlation functions (cumulants).

We wish to find a larger domain, independent of $m$ and $\eps$, in which the
functions are also analytic, and such that the convergence obtained by first
taking $\eps\to 0$ and then $m\to 0$ is uniform in all variables
$\boldsymbol{\mu},\boldsymbol{\nu},w$, and $z$. This implies that the limiting
function, together with all of its derivatives, is also analytic, now also with
respect to $z$.

We now apply the Cameron--Martin--Girsanov theorem as in the proofs of
\cite[Lemma 2.6, Theorem 3.1]{BaWe24a}. The application here is analogous to the
proof of \cite[Theorem 3.1]{BaWe24a}. For justification of its use in the complex case, see also
the proof of \cite[Lemma 2.6]{BaWe24a}. We need it in the following form. Let
$(\vp(x))_{x\in\R^d}$ be a smooth Gaussian field and $Y$ a complex-valued Gaussian random variable
that is measurable with respect to $\sigma(\vp)$, that is, the sigma-algebra
generated by the field $\vp$. Then, for a suitable function $F$ of the field for
which the expectation exists, we have
\begin{equation}
\E\abr{F(\vp)e^{Y-\E[Y^2]}}=\E\abr{F(\vp+\E(Y\vp))}.
\end{equation}
An obvious generalization holds for more than one random variable $Y$. Above,
$Y\vp$ denotes the random function $x\mapsto Y\vp(x)$. This case is covered by
\cite[Theorem 2.8]{Da06a}.

Thus, we obtain for the expectation in the first mapping
\begin{equation}
\begin{split}
\label{eq:rewriting the exponential in the sg correlation functions}
\E_{\mathrm{sG}(\beta,z,\Lambda|m,\eps)}&\abr{\exp(\sum_{j=1}^n\mu_j\eps^{-\frac{\beta}{4\pi}}e^{i\sqrt{\beta}\sigma_j\vp}(f_j)+\sum_{l=1}^k\nu_l[D_l\vp](g_l))}
\\
&=\exp(\half\E_{m,\eps}\abr{\rbr{\sum_{l=1}^k\nu_l[D_l\vp](g_l)}^2})\Zcal_0(\beta,z, \Lambda|m,\eps)^{-1}
\\
&\quad\times\E_{m,\eps}\bigg[ \exp\bigg(\int_{\R^d} 
\bigg\{z\1_{\Lambda}(x)\sum_{\sigma\in\{-1,1\}}\eps^{-\frac{\beta}{4\pi}}e^{i\sqrt{\beta}\sigma\vp(x)}e^{i\sqrt{\beta}\sigma\sum_{l=1}^n\nu_l\E_{m,\eps}[\vp(x)[D_l\vp](g_l)]}
\\
&\qquad\qquad+\sum_{j=1}^n\mu_jf_j(x)\eps^{-\frac{\beta}{4\pi}}e^{i\sqrt{\beta}\sigma_j\vp(x)}e^{i\sqrt{\beta}\sigma_j\sum_{l=1}^k\nu_l\E_{m,\eps}[\vp(x)[D_l\vp](g_l)])} \bigg\}\dx
 \bigg)\bigg],
\end{split}
\end{equation}
where we have set
\begin{equation}
Y:=\sum_{l=1}^k\nu_l[D_l\vp](g_l)
\end{equation}
and
\begin{equation}
F(\vp):=\exp(\int_{\R^d}\eps^{-\frac{\beta}{4\pi}}
\rbr{\sum_{j=1}^n\mu_jf_j(x)e^{i\sqrt{\beta}\sigma_j\vp_{m,\eps}(x)}
+\sum_{\sigma\in\{-1,1\}}z\1_\Lambda(x) e^{i\sqrt{\beta}\sigma\vp_{m,\eps}(x)}}\dx).
\end{equation}

We can now rewrite the exponential inside the expectation in the last factor of
\Cref{eq:rewriting the exponential in the sg correlation functions} in a form that
allows us to use the last part of \Cref{thm:convergence of the partition function},
namely,
\begin{equation}
\exp(-\sum_{\sigma\in\{-1,1\}}\int_{\R^d}
\eta_{\boldsymbol{\mu},\boldsymbol{\nu},z,\Lambda}(x,\sigma|m,\eps)
\eps^{-\frac{\beta}{4\pi}}e^{i\sqrt{\beta}\sigma\vp_{m,\eps}(x)}\dx),
\end{equation}
where
\begin{equation}
\label{eq:def of eta with parameters for correlation functions}
\eta_{\boldsymbol{\mu},\boldsymbol{\nu},z,\Lambda}(x,\sigma|m,\eps):=-z\1_\Lambda(x) e^{i\sqrt{\beta}\sigma\sum_{l=1}^k\nu_l\E_{m,\eps}[\vp(x)[D_l\vp](g_l)]}-\sum_{j=1}^n\delta_{\sigma,\sigma_j}\mu_jf_j(x)e^{i\sqrt{\beta}\sigma_j\sum_{l=1}^k\nu_l\E_{m,\eps}[\vp(x)[D_l\vp](g_l)]}.
\end{equation}

For the second mapping, using $Y:=w\vp_{m,\eps}(f)$ and
$F(\vp):=\exp(z\sum_{\sigma\in\{-1,1\}}\int_{\Lambda}
e^{i\sigma\sqrt{\beta}\vp_{m,\eps}(x)}\dx)$, we obtain similarly
\begin{equation}
\begin{split}
\E_{\mathrm{sG}(\beta,z,\Lambda|m,\eps)}\abr{e^{w\vp(f)}}
&=\exp(\half w^2\E_{m,\eps}[(\vp(f))^2])
\Zcal_0(\beta,z,\Lambda|m,\eps)^{-1}
\\
&\quad\times\E_{m,\eps}\abr{\exp(-\sum_{\sigma\in\{-1,1\}}\int_{\R^d}
\eps^{-\frac{\beta}{4\pi}}\eta_{w,z,\Lambda}(x,\sigma|m,\eps)
e^{i\sigma\sqrt{\beta}\vp(x)}\dx)},
\end{split}
\end{equation}
where
\begin{equation}
\label{eq:eta with parameters for existence of the field}
\eta_{w,z,\Lambda}(x,\sigma|m,\eps)
:=-z\1_{\Lambda}(x)e^{i\sqrt{\beta}\sigma w\E_{m,\eps}[\vp(x)\vp(f)]}.
\end{equation}
We are now ready for the proofs of
\Cref{thm:existence of the sine Gordon correlation functions,thm:existence of the field}.

\begin{subsection}{Proof of {{\Cref{thm:existence of the sine Gordon correlation functions}}}}
\label{sec:Proof of existence of correlation functions}
We have
\begin{equation}
\begin{split}
&\E_{\mathrm{sG}(\beta,z,\Lambda|m,\eps)}^T\abr{\prod_{j=1}^n\eps^{-\frac{\beta}{4\pi}}e^{i\sqrt{\beta}\sigma_j\vp}(f_j)\prod_{l=1}^k[D_l\vp](g_l)}
\\
&=\prod_{j=1}^n\partd{\mu_j}\bigg|_{\mu_j=0}\prod_{l=1}^k\partd{\nu_l}\bigg|_{\nu_l=0}\log\cbo{0.7cm}\Zcal_0(\beta,z,\Lambda|m,\eps)^{-1}
\\
&\quad\times\E_{m,\eps}\abr{\exp(-\!\!\sum_{\sigma\in\{-1,1\}}\int_{\R^d}\eta_{\boldsymbol{\mu},\boldsymbol{\nu},z,\Lambda}(x,\sigma|m,\eps)\eps^{-\frac{\beta}{4\pi}}e^{i\sqrt{\beta}\sigma\vp(x)}\dx)}
\exp(\half\E_{m,\eps}\bigg[\bigg(\sum_{l=1}^k\nu_l[D_l\vp](g_l)\bigg)^2\bigg])\cbc{0.7cm}.
\end{split}
\end{equation}
We can also rewrite the argument of the logarithm above in terms of the renormalized partition function $\Zcal_R$
\begin{equation}
\begin{split}
\label{eq:the argument of log in the correlation functions in terms of ZcalR}
&\frac{\Zcal_R(\beta,\eta_{\boldsymbol{\mu},\boldsymbol{\nu},z,\Lambda}(\cdot,\cdot|m,\eps)|m,\eps)}{\Zcal_R(\beta,z,\Lambda|m,\eps)}
\\
&\times\prod_{k=1}^{N/2}e^{\frac{1}{(2k)!}\sum_{\Si\in\{-1,1\}^{2k}}\1_{\sum_{i=1}^{2k}\sigma_i=0}\int_{\R^{2kd}}\rbr{\prod_{i=1}^{2k}\eta_{\Bold{\mu},\Bold{\nu},z,\Lambda}(x_i,\sigma_i|\eps)}\E_{m,\eps}^{T}\abr{\eps^{-\frac{\beta}{4\pi}}e^{i\sqrt{\beta}\sigma_j\vp(x_j)}\mid j\in[2k]}\d\xx}
\\
&\times\prod_{k=1}^{N/2}e^{\frac{-z^{2k}}{(2k)!}\sum_{\Si\in\{-1,1\}^{2k}}\1_{\sum_{i=1}^{2k}\sigma_i=0}
\int_{\Lambda^{2k}}\E_{m,\eps}^{T}\abr{\eps^{-\frac{\beta}{4\pi}}e^{i\sqrt{\beta}\sigma_j\vp(x_j)}\mid j\in[2k]}\d\xx}
\\
&\times\exp(\half\E_{m,\eps}\bigg[\bigg(\sum_{l=1}^k\nu_l[D_l\vp](g_l)\bigg)^2\bigg]). 
\end{split}
\end{equation}

To use \Cref{thm:convergence of the partition function} we need the limit
\[
\eta_{\Bold{\mu},\Bold{\nu},z,\Lambda}(x,\sigma):=\lim_{m\to 0}\lim_{\eps\to 0}\eta_{\Bold{\mu},\Bold{\nu},z,\Lambda}(x,\sigma|m,\eps),
\]
which exists, and is given by making the replacement 
\begin{equation}
\E_{m,\eps}\abr{\vp(x)[D_l\vp](g_l)}\to\E\abr{\vp(x)[D_l\vp](g_l)}
=\frac{1}{2\pi}\int_{\R^d}[D_lg_l](y)\log(|x-y|)\dy,
\end{equation}
to \Cref{eq:def of eta with parameters for correlation functions}. The fixed $m>0$ case is analogous.

The convergence of $\eta_{\boldsymbol{\mu},\boldsymbol{\nu},z,\Lambda}$ is in
$L^\infty(\R^d\times\{-1,1\})$ by \Cref{lem:basic correlations of the derivatives of the free field}
and it is uniform in $\Bold{\mu},\Bold{\nu},z\in K\subset \C^{n+k+1}$ for compact $K$. Furthermore, we have
\begin{equation}
\bigcup_{(\Bold{\mu},\Bold{\nu},z)\in K}\supp(\eta_{\Bold{\mu},\Bold{\nu},z,\Lambda}(\cdot,\cdot|m,\eps))\subset \overline{\Lambda\cup\bigcup_{j=1}^n\supp(f_j)},
\end{equation}
which is compact, and
\begin{equation}
\sup_{(\Bold{\mu},\Bold{\nu},z)\in K}\norm{\eta_{\Bold{\mu},\Bold{\nu},z,\Lambda}(\cdot,\cdot|m,\eps)}_{L^\infty(\R^d\times\{-1,1\})}\leq\sup_{(\Bold{\mu},\Bold{\nu},z)\in K}\rbr{|z|+\sum_{j=1}^n|\mu_j|\norm{f_j}_{L^\infty(\R^d\times\{-1,1\})}}<\infty.
\end{equation}
Therefore, by \Cref{thm:convergence of the partition function} we have
\begin{equation}
\Zcal_R(\beta,\eta_{\Bold{\mu},\Bold{\nu},z,\Lambda}(\cdot,\cdot|m,\eps)|m,\eps)
\;\longrightarrow\;
\Zcal_R(\beta,\eta_{\Bold{\mu},\Bold{\nu},z,\Lambda}(\cdot,\cdot)).
\end{equation}
Furthermore, by the same argument as in the beginning of this section, that is,
boundedness of relevant random variables, the function
$\Zcal_R(\eta_{\Bold{\mu},\Bold{\nu},z,\Lambda}(\cdot,\cdot|m,\eps)|m,\eps)$
extends to an entire function in $\C^{n+k+1}$.  
The uniform convergence implies that the limit
$\Zcal_R(\beta,\eta_{\Bold{\mu},\Bold{\nu},z,\Lambda}(\cdot,\cdot))$ is also entire. 

For $n>N$, the only relevant contribution in \Cref{eq:the argument of log in the correlation functions in terms of ZcalR}  comes from the term
$\Zcal_R(\eta_{\Bold{\mu},\Bold{\nu},z,\Lambda}(\cdot,\cdot|m,\eps)|m,\eps)$
since other terms vanish when we take logarithmic derivatives and set the variables to zero. 
Thus, we have
\begin{equation}
\begin{split}
\E_{\mathrm{sG}(\beta,z,\Lambda|m,\eps)}^T
&\abr{\prod_{j=1}^n\eps^{-\frac{\beta}{4\pi}}e^{i\sqrt{\beta}\sigma_j\vp}(f_j)
\prod_{l=1}^k\vp'(g_l)}
\\
&=\prod_{j=1}^n\partd{\mu_j}\bigg|_{\mu_j=0}
\prod_{l=1}^k\partd{\nu_l}\bigg|_{\nu_l=0}
\log\!\left(\Zcal_R(\beta,\eta_{\boldsymbol{\mu},\boldsymbol{\nu},z,\Lambda}(\cdot,\cdot|m,\eps)|m,\eps)\right).
\end{split}
\end{equation}

Part 3 of \Cref{prop:uniform bounds} implies that
$\Zcal_R(\beta,\eta_{0,0,z,\Lambda}(\cdot,\cdot|m,\eps)|m,\eps)>0$
uniformly in $m$ and $\eps$ and for all $|z|<M$ with arbitrary fixed $M>0$.  
Therefore, there exists some open neighborhood $U$ of
the real axis $\R$ in $\C$ such that
\begin{equation}
\log\!\left(\Zcal_R(\beta,\eta_{\Bold{\mu},\Bold{\nu},z,\Lambda}(\cdot,\cdot|m,\eps)|m,\eps)\right)
\;\longrightarrow\;
\log\!\left(\Zcal_R(\beta,\eta_{\Bold{\mu},\Bold{\nu},z,\Lambda}(\cdot,\cdot))\right),
\end{equation}
when we first take $\eps\to 0$ and then $m\to 0$,
uniformly in $[D(K)]^{n+k}\times K$, where $K\subset U$ is compact and $D(K)$ is some open disk in $\C$ centered at the origin.  

This also implies that the limit is analytic in $[B_{r(R)}(0)]^{n+k}\times (B_R(0)\cap U)\subset\C^{n+k+1}$ for some $r(R)>0$ with all $R>0$.  
The derivatives are then also analytic and converge in the same manner in the same domain.
Thus, we have proven that the correlation functions corresponding to $n>N$ and $k\in\Z_{\geq 0}$
converge and are given by
\begin{equation}
\Ccal_{\mathrm{sG}(\beta,z,\Lambda)}^{n,k}(\V{f},\V{g})
=\prod_{j=1}^n\partd{\mu_j}\bigg|_{\mu_j=0}
\prod_{l=1}^k\partd{\nu_l}\bigg|_{\nu_l=0}
\log\!\left(\Zcal_R(\beta,\eta_{\boldsymbol{\mu},\boldsymbol{\nu},z,\Lambda})\right).
\end{equation}
Furthermore, they are analytic in all of $U$ since the above argument holds for all $R>0$.

We can also compute the derivative of arbitrary order with respect to $z$ at zero of these
correlation functions.  
Indeed, since
$\log(\Zcal_R(\beta,\eta_{\Bold{\mu},\Bold{\nu},z,\Lambda}(\cdot,\cdot|m,\eps)|m,\eps))$
is analytic, the derivatives exist, and by retracing our argument we can compute these
directly using the representation of cumulants in terms of moments
\Cref{eq:cumulants to moments} and the boundedness of the relevant random variables.
Thus, we obtain
\begin{equation}
\begin{split}
\dder{z}{q}\bigg|_{z=0}
&\E_{\mathrm{sG}(\beta,z,\Lambda|m,\eps)}^T
\abr{\prod_{j=1}^n\eps^{-\frac{\beta}{4\pi}}e^{i\sqrt{\beta}\sigma_j\vp}(f_j)
\prod_{l=1}^q[D_l\vp](g_l)}
\\
&=\sum_{\Bold{\tau}\in\{-1,1\}^q}
\E_{m,\eps}^T\abr{
\prod_{j=1}^n\eps^{-\frac{\beta}{4\pi}}e^{i\sqrt{\beta}\sigma_j\vp}(f_j)
\prod_{l=1}^q[D_l\vp](g_l)
\eps^{-\frac{\beta}{4\pi}}
\prod_{a=1}^q e^{i\sqrt{\beta}\tau_a\vp}(\1_{\Lambda})
}
\\
&\overset{m\to 0,\;\eps\to 0}{\longrightarrow}
\E^T\abr{
\prod_{j=1}^n :e^{i\sqrt{\beta}\sigma_j\vp}(f_j):
\prod_{l=1}^q [D_l\vp](g_l)
\left(
:e^{i\sqrt{\beta}\vp}(\1_\Lambda):
+
:e^{-i\sqrt{\beta}\vp}(\1_\Lambda):
\right)^q
},
\end{split}
\end{equation}
where the expectation is with respect to the law of the free field and the convergence
follows by \Cref{prop:convergence of the smeared mixed correlation functions of the reference field in d=1}.  
Thus, we are done with the proof of the three parts of
\Cref{thm:existence of the sine Gordon correlation functions} for $n>N$.

Let us then consider the trickier cases $n\leq N$. The case $d\geq 2$ and $N=2$ is
completely analogous to the proof of \cite[Theorem 3.1 (i)–(iii)]{BaWe24a} for the cases
$n=0,1,2$, provided we use our
\Cref{lem:basic correlations of the derivatives of the free field,prop:convergence of the smeared mixed correlation functions of the reference field in d=1}
instead of  \cite[Lemmas 2.9 and 2.10]{BaWe24a}. When we consider the case $d=1$, we retain all
notations for general dimension $d$, even though the computations are only relevant for
$d=1$ in this paper. The computations remain valid in higher dimensions, in particular for
$d=2$ and $\beta\in[6\pi,8\pi)$ before taking the limit. Thus, if in the future the
convergence in
\Cref{prop:convergence of the smeared mixed correlation functions of the reference field in d=1}
can be improved to a wider range of $\beta$, our computations here will yield the
correlation functions also in higher dimensions.

Now we consider the case $d=1$. When $1\leq n\leq N$, we must also take logarithmic
derivatives of the factor on the second row of 
\Cref{eq:the argument of log in the correlation functions in terms of ZcalR}. For $n=0$, the factor on the last row of \Cref{eq:the argument of log in the correlation functions in terms of ZcalR} also contributes. However, using \Cref{lem:basic correlations of the derivatives of the free field}, we readily see that the
contribution from this last factor has a finite limit. 

We first treat the cases $n=1,2,\dots,N$. We need
to compute
\begin{equation}
\begin{split}
\prod_{j=1}^{n}&\partd{\mu_j}\bigg|_{\mu_j=0}
\prod_{l=1}^{k}\partd{\nu_l}\bigg|_{\nu_l=0}
\sum_{p=1}^{N/2}\frac{1}{(2p)!}
\sum_{\Si\in\{-1,1\}^{2p}}\1_{\sum_{i=1}^{2p}\sigma_i=0}
\\
&\qquad\qquad\times
\int_{\R^{2pd}}
\Bigl(\prod_{a=1}^{2p}
\eta_{\Bold{\mu},\Bold{\nu},z,\Lambda}(x_a,\sigma_a|m,\eps)\Bigr)
\E_{m,\eps}^{T}\abr{
\eps^{-\frac{\beta}{4\pi}}e^{i\sqrt{\beta}\sigma_b\vp(x_b)}
\mid b\in[2p]}
\d\xx.
\end{split}
\end{equation}
Since for fixed $m,\eps\in(0,1)$ everything is bounded and $\eta_{\Bold{\mu},\Bold{\nu},z,\Lambda}$ has compact support, the
main task is to compute
\begin{equation}
\begin{split}
\prod_{j=1}^{n}&\partd{\mu_j}\bigg|_{\mu_j=0}
\prod_{l=1}^{k}\partd{\nu_l}\bigg|_{\nu_l=0}
\prod_{a=1}^{2p}
\eta_{\Bold{\mu},\Bold{\nu},z,\Lambda}(x_a,\sigma_a|m,\eps)
\\
&=
\prod_{j=1}^{n}\partd{\mu_j}\bigg|_{\mu_j=0}
\prod_{l=1}^{k}\partd{\nu_l}\bigg|_{\nu_l=0}
\sum_{J\subset[2p]}
\Bigl(\prod_{a\in J}
z\1_{\Lambda}(x_a)
e^{i\sqrt{\beta}\sigma_a\sum_{l=1}^k\nu_l
\E_{m,\eps}[\vp(x_a)[D_l\vp](g_l)]}\Bigr)
\\
&\qquad\qquad\times
\Bigl(\prod_{b\in J^c}
\sum_{j=1}^n
\delta_{\sigma_b,\sigma_j}\mu_j f_j(x_b)
e^{i\sqrt{\beta}\sigma_j\sum_{l=1}^k\nu_l
\E_{m,\eps}[\vp(x_b)[D_l\vp](g_l)]}\Bigr).
\end{split}
\end{equation}
Unless $|J^c|=n$, the resulting term is zero. Moreover, each $\mu_j$ must appear exactly
once. Computing the $\mu$-derivatives first, we obtain
\begin{equation}
\begin{split}
\prod_{j=1}^{n}\partd{\mu_j}\bigg|_{\mu_j=0}
&\Bigl(
\prod_{b\in J^c,\ |J^c|=n}
\sum_{j=1}^n
\delta_{\sigma_b,\sigma_j}\mu_j f_j(x_b)
e^{i\sqrt{\beta}\sigma_j\sum_{l=1}^k\nu_l
\E_{m,\eps}[\vp(x_b)[D_l\vp](g_l)]}
\Bigr)
\\
&=
\sum_{\substack{\tau\in\S_{J^c}\\ J^c=(b_1,\dots,b_n)}}
\prod_{j=1}^n
\delta_{\sigma_{\tau(b_j)},\sigma_j}
f_j(x_{\tau(b_j)})
e^{i\sqrt{\beta}\sigma_j\sum_{l=1}^k\nu_l
\E_{m,\eps}[\vp(x_{\tau(b_j)})[D_l\vp](g_l)]}.
\end{split}
\end{equation}
Next we compute the $\nu$-derivatives to obtain
\begin{equation}
\begin{split}
\prod_{j=1}^{n}&\partd{\mu_j}\bigg|_{\mu_j=0}
\prod_{l=1}^{k}\partd{\nu_l}\bigg|_{\nu_l=0}
\prod_{a=1}^{2p}
\eta_{\Bold{\mu},\Bold{\nu},z,\Lambda}(x_a,\sigma_a|m,\eps)
\\
&=
z^{2p-n}
\sum_{\substack{J\subset[2p]\\ |J^c|=n}}
\sum_{\substack{\tau\in\S_{J^c}\\ J^c=(b_1,\dots,b_n)}}
\Bigl(\prod_{a\in J}\1_{\Lambda}(x_a)\Bigr)
\Bigl(\prod_{j=1}^n
\delta_{\sigma_{\tau(b_j)},\sigma_j} f_j(x_{\tau(b_j)})\Bigr)
\\
&\qquad\times
\prod_{l=1}^k
i\sqrt{\beta}
\Bigl(
\sum_{a\in J}\sigma_a
\E_{m,\eps}[\vp(x_a)[D_l\vp](g_l)]
+
\sum_{j=1}^n\sigma_j
\E_{m,\eps}[\vp(x_{\tau(b_j)})[D_l\vp](g_l)]
\Bigr).
\end{split}
\end{equation}

Finally, by linearity and
\Cref{prop:convergence of the smeared mixed correlation functions of the reference field in d=1},
for $1\leq n\leq N$ we obtain
\begin{equation}
\begin{split}
\Ccal_{\mathrm{sG}(\beta,z,\Lambda)}^{n,k}(\V{f},\V{g})
&=
\prod_{j=1}^n\frac{\partial}{\partial\mu_j}\bigg|_{\mu_j=0}
\prod_{l=1}^k\partd{\nu_l}\bigg|_{\nu_l=0}
\log\!\bigl(\Zcal_R(\beta,\eta_{\boldsymbol{\mu},\boldsymbol{\nu},z,\Lambda})\bigr)
\\
&\quad+
\sum_{p=1}^{N/2}
\frac{z^{2p-n}\1_{2p-n\ge0}}{(2p)!}
\sum_{\substack{J\subset[2p]\\ |J^c|=n}}
\sum_{\substack{\tau\in\S_{J^c}\\ J^c=(b_1,\dots,b_n)}}
\sum_{\Si\in\{-1,1\}^{2p}}
\1_{\sum_{i=1}^{2p}\sigma_i=0}
\\
&\qquad\times
\int_{\R^{2pd}}
\E^{T}\abr{
:e^{i\sqrt{\beta}\sigma_q\vp(x_q)}:
\mid q\in[2p]}
\Bigl(\prod_{a\in J}\1_{\Lambda}(x_a)\Bigr)
\Bigl(\prod_{j=1}^n
\sigma_{\tau(b_j)}\sigma_j f_j(x_{\tau(b_j)})\Bigr)
\\
&\qquad\qquad\times
\prod_{l=1}^k i\sqrt{\beta}
\Bigl(
\sum_{a\in J}\sigma_a\E[\vp(x_a)[D_l\vp](g_l)]
+
\sum_{j=1}^n\sigma_j\E[\vp(x_{\tau(b_j)})[D_l\vp](g_l)]
\Bigr)
\d\xx.
\end{split}
\end{equation}
The term starting from the second row above is just a polynomial in $z$ with complicated coefficients, hence it is entire in $z$. The
third claim follows exactly as in the case $n>N$.

For $n=0$, the factor on the second row of \Cref{eq:the argument of log in the correlation functions in terms of ZcalR} contributes
\begin{equation}
\begin{split}
\prod_{l=1}^k&\partd{\nu_l}\bigg|_{\nu_l=0}
\Biggl(
\sum_{p=1}^{N/2}\frac{1}{(2p)!}
\sum_{\Si\in\{-1,1\}^{2p}}
\1_{\sum_{i=1}^{2p}\sigma_i=0}
\\
&\qquad\qquad\times
\int_{\R^{2pd}}\biggl\{
\rbr{\prod_{a=1}^{2p}
(-z)\1_\Lambda(x_a)
e^{i\sqrt{\beta}\sigma_a\sum_{l=1}^k\nu_l
\E[\vp(x_a)[D_l\vp](g_l)]}}
\\
&\qquad\qquad\qquad\quad\times
\E_{m,\eps}^T\abr{
\eps^{-\frac{\beta}{4\pi}}e^{i\sqrt{\beta}\sigma_b\vp(x_b)}
\mid b\in[2p]}\biggr\}
\d\xx
\Biggr)
\\
&=
\sum_{p=1}^{N/2}
\frac{z^{2p}}{(2p)!}
\sum_{\Si\in\{-1,1\}^{2p}}
\1_{\sum_{i=1}^{2p}\sigma_i=0}
\int_{\Lambda^{2p}}\biggl\{
\E_{m,\eps}^T\abr{
\eps^{-\frac{\beta}{4\pi}}e^{i\sqrt{\beta}\sigma_b\vp(x_b)}
\mid b\in[2p]}
\\
&\qquad\qquad\qquad\qquad\qquad\qquad\qquad\qquad\times
\prod_{l=1}^{k}
i\sqrt{\beta}
\sum_{a=1}^{2p}
\sigma_a
\E_{m,\eps}[\vp(x_a)[D_l\vp](g_l)]
\biggr\}
\d\xx
\end{split}
\end{equation}
since 
\begin{equation}
\eta_{0,\nu,z,\Lambda}(x,\sigma|m,\eps)=-z\1_{\{x\in\Lambda\}}e^{i\sqrt{\beta}\sigma\sum_{l=1}^k\nu_l\E_{m,\eps}[\vp(x)[D_l\vp](g_l)]}.
\end{equation}

The term on the last row of \Cref{eq:the argument of log in the correlation functions in terms of ZcalR} contributes
\begin{equation}
\prod_{l=1}^k\partd{\nu_l}\bigg|_{\nu_l=0}
\frac12
\E_{m,\eps}\Bigl[
\Bigl(\sum_{a=1}^k\nu_a[D_a\vp](g_a)\Bigr)^2
\Bigr]
=
\1_{k=2}\E_{m,\eps}[[D_1\vp](g_1)[D_2\vp](g_2)],
\end{equation}
which converges by
\Cref{lem:basic correlations of the derivatives of the free field} as already mentioned.  

Thus, again by linearity and
\Cref{prop:convergence of the smeared mixed correlation functions of the reference field in d=1},
the $n=0$ correlation function becomes
\begin{equation}
\begin{split}
\E_{\mathrm{sG}(\beta,z,\Lambda)}^T\abr{\prod_{l=1}^k[D_l\vp](g_l)}
&=
\prod_{l=1}^k\partd{\nu_l}\bigg|_{\nu_l=0}
\log(\Zcal_R(\beta,\eta_{0,\boldsymbol{\nu},z,\Lambda}))
\\
&\quad+
\sum_{p=1}^{N/2}
\frac{z^{2p}}{(2p)!}
\sum_{\Si\in\{-1,1\}^{2p}}
\1_{\sum_{i=1}^{2p}\sigma_i=0}
\\
&\qquad\times
\int_{\Lambda^{2p}}
\E^T\abr{:
e^{i\sqrt{\beta}\sigma_a\vp(x_a)}:
\mid a\in[2p]}
\prod_{l=1}^{k}
i\sqrt{\beta}
\sum_{b=1}^{2p}
\sigma_b
\E[\vp(x_b)[D_l\vp](g_l)]
\d\xx
\\
&\quad+
\1_{k=2}\E[[D_1\vp](g_1)[D_2\vp](g_2)].
\end{split}
\end{equation}

The analyticity and the third claim follow analogously to the previous cases.  
This concludes the proof.

\end{subsection}
\begin{subsection}{Proof of {{\Cref{thm:existence of the field}}}}
\label{sec:proof of existence of the field}
Analogously to the case of the correlation functions, we have
\begin{equation}
\begin{split}
\label{eq:the expectation of wvp(f) written in terms of the renormalized partition function}
\E_{\mathrm{sG}(z,\beta,\Lambda|m,\eps)}\abr{e^{w\vp(f)}}&=\frac{\Zcal_R(\beta,\eta_{w,z,\Lambda}(\cdot,\cdot|m,\eps)|m,\eps)}{\Zcal_R(\beta,z,\Lambda|m,\eps)}
\\
&\times\prod_{k=1}^{N/2}e^{\frac{1}{(2k)!}\sum_{\Si\in\{-1,1\}^{2k}}\1_{\sum_{i=1}^{2k}\sigma_i=0}\int_{\R^{2k}}\rbr{\prod_{i=1}^{2k}\eta_{w,z,\Lambda}(x_i,\sigma_i|\eps)}\E_{m,\eps}^{T}\abr{\eps^{-\frac{\beta}{4\pi}}e^{i\sqrt{\beta}\sigma_j\vp(x_j)}\mid j\in[2k]}\d\xx}
\\
&\times\prod_{k=1}^{N/2}e^{\frac{-z^{2k}}{(2k)!}\sum_{\Si\in\{-1,1\}^{2k}}\1_{\sum_{i=1}^{2k}\sigma_i=0}
\int_{\Lambda^{2k}}\E_{m,\eps}^{T}\abr{\eps^{-\frac{\beta}{4\pi}}e^{i\sqrt{\beta}\sigma_j\vp(x_j)}\mid j\in[2k]}\d\xx}
\\
&\times\exp\!\left(\frac{w^2}{2}\E_{m,\eps}[(\vp(f))^2]\right).
\end{split}
\end{equation}
Here $\eta_{w,z,\Lambda}(x,\sigma|m,\eps)$ is given by \Cref{eq:eta with parameters for existence of the field}. We consider the limit $m,\eps\to 0$ with $\eps\to 0$ taken first; the fixed-mass case is analogous. 

For $f\in C_c^\infty(\R^d)$ with $\int f=0$, the factor on the last row of the right-hand side of \Cref{eq:the expectation of wvp(f) written in terms of the renormalized partition function} converges again by \Cref{lem:basic correlations of the derivatives of the free field}. Moreover, by the same lemma, the convergence $\eta_{w,z,\Lambda}(x,\sigma|m,\eps)\to\eta_{w,z,\Lambda}(x,\sigma)$ holds in $L^\infty(\R^d\times\{-1,1\})$, uniformly for $(w,z)$ in compact subsets of $\C^2$,
\begin{equation}
\supp(\eta_{w,z,\Lambda}|m,\eps)=\Lambda\times\{-1,1\},\,\,\forall w,z\in \C \text{ and } m,\eps\in (0,1)
\end{equation} 
and 
\begin{equation}
\sup_{(w,z)\in K}\norm{\eta_{w,z,\Lambda}(\cdot,\cdot|m,\eps)}_{L^\infty(\R^d\times\{-1,1\})}\leq \sup_{(w,z)\in K}|z|<\infty
\end{equation}
for all compact $K\in\C^2$.
The limit $\eta_{w,z,\Lambda}$ is defined by \Cref{eq:eta with parameters for existence of the field} with the replacement
\begin{equation}
\E_{m,\eps}[\vp(x)\vp(f)]\to \E[\vp(x)\vp(f)]:=\lim_{m\to 0}\lim_{\eps\to 0}\E_{m,\eps}[\vp(x)\vp(f)]
=-\frac{1}{2\pi}\int_{\R}f(y)\log(|x-y|)\dy,
\end{equation}
which exists by \Cref{lem:basic correlations of the derivatives of the free field}. Thus, as in the previous section, $\Zcal_R(\beta,\eta_{w,z,\Lambda}(\cdot,\cdot|m,\eps)|m,\eps)\to\Zcal_R(\beta,\eta_{w,z,\Lambda})$ by \Cref{thm:convergence of the partition function}, and the limit is an entire function of $(w,z)\in\C^2$.

Recalling the definition of $\eta_{w,z,\Lambda}$ in \Cref{eq:eta with parameters for existence of the field} and combining the second and third rows on the right-hand side of \Cref{eq:the expectation of wvp(f) written in terms of the renormalized partition function} yields
\begin{equation}
\label{eq:relevant part of the existence of the field}
\prod_{k=1}^{N/2}e^{\frac{z^{2k}}{(2k)!}\sum_{\Si\in\{-1,1\}^{2k}}\1_{\sum_{i=1}^{2k}\sigma_i=0}\int_{\Lambda^{2k}}\rbr{e^{i\sqrt{\beta}\sum_{l=1}^{2k}\sigma_l w\E_{m,\eps}[\vp(x_l)\vp(f)]}-1}\E_{m,\eps}^{T}\abr{\eps^{-\frac{\beta}{4\pi}}e^{i\sqrt{\beta}\sigma_j\vp(x_j)}\mid j\in[2k]}\d\xx}.
\end{equation}
The key step in proving that \Cref{eq:relevant part of the existence of the field} has a limit is to show that the exponent converges. For $d\ge 2$, the argument is analogous to the proof of \cite[Theorem 3.2]{BaWe24a}, since $N=2$.

For $d=1$, we use results from \Cref{sec: the renormalized potential}. The strategy parallels the proof of \Cref{cor:convergence of the Vinfty-Vt terms}. First, for each $k=1,\dots,N/2$, we show that the integral in the exponent, including the sum over $\Si$, is uniformly bounded for $m,\eps\in(0,1)$. Then we decompose the integral into a limit candidate and an error term, and we prove that the error term vanishes. The limit candidate is
\begin{equation}
\sum_{\Si\in\{-1,1\}^{2k}}\1_{\sum_{i=1}^{2k}\sigma_i=0}\int_{\Lambda^{2k}}\rbr{e^{i\sqrt{\beta}\sum_{l=1}^{2k}\sigma_l w\E[\vp(x_l)\vp(f)]}-1}\E^{T}\abr{:e^{i\sqrt{\beta}\sigma_j\vp(x_j)}:\mid j\in[2k]}\d\xx.
\end{equation}
Here
\begin{equation}
\begin{split}
\E^{T}\abr{:e^{i\sqrt{\beta}\sigma_j\vp(x_j)}:\mid j\in[2k]}&:=\lim_{m\to 0}\lim_{\eps\to 0}\E_{m,\eps}^{T}\abr{\eps^{-\frac{\beta}{4\pi}}e^{i\sqrt{\beta}\sigma_j\vp(x_j)}\mid j\in[2k]},
\end{split}
\end{equation}
and the limit exists by \Cref{lem:pointwise charge correlations of the free field,rem:pointwise truncated charge correlation functions of the free field}. An analogous statement holds without taking the limit $m\to 0$.

To control the error we write everything in terms of the functions $\tilV_{t}^n$ defined in \Cref{eq:def of V1,eq:def of Vn}. By the third statement of \Cref{thm:the equivalence of St and Vt} we may write
\begin{equation}
\begin{split}
\label{eq:splitting of the regularized charge correlations}
\E_{m,\eps}^{T}\abr{\eps^{-\frac{\beta}{4\pi}}e^{i\sqrt{\beta}\sigma_j\vp(x_j)}\mid j\in[2k]}&=(-1)^{2k-1}\tilV_\infty^n(\xx,\Si|m,\eps)
\\
&=(-1)^{2k-1}\rbr{\tilV_1^{2k}(\xx,\Si|m,\eps)+\abr{\tilV_\infty^{2k}(\xx,\Si|m,\eps)-\tilV_1^{2k}(\xx,\Si|m,\eps)}}.
\end{split}
\end{equation}

Note that $\E_{m,\eps}[\vp(x)\vp(f)]$ is a smooth function. However, to use an estimate similar to \Cref{eq:field dependent counterterm estimate} for
\begin{equation}
e^{i\sqrt{\beta}w\sum_{l=1}^{2k}\E_{m,\eps}[\vp(x_l)\vp(f)]}-1
\end{equation}
we need
\begin{equation}
\label{eq:sup norm of the derivative of the once smeared variance}
\norm{(\E_{m,\eps}[\vp(\cdot)\vp(f)])'}_{L^\infty(B_\Lambda)}<B,
\end{equation}
for some constant $B>0$ that may depend on $f$ but is independent of $m$ and $\eps$. Above, $(f(\cdot,\dots))'$ denotes differentiation with respect to the variable in place of $\cdot$ and $B_\Lambda$ is a closed ball with radius $2R(\Lambda)$, where $R(\Lambda)$ is such that $\Lambda\subset B_{R(\Lambda)}(0)$. To derive this uniform estimate, we write
\begin{equation}
\begin{split}
\der{x}\E_{m,\eps}[\vp(x)\vp(f)]
&=\der{x}\int_{\R^d}f(y)\int_{\eps^2}^\infty C_s^m(x,y)\ds\dy
\\
&=\frac{1}{2\pi}\int_\R f'(x-u)K_0(m\abs{u})\du-\int_{\R}f'(x-u)\int_0^{\eps^2}e^{-m^2s-\frac{u^2}{4s}}\frac{ds}{4\pi s}\du\\
&=:E_m^{1}(x)+E_{m,\eps}^2(x),
\end{split}
\end{equation}
where in the first step we have made the translation $u=x-y$,  differentiated under the integral sign, and then added and subtracted $\int_0^{\eps^2}e^{-m^2s-u^2/(4s)}/(4\pi s)\ds$. The differentiation under the integral sign is permitted for non-zero $m,\eps$ since then $\int_{\eps^2}^\infty C_s^m(0)\ds<\infty$.
The modified Bessel function $K_0$ has the well-known asymptotics
\begin{equation}
\label{eq:asymptotics of K0}
K_0(x)=-\gamma-\log(\half x)+\bigO(x).
\end{equation}
This yields
\begin{equation}
E_m^1(x)=-\frac{1}{2\pi}\int_{\R}f'(x-u)\log(\abs{u})\du+\int_\R f'(x-u)\bigO(m\abs{u})\du,
\end{equation}
where the term $(-\gamma-\log(m/2))\int_\R f'(x-u)\du$ vanished since $f$ has compact support. Since $m\in(0,1)$, $f(x-u)$ vanishes outside a compact set for $x\in B_\Lambda$, $\sup_{x\in B_\Lambda,u\in\R}|f'(x-u)|<\infty$, and the maps $x\mapsto\log(x)$ and $x\mapsto x$ are locally integrable, $E_m^1$ is uniformly bounded in $m\in(0,1)$ and $x\in B_\Lambda$.

For the second term, we may use size estimates, since the logarithmic mass divergence is contained in the first term. We have
\begin{equation}
\begin{split}
\label{eq:bound for E2m}
|E_{m,\eps}^2(x)|
&\leq\frac{1}{4\pi}\int_{\R}|f'(x-u)|\int_0^1e^{-\frac{u^2}{4s}}\frac{ds}{s}\du
\\
&\leq\frac{1}{4\pi}\int_\R |f'(x-u)|\rbr{\gamma+\abs{\log(\frac{u^2}{4})}+\bigO(u^2)}\du
\\
&\leq \bigO(1).
\end{split}
\end{equation}
Here we have used 
\begin{equation}
\label{eq:exponential integral estimate}
\int_0^1e^{-\frac{u^2}{4s}}\frac{\ds}{s}=\int_{\frac{u^2}{4}}^\infty e^{-r}\frac{\dr}{r}=E_1\rbr{\frac{u^2}{4}}=-\gamma-\log(\frac{u^2}{4})+\int_0^{\frac{u^2}{4}}\frac{1-e^{-r}}{r}\dr\leq\gamma+\abs{\log(\frac{u^2}{4})}+\bigO(u^2),
\end{equation} 
where $E_1$ is the exponential integral and we have used its well-known properties. The implied constant bound on the last row of \Cref{eq:bound for E2m} is independent of $m\in(0,1)$ and $x\in B_\Lambda$ by the same type of argument as with $E_m^1$ above. Thus, \Cref{eq:sup norm of the derivative of the once smeared variance} indeed holds.

Therefore, we may use an estimate similar to \Cref{eq:field dependent counterterm estimate} for
$e^{i\sqrt{\beta}w\sum_{l=1}^{2k}\E_{m,\eps}[\vp(x_l)\vp(f)]}-1$
with $t=1$ for the term corresponding to $\tilV_1^{2k}$. For the term corresponding to $\tilV_\infty^{2k}-\tilV_1^{2k}$, we simply use
\[
\abs{e^{i\sqrt{\beta}w\sum_{l=1}^{2k}\E_{m,\eps}[\vp(x_l)\vp(f)]}-1}\leq 2.
\]
Then \Cref{prop:induction statement,cor:convergence of the Vinfty-Vt terms} yield the required uniform boundedness for the terms corresponding to $\tilV_1^{2k}$ and $\tilV_\infty^{2k}-\tilV_1^{2k}$, respectively. Furthermore, these results also imply that these terms admit integrable majorants independent of $\eps$. Therefore, passing to the limit $\eps\to 0$ through the integrals is again justified by dominated convergence. In fact, since $1<m^{-2}$ for $m\in(0,1)$, the above arguments show that we can also take the limit $m\to 0$ through the integrals for the term corresponding to $\tilV_1^{2k}$. Indeed, the majorant for $\tilV_1^{2k}$ given by \Cref{prop:induction statement} is independent of $m$. The pointwise limits of the functions $\tilV_t^{2k}$ exist by \Cref{rem:limits of the Vtn functions}.

Thus, we only need to carry out the error analysis for the term corresponding to $\tilV_\infty^{2k}-\tilV_1^{2k}$. By the above argument, we may assume in the analysis below that the limit $\eps\to 0$ has already been taken. Recall that we denote
\begin{equation}
\tilV_t^l(\xx,\Si|m):=\lim_{\eps\to 0}\tilV_t^l(\xx,\Si|m,\eps)
\quad \text{and} \quad
\tilV_t^l(\xx,\Si):=\lim_{m\to 0}\lim_{\eps\to 0}\tilV_t^l(\xx,\Si|m,\eps).
\end{equation}

Then we can write
\begin{equation}
\begin{split}
&\sum_{\Si\in\{-1,1\}^{2k}}\1_{\sum_{i=1}^{2k}\sigma_i=0}\int_{\Lambda^{2k}}
\rbr{e^{i\sqrt{\beta}\sum_{l=1}^{2k}\sigma_l w\E_m[\vp(x_l)\vp(f)]}-1}
(-1)^{2k-1}\rbr{\tilV_\infty^{2k}(\xx,\Si|m)-\tilV_1^{2k}(\xx,\Si|m)}\d\xx\\
&\quad=
-\sum_{\Si\in\{-1,1\}^{2k}}\1_{\sum_{i=1}^{2k}\sigma_i=0}\int_{\Lambda^{2k}}
\rbr{e^{i\sqrt{\beta}\sum_{l=1}^{2k}\sigma_l w\E[\vp(x_l)\vp(f)]}-1}
\rbr{\tilV_\infty^{2k}(\xx,\Si|m)-\tilV_1^{2k}(\xx,\Si|m)}\d\xx\\
&\qquad-
\sum_{\Si\in\{-1,1\}^{2k}}\1_{\sum_{i=1}^{2k}\sigma_i=0}\int_{\Lambda^{2k}}
\rbr{e^{i\sqrt{\beta}\sum_{l=1}^{2k}\sigma_l w\E_m[\vp(x_l)\vp(f)]}
-e^{i\sqrt{\beta}\sum_{l=1}^{2k}\sigma_l w\E[\vp(x_l)\vp(f)]}}\\
&\qquad\qquad\qquad\qquad\qquad\qquad\times
\rbr{\tilV_\infty^{2k}(\xx,\Si|m)-\tilV_1^{2k}(\xx,\Si|m)}\d\xx\\
&\quad=: I_1(m)+I_2(m).
\end{split}
\end{equation}

Since $e^{i\sqrt{\beta}\sum_{l=1}^{2k}\sigma_l w\E[\vp(x_l)\vp(f)]}-1$ is uniformly bounded, we obtain from the proof of \Cref{cor:convergence of the Vinfty-Vt terms} that
\begin{equation}
\lim_{m\to 0}I_1(m)=
-\sum_{\Si\in\{-1,1\}^{2k}}\1_{\sum_{i=1}^{2k}\sigma_i=0}\int_{\Lambda^{2k}}
\rbr{e^{i\sqrt{\beta}\sum_{l=1}^{2k}\sigma_l w\E[\vp(x_l)\vp(f)]}-1}
\rbr{\tilV_\infty^{2k}(\xx,\Si)-\tilV_1^{2k}(\xx,\Si)}\d\xx.
\end{equation}
Indeed, by \Cref{eq:error form for Vinfty-Vt} we can write
\begin{equation}
\tilV_\infty^{2k}(\xx,\Si|m)-\tilV_1^{2k}(\xx,\Si|m)
=\tilV_\infty^{2k}(\xx,\Si)-\tilV_1^{2k}(\xx,\Si)+R_1^{2k}(m,\xx,\Si),
\end{equation}
where $R_1^{2k}(m,\xx,\Si)$ is the error term from \Cref{eq:def of the error term Rm} with $t=1$. By the uniform boundedness of the $\exp(\cdot)-1$ term and the error analysis in the proof of \Cref{cor:convergence of the Vinfty-Vt terms},  the error term converges to zero.

It therefore remains to show that $\lim_{m\to 0}I_2(m)=0$. Following the proof of \Cref{cor:convergence of the Vinfty-Vt terms}, we write
\begin{equation}
\begin{split}
I_2(m)
&=
-\sum_{\Si\in\{-1,1\}^{2k}}\1_{\sum_{i=1}^{2k}\sigma_i=0}\int_{\Lambda^{2k}}
\rbr{e^{i\sqrt{\beta}\sum_{l=1}^{2k}\sigma_l w\E_m[\vp(x_l)\vp(f)]}
-e^{i\sqrt{\beta}\sum_{l=1}^{2k}\sigma_l w\E[\vp(x_l)\vp(f)]}}\\
&\qquad\qquad\qquad\qquad\qquad\qquad\times
\rbr{\tilV_\infty^{2k}(\xx,\Si)-\tilV_1^{2k}(\xx,\Si)}\d\xx\\
&\quad-
\sum_{\Si\in\{-1,1\}^{2k}}\1_{\sum_{i=1}^{2k}\sigma_i=0}\int_{\Lambda^{2k}}
\rbr{e^{i\sqrt{\beta}\sum_{l=1}^{2k}\sigma_l w\E_m[\vp(x_l)\vp(f)]}
-e^{i\sqrt{\beta}\sum_{l=1}^{2k}\sigma_l w\E[\vp(x_l)\vp(f)]}}R_1^{2k}(m,\xx,\Si)\d\xx\\
&=:I_{2,1}(m)+I_{2,2}(m).
\end{split}
\end{equation}
Both terms satisfy $\lim_{m\to 0}I_{2,1}(m)=0=\lim_{m\to 0}I_{2,2}(m)$ by dominated convergence and arguments similar to those in the analysis of $I_1$ above, since the difference of the exponentials is uniformly bounded and converges to zero pointwise almost everywhere by \Cref{lem:basic correlations of the derivatives of the free field}.

This completes the proof that \Cref{eq:relevant part of the existence of the field} has a finite limit when first letting $\eps\to 0$ and then $m\to 0$, and thus also completes the proof of \Cref{thm:existence of the field}.
\end{subsection}
\end{section}

\appendix

\begin{section}{Definition and properties of cumulants}
\label{sec:Def and prop of cumulants}
Let $\{X_i\}_{i\in I}$ be an arbitrary collection of random variables on a common probability space. Then we can define the joint cumulants recursively. For $(i_1,i_2,\dots,i_N)=:A\subset I$ with $N\in\N$ we put
\begin{equation}
\begin{split}
\E^T\abr{X_i\mid i\in A}
:=\E\abr{\prod_{i\in A}X_i}
-\sum_{\Pi\in \mathfrak{P}_A'}\prod_{\pi\in \Pi}\E^T\abr{X_i\mid i\in \pi},
\end{split}
\end{equation}
where $\mathfrak{P}_A'$ denotes the set of proper partitions of $A$, that is, all partitions except the trivial one consisting of $A$ itself. If
\begin{equation}
\log(\E\abr{e^{\sum_{l=1}^Nt_lX_{l}}})
\end{equation}
exists and is analytic (for real random variables, the existence of exponential moments suffices), then for cumulants involving only $\{X_l\}_{l=1}^N$ (or analogously for arbitrary subsets of the index set $I$) we may write
\begin{equation}
\begin{split}
\E^T\abr{X_{n}\mid n\in [N]}
:=\partial_{1}\partial_{2}\dots\partial_{N}
\log(\E\abr{e^{\sum_{l=1}^Nt_lX_{l}}})
\bigg|_{t_i=0,\forall i\in A},
\end{split}
\end{equation}
with $[N]:=\{1,2,\dots,N\}$. Above we have written $\partial_{l}:=\partial/\partial_{t_{l}}$.

Cumulants and moments satisfy the so‑called Möbius inversion relations:
\begin{align}
\label{eq:moments to cumulants}
\E\abr{\prod_{i\in A}X_i}
&=\sum_{\Pi\in\mathfrak{P}_A}\prod_{\pi\in \Pi}\E^T\abr{X_i\mid i\in \pi},
\\
\label{eq:cumulants to moments}
\E^T\abr{X_i\mid i\in A}
&=\sum_{\Pi\in\mathfrak{P}_A}(|\Pi|-1)!(-1)^{|\Pi|-1}
\prod_{\pi\in \Pi}\E\abr{\prod_{i\in \pi}X_i},
\end{align}
where $\mathfrak{P}_A$ is now the set of all partitions of $A$ (including the trivial one), and $|B|$ denotes the cardinality of the set $B$. Thus, for $B=\Pi$, it counts the number of blocks $\pi$ in the partition $\Pi$.
\end{section}

\begin{section}{Correlation functions of the free field.}
\label{sec:Correlation functions of the reference field.}
In this section, we collect statements concerning the correlation functions of the free field. These correlation functions play a crucial role in the proofs of our main theorems. In addition to the usual smeared correlation functions, we will also consider pointwise correlation functions of the free field. Finally, besides the full free field, we will need certain correlation functions of the free field restricted to different scales, as explained below.

We begin by recalling that our regularized free field is a smooth centered Gaussian field $\vp_{m,\eps}$ with covariance
\begin{equation}
\E[\vp_{m,\eps}(x)\vp_{m,\eps}(y)]\equiv\E_{m,\eps}[\vp(x)\vp(y)]:=\int_{\eps^2}^\infty C_s^m(x,y)\ds,
\end{equation}
where $C_s^m(x,y):=\exp(-m^2s-|x-y|^2/4s)/(4\pi s)$. Furthermore, we denote by $\vp_m$ (with corresponding expectation $\E_m$) the massive free field, that is, a centered Gaussian field with covariance
\begin{equation}
\E[\vp_m(x)\vp_m(y)]\equiv\E_m[\vp(x)\vp(y)]:=\int_0^\infty C_s^m(x,y)\ds=\frac{1}{2\pi}K_0(m|x-y|),
\end{equation}
where $K_0$ is the modified Bessel function of the second kind of order zero. Lastly, as in the analysis of \Cref{sec: the renormalized potential}, we can express the regularized free field as a sum of two independent components. More precisely, let $\vp_{m,(\eps^2,t)}$ and $\vp_{m,\sqrt{t}}$ be independent Gaussian fields (defined on the same probability space as $\vp_{m,\eps}$) with covariances
\begin{equation}
\E_{m,(\eps^2,t)}[\vp(x)\vp(y)]:=\int_{\eps^2}^t C_s^m(x,y)\ds \quad\text{and}\quad \E_{m,\sqrt{t}}[\vp(x)\vp(y)]:=\int_t^\infty C_s^m(x,y)\ds
\end{equation}
respectively. Then the regularized free field admits the decomposition
\begin{equation}
\vp_{m,\eps}\overset{d}{=}\vp_{m,(\eps^2,t)}+\vp_{m,\sqrt{t}}
\end{equation}
The field $\vp_{m,(\eps^2,t)}$ contains (in the limit $\eps\to 0$) the UV-rough component on scales below $t$, while $\vp_{m,\sqrt{t}}$ is UV-regular.

The correlation functions we are ultimately interested in are the truncated mixed charge and gradient correlation functions
\begin{align}
\label{eq:pointwise mixed correlation function of massive free field}
\Ccal_m^{n,k}(\xx,\Si,\zz)
&:=\E_m^T\abr{\prod_{j=1}^n:e^{i\sqrt{\beta}\sigma_j\vp(x_j)}:\prod_{l=1}^k(D_l\vp)(z_l)}
:=\lim_{\eps\to 0}\E_{m,\eps}^T\abr{\prod_{j=1}^n\eps^{-\frac{\beta}{4\pi}}e^{i\sqrt{\beta}\sigma_j\vp(x_j)}\prod_{l=1}^k(D_l\vp)(z_l)},
\\
\label{eq:pointwise mixed correlation function of massless free field}
\Ccal^{n,k}(\xx,\Si,\zz)
&:=\E^T\abr{\prod_{j=1}^n:e^{i\sqrt{\beta}\sigma_j\vp(x_j)}:\prod_{l=1}^k(D_l\vp)(z_l)}
:=\lim_{m\to 0}\lim_{\eps\to 0}\E_{m,\eps}^T\abr{\prod_{j=1}^n\eps^{-\frac{\beta}{4\pi}}e^{i\sqrt{\beta}\sigma_j\vp(x_j)}\prod_{l=1}^k(D_l\vp)(z_l)}.
\end{align}
We also consider the corresponding smeared correlation functions
\begin{align}
\label{eq:smeared mixed correlation function of massive free field}
\Ccal_m^{n,k}(\V{f},\V{g})
&:=\E_m^T\abr{\prod_{j=1}^n:e^{i\sqrt{\beta}\sigma_j\vp}:(f_j)\prod_{l=1}^k(D_l\vp)(g_l)}
:=\lim_{\eps\to 0}\E_{m,\eps}^T\abr{\prod_{j=1}^n\eps^{-\frac{\beta}{4\pi}}e^{i\sqrt{\beta}\sigma_j\vp}(f_j)\prod_{l=1}^k(D_l\vp)(g_l)},
\\
\label{eq:smeared mixed correlation function of massless free field}
\Ccal^{n,k}(\V{f},\V{g})
&:=\E^T\abr{\prod_{j=1}^n:e^{i\sqrt{\beta}\sigma_j\vp}:(f_j)\prod_{l=1}^k(D_l\vp)(g_l)}
:=\lim_{m\to 0}\lim_{\eps\to 0}\E_{m,\eps}^T\abr{\prod_{j=1}^n\eps^{-\frac{\beta}{4\pi}}e^{i\sqrt{\beta}\sigma_j\vp}(f_j)\prod_{l=1}^k(D_l\vp)(g_l)}.
\end{align}
Here $x_j,z_l\in\R^d$ are pairwise distinct points, $f_j\in L_c^\infty(\R^d)$ for $j=1,2,\dots,n$, and $g_l\in C_c^\infty(\R^d)$ for $l=1,2,\dots,k$. Moreover, for each fixed $l$, the operator $D_l$ denotes either a partial derivative $\partial_a:=\partial/\partial x_a$ for some $a=1,2,\dots,d$, or, in the case $d=1$, simply the derivative $d/dx$ (we will denote $\vp'(x):=d/dx \vp(x)$). We also note that truncated correlation functions coincide with the cumulants of the corresponding random variables.

The results collected in this appendix are analogous to those in \cite[Section~2]{BaWe24a} for $d=2$, with $D_l=\partial$ or $\overline{\partial}$ (the Wirtinger derivatives). The pointwise correlation functions in arbitrary dimension follow directly from the computations therein, with the only modification being the form of the differential operators, which is straightforward to account for. The analysis of the smeared correlation functions requires some additional work in the case $d=1$.

First, we record the following lemma concerning basic correlation functions.
\begin{lemma}
\label{lem:basic correlations of the derivatives of the free field}
For arbitrary dimension $d$ we have
\begin{equation}
\begin{split}
\lim_{\eps \to 0}\E_{m,\eps}[(\vp(x))^2]+\frac{1}{2\pi}\log(\eps)&=-\frac{1}{2\pi}\log(m)-\frac{\gamma}{4\pi},
\\
\lim_{\eps \to 0}\E_{m,\eps}[\vp(x)\vp(y)]&=-\frac{1}{2\pi}\log(m) - \frac{1}{2\pi}\log(\half|x-y|)-\frac{\gamma}{2\pi}+\bigO(m|x-y|),
\end{split}
\end{equation}
where $\gamma$ denotes the Euler--Mascheroni constant.

Moreover, for dimensions $d\geq 2$ we have
\begin{equation}
\label{eq:derivative covariances in d geq 2}
\begin{split}
\lim_{m\to 0}\lim_{\eps\to 0}\E_{m,\eps}\abr{\vp(x)(\partial_j\vp)(y)}
&=\frac{1}{2\pi}\frac{x_j-y_j}{|x-y|^2}
=-\lim_{m\to 0}\lim_{\eps\to 0}\E_{m,\eps}\abr{(\partial_j\vp)(x)\vp(y)},
\\
\lim_{m\to 0}\lim_{\eps\to 0}\E_{m,\eps}\abr{(\partial_i\vp)(x)(\partial_j\vp)(y)}
&=\frac{1}{2\pi}\rbr{\frac{\1_{i=j}}{|x-y|^2}-2\frac{(x_i-y_i)(x_j-y_j)}{|x-y|^4}}
=:\frac{1}{2\pi}A_{ij}(x-y).
\end{split}
\end{equation}
For $d=1$ we instead obtain
\begin{equation}
\label{eq:derivative covariances in d is 1}
\begin{split}
\lim_{m\to 0}\lim_{\eps\to 0}\E_{m,\eps}\abr{\vp(x)\vp'(y)}
&=\frac{1}{2\pi}\frac{1}{x-y}
=\frac{1}{2\pi}\frac{\operatorname{sgn}(x-y)}{|x-y|}
=\frac{1}{2\pi}\frac{x-y}{|x-y|^2}
=-\lim_{m\to 0}\lim_{\eps\to 0}\E_{m,\eps}\abr{\vp'(x)\vp(y)},
\\
\lim_{m\to 0}\lim_{\eps\to 0}\E_{m,\eps}\abr{\vp'(x)\vp'(y)}
&=-\frac{1}{2\pi}\frac{1}{|x-y|^2}.
\end{split}
\end{equation}
All limits are uniform on compact subsets of $\R^{d}\times\R^d$ avoiding the diagonal $\{(x,y)\in\R^d\times\R^d\mid x=y\}$. Here $\operatorname{sgn}$ denotes the sign function, taking the value $1$ for nonnegative arguments and $-1$ for negative ones.

For $g\in L_c^\infty(\R^d)$ we further have
\begin{equation}
\begin{split}
\lim_{m\to 0}\lim_{\eps\to 0}\E_{m,\eps}[\vp(u)(\partial_j\vp)(g)]
&:=\int_{\R^d}g(x)\frac{1}{2\pi}\frac{u_j-x_j}{|u-x|^2}\dx,
\qquad (d\geq 2),
\\
\lim_{m\to 0}\lim_{\eps\to 0}\E_{m,\eps}[\vp(u)\vp'(g)]
&:=\int_{\R}g(x)\frac{1}{2\pi}\frac{u-x}{|u-x|^2}\dx,
\qquad (d=1),
\end{split}
\end{equation}
(with differentiation understood in the classical sense) uniformly on compact subsets of $u\in\R^d$ for $d\geq 2$ and of $u\in\R\setminus\supp(g)$ for $d=1$.

Similarly, for $f,g\in L_c^\infty(\R^d)$ with disjoint supports,
\begin{equation}
\begin{split}
\lim_{m\to 0}\lim_{\eps\to 0}\E_{m,\eps}[(\partial_i\vp)(f)(\partial_j\vp)(g)]
&:=\int_{\R^{2d}}f(x)g(y)\frac{1}{2\pi}A_{ij}(x-y)\dx\dy,
\qquad (d\geq 2),
\\
\lim_{m\to 0}\lim_{\eps\to 0}\E_{m,\eps}[\vp'(f)\vp'(g)]
&:=-\int_{\R^{2d}}f(x)g(y)\frac{1}{2\pi}\frac{1}{|x-y|^2}\dx\dy,
\qquad (d=1).
\end{split}
\end{equation}

To relax the assumption that $u$ avoids the support of $g$ in the one-dimensional case, or that $f$ and $g$ have disjoint supports, we may instead interpret derivatives in the distributional sense. In this case, for $f,g\in C_c^\infty(\R^d)$ we have
\begin{equation}
\label{eq:principal value integral statement}
\lim_{m\to 0}\lim_{\eps\to 0}\E_{m,\eps}[\vp(u)\vp'(g)]
:=\frac{1}{2\pi}\int_{\R}g'(y)\log(|u-y|)\dy
=\PV\int_{\R}g(x)\frac{1}{2\pi}\frac{1}{u-x}\dx,
\end{equation}
uniformly on compact subsets of $u\in\R$, and
\begin{equation}
\begin{split}
\lim_{m\to 0}\lim_{\eps\to 0}\E_{m,\eps}[(\partial_i\vp)(f)(\partial_j\vp)(g)]
&:=-\int_{\R^{2d}}(\partial_i f)(x)(\partial_j g)(y)\log(|x-y|)\dx\dy
\\
\lim_{m\to 0}\lim_{\eps\to 0}\E_{m,\eps}[\vp'(f)\vp'(g)]
&:=-\int_{\R^{2d}}f'(x)g'(y)\frac{1}{2\pi}\log(|x-y|)\dx\dy,
\qquad (d=1).
\end{split}
\end{equation}

Finally, for $f\in L_c^\infty(\R^d)$ satisfying $\int f=0$, we also have
\begin{equation}
\begin{split}
\lim_{m\to 0}\lim_{\eps\to 0}\E_{m,\eps}[\vp(u)\vp(f)]
&=-\int_{\R^d}f(x)\frac{1}{2\pi}\log(|x-u|)\dx,
\\
\lim_{m\to 0}\lim_{\eps\to 0}\E_{m,\eps}[(\vp(f))^2]
&=-\int_{\R^{2d}}f(x)f(y)\frac{1}{2\pi}\log(|x-y|)\dx\dy,
\end{split}
\end{equation}
uniformly on compact subsets of $u\in\R^d$.
\end{lemma}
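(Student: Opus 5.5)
The plan is to compute everything explicitly from the heat-kernel representation of the covariance, $\E_{m,\eps}[\vp(x)\vp(y)]=\int_{\eps^2}^\infty C_s^m(x,y)\ds$. First let $\eps\to0$, and then split off the mass dependence via the Bessel asymptotics $K_0(r)=-\gamma-\log(\tfrac12 r)+\bigO(r)$.

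For the diagonal statement we write $\int_{\eps^2}^\infty e^{-m^2s}\frac{\ds}{4\pi s}=\frac{1}{4\pi}E_1(m^2\eps^2)$ and use $E_1(u)=-\gamma-\log u+\bigO(u)$. This gives $-\frac{1}{4\pi}(\gamma+2\log m+2\log\eps)+o(1)$. For $x\neq y$ the integrand is integrable at $s=0$, so the limit $\eps\to0$ is $\frac{1}{2\pi}K_0(m|x-y|)$, and the Bessel expansion gives the second formula. The error $\bigO(m|x-y|)$ is uniform on compacts.

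For the derivative covariances off the diagonal we differentiate under the $s$-integral, which is justified for $x\neq y$ by \Cref{eq:estimate for derivatives of the massive heat kernel}. The constant $-\frac{1}{2\pi}\log m$ is independent of the spatial variables, so it disappears after differentiation. We then compute $\partial_{y_j}\bigl(-\frac{1}{2\pi}\log|x-y|\bigr)=\frac{1}{2\pi}\frac{x_j-y_j}{|x-y|^2}$, together with the second derivatives that give $A_{ij}$. In $d=1$ these become $\frac{1}{2\pi}\frac{1}{x-y}$ and $-\frac{1}{2\pi}|x-y|^{-2}$.

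To let $m\to0$ we control the remainder. We use $K_0'(r)=-K_1(r)$ with $K_1(r)=1/r+\bigO(r\log r)$, and similarly for $K_0''$. The corrections are $\bigO(m^2|x-y|\log(m|x-y|))$, which vanish uniformly on compacts away from the diagonal.

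For the smeared statements with classical derivatives we integrate the pointwise formulas against $g$. When $d\ge2$ the kernel $|u-x|^{-1}$ is locally integrable. Uniform convergence then needs a majorant uniform in $m,\eps$: we bound $|\partial_j\int_{\eps^2}^\infty C^m_s\ds|\le C\int_0^\infty s^{-3/2}f_d(|u-x|/\sqrt s)\ds\le C'|u-x|^{1-d}$ by scaling, which is integrable for $d\ge2$, and apply dominated convergence. When $d=1$ the kernel $1/(u-x)$ is not integrable, so we require $u\notin\supp g$, and likewise disjoint supports for the $f,g$ statement; the pointwise convergence is then uniform on the support.

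For the distributional versions we move the derivatives onto the test functions, $(D\vp)(g)=-\vp(Dg)$. This reduces everything to $\E_{m,\eps}[\vp(u)\vp(g')]$ and $\E[\vp(f')\vp(g')]$. Since $\int g'=0$, the divergent term $-\frac{1}{2\pi}\log m-\frac{\gamma}{2\pi}$ cancels exactly. The remaining kernel is $\int_{\eps^2}^\infty C_s^m\ds$, which is dominated by $C(1+|\log|x-y||)$ uniformly in $m,\eps\in(0,1)$ on compacts (as in the estimate for $E^2_{m,\eps}$ in \Cref{eq:bound for E2m}). Dominated convergence then gives the log-kernel formulas, and the principal value identity \Cref{eq:principal value integral statement} follows by integrating by parts on $|u-y|>\delta$ and letting $\delta\to0$. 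The same cancellation argument applied to $\int f=0$ gives the final two statements; for $(\vp(f))^2$ the cancellation is doubled.

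The main obstacle is obtaining majorants that are uniform in both $m$ and $\eps$, so that the iterated limits commute with the spatial integrals. The $-\log m$ divergence is not integrable away uniformly, and it must be removed by the mean-zero cancellation before any bound is taken. The plan is to write the regularized kernel as $\frac{1}{2\pi}K_0(m|\cdot|)$ minus the $\int_0^{\eps^2}$ piece, exactly as in the $E^1_m,E^2_{m,\eps}$ decomposition. The first piece is handled with the Bessel expansion after the cancellation, and the second with the exponential-integral bound \Cref{eq:exponential integral estimate}.
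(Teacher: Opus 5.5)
Your proposal is correct and follows essentially the paper's route: the heat-kernel/Bessel representation, differentiation under the $s$-integral, the $K_0$ and $E_1$ asymptotics, cancellation of the $-\frac{1}{2\pi}\log m$ constant against mean-zero test functions, dominated convergence, and integration by parts for the principal value, with explicit majorants where the paper defers to \cite[Lemma~2.4]{BaWe24a}. Two small slips do not affect the argument: first, since $C_s$ carries the prefactor $s^{-1}$ in every dimension, scaling gives $\int_0^\infty s^{-3/2}f_d(|u-x|/\sqrt{s})\ds\le C|u-x|^{-1}$, not $|u-x|^{1-d}$ (which would wrongly make $d=1$ harmless); second, the bound $C(1+|\log|x-y||)$ uniform in $m$ holds for $\int_{\eps^2}^\infty C_s^m\ds+\frac{1}{2\pi}\log m+\frac{\gamma}{2\pi}$ rather than for the kernel itself, exactly as your final paragraph arranges.
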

\begin{remark}
\label{rem: basic correlations of the free field}
Analogous limits with fixed $m$ also exist and enjoy the same local uniformity. These can be computed explicitly using the differentiation rules for the modified Bessel functions.
\end{remark}
\begin{proof}
The basic ingredients are the convergence
\begin{equation}
\E_{m,\eps}[\vp(x)\vp(y)]\overset{\eps\to 0}{\longrightarrow}\frac{1}{2\pi}K_0(m|x-y|)
\end{equation}
uniformly on compact subsets of $\R^d\times\R^d$ avoiding the diagonal, together with the well-known asymptotics \Cref{eq:asymptotics of K0} of the modified Bessel function $K_0$. The first fact allows us to interchange the limit $\eps\to 0$ with differentiation, while the asymptotic behavior renders the computation of the derivatives straightforward. The uniformity of the limits follows exactly as in the proof of \cite[Lemma~2.4]{BaWe24a}, and in particular justifies the use of dominated convergence.

The statements involving distributional derivatives follow by integration by parts and basic distribution theory of the principal value of the Cauchy kernel in \ref{eq:principal value integral statement}.
\end{proof}

Then we have:
\begin{lemma}
\label{lem:pointwise charge correlations of the free field}
For any $\beta>0$ and distinct $x_i\in\R^d$, $i=1,2,\dots,n$, the limits
\begin{equation}
\begin{split}
\label{eq:pointwise charge correlations of the free field}
\E_m\abr{\prod_{j=1}^n:e^{i\sqrt{\beta}\sigma_j\vp(x_j)}:}
&:=\lim_{\eps\to 0}\E_{m,\eps}\abr{\prod_{j=1}^n\eps^{-\frac{\beta}{4\pi}}e^{i\sqrt{\beta}\sigma_j\vp(x_j)}},
\\
\E\abr{\prod_{j=1}^n:e^{i\sqrt{\beta}\sigma_j\vp(x_j)}:}
&:=\lim_{m\to 0}\lim_{\eps\to 0}\E_{m,\eps}\abr{\prod_{j=1}^n\eps^{-\frac{\beta}{4\pi}}e^{i\sqrt{\beta}\sigma_j\vp(x_j)}}
\end{split}
\end{equation}
exist. Moreover, the latter is given explicitly by
\begin{equation}
\E\abr{\prod_{j=1}^n:e^{i\sqrt{\beta}\sigma_j\vp(x_j)}:}
=\1_{\sum_{i=1}^n\sigma_i=0}(4e^{-\gamma})^{\frac{n\beta}{4\pi}}
\prod_{1\leq i<j\leq n}|x_i-x_j|^{\sigma_{ij}\frac{\beta}{2\pi}},
\end{equation}
where $\sigma_{ij}:=\sigma_i\sigma_j$.
\end{lemma}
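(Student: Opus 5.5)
The plan is a direct Gaussian computation followed by the two limits. Since $\vp_{m,\eps}$ is centered Gaussian, for fixed $m,\eps$ we have
\begin{equation*}
\E_{m,\eps}\abr{\prod_{j=1}^n\eps^{-\frac{\beta}{4\pi}}e^{i\sqrt{\beta}\sigma_j\vp(x_j)}}
=\eps^{-\frac{n\beta}{4\pi}}\exp\Bigl(-\frac{\beta}{2}\sum_{k,l=1}^n\sigma_{kl}K_\eps^m(x_k,x_l)\Bigr).
\end{equation*}
First split off the diagonal terms $k=l$, which give $\prod_j\bigl[\eps^{-\frac{\beta}{4\pi}}e^{-\frac{\beta}{2}K_\eps^m(0)}\bigr]$. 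The off-diagonal terms give $\prod_{k<l}e^{-\beta\sigma_{kl}K_\eps^m(x_k,x_l)}$.

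For the $\eps\to0$ limit I would invoke the first part of \Cref{lem:basic correlations of the derivatives of the free field}. It says $K_\eps^m(0)+\frac{1}{2\pi}\log\eps\to-\frac{1}{2\pi}\log m-\frac{\gamma}{4\pi}$, so each diagonal factor converges to $m^{\frac{\beta}{4\pi}}e^{\frac{\gamma\beta}{8\pi}}$. Because the points are distinct, $K_\eps^m(x_k,x_l)\to\frac{1}{2\pi}K_0(m|x_k-x_l|)$, which is finite. This already gives the first limit, for any $\beta>0$, as an explicit product.

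For the $m\to0$ limit, insert the asymptotics $\frac{1}{2\pi}K_0(m r)=-\frac{1}{2\pi}\log m-\frac{1}{2\pi}\log(\tfrac{r}{2})-\frac{\gamma}{2\pi}+\bigO(mr)$ from the same lemma. Collecting the powers of $m$ gives the exponent
\begin{equation*}
\frac{n\beta}{4\pi}+\frac{\beta}{2\pi}\sum_{k<l}\sigma_{kl}=\frac{\beta}{4\pi}\Bigl(\sum_j\sigma_j\Bigr)^2 .
\end{equation*}
This uses the identity $\sum_{k<l}\sigma_{kl}=\tfrac12\bigl[(\sum_j\sigma_j)^2-n\bigr]$. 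Hence if the configuration is not neutral, the prefactor is $m^{\frac{\beta}{4\pi}(\sum\sigma)^2}\to0$, while the remaining factors stay bounded. This produces the indicator $\1_{\sum_i\sigma_i=0}$.

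If the configuration is neutral, the $m$-dependence cancels exactly. The $\bigO(m|x_k-x_l|)$ errors vanish, and the remaining product $\prod_{k<l}|x_k-x_l|^{\sigma_{kl}\frac{\beta}{2\pi}}$ comes from the $\log r$ terms. The last step is to collect the constants $2$ and $e^{\gamma}$ from both the diagonal and the off-diagonal terms, again using the identity $\sum_{k<l}\sigma_{kl}=-n/2$. This should yield the stated constant $(4e^{-\gamma})^{\frac{n\beta}{4\pi}}$, possibly after matching normalisation conventions.

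I expect the only real care to be needed in this bookkeeping of constants. The existence claims themselves are elementary once the covariance asymptotics of \Cref{lem:basic correlations of the derivatives of the free field} are available.
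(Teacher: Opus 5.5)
Your argument is the paper's own: the Gaussian identity for the regularized moment, followed by the first two statements of \Cref{lem:basic correlations of the derivatives of the free field}. Everything you actually carry out is correct. This includes the limit $m^{\beta/4\pi}e^{\gamma\beta/8\pi}$ of each diagonal factor, the total power $m^{\frac{\beta}{4\pi}(\sum_j\sigma_j)^2}$, and hence the indicator $\1_{\sum_i\sigma_i=0}$.

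The one step you leave open is the constant, and it does not come out as you predict.

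\textbf{What the bookkeeping gives.} Use exactly the asymptotics you quote. Apart from the $m$- and $|x_k-x_l|$-powers, the pair $(k,l)$ contributes $(2e^{-\gamma})^{-\frac{\beta}{2\pi}\sigma_{kl}}$. In the neutral case $\sum_{k<l}\sigma_{kl}=-\frac n2$, so the constant is
\[
e^{\frac{n\beta\gamma}{8\pi}}\,(2e^{-\gamma})^{\frac{n\beta}{4\pi}}=(2e^{-\gamma/2})^{\frac{n\beta}{4\pi}}=(4e^{-\gamma})^{\frac{n\beta}{8\pi}}.
\]
This is the square root of the displayed $(4e^{-\gamma})^{\frac{n\beta}{4\pi}}$.

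\textbf{Check with $n=2$.} Take $\sigma=(1,-1)$. The regularized moment equals $\exp\bigl(-\beta[K_\eps^m(0)-K_\eps^m(x_1,x_2)+\frac{1}{2\pi}\log\eps]\bigr)$. By \Cref{lem:basic correlations of the derivatives of the free field}, the bracket tends to $\frac{1}{2\pi}\log\frac{|x_1-x_2|}{2}+\frac{\gamma}{4\pi}+\bigO(m|x_1-x_2|)$. The limit is therefore $(4e^{-\gamma})^{\frac{\beta}{4\pi}}|x_1-x_2|^{-\frac{\beta}{2\pi}}$, whereas the displayed formula gives $(4e^{-\gamma})^{\frac{\beta}{2\pi}}|x_1-x_2|^{-\frac{\beta}{2\pi}}$.

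No choice of normalisation convention repairs this. The constants in \Cref{lem:basic correlations of the derivatives of the free field} are precisely the ones you used, so the exponent in the statement should read $\frac{n\beta}{8\pi}$. The slip is harmless for the rest of the paper, which only uses existence and the neutrality indicator. Your write-up should still state the constant you actually obtain, rather than asserting that it matches the displayed one.
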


\begin{remark}
\label{rem:pointwise truncated charge correlation functions of the free field}
The corresponding truncated correlation functions are obtained by applying the cumulants-to-moments formula \Cref{eq:cumulants to moments}. We denote these truncated correlations by
\begin{align}
\label{eq:pointwise truncated charge correlation function of the massive free field}
\Ccal_m^n(\xx,\Si)
&:=\E_m^T\abr{\prod_{j=1}^n:e^{i\sqrt{\beta}\sigma_j\vp(x_j)}:}
=\lim_{\eps\to 0}\E_{m,\eps}^T\abr{\prod_{j=1}^n\eps^{-\frac{\beta}{4\pi}}e^{i\sqrt{\beta}\sigma_j\vp(x_j)}},
\\
\label{eq:pointwise truncated charge correlation function of the massless free field}
\Ccal^n(\xx,\Si)
&:=\E^T\abr{\prod_{j=1}^n:e^{i\sqrt{\beta}\sigma_j\vp(x_j)}:}
=\lim_{m\to 0}\lim_{\eps\to 0}\E_{m,\eps}^T\abr{\prod_{j=1}^n\eps^{-\frac{\beta}{4\pi}}e^{i\sqrt{\beta}\sigma_j\vp(x_j)}}.
\end{align}
\end{remark}

\begin{proof}[Proof of \Cref{lem:pointwise charge correlations of the free field}.]
The proof is exactly the same as that of \cite[Lemma~2.5]{BaWe24a}. A straightforward Gaussian computation yields
\begin{equation}
\E_{m,\eps}\abr{\prod_{j=1}^n\eps^{-\frac{\beta}{4\pi}}e^{i\sqrt{\beta}\sigma_j\vp(x_j)}}
=\eps^{-\frac{n\beta}{4\pi}}e^{-\half n\beta K_\eps^m(0)}
e^{-\beta\sum_{1\leq i<j\leq n}\sigma_{ij}K_\eps^m(i,j)},
\end{equation}
where we use the notation introduced in \Cref{eq:some notations for K and the heat kernel}. The claim then follows from the first two statements of \Cref{lem:basic correlations of the derivatives of the free field}.
\end{proof}

We also require charge correlation functions for the fields $\vp_{m,(\eps^2,t)}$ and $\vp_{m,\sqrt{t}}$. More precisely, we have the following result.

\begin{lemma}
\label{lem:pointwise charge correlation functions of the scale decomposed free field}
For any $\beta,t>0$ and distinct $x_i\in\R^d$, $i=1,2,\dots,n$, the limits
\begin{equation}
\lim_{m\to 0}\lim_{\eps \to 0}\E_{m,(\eps^2,t)}\abr{\prod_{i=1}^n\eps^{-\frac{\beta}{4\pi}}e^{i\sqrt{\beta}\sigma_i\vp(x_i)}}
\end{equation}
and
\begin{equation}
\lim_{m\to 0}\E_{m,\sqrt{t}}\abr{\prod_{i=1}^n e^{i\sqrt{\beta}\sigma_i\vp(x_i)}}
\end{equation}
exist and are finite. Analogous statements hold for the corresponding truncated correlation functions by the cumulants to moments formula in \ref{eq:cumulants to moments}. All claims remain valid for fixed $m$.
\end{lemma}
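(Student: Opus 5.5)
The plan is to use the explicit Gaussian computation together with the scale decomposition. For the small-scale field $\vp_{m,(\eps^2,t)}$ we have
\[
\E_{m,(\eps^2,t)}\Bigl[\prod_{i=1}^n\eps^{-\frac{\beta}{4\pi}}e^{i\sqrt{\beta}\sigma_i\vp(x_i)}\Bigr]
=[\tilV_t^1(m,\eps)]^n\prod_{1\leq k<l\leq n}e^{-\beta\sigma_{kl}K_{\eps^2,t}^m(k,l)},
\]
which is exactly the identity used in the proof of part 3 of \Cref{thm:the equivalence of St and Vt}. The first factor is
\[
\exp\Bigl(-\tfrac{\beta}{2}\Bigl[\int_{\eps^2}^t C_s^m(0)\ds+\tfrac{1}{4\pi}\log(\eps^2)\Bigr]\Bigr),
\]
and by \Cref{eq:useful form of V1} it converges as $\eps\to0$ to an explicit finite limit. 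By dominated convergence (the integrand $\frac{1-e^{-m^2s}}{4\pi s}$ is integrable near $0$) that limit is continuous in $m$ and tends to $\exp(-\frac{\beta}{8\pi}\log t)=t^{-\beta/(8\pi)}$ as $m\to0$. For distinct points, $K_{\eps^2,t}^m(k,l)=\int_{\eps^2}^t C_s^m(x_k,x_l)\ds$ converges to $\int_0^t C_s^m(x_k,x_l)\ds$ by monotone convergence, and the limit is finite because $C_s(x_k,x_l)\leq (4\pi s)^{-1}e^{-|x_k-x_l|^2/(4s)}$ is integrable at $s=0$. The $m\to0$ limit of $\int_0^t e^{-m^2s}C_s(x_k,x_l)\ds$ then follows by dominated convergence, since $t<\infty$. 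Since the product has finitely many factors, the first limit exists and is finite.

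For the large-scale field $\vp_{m,\sqrt t}$ no renormalization is needed, and
\[
\E_{m,\sqrt{t}}\Bigl[\prod_i e^{i\sqrt\beta\sigma_i\vp(x_i)}\Bigr]=\exp\Bigl(-\tfrac{\beta}{2}\sum_{k,l=1}^n\sigma_{kl}K_{\sqrt t}^m(x_k,x_l)\Bigr).
\]
Each $K_{\sqrt t}^m(x_k,x_l)=\int_t^\infty e^{-m^2s}C_s(x_k,x_l)\ds$ is finite for $m>0$, including on the diagonal, but diverges logarithmically as $m\to0$. This is the main obstacle, and I will handle it by splitting charges into neutral and non-neutral configurations. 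Write
\[
\sum_{k,l}\sigma_{kl}K_{\sqrt t}^m(k,l)=\int_t^\infty e^{-m^2s}\sum_{k,l}\sigma_{kl}C_s(k,l)\ds .
\]
The integrand is nonnegative because $C_s$ is a covariance.

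If $\sum_i\sigma_i=0$, the second claim of \Cref{lem:estimate for the full sum of the massive heat kernels} bounds the integrand by $n^2C(t,\Lambda)s^{-3/2}$ on a compact $\Lambda$ containing the points. Here I apply that estimate for all $n$, since its proof uses only neutrality and not $n\leq N$. The bound is integrable on $(t,\infty)$, so dominated convergence gives a finite limit $\int_t^\infty\sum\sigma_{kl}C_s(k,l)\ds$.

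If $\sum_i\sigma_i\neq0$, I instead show the expectation tends to $0$. Expanding $C_s(k,l)=C_s(0)+(C_s(k,l)-C_s(0))$, the integrand equals $(\sum_i\sigma_i)^2C_s(0)$ plus a remainder. By the mean value argument of \Cref{lem:estimate for the full sum of the massive heat kernels}, that remainder is $O(s^{-2})$ on compacts, so it is integrable at infinity. Meanwhile $(\sum\sigma_i)^2\int_t^\infty e^{-m^2s}\frac{\ds}{4\pi s}\to\infty$, so the exponential tends to $0$, and in particular the limit exists and is finite.

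The fixed-$m$ statements are immediate from the same formulas without the last limit. For the truncated versions, the cumulants-to-moments formula \Cref{eq:cumulants to moments} expresses each truncated correlation as a finite combination of moments over subsets of the distinct points, each of which converges by the above.
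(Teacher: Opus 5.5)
Your argument is correct. For the small-scale limit it matches the paper's computation; the paper expresses $K_{\eps^2,t}^m(0)$ through the exponential integral to extract $t^{-n\beta/(8\pi)}$, and bounds $K^m_{\eps^2,t}(i,j)$ by an $E_1$ term to apply dominated convergence.

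For the large-scale limit you take a genuinely different route. The paper never analyses $\E_{m,\sqrt t}$ directly. It uses independence to factor $\E_{m,\eps}\bigl[\prod_j\eps^{-\frac{\beta}{4\pi}}e^{i\sqrt\beta\sigma_j\vp(x_j)}\bigr]$ into the small-scale and large-scale expectations, and observes that the left-hand side converges by \Cref{lem:pointwise charge correlations of the free field}. Since the small-scale limit is finite and nonzero, the large-scale limit then exists as a quotient.

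You instead study $\int_t^\infty e^{-m^2s}\sum_{k,l}\sigma_{kl}C_s(k,l)\ds$ directly. For neutral configurations you use neutrality to get integrable decay as $s\to\infty$. Your extension of \Cref{lem:estimate for the full sum of the massive heat kernels} beyond $n\le N$ is legitimate, because its pairing argument only uses $\sum_i\sigma_i=0$; a Taylor expansion even shows the neutral sum is $O(s^{-2})$. For non-neutral configurations you show the exponent diverges, so the limit is $0$.

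The paper's route is shorter. It reuses the full-field lemma, and hence the Bessel asymptotics of $K_0$, and it gives the value of the large-scale limit immediately as an explicit quotient. Your route is self-contained and does not need the small-scale limit to be nonzero. It also makes explicit the mechanism that charge neutrality controls the infrared behaviour, which the paper's proof encodes only implicitly through the indicator $\1_{\sum_i\sigma_i=0}$ in the full-field formula.
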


\begin{proof}
By independence, we may write
\begin{equation}
\begin{split}
\lim_{m\to 0}\lim_{\eps\to 0}\E_{m,\eps}\abr{\prod_{i=1}^n\eps^{-\frac{\beta}{4\pi}}e^{i\sqrt{\beta}\sigma_i\vp(x_i)}}
&=\rbr{\lim_{m\to 0}\lim_{\eps\to 0}\E_{m,(\eps^2,t)}\abr{\prod_{i=1}^n\eps^{-\frac{\beta}{4\pi}}e^{i\sqrt{\beta}\sigma_i\vp(x_i)}}}
\\
&\quad\times\lim_{m\to 0}\E_{m,\sqrt{t}}\abr{\prod_{i=1}^n e^{i\sqrt{\beta}\sigma_i\vp(x_i)}}.
\end{split}
\end{equation}
Since the limit on the left-hand side exists by \Cref{lem:pointwise charge correlations of the free field}, it suffices to show that at least one of the limits on the right-hand side is finite and non-zero. We therefore establish the first limit.

A direct Gaussian computation yields
\begin{equation}
\label{eq:charge moments at scale t}
\E_{m,(\eps^2,t)}\abr{\prod_{j=1}^n\eps^{-\frac{\beta}{4\pi}}e^{i\sqrt{\beta}\sigma_j\vp(x_j)}}
=\eps^{-\frac{n\beta}{4\pi}}e^{-\half n\beta K_{\eps^2,t}^m(0)}
e^{-\beta\sum_{1\leq i<j\leq n}\sigma_{ij}K_{\eps^2,t}^m(i,j)},
\end{equation}
where we again use the notation from \Cref{eq:some notations for K and the heat kernel}. We first consider
\begin{equation}
K_{\eps^2,t}^m(0)
=\int_{\eps^2}^t e^{-m^2s}\frac{\ds}{4\pi s}
\overset{u=m^2 s}{=}\frac{1}{4\pi}\int_{m^2\eps^2}^{m^2 t}e^{-u}\frac{\du}{u}
=\frac{1}{4\pi}\rbr{E_1(m^2\eps^2)-E_1(m^2t)},
\end{equation}
where $E_1$ is the exponential integral.
Using \Cref{eq:exponential integral estimate} we obtain
\begin{equation}
K_{\eps^2,t}^m(0)
=-\frac{1}{4\pi}\log\!\rbr{\frac{\eps^2}{t}}
+\bigO(m^2t)+\bigO((m\eps)^2)
=-\frac{1}{2\pi}\log(\eps)+\frac{1}{4\pi}\log(t)+\bigO(m^2t)+\bigO((m\eps)^2).
\end{equation}
Substituting this into the right-hand side of \Cref{eq:charge moments at scale t} yields for the first two factors
\begin{equation}
\lim_{m\to 0}\lim_{\eps\to 0}\eps^{-\frac{n\beta}{4\pi}}e^{-\half n\beta K_{\eps^2,t}^m(0)}=t^{-\frac{n\beta}{8\pi}}
\end{equation}

Next, we have
\begin{equation}
\begin{split}
K_{\eps^2,t}^m(i,j)
&\leq\frac{1}{4\pi}\int_0^t
e^{-\frac{|x_i-x_j|^2}{4s}}\frac{\ds}{s}=E_1\rbr{\frac{|x_i-x_j|^2}{4t}}<\infty
\end{split}
\end{equation}
for fixed $t>0$. Thus, by dominated convergence we obtain
\begin{equation}
\lim_{m\to 0}\lim_{\eps\to 0}\E_{m,(\eps^2,t)}\abr{\prod_{j=1}^n\eps^{-\frac{\beta}{4\pi}}e^{i\sqrt{\beta}\sigma_j\vp(x_j)}}=t^{-\frac{n\beta}{8\pi}}\exp(-\beta\sum_{1\leq i<j\leq n}\sigma_{ij}E_1\rbr{\frac{|x_i-x_j|^2}{4t}})<\infty
\end{equation}
for fixed $t>0$. This completes the proof.
\end{proof}

Now we can state the result for the pointwise truncated mixed charge and gradient correlation functions.

\begin{lemma}
\label{lem:pointwise mixed correlation function of the free field}
Let $\beta>0$, $n\geq 1$, $k\geq 0$, $x_i,z_j\in\R^d$ distinct, and $\Si=(\sigma_1,\sigma_2,\dots,\sigma_n)\in\{-1,1\}^n$. Then the limits defining $\Ccal_m^{n,k}(\xx,\Si,\zz)$ and $\Ccal^{n,k}(\xx,\Si,\zz)$ in \Cref{eq:pointwise mixed correlation function of massive free field} and \Cref{eq:pointwise mixed correlation function of massless free field} respectively, exist uniformly on compact subsets of $(\xx,\zz)\in\R^{(n+k)d}$, where all $d$-dimensional components are distinct.

Furthermore, we have
\begin{equation}
\label{eq:exact form of pointwise mixed truncated correlation functions of massless free field}
\Ccal^{n,k}(\xx,\Si,\zz)
=\E^T\abr{\prod_{j=1}^n:e^{i\sqrt{\beta}\sigma_j\vp(x_j)}:}\prod_{l=1}^k\rbr{i\sqrt{\beta}\sum_{q=1}^n\sigma_q F_{d,l}(x_q-z_l)},
\end{equation} 
where
\begin{equation}
\label{eq:def F_{d,q}}
F_{d,l}(x):=\frac{1}{2\pi}
\begin{cases}
\frac{x_i}{|x|^2}, &\text{ if $d\geq 2$ and $D_l=\partial_i$,}
\\
\frac{x}{|x|^2}, &\text{ if $d=1$.}
\end{cases}
\end{equation}
\end{lemma}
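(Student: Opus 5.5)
The plan is to compute the mixed cumulant through the joint cumulant generating function of the regularized field. Fix $m,\eps>0$, parameters $\mu_1,\dots,\mu_n$, $\nu_1,\dots,\nu_k$, and set
\[
\Phi_{m,\eps}(\boldsymbol{\mu},\boldsymbol{\nu}):=\log \E_{m,\eps}\abr{\exp\rbr{\sum_{j=1}^n\mu_j\eps^{-\frac{\beta}{4\pi}}e^{i\sqrt{\beta}\sigma_j\vp(x_j)}+\sum_{l=1}^k\nu_l(D_l\vp)(z_l)}}.
\]
The truncated correlation is $\partial_{\mu_1}\cdots\partial_{\mu_n}\partial_{\nu_1}\cdots\partial_{\nu_k}\Phi_{m,\eps}$ at $0$; this is justified because the charge variables are bounded and $(D_l\vp)(z_l)$ is Gaussian, so the argument of the logarithm is analytic near the origin and equals $1$ there.

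Next we apply the Cameron--Martin--Girsanov identity quoted in Section 4, with $Y=\sum_l\nu_l(D_l\vp)(z_l)$. This shifts $\vp(x_j)$ by $\sum_l\nu_l\E_{m,\eps}[\vp(x_j)(D_l\vp)(z_l)]$ and produces the factor $e^{\frac12\E_{m,\eps}[Y^2]}$. That factor contributes to $\Phi_{m,\eps}$ only a quadratic polynomial in $\boldsymbol{\nu}$ with no $\mu$-dependence, so it is killed once $n\ge1$ derivatives in $\mu$ are taken.

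What remains is the cumulant generating function of the charges alone, with $\mu_j$ replaced by
\[
\mu_j\exp\rbr{i\sqrt{\beta}\sigma_j\sum_l\nu_l\E_{m,\eps}[\vp(x_j)(D_l\vp)(z_l)]}.
\]
Each $\mu$-derivative brings down exactly one such factor, so the $\mu$-derivatives at zero give $\E^T_{m,\eps}[\eps^{-\frac{\beta}{4\pi}}e^{i\sqrt{\beta}\sigma_j\vp(x_j)}\mid j\in[n]]$ times $\exp\rbr{i\sqrt{\beta}\sum_q\sigma_q\sum_l\nu_l\E_{m,\eps}[\vp(x_q)(D_l\vp)(z_l)]}$. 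The $\nu$-derivatives at zero of this exponential then give the product
\[
\prod_{l=1}^k\rbr{i\sqrt{\beta}\sum_{q}\sigma_q\E_{m,\eps}[\vp(x_q)(D_l\vp)(z_l)]}.
\]
This yields an exact finite-$(m,\eps)$ formula of the claimed factorized form.

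It then remains to pass to the limits. The charge cumulant converges, first as $\eps\to0$ and then as $m\to0$, by \Cref{lem:pointwise charge correlations of the free field} and \Cref{rem:pointwise truncated charge correlation functions of the free field}. The moments are continuous functions of the covariances at distinct points, and the cumulant is a finite polynomial combination of moments via \Cref{eq:cumulants to moments}, so local uniform convergence follows from local uniform convergence of $K^m_\eps(x_i,x_j)+\frac1{2\pi}\log\eps$-type quantities away from the diagonal. The covariance factors converge locally uniformly off the diagonal to $F_{d,l}(x_q-z_l)$ by \Cref{lem:basic correlations of the derivatives of the free field}. For $d=1$, note that $\lim\E[\vp(x)\vp'(z)]=\frac1{2\pi}\frac{1}{x-z}$, which matches $F_{1,l}$; for $d\ge2$, the sign convention $\frac{x_j-y_j}{|x-y|^2}$ matches $F_{d,l}$. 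Products of locally uniformly convergent, locally bounded functions converge locally uniformly, which gives both $\Ccal_m^{n,k}$ and $\Ccal^{n,k}$ together with the explicit formula.

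The main technical point is the interchange of the pointwise derivative $D_l$ at $z_l$ with the limit $\eps\to0$ in the covariance. This requires uniform convergence of the derivatives of $K^m_\eps$ off the diagonal, which follows from the heat kernel derivative bound \Cref{eq:estimate for derivatives of the massive heat kernel}: for $|x-y|\ge\delta$, $\int_0^{\eps^2}$ of the derivative is $\bigO(e^{-c\delta^2/\eps^2})$. Uniformity of the $m\to0$ limit for the log-divergent constants is handled because $\sum_q\sigma_q$ multiplies any $\log m$ term only through differences. For the derivative covariances, the $m$-dependent constant in $K_0$ is differentiated away, leaving the $\bigO(m|x-y|)$ error.
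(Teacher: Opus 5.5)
Your proposal is correct and follows the paper's route. The paper, via \cite[Lemma 2.6]{BaWe24a}, uses the same Cameron--Martin--Girsanov shift to get the exact identity at fixed $m,\eps$: the charge cumulant times $\prod_l\bigl(i\sqrt{\beta}\sum_q\sigma_q\E_{m,\eps}[\vp(x_q)(D_l\vp)(z_l)]\bigr)$. It then takes the limits using \Cref{lem:basic correlations of the derivatives of the free field,lem:pointwise charge correlations of the free field} and \Cref{rem:pointwise truncated charge correlation functions of the free field}.

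One minor point: your remark that $\log m$ terms enter ``only through differences'' is unnecessary. The charge-cumulant limit already comes from those cited results, and the derivative covariances contain no $\log m$ term.
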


\begin{remark}
\label{rem:pointwise gradient correlations}
If $n=0$ in the above lemma, by Gaussianity of the derivative fields only the $k=2$ terms are non-zero; for the massless case, these are given in \Cref{lem:basic correlations of the derivatives of the free field}.
\end{remark}

\begin{proof}
The proof is exactly as in \cite[Lemma 2.6]{BaWe24a}, taking into account the difference in the differential operators we use. Since we will need it later, we record the result of the Cameron-Martin-Girsanov transformation used in the proof of \cite[Lemma 2.6]{BaWe24a} before taking the limits. We have
\begin{equation}
\begin{split}
\label{eq:the mixed correlations of free field after Girsanov}
\E_{m,\eps}^T&\abr{\prod_{j=1}^n\eps^{-\frac{\beta}{4\pi}}e^{i\sqrt{\beta}\sigma_j\vp(x_j)}\prod_{l=1}^k(D_l\vp)(z_l)}
\\
&=\E_{m,\eps}^T\abr{\prod_{j=1}^n\eps^{-\frac{\beta}{4\pi}}e^{i\sqrt{\beta}\sigma_j\vp(x_j)}}\prod_{l=1}^k
\Big(i\sqrt{\beta}\sum_{q=1}^n\sigma_q \E_{m,\eps}[\vp(x_q)(D_l\vp)(z_l)]\Big).
\end{split}
\end{equation}
The result then follows from \Cref{lem:basic correlations of the derivatives of the free field,lem:pointwise charge correlations of the free field,rem:pointwise truncated charge correlation functions of the free field}.
\end{proof}

Next, we consider the smeared correlation functions. For these, we again use distributional derivatives.

\begin{proposition}
\label{prop:local integrability of the charge correlation function in dimension 1}
If $d=1$, fix $\beta\in (0,4\pi)$, and if $d\geq 2$, fix $\beta\in (0,(d+1)2\pi)$. Let $f_i\in L_c^\infty(\R\times \{-1,1\})$, $i=1,2,\dots,n$. Then for $n=1$ or $n>N$, where $N\equiv N(\beta)$ is even and such that $\beta\in[\beta_N,\beta_{N+2})$, we have 
\begin{equation}
\begin{split}
\Ccal^n(\V{f})&:=\lim_{m\to 0}\lim_{\eps\to 0}\E_{m,\eps}^T\abr{\prod_{j=1}^n \eps^{-\frac{\beta}{4\pi}} e^{i\sqrt{\beta}\sigma_j\vp}(f_j)}
\\
&=\sum_{\Si\in\{-1,1\}^n}\int_{\R^n} \rbr{\prod_{i=1}^n f_i(x_i,\sigma_i)} \Ccal^n(\xx,\Si) \d\xx < \infty,
\end{split}
\end{equation}
where $\Ccal^n$ is the pointwise truncated charge correlation function defined in \Cref{rem:pointwise truncated charge correlation functions of the free field}.

If $\supp(f_i)\cap\supp(f_j)=\emptyset$ whenever $i\neq j$, the same statement holds for all $n\geq 1$ and $\beta>0$. Similar claims hold for fixed $m\in (0,1)$.
\end{proposition}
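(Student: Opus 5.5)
The plan is to pass the limits through the spatial integral by dominated convergence. The truncated charge correlations will be identified with the functions $\tilV_\infty^n$ via the third claim of \Cref{thm:the equivalence of St and Vt}, and the majorants of \Cref{prop:induction statement} will supply the integrability. First, for fixed $m,\eps$ all random variables are bounded, so by multilinearity of cumulants
\begin{equation*}
\E_{m,\eps}^T\abr{\prod_{j=1}^n \eps^{-\frac{\beta}{4\pi}} e^{i\sqrt{\beta}\sigma_j\vp}(f_j)}=\sum_{\Si\in\{-1,1\}^n}\int_{\R^{nd}}\rbr{\prod_{i=1}^nf_i(x_i,\sigma_i)}(-1)^{n-1}\tilV_\infty^n(\xx,\Si|m,\eps)\d\xx .
\end{equation*}
The pointwise limits of the integrand away from the diagonal exist by \Cref{lem:pointwise charge correlations of the free field} and \Cref{rem:pointwise truncated charge correlation functions of the free field}, and they equal $\Ccal^n(\xx,\Si)$. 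It remains to produce integrable majorants on $\Lambda^n$, where $\Lambda$ is a compact set containing all supports.

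For the disjoint-support case with arbitrary $\beta>0$ the argument is elementary. On $\prod_i\supp(f_i)$ all pairwise distances are bounded below by some $\delta>0$. Using the explicit Gaussian formula in the proof of \Cref{lem:pointwise charge correlations of the free field}, together with the asymptotics of $K_0$ from \Cref{lem:basic correlations of the derivatives of the free field}, each moment $\E_{m,\eps}[\prod_{j\in\pi}\cdots]$ is bounded uniformly in $\eps$ and in $m\in(0,1)$. For non-neutral blocks the mass powers $m^{\beta(\sum\sigma)^2/(4\pi)}$ even vanish. The cumulant is a finite combination of such moments by \Cref{eq:cumulants to moments}, so it has a constant majorant on the compact set, and dominated convergence gives the claim for both limits. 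The case $n=1$ is trivial, since $\tilV_\infty^1=\lim_t\tilV_t^1$ is bounded by \Cref{lem:uniform bound for V1}, and its $m\to0$ limit vanishes because the configuration is non-neutral.

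For $n>N$ without a support assumption, I will split $\tilV_\infty^n=\tilV_t^n+(\tilV_\infty^n-\tilV_t^n)$ with $t=1<m^{-2}$. For the first piece, \Cref{prop:induction statement} gives an $m$- and $\eps$-independent majorant $h_1^n$. Because $n>N$, we have $H_1^n\equiv1$, so $\norm{h_1^n(\cdot,\Si)}_n<\infty$, and hence $\int_{\Lambda^n}h_1^n\le|\Lambda|\norm{h_1^n}_n$. For the second piece I will reuse the representation from the proof of \Cref{cor:convergence of the Vinfty-Vt terms}, namely $\int_1^\infty\partial_s\tilV_s^n\ds$, which is bounded by $F_1^{1,n}$ as in \Cref{eq:def of the majorant Ft(1n)}. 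I then estimate $\int_{\Lambda^n}F$ exactly as in \Cref{eq:first term of the integral bound for |Vinfty-Vt|} and \Cref{eq:second term of the integral bound for |Vinfty-Vt|}, using the trade \Cref{eq:trading the measure of Lambda}.

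The main obstacle is this second piece. For $n>N$ the resulting mass exponents $n(\beta/(4\pi)-1)+2$ and $n(\beta/(4\pi)-1)+1$ may be negative, so the majorant is not uniform in $m$. There is, however, no counterterm issue: $\tilV_\infty^n$ for $n>N$ should be an integrable function in the $m\to0$ limit. My first attempt will therefore be the same error decomposition as in \Cref{eq:decomposition of the Vinfty-Vt terms}. The massless difference $\tilV_\infty^n-\tilV_1^n$ will be controlled directly through the scaling of the explicit pointwise formula and the Onsager bound of \Cref{lemma:onsager type inequality with our parametrization} at large scales. The error term $R_1^n$ will be controlled with the $g$-majorants, which requires extending \Cref{lem:differentiability of Vtn with respect to mass ETC} to the relevant $n$. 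If the $d\ge2$ case at $N=2$ is needed, I will instead cite the explicit computations used for \cite[Lemma 5.4]{BaWe24a}.
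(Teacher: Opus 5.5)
\textbf{There is a genuine gap, at the step you yourself flag as the main obstacle.}

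What works: your treatment of the disjoint-support case, of $n=1$, of the small-scale piece $\tilV_1^n$, and of the limit $\eps\to0$ at fixed $m$. For the small-scale piece with $n>N$ one has $H_1^n\equiv1$, and $h_1^n$ is $m$-independent since $1<m^{-2}$. For the fixed-$m$ limit you also need to define $h_s^n$ for $s\geq m^{-2}$ and $n>N$ by the same recursion.

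The gap is the limit $m\to0$ of $\sum_{\Si}\int\prod_i f_i(x_i,\sigma_i)\,(\tilV_\infty^n-\tilV_1^n)(\xx,\Si|m)\,\d\xx$ for $n>N$ with overlapping supports. Your proposed way around it does not work.
\begin{itemize}
\item Every estimate in the proof of \Cref{cor:convergence of the Vinfty-Vt terms} ends in factors such as $m^{n(\frac{\beta}{4\pi}-1)+2}$ or $\mu^{n(\frac{\beta}{4\pi}-1)+1}$. This holds both for the bound on $\int_{\Lambda^n}F_t^{1,n}$ and for the bounds on $R_1,\dots,R_4$. These factors are harmless only because $\beta\geq(1-\frac1n)4\pi$ when $n\leq N$.
\item For $n>N$ the exponents become negative. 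For instance, take $N=0$ and a neutral configuration with $n=4$: then $\int_0^m\mu^{\frac{\beta}{\pi}-3}\d\mu=\infty$ for every $\beta\in(0,2\pi)$.
\item The $g$-majorants scale like $(m^{-2})^{n(\frac12-\frac{\beta}{8\pi})}$. So extending \Cref{lem:differentiability of Vtn with respect to mass ETC} to these $n$ gives bounds that blow up, not an error that vanishes.
\item The massless difference is not controlled either. The available majorant of $\int_1^\infty\partial_s\tilV_s^n\ds$ on $\Lambda^n$ is of order $\int_1^\infty s^{-2}(s^{\frac12-\frac{\beta}{8\pi}})^n\ds$. In the same example this diverges for every $\beta\in(0,2\pi)$.
\item \Cref{lemma:onsager type inequality with our parametrization} only bounds the exponential weight in the recursion, not the cancellations inside a truncated function.
\item \cite[Lemma 5.4]{BaWe24a} concerns only $\tilV_\infty^2-\tilV_t^2$, so it does not cover $n>N$ in $d\geq2$.
\end{itemize}

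The underlying problem is that once the scales above $t=1$ are expanded into cumulants, the majorants $h_t^n,g_t^n$ do not see the large-scale structure of configurations confined to $\Lambda$. The norms $\norm{\cdot}_n$ integrate over $\R^{d(n-1)}$ and pick up a volume factor $s^{(n-1)d/2}$. A new, charge-sensitive infrared estimate would be needed, and you do not supply one.

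The paper avoids the issue by following \cite[Lemma 2.8]{BaWe24a}:
\begin{itemize}
\item For $n>N$ the exponent of each counterterm $Z_k(\eta)$ is a polynomial of degree $2k\leq N$ in $\eta$. So with $\eta=-\sum_j\mu_jf_j$, the regularized cumulant is a mixed $\mu$-derivative of $\log\Zcal_R(\beta,\eta|m,\eps)$ at $\mu=0$.
\item Its limit follows from the convergence of $\Zcal_R$ and of the coefficients $\Mcal_L$ (\Cref{thm:convergence of the partition function}, \Cref{lem:pointwise limits of tildeMcal and their integrals}).
\item The identification with $\sum_{\Si}\int\prod_if_i\,\Ccal^n$ comes from the uniqueness argument of \cite[Lemma 5.6]{BaWe24a}.
\end{itemize}
In this route the scales above $t$ are never expanded. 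In $\Zcal_R=\E_{m,\sqrt t}[e^{-S_t^R}]$, and in the second representation of $\widetilde{\Mcal}_L$, they stay inside the smooth field $\vp_{m,\sqrt t}$. That field is controlled by $|e^{i\sqrt\beta\sum_j\sigma_j\vp_j}|\leq1$, \Cref{lem:estimate for the field dependent part of the counter term} and \Cref{lem:Gaussian tails}. The only large-scale cumulant differences that appear are $\tilV_\infty^n-\tilV_t^n$ with $n\leq N$, which is exactly where \Cref{cor:convergence of the Vinfty-Vt terms} applies.
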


The proof is completely analogous to that of \cite[Lemma 2.8]{BaWe24a}, but we need to use the kernels $\widetilde{\Mcal}$ defined in \Cref{prop:expansion for the renormalized partition function}, and \Cref{lem:pointwise limits of tildeMcal and their integrals} for the limits of their integrals. Their proof can be readily extended to arbitrary dimension $d\geq 1$ and to suitable kernels. Moreover, \cite[Lemma 5.6]{BaWe24a} can be modified to our setting since it does not depend explicitly on the precise form of the kernels, and the Fourier argument for uniqueness applies equally well in arbitrary dimension.

Finally, we can state the result for the mixed smeared correlation functions. 
\begin{proposition}
\label{prop:convergence of the smeared mixed correlation functions of the reference field in d=1}
Fix $\beta\in (0,4\pi)$ for $d=1$ and $\beta\in (0,(d+1)2\pi)$ for $d\geq 2$, and let $N\equiv N(\beta)$ be such that $\beta\in [\beta_N,\beta_{N+2})$. Also suppose $f_j\in L_c^\infty(\R\times \{-1,1\})$, $j=1,2,\dots,n$, and $g_l\in C_c^\infty(\R)$, $l=1,2,\dots,k$, and assume that if $n=2,3,\dots,N$, then $\supp(f_j)\cap\supp(f_{j'})=\emptyset$ whenever $j\neq j'$. Then the limits defining the mixed correlation functions $\Ccal^{n,k}(\V{f},\V{g})$ and $\Ccal_m^{n,k}(\V{f},\V{g})$ defined in \Cref{eq:smeared mixed correlation function of massless free field,eq:smeared mixed correlation function of massive free field}, respectively, exist and are finite. Furthermore, we have for $d\geq 2$
\begin{equation}
\begin{split}
\label{eq:limit of smeared mixed correlations of the free field}
\Ccal^{n,k}(\V{f},\V{g})&:=\lim_{m\to 0}\lim_{\eps\to 0}\E_{m,\eps}^T\abr{\prod_{j=1}^n\eps^{-\frac{\beta}{4\pi}}e^{i\sqrt{\beta}\sigma_j\vp}(f_j)\prod_{l=1}^k(D_l\vp)(g_l)}
\\
&\,\,=\sum_{\Si\in\{-1,1\}^n}\int_{\R^{(n+k)d}}\bigg(\prod_{j=1}^nf_j(x_j,\sigma_j)\bigg)\bigg(\prod_{l=1}^kg_l(z_l)\bigg)\Ccal^{n,k}(\xx,\zz,\Si)\d\zz\d\xx<\infty,
\end{split}
\end{equation}
where $\Ccal^{n,k}$ is the pointwise mixed truncated charge and gradient correlation function defined in \Cref{eq:pointwise mixed correlation function of massless free field} and given in closed form in \Cref{lem:pointwise mixed correlation function of the free field}. For $d=1$ we have
\begin{equation}
\begin{split}
\Ccal^{n,k}(\V{f},\V{g})&:=\lim_{m\to 0}\lim_{\eps\to 0}\E_{m,\eps}^T\abr{\prod_{j=1}^n\eps^{-\frac{\beta}{4\pi}}e^{i\sqrt{\beta}\sigma_j\vp}(f_j)\prod_{l=1}^k\vp'(g_l)}
\\
&\,\,=\sum_{\Si\in\{-1,1\}^n}\int_{\R^n}\bigg(\prod_{j=1}^nf_j(x_j,\sigma_j)\bigg)\E^T\abr{:e^{i\sqrt{\beta}\sigma_j\vp(x_j)}:\,\big|\, j\in [n]}\prod_{l=1}^k\abr{i\sqrt{\beta}\sum_{q=1}^n\sigma_q\E[\vp(x_q)\vp'(z_l)]}\d\xx
\\
&\,\,=\sum_{\Si\in\{-1,1\}^n}\int_{\R^n}\bigg(\prod_{j=1}^nf_j(x_j,\sigma_j)\bigg)\E^T\abr{:e^{i\sqrt{\beta}\sigma_j\vp(x_j)}:\,\big|\, j\in [n]}
\\
&\qquad\qquad\qquad\qquad\times\prod_{l=1}^k\abr{i\sqrt{\beta}\sum_{q=1}^n\sigma_q\PV\int_\R\frac{g_l(z_l)}{x_q-z_l}\dz_l}\d\xx,
\end{split}
\end{equation}
where 
\begin{equation}
\E^T\abr{:e^{i\sqrt{\beta}\sigma_j\vp(x_j)}:\,\big|\, j\in [n]}\quad \text{ and } \quad \E[\vp(x_q)\vp'(z_l)] 
\end{equation}
are defined by the limits in \Cref{eq:pointwise charge correlations of the free field} and \Cref{eq:derivative covariances in d is 1} respectively. 

Furthermore, if $k\geq 1$, the claims above hold for $n=2,3,\dots,N$ without assuming that the test functions have disjoint supports.
\end{proposition}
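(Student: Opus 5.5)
The plan is to reduce the mixed correlations to the pure charge cumulants, using the Cameron--Martin--Girsanov identity \Cref{eq:the mixed correlations of free field after Girsanov} in its smeared form. For fixed $m,\eps$ everything is bounded, so multilinearity lets us interchange integrals with the cumulant. This gives
\begin{equation*}
\Ccal_{m,\eps}^{n,k}(\V{f},\V{g})=\sum_{\Si\in\{-1,1\}^n}\int_{\R^{nd}}\Bigl(\prod_{j=1}^n f_j(x_j,\sigma_j)\Bigr)\E_{m,\eps}^T\abr{\eps^{-\frac{\beta}{4\pi}}e^{i\sqrt{\beta}\sigma_j\vp(x_j)}\mid j\in[n]}\prod_{l=1}^k\Bigl(i\sqrt{\beta}\sum_{q=1}^n\sigma_q\E_{m,\eps}[\vp(x_q)(D_l\vp)(g_l)]\Bigr)\d\xx .
\end{equation*}
We also record the case $n=0$: there only $k=2$ survives, and it is handled directly by \Cref{lem:basic correlations of the derivatives of the free field}. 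The gradient factors $\E_{m,\eps}[\vp(x)(D_l\vp)(g_l)]$ are understood distributionally, i.e.\ as $-\E_{m,\eps}[\vp(x)\vp(D_lg_l)]$. Since $\int D_lg_l=0$, these factors are bounded uniformly in $m,\eps$ and $x\in\Lambda$, and they converge uniformly on compacts. For $d=1$ this is the estimate for $E_m^1+E_{m,\eps}^2$ carried out in the proof of \Cref{thm:existence of the field}; in general it follows from \Cref{lem:basic correlations of the derivatives of the free field}. The limits are the principal-value expressions in \Cref{eq:principal value integral statement}. So the product over $l$ is a uniformly bounded, uniformly convergent multiplier, and the claimed formulas follow once we can pass the limit through the charge cumulant integral with this multiplier attached.

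For $n=1$, or $n>N$, or pairwise disjoint supports, I would follow the proof of \Cref{prop:local integrability of the charge correlation function in dimension 1}. Write the cumulant as $(-1)^{n-1}\tilV_\infty^n=(-1)^{n-1}(\tilV_t^n+[\tilV_\infty^n-\tilV_t^n])$, as in \Cref{eq:splitting of the regularized charge correlations}, with $t=1<m^{-2}$.
\begin{itemize}
\item For $n>N$ we have $H_1^n\equiv1$. The $\tilV_1^n$ part then has the $m,\eps$-independent integrable majorant $h_1^n$ from \Cref{prop:induction statement}, so dominated convergence applies, with the pointwise limits supplied by \Cref{rem:limits of the Vtn functions}.
\item The part $\tilV_\infty^n-\tilV_1^n$ for $n>N$ has the majorant $\int_1^\infty(\cdots)$ built from the recursion exactly as $F_t^{1,n}$. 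This step is not covered by \Cref{cor:convergence of the Vinfty-Vt terms}, which only treats $n\le N$.
\item For disjoint supports the pointwise formula of \Cref{lem:pointwise charge correlations of the free field} is bounded on the support region, uniformly as $m,\eps\to0$, so dominated convergence is immediate.
\end{itemize}
Multiplying by the bounded, uniformly convergent gradient factor preserves all of these majorant arguments.

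The genuinely new case, and the main obstacle, is $k\ge1$ with $2\le n\le N$ and non-disjoint supports. Without gradients this case fails: the neutral cumulant is not locally integrable for $\beta\ge\beta_n$. The mechanism to exploit is that the gradient multiplier vanishes on neutral collapse. Write $\sum_q\sigma_q\E[\vp(x_q)\vp(D_lg_l)]=\sum_q\sigma_q\Phi_l(x_q)$, where $\Phi_l$ is smooth with $m,\eps$-uniformly bounded derivative (this is \Cref{eq:sup norm of the derivative of the once smeared variance} applied to $D_lg_l$). Pairing charges by the Wasserstein-minimising permutation, as in \Cref{lem:estimate for the field dependent part of the counter term}, gives for neutral configurations
\begin{equation*}
\Bigl|\sum_q\sigma_q\Phi_l(x_q)\Bigr|\le C\,d(\xx)\le C'\,H_1^n(\xx,\Si),
\end{equation*}
and for non-neutral configurations $H_1^n\equiv1$ anyway. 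Hence one factor $l=1$ supplies exactly the weight $H_1^n$ needed in \Cref{prop:induction statement}, and the $\tilV_1^n$ part becomes dominated by $C\,H_1^nh_1^n$, which is integrable on $\Lambda^n$ uniformly in $m,\eps$.

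For the $\tilV_\infty^n-\tilV_1^n$ part I would use \Cref{cor:convergence of the Vinfty-Vt terms} in $d=1$ (respectively $d\ge2$). It gives the $\eps$-limit by domination through $F_1^{1,n}$. For the $m\to0$ limit I would split exactly as in the analysis of $I_1,I_2$ in the proof of \Cref{thm:existence of the field}: replace the bounded multiplier by its limit, then control the difference term and the error $R_1^n$ by dominated convergence. The remaining checks are that \Cref{prop:induction statement} gives domination uniformly for $t=1<m^{-2}$, and that the $m$-uniform bound on $\Phi_l'$ holds on the relevant compact set. The fixed-$m$ statements follow by omitting the last limit.
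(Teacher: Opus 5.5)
Your treatment of the genuinely new case ($k\ge1$, $2\le n\le N$, overlapping supports) is the paper's argument. It uses Girsanov and the split $\tilV_1^n+[\tilV_\infty^n-\tilV_1^n]$. The pairing bound turns one gradient factor into the weight $H_1^n$, so the $\tilV_1^n$ part is dominated by $C H_1^n h_1^n$. The $m\to0$ limit is then handled through the error $R_1^n$ plus the difference of the multipliers (the paper's $\Rcal_m^1$, $\Rcal_m^2$). The $n=1$ and disjoint-support cases are also fine.

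The gap is in your second item: the large-scale part $\tilV_\infty^n-\tilV_1^n$ for $n>N$ with overlapping supports. A majorant built from the recursion ``exactly as $F_t^{1,n}$'' is not uniform in $m$. It handles $\eps\to0$ at fixed $m$, but not $m\to0$.
\begin{itemize}
\item In the proof of \Cref{cor:convergence of the Vinfty-Vt terms}, the range $s\in(m^{-2},\infty)$ uses the bound $(m^{-2})^{(n-1)/2-n\beta/(8\pi)}$ on $h_s^n$. The resulting contribution is of order $m^{n(\beta/4\pi-1)+1}$.
\item This is bounded only if $\beta\ge(1-\frac1n)4\pi$, which is exactly why the corollary is restricted to $n\le N$.
\item Since $\beta<\beta_{N+2}$, the exponent is negative for every $n\ge N+2$. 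For $N=2$, $n=4$ it equals $\beta/\pi-3<0$ on $[2\pi,3\pi)$.
\item Trading $f_1\le1$ for $|\Lambda|$ as in \Cref{eq:trading the measure of Lambda} does not rescue this. It only improves the bound to growth like powers of $\log m^{-1}$; compare the $\tilde h_t^n$ estimates in \Cref{lem:differentiability of Vtn with respect to mass ETC}.
\end{itemize}
The $m\to0$ convergence for $n>N$ relies on cancellations, namely suppression of non-neutral configurations at large scales. Absolute-value bounds through the recursion cannot see these.

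The paper never uses $\tilV_\infty^n$ with $n>N$. These cases come from \Cref{prop:local integrability of the charge correlation function in dimension 1}, which is proved through the partition function. Concretely, at $z=0$ the Girsanov computation of \Cref{sec:The analysis of the correlation functions and the existence of the field and t} shows that the mixed cumulant equals $\prod_j\partial_{\mu_j}\prod_l\partial_{\nu_l}\log\Zcal_R(\beta,\eta_{\Bold{\mu},\Bold{\nu},0,\Lambda}(\cdot,\cdot|m,\eps)|m,\eps)$ at the origin. The reason is:
\begin{itemize}
\item the counterterm factor is a polynomial of degree at most $N$ in $\Bold{\mu}$;
\item the Gaussian factor does not depend on $\Bold{\mu}$;
\item so both are annihilated by $n>N$ derivatives in distinct $\mu_j$.
\end{itemize}
Part 4 of \Cref{thm:convergence of the partition function} then gives uniform convergence near $\Bold{\mu}=0$, where $\Zcal_R$ stays close to $\Zcal_R(\beta,0)=1$. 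This yields convergence of the derivatives.

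In that representation the large scales stay inside $\E_{m,\sqrt t}[\cdots]$. The only $\tilV_\infty-\tilV_t$ factors that appear (see \Cref{eq:second representation for tildeMcal}) are neutral blocks of size at most $N$, exactly where the corollary applies. Accordingly, your claim that the gradient multiplier ``preserves all of these majorant arguments'' must be replaced. Instead, expand the multiplier $\prod_l\sum_q\sigma_q\Phi_l(x_q)$ into one-variable factors and absorb them into $(m,\eps)$-dependent test functions that are uniformly bounded and uniformly convergent. The parameter version of that theorem is built for exactly this.
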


\begin{proof}
The cases with $n=1,\,N+1,N+2,\dots$ or when the test functions have disjoint supports are direct consequences of \Cref{lem:basic correlations of the derivatives of the free field,rem: basic correlations of the free field,lem:pointwise mixed correlation function of the free field,prop:local integrability of the charge correlation function in dimension 1}. 

The proof of the case $n=2=N$ for $d\geq 2$ is completely analogous to the proof of the $n=2$ case in \cite[Lemma 2.10]{BaWe24a}, and the constraint $\beta<(d+1)2\pi$ can easily be deduced from  \cite[equation (2.68)]{BaWe24a} in its proof with the replacement $q+q'\to k$. 

For $d=1$ the proof follows a similar strategy to the last part of the proof of \Cref{thm:existence of the field}. First, we have by the third statement of \Cref{thm:the equivalence of St and Vt} and \Cref{eq:the mixed correlations of free field after Girsanov}
\begin{equation}
\begin{split}
\E_{m,\eps}^T&\abr{\prod_{j=1}^n\eps^{-\frac{\beta}{4\pi}}e^{i\sqrt{\beta}\sigma_j\vp}(f_j)\prod_{l=1}^k(D_l\vp)(g_l)}
\\
&=\sum_{\Si\in\{-1,1\}^n}\int_{\R^n}\rbr{\prod_{j=1}^nf_j(x_j,\sigma_j)}(-1)^{n-1}\abr{\tilV_1^n(\xx,\Si|m,\eps)+[\tilV_\infty^n(\xx,\Si|m,\eps)-\tilV_1^n(\xx,\Si|m,\eps)]}
\\
&\qquad\qquad\qquad\times(-1)^k\prod_{l=1}^k\abr{i\sqrt{\beta}\sum_{q=1}^n\sigma_q\int_\R g_l'(z_l)\E_{m,\eps}[\vp(x_q)\vp(z_l)]\dz_l}\d\xx.
\end{split}
\end{equation}
We have applied integration by parts before and after using \Cref{eq:the mixed correlations of free field after Girsanov} to obtain the final form. 

Assume that the configuration is neutral and, without loss of generality, that the first $p=n/2$ charges are different from the rest. Then we can write
\begin{equation}
\begin{split}
\label{eq:pairing in the proof of prop B.9 for d=1}
\sum_{q=1}^n\sigma_q\int_\R g_l'(z_l)\E_{m,\eps}[\vp(x_q)\vp(z_l)]\dz_l&=\sum_{q=1}^{n/2}\sigma_q\int_\R [g_l'(x_q-z_l)-g_l'(x_{n/2+\rho(q)}-z_l)]E_{m,\eps}(z_l)\dz_l,
\\
&=\sum_{q=1}^{n/2}\sigma_q\bigg[(x_q-x_{n/2+\rho(q)})\int_\R\big(g_l''(x_{n/2+\rho(q)}-z_l)E_{m,\eps}(z_l)\dz_l
\\
&\qquad\qquad\quad+\int_\R R(x_q,x_{n/2+\rho(q)},z_l)E_{m,\eps}(z_l)\dz_l\bigg],
\end{split}
\end{equation}
where we have used Taylor's theorem to first order and $R$ denotes the resulting error term. Above, $\rho\in\S_{p}$ is a permutation, and we may choose it to minimize the Wasserstein distance \Cref{eq:def Wasserstein distance}. We have also denoted
\begin{equation}
E_{m,\eps}(z)=\int_{\eps^2}^\infty e^{-m^2s-\frac{z^2}{4s}}\frac{\ds}{4\pi s}.
\end{equation}
Then we note that $\int_\R R(x_q,x_{n/2+\rho(q)},z_l)\dz_l=0$ (to see this, write $R$ as the difference of the other terms in the Taylor estimate), and furthermore we can always write 
\begin{equation}
R(x_q,x_{n/2+\rho(q)},z_l)=(x_q-x_{n/2+\rho(q)})\tilde{R}(x_q,x_{n/2+\rho(q)},z_l),
\end{equation}
where $\tilde{R}$ is uniformly bounded for $x_q,x_{n/2+\rho(q)}$ restricted to a compact set and locally integrable in $z_l$. If the charge configuration is not neutral or if $d(\xx)\geq 1$, we do not need to do the pairing on the right-hand side of \Cref{eq:pairing in the proof of prop B.9 for d=1}, and we can simply bound each integral uniformly. By these facts and the estimates in the proof of \Cref{eq:sup norm of the derivative of the once smeared variance}, we may now write for $\xx\in \prod_{j=1}^n\supp(f_j)$
\begin{equation}
\abs{(-1)^k\prod_{l=1}^k\abr{i\sqrt{\beta}\sum_{q=1}^n\sigma_q\int_\R g_l'(z_l)\E_{m,\eps}[\vp(x_q)\vp(z_l)]\dz_l}}\leq C [H_1^n(\xx,\Si)]^k\leq C H_1^n(\xx,\Si)\leq C,
\end{equation}
for some constant $C>0$ independent of $m,\eps\in (0,1)$. We can use the constant bound for the term corresponding to $\tilV_\infty^n-\tilV_1^n$ and the second-to-last bound for the term corresponding to $\tilV_1^n$. Then \Cref{prop:induction statement,cor:convergence of the Vinfty-Vt terms} yield integrable majorants independent of $\eps$, so that the $\eps\to 0$ limit can be taken through the $x$-integrals in both terms. In the term corresponding to $\tilV_1^n$, even the $m\to 0$ limit can be taken through the $x$-integrals since $1<m^{-2}$ so that the majorant provided by \Cref{prop:induction statement} is independent of $m$. Furthermore, \Cref{cor:convergence of the Vinfty-Vt terms} shows that the term corresponding to $\tilV_\infty^n-\tilV_1^n$ is bounded uniformly in $m,\eps\in (0,1)$. Thus, we are left to perform the explicit error analysis for the term corresponding to $\tilV_\infty^n-\tilV_1^n$ analogous to the proof of \Cref{cor:convergence of the Vinfty-Vt terms}. Recall that $R_1^n(m,\xx,\Si)$ denotes the error term from the proof of \Cref{cor:convergence of the Vinfty-Vt terms} with $t=1$, that is defined in \Cref{eq:def of the error term Rm}

We can write
\begin{equation}
\begin{split}
\sum_{\Si\in\{-1,1\}^n}&\int_{\R^n}\rbr{\prod_{j=1}^nf_j(x_j,\sigma_j)}(-1)^{n-1}\abr{\tilV_\infty^n(\xx,\Si|m)-\tilV_1^n(\xx,\Si|m)}
\\
&\qquad\times (-1)^k\prod_{l=1}^k\abr{i\sqrt{\beta}\sum_{q=1}^n\sigma_q\int_\R g_l'(z_l)\E_m[\vp(x_q)\vp(z_l)]\dz_l}\d\xx
\\
&=\sum_{\Si\in\{-1,1\}^n}\int_{\R^n}\rbr{\prod_{j=1}^nf_j(x_j,\sigma_j)}(-1)^{n-1}\abr{\tilV_\infty^n(\xx,\Si)-\tilV_1^n(\xx,\Si)}
\\
&\qquad\times (-1)^k\prod_{l=1}^k\abr{i\sqrt{\beta}\sum_{q=1}^n\sigma_q\int_\R g_l'(z_l)\E[\vp(x_q)\vp(z_l)]\dz_l}\d\xx
\\
&\quad+\sum_{\Si\in\{-1,1\}^n}\int_{\R^n}\rbr{\prod_{j=1}^nf_j(x_j,\sigma_j)}\abr{\Rcal_{m}^1(\xx,\Si)+\Rcal_{m}^2(\xx,\Si)}\d\xx,
\end{split}
\end{equation}
where again $\tilV_t^n(\xx,\Si):=\lim_{m\to 0}\lim_{\eps\to 0}\tilV_t^n(\xx,\Si|m,\eps)$, $\tilV_t^n(\xx,\Si|m):=\lim_{\eps\to 0}\tilV_t^n(\xx,\Si|m,\eps)$ (and these exist by \Cref{rem:limits of the Vtn functions}),
\begin{equation}
\begin{split}
\Rcal_m^1(\xx,\Si)=(-i\sqrt{\beta})^k(-1)^{n-1}R_1^n(m,\xx,\Si)\prod_{j=1}^k\abr{\sum_{q=1}^n\sigma_q\int_\R g_j'(z_j)\E_m[\vp(x_q)\vp(z_j)]\dz_j}
\end{split}
\end{equation}
and
\begin{equation}
\begin{split}
\Rcal_m^2(\xx,\Si)&=(-i\sqrt{\beta})^k(-1)^{n-1}[\tilV_\infty^n(\xx,\Si)-\tilV_1^n(\xx,\Si)]
\\
&\qquad\times\sum_{\substack{F\uplus G=[k]\\ G\neq \emptyset}}\bigg\{\rbr{\prod_{i\in F}\abr{\sum_{q=1}^n\sigma_q\int_\R g_i'(z_i)\E[\vp(x_q)\vp(z_i)]\dz_i}}
\\
&\qquad\qquad\qquad\quad\times\rbr{\prod_{j\in G}\abr{\sum_{q=1}^n\sigma_q\int_\R g_j'(z_j)\rbr{\E_m[\vp(x_q)\vp(z_j)]-\E[\vp(x_q)\vp(z_j)]}\dz_j}}\bigg\}.
\end{split}
\end{equation}
The above sum goes through the ordered bipartitions $\{F,G\}$ of $[k]$, except $\{[k],\emptyset\}$. Recall that $\E[\vp(x)\vp(y)]=-\frac{1}{2\pi}\log(|x-y|)$ when integrated against a compactly supported smooth test function with vanishing integral. The integral of $g_l'$ indeed vanishes since $g_l$ has compact support. Furthermore, $\E_m[\vp(x)\vp(y)]=\frac{1}{2\pi}K_0(m|x-y|)$. By the asymptotics \Cref{eq:asymptotics of K0} and the facts that the functions $g_l$ have compact support and the logarithm is locally integrable, each integral above is uniformly bounded in $m\in (0,1)$ and $x_q\in \supp(f_q)$. Furthermore, $\E_m[\vp(x_q)\vp(z_l)]-\E[\vp(x_q)\vp(z_l)]\to 0$ as $m\to 0$ pointwise for almost every $z_l$. Therefore, by the error analysis in the proof of \Cref{cor:convergence of the Vinfty-Vt terms} starting from \Cref{eq:error form for Vinfty-Vt}, we have
\begin{equation}
\lim_{m\to 0}\sum_{\Si\in\{-1,1\}^n}\int_{\R^n}\rbr{\prod_{j=1}^nf_j(x_j,\sigma_j)}\abr{|\Rcal_{m}^1(\xx,\Si)|+|\Rcal_{m}^2(\xx,\Si)|}\d\xx=0.
\end{equation}

This completes the proof since the alternative forms follow directly from \Cref{lem:basic correlations of the derivatives of the free field}.
\end{proof}

\end{section}

\begin{section}{Proof of {{\Cref{lem:Gaussian tails}}}}
\label{sec:proof of Gaussian tails}
First, we introduce the same tools that were used in \cite[Lemma 5.2]{BaWe24a}, namely Dudley’s theorem (see, e.g., \cite[Section~1.3]{AdTa07a}) and the Borel--TIS inequality (see, e.g., \cite[Section~2.1]{AdTa07a}). These are
\begin{theorem}[Dudley’s theorem]
Let $T$ be a metric space (such that $(T,d_X)$ as defined below is compact) and let $X$ be a centered real-valued Gaussian process on $T$. Define another (pseudo) metric on $T$ by
\[
d_X(t,s):=\sqrt{\E[\{X(t)-X(s)\}^2]}.
\]
Then there exists a universal constant $C>0$ such that
\begin{align}
\E\abr{\sup_{t\in T}X(t)}\leq C\int_0^{\diam(T)/2}\sqrt{H_X(\eps)}\d\eps,
\end{align}
where $H_X$ is the log-entropy corresponding to the metric $d_X$. It is defined by $H_X(\eps):=\log(N_X(\eps))$, where $N_X(\eps)$ is the minimal number of (closed) $d_X$-balls with radius $\eps$ needed to cover $T$.
\end{theorem}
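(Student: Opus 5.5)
The statement to prove is Dudley's theorem: for a centered Gaussian process $X$ on $T$ with canonical metric $d_X$, $\E[\sup_T X]\le C\int_0^{\diam(T)/2}\sqrt{H_X(\eps)}\d\eps$. I would use the classical chaining argument. First I reduce to a finite, or at least countable, set $T$. By separability, $\sup_T X$ is a monotone limit of suprema over finite subsets, and monotone convergence passes the bound to the limit. For a finite subset the entropy numbers are no larger than those of $T$ up to a factor-two change in radius, which only affects the universal constant. If the entropy integral diverges there is nothing to prove.

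Next I set up the nets. Put $\eps_j:=D2^{-j}$ with $D:=\diam(T)$. For each $j\ge0$ choose a $\eps_j$-net $T_j\subset T$ of minimal cardinality $N_X(\eps_j)$; $T_0$ is a single point $t_0$. Let $\pi_j(t)$ be a nearest point of $T_j$ to $t$, so $d_X(t,\pi_j(t))\le\eps_j$. Since $T$ is finite, $\pi_j(t)=t$ for $j$ large. This gives the telescoping identity
\[
X(t)-X(t_0)=\sum_{j\ge1}\rbr{X(\pi_j(t))-X(\pi_{j-1}(t))}.
\]
Each increment is a centered Gaussian with standard deviation at most $d_X(\pi_j(t),\pi_{j-1}(t))\le\eps_j+\eps_{j-1}=3\eps_j$. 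The number of distinct pairs $(\pi_j(t),\pi_{j-1}(t))$ is at most $N_X(\eps_j)N_X(\eps_{j-1})\le N_X(\eps_j)^2$. Because $\E[X(t_0)]=0$, it follows that
\[
\E\abr{\sup_t X(t)}\le\sum_{j\ge1}\E\abr{\max_{\text{pairs}}\rbr{X(\pi_j(t))-X(\pi_{j-1}(t))}}.
\]

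The key probabilistic input is the maximal inequality for Gaussians. For $M$ centered Gaussians $Y_i$ with variances at most $\sigma^2$, Jensen's inequality applied to $\E[e^{\lambda\max Y_i}]\le Me^{\lambda^2\sigma^2/2}$, optimised over $\lambda$, gives $\E[\max Y_i]\le\sigma\sqrt{2\log M}$. Applying this with $M=N_X(\eps_j)^2$ bounds the $j$-th term by $3\eps_j\sqrt{4H_X(\eps_j)}$.

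Finally I compare the sum with the integral. Since $H_X$ is nonincreasing and $\eps_j=2(\eps_j-\eps_{j+1})$, we have
\[
\eps_j\sqrt{H_X(\eps_j)}\le2\int_{\eps_{j+1}}^{\eps_j}\sqrt{H_X(\eps)}\d\eps.
\]
Summing over $j\ge1$ gives $C\int_0^{D/2}\sqrt{H_X}$ with $C=12$.

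The argument has no real obstacle; it is standard. The only care needed is in the finite-set reduction, namely the separability assumption and the compactness of $(T,d_X)$ assumed in the statement, together with the factor-two bookkeeping that matches the integration limit $\diam(T)/2$.
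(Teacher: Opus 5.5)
Your chaining argument is correct, given the separability convention you invoke, which the statement itself implicitly needs; the constant $12$ you get holds for finite index sets and becomes $24$ after your factor-two reduction, which is harmless. The paper gives no proof of Dudley's theorem and simply quotes it from Adler--Taylor \cite{AdTa07a}, Section 1.3, where it is proved by essentially this argument: dyadic nets, the $\sigma\sqrt{2\log M}$ maximal inequality for the increments, and comparison of the dyadic sum with the entropy integral.
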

and
\begin{theorem}[Borel--TIS inequality]
Let $T$ be as above and let $X$ be an almost surely bounded, centered Gaussian process on $T$. Then 
\begin{align}
\E[\sup_{t\in T}X(t)]<\infty \quad \text{and}\quad \P(\{\sup_{t\in T}X(t)-\E[\sup_{t\in T}X(t)]>u\})\leq e^{-\frac{u^2}{2\sigma_T^2}},
\end{align}
where $\sigma_T:=\sup_{t\in T}\E[(X(t))^2]$.
\end{theorem}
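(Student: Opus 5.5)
We would prove the Borel--TIS inequality by reducing it to a finite-dimensional Gaussian concentration statement and then passing to the limit. Because $X$ is a.s.\ bounded and $(T,d_X)$ is compact, $X$ has a separable version. Fix an increasing sequence of finite sets $T_k\uparrow T_\infty$, with $T_\infty$ countable and $d_X$-dense, such that $\sup_{T_\infty}X=\sup_T X$ almost surely. Set $M_k:=\max_{t\in T_k}X(t)$. Then $M_k\uparrow M:=\sup_T X$ almost surely, and $\sigma_{T_k}\le\sigma_T$. It therefore suffices to prove the tail bound for each $M_k$ with the constant $\sigma_T$, uniformly in $k$, and to show that $\E[M_k]$ stays bounded. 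Monotone convergence and Fatou's lemma then transfer both claims to $M$. Throughout, we read $\sigma_T^2$ as $\sup_t\E[X(t)^2]$.

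For finite $T_k=\{t_1,\dots,t_n\}$, we realize the vector $(X(t_i))_{i=1}^n$ as $Ag$ with $g\sim N(0,I_n)$ and $AA^{\top}$ its covariance matrix. Consider $F(g):=\max_i (Ag)_i$. Each row $a_i$ of $A$ satisfies $|a_i|^2=\E[X(t_i)^2]\le\sigma_T^2$, so $F$ is Lipschitz with constant $\sigma_T$: indeed $|\max_i (Ag)_i-\max_i(Ah)_i|\le\max_i|a_i\cdot(g-h)|\le\sigma_T|g-h|$. The key input is the Gaussian concentration inequality for Lipschitz functions,
\[
\P\bigl(F(g)-\E F(g)>u\bigr)\le e^{-u^2/(2L^2)}.
\]
We would prove it by the Herbst argument. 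Apply the Gaussian log-Sobolev inequality to $e^{\lambda F/2}$, which gives a differential inequality for $H(\lambda):=\lambda^{-1}\log\E e^{\lambda(F-\E F)}$ and yields $\E e^{\lambda(F-\E F)}\le e^{\lambda^2L^2/2}$; smooth $F$ first, or use that it is a.e.\ differentiable with $|\nabla F|\le L$. Chebyshev's inequality with the optimal $\lambda=u/L^2$ then gives the bound. Alternatively, the Maurey--Pisier interpolation $g_\theta=g\sin\theta+g'\cos\theta$ gives the same result with a slightly worse constant, but we prefer Herbst because it produces the sharp constant $2$.

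The main obstacle is the uniform bound on $\E[M_k]$. We cannot simply take limits in the tail bound: we need $\E[M]<\infty$ before we may center at it. We would use a.s.\ boundedness as follows. Choose $c$ with $\P(M\le c)\ge 3/4$. Then $\P(M_k\le c)\ge 3/4$ for all $k$, because $M_k\le M$. Applying the concentration inequality to $-F$ gives $\P(M_k\le\E M_k-u)\le e^{-u^2/(2\sigma_T^2)}$. Taking $u=\sigma_T\sqrt{2\log 4}$, the event $\{M_k\le c\}$ has probability $\ge 3/4>1/4$, so it must intersect $\{M_k>\E M_k-u\}$. This forces $\E M_k\le c+\sigma_T\sqrt{2\log 4}$, a bound independent of $k$.

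Monotone convergence, using $M_k\ge X(t_1)$ for integrability from below, then gives $\E M<\infty$ and $\E M_k\uparrow\E M$. Finally, on the event $\{M-\E M>u\}$ we have $M_k-\E M_k>u-\delta$ for all large $k$, for any $\delta>0$. Fatou's lemma together with the uniform finite-dimensional bound gives $\P(M-\E M>u)\le e^{-(u-\delta)^2/(2\sigma_T^2)}$, and letting $\delta\downarrow0$ completes the argument.
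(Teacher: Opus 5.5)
Your proof is correct. The paper gives no proof of this statement: it quotes the Borel--TIS inequality from \cite{AdTa07a} and uses it only to obtain $m$-independent Gaussian tails in the proof of \Cref{lem:Gaussian tails}.

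Your argument is the standard one, with the same architecture as the proof in that reference:
\begin{enumerate}
\item Gaussian concentration for the $\sigma_T$-Lipschitz function $g\mapsto\max_i(Ag)_i$ on finite index sets.
\item Passage to a countable separating set by monotone convergence.
\item A quantile argument (your $3/4$ trick) to get $\E[\sup_T X]<\infty$ before you may center at it.
\end{enumerate}
The genuine difference is how the finite-dimensional Lipschitz concentration is proved. You use Gross's log-Sobolev inequality and the Herbst argument. The cited source, as far as I recall it, instead uses a Gaussian interpolation identity. With $g'$ an independent copy of $g$,
\[
\mathrm{Cov}\bigl(f(g),k(g)\bigr)=\int_0^1\E\bigl[\langle\nabla f(g),\nabla k(\alpha g+\sqrt{1-\alpha^2}\,g')\rangle\bigr]\,\mathrm{d}\alpha ,
\]
applied to $f=F$ and $k=e^{\lambda F}$ with $\E F=0$. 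This gives $\frac{\mathrm{d}}{\mathrm{d}\lambda}\E e^{\lambda F}\le\lambda L^2\,\E e^{\lambda F}$ directly, with no entropy functional. Both routes give the sharp constant in the exponent, unlike Maurey--Pisier. Yours is shorter if the log-Sobolev inequality is taken as known; the interpolation route is self-contained, needing only Gaussian integration by parts.

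A few small points you should make explicit:
\begin{enumerate}
\item The tail bound should be read for $u>0$; it fails for negative $u$.
\item Your choice $u=\sigma_T\sqrt{2\log 4}$ presupposes $\sigma_T<\infty$. This holds because $t\mapsto\|X(t)\|_{L^2}$ is $1$-Lipschitz for $d_X$ and $(T,d_X)$ is compact; it also follows directly from a.s.\ boundedness.
\item Your reading $\sigma_T^2=\sup_t\E[X(t)^2]$ is the right one. The definition of $\sigma_T$ without a square in the statement is a typo, and taken literally the inequality would be false whenever $\sup_t\E[X(t)^2]<1$.
\item The $\delta$ in your last step is unnecessary. Since $M_k-\E M_k\to M-\E M$ a.s., the strict inequality $M-\E M>u$ already gives $M_k-\E M_k>u$ for all large $k$, and Fatou's lemma applies directly.
\end{enumerate}
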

\begin{remark}
\label{rem:corollary to Borel-TIS}
The following facts are also stated in \cite[Section~2.1]{AdTa07a}. The Borel--TIS inequality has the simple corollary
\begin{align}
\P\rbr{\cbr{\sup_{t\in T}X(t)>u}}
\leq e^{-\frac{\rbr{u-\E\abr{\sup_{t\in T}X(t)}}^2}{2\sigma_T^2}},
\end{align}
and by symmetry we have
\begin{align}
\P\rbr{\cbr{\sup_{t\in T}\abs{X(t)}>u}}
\leq 2\,\P\rbr{\cbr{\sup_{t\in T}X(t)>u}}.
\end{align}
\end{remark}

In the proof below, we simplify notation by writing $\vp\equiv \vp_{m,\sqrt{t}}$, and similarly we drop subscripts on expectations and probabilities, since there is no possibility for confusion.
\begin{proof}
[Proof of \Cref{lem:Gaussian tails}]
First, by Jensen’s inequality, used in the form
\begin{equation}
\rbr{\sum_{i=1}^N x_i}^\alpha
= N^\alpha\rbr{\sum_{i=1}^N\frac{1}{N}x_i}^\alpha
\leq N^{\alpha-1}\sum_{i=1}^N x_i^\alpha,
\end{equation}
where $x_i\geq 0$ for all $i$ and  $\alpha \in [1,2)$, and the general Hölder inequality for expectations, it suffices to prove the claim with $\norm{\vp}_{C_k(\Lambda)}$ replaced by $p_{\nu,\Lambda}(\vp)$ for arbitrary $\nu$ of order less than or equal to $k$.

We write
\begin{equation}
\label{eq:first estimate for Gaussian tales proposition}
\begin{split}
\E\abr{e^{B[p_{\nu,\Lambda}(\vp)]^\alpha}}
&=\int_0^\infty\P\rbr{\cbr{e^{B[p_{\nu,\Lambda}(\vp)]^\alpha>u}}}\du\\
&=\underbrace{\int_0^1\P\rbr{\cbr{e^{B[p_{\nu,\Lambda}(\vp)]^\alpha>u}}}\du}_{=1}
+\int_1^\infty\P\rbr{\cbr{p_{\nu,\Lambda}(\vp)>\rbr{\frac{\log(u)}{B}}^{1/\alpha}}}\du\\
&\leq 1+C(\alpha,B)\int_0^\infty r^{\alpha-1}e^{Br^\alpha}\P\rbr{\cbr{p_{\nu,\Lambda}(\vp)>r}}\dr,
\end{split}
\end{equation}
where the first term equals $1$ since the exponent is positive, and $C(\alpha,B)>0$ is a finite constant.

Next, consider the induced (pseudo) metric $d_\nu\equiv d_{\partial^\nu\varphi}$ from Dudley's Theorem. We have by \Cref{eq:estimate for derivatives of the massive heat kernel} together with the bounds $e^{-m^2s}\leq 1$ and $f_d\leq 1$,
\begin{equation}
\begin{split}
d_\nu(x,y)
&:=\sqrt{\E\abr{\cbr{(\partial_\nu\vp)(x)-(\partial_\nu\vp)(y)}^2}}\\
&=\sqrt{2\bigl(K_{\sqrt{t}}^{m,\nu}(0)-K_{\sqrt{t}}^{m,\nu}(x,y)\bigr)}\\
&\leq \sqrt{C_{t,\nu}}\,|x-y|^{\half},
\end{split}
\end{equation}
where $C_{t,\nu}>0$ is a constant independent of $m$ and $K_{\sqrt{t}}^{m,\nu}(0)=K_{\sqrt{t}}^{m,\nu}(x,x)$ analogously to \Cref{eq:some notations for K and the heat kernel}. Similarly to the metric above, we have denoted $N_{\partial^\nu\vp}\equiv N_\nu$ and $H_{\partial^\nu\vp}\equiv H_\nu$. 

Therefore,
\begin{equation}
\{y\in\Lambda:\sqrt{C_{t,\nu}}\,|x-y|^{1/2}<\eps\}\subset\{y\in\Lambda:d_\nu(x,y)<\eps\},
\end{equation}
so the number of $d_\nu$-balls of radius $\eps$ needed to cover $\Lambda$ is bounded by the number of Euclidean balls of radius $\eps^2/C_{t,\nu}$ needed to cover $\Lambda$. Since $(\Lambda,\cdot)$, with $\cdot$ denoting the Euclidean metric, is compact, so is $(\Lambda,d_\nu)$. If $\Lambda\subset[-L,L]^d$ for some $L\in\N$, then the number of Euclidean balls of radius $\eps$ needed to cover $\Lambda$ is at most $(2L/\eps)^d$. Moreover, in this case $\Lambda\subset B_{\sqrt{d}L}(0)$, where $B_{\sqrt{d}L}(0)$ denotes the Euclidean ball of radius $\sqrt{d}L$ centered at the origin. Consequently, $N_\nu(\eps)=1$ for $\eps>dL^2/C_{t,\nu}$.

It follows that
\begin{equation}
\begin{split}
\int_0^{\diam(\Lambda)/2}\sqrt{\log(N_\nu(\eps))}\d\eps
&\leq \int_0^{dL^2/C_{t,\nu}}\sqrt{\log\!\rbr{\abr{\frac{2L}{\eps^2/C_{t,\nu}}}^d}}\d\eps\\
&=\sqrt{4dLC_{t,\nu}}\int_0^{\frac{d}{\sqrt{2}}\rbr{\frac{L}{C_{t,\nu}}}^{3/2}}
\sqrt{\log(u^{-1})}\du,
\end{split}
\end{equation}
where $u=\eps/\sqrt{2LC_{t,\nu}}$. Substituting $u=e^{-x^2}$ yields
\begin{equation}
2\sqrt{4dLC_{t,\nu}}\int_{-\infty}^{B_{t,\nu}(L)}x^2e^{-x^2}\dx<\infty,
\end{equation}
with
\begin{equation}
B_{t,\nu}(L):=\sqrt{\log\rbr{\frac{\sqrt{2}}{d}\abr{\frac{C_{t,\nu}}{L}}^{3/2}}}.
\end{equation}

By Dudley’s theorem, we conclude that
\begin{align}
\label{eq: bound for E_t,nu}
E_{m,t,\nu,\Lambda}(\vp)
:=\E\abr{\sup_{x\in\Lambda}(\partial^\nu\varphi)(x)}
\leq 2\sqrt{4dLC_{t,\nu}}\int_{-\infty}^{B_{t,\nu}(L)}x^2e^{-x^2}\dx=: \tilde{C}_{t,\nu,\Lambda}<\infty,
\end{align}
where $\tilde{C}_{t,\nu,\Lambda}$ is manifestly independent of $m$. In particular, $\partial^\nu\varphi$ is almost surely bounded, and we may apply the Borel--TIS inequality. Using \Cref{rem:corollary to Borel-TIS}, we obtain
\begin{equation}
\P\rbr{\cbr{p_{\nu,\Lambda}(\vp)>r}}
\leq 2\exp\!\rbr{-\frac{\abr{r-E_{m,t,\nu,\Lambda}(\vp)}^2}{2\sigma_\nu^2}},
\end{equation}
where
\begin{equation}
\sigma_\nu^2=\E\abr{(\partial^\nu\vp(x))^2}
=\int_t^\infty(\partial_1^\nu\partial_2^\nu C_s^m)(0)\ds
\leq \bar{B}_{t,\nu}<\infty.
\end{equation}
for some constant $\bar{B}_{t,\nu}>0$ independent of $m$ again by \Cref{eq:estimate for derivatives of the massive heat kernel}. 

Thus, for $r>2\tilde{C}_{t,\nu,\Lambda}$ we have
\begin{equation}
\frac{(r-E_{m,t,\nu,\Lambda}(\vp))^2}{2\sigma_\nu^2}
\geq \frac{(r-\tilde{C}_{t,\nu,\Lambda})^2}{2\bar{B}_{t,\nu}^2}
\geq \frac{r^2}{8\bar{B}_{t,\nu}^2},
\end{equation}
and consequently
\begin{equation}
\P\rbr{\cbr{p_{\nu,\Lambda}(\vp)>r}}
\leq
\begin{cases}
1, & \text{if } r<2\tilde{C}_{t,\nu,\Lambda},\\[0.3em]
e^{-\frac{r^2}{8\bar{B}_{t,\nu}^2}}, & \text{otherwise}.
\end{cases}
\end{equation}

Finally, we estimate the last integral in \Cref{eq:first estimate for Gaussian tales proposition}:
\begin{align*}
\int_0^\infty r^{\alpha-1}e^{Br^\alpha}\P\rbr{\cbr{p_{\nu,\Lambda}(\vp)>r}}\dr
&\leq \int_0^{2\tilde{C}_{t,\nu,\Lambda}} r^{\alpha-1}e^{Br^\alpha}\dr\\
&\quad+\int_{2\tilde{C}_{t,\nu,\Lambda}}^\infty r^{\alpha-1}e^{Br^\alpha}
e^{-\frac{r^2}{8\bar{B}_{t,\nu}^2}}\dr\\
&<\infty,
\end{align*}
where we used that probabilities are bounded by $1$ in the first term. Since $2>\alpha\geq 1$, the first integral is finite at the origin and the second at infinity. Thus, we obtain an upper bound independent of $m$.
\end{proof}
\end{section}

\end{document}